%% file: main.tex
\documentclass{article}

\input{macros}

\title{Dynamic Contention Resolution Schemes}

\author{Moran Feldman\thanks{Computer Science Department, University of Haifa, Israel. This work was done while the author was visiting Queen Mary University of London. E-mail: \email{moranfe@cs.haifa.ac.il}.} \and Gregory Kehne\thanks{Computer Science \& Engineering, Washington University in St. Louis, Missouri. E-mail: \email{kehne@wustl.edu}.} \and Roie Levin\thanks{Department of Computer Science, Rutgers University, New Jersey. E-mail: \email{roie.levin@rutgers.edu}.} \and Sherry Sarkar\thanks{Carnegie Mellon University (CMU), Pennsylvania. E-mail: \email{academia.sarkar@gmail.com}.}}

\date{}

\begin{document}

\maketitle

\begin{abstract}
\medskip
    We introduce a low-recourse rounding paradigm for packing problems in fully dynamic settings, which we name \emph{Dynamic Contention Resolution Schemes} (DCRSs). These are dynamic analogs of (Online) Contention Resolution Schemes (or (O)CRSs) for low-recourse dynamic optimization and offer a variety of benefits. Similarly to their offline and online counterparts, DCRSs for different constraints can be combined to obtain DCRSs for the constraints' intersection. Furthermore, together with the Positive Body Chasing framework of Bhattacharya, Buchbinder, Levin, and Saranurak [FOCS 2023], DCRSs imply \emph{competitive} recourse algorithms for fully dynamic packing problems with submodular objectives: these are algorithms that, for any input sequence, incur recourse that is itself competitive with the best possible recourse for that sequence.
    
    We show the existence of $\Omega(1)$-balanced and $O(\log \rank)$-recourse DCRSs for matroid constraints, and $\Omega(1)$-balanced/$O(1)$-recourse DCRSs for matching and knapsack constraints. In particular, these yield the first non-trivial recourse bound for fully dynamic knapsack, as well as the first competitive-recourse algorithm for non-bipartite matching, and both of these apply even to monotone submodular objectives. Beyond our particular results, we view the DCRS framework as a principled step towards mechanizing the relax-and-round paradigm of approximation algorithms in the context of dynamic optimization.
\end{abstract}

\newpage
\input{intro.tex}

\input{prelim.tex}

\input{matroid_new.tex}

\input{fully_dynamic.tex}

\input{partition.tex}

\input{knapsack.tex}

\input{combiner.tex}

\input{submod.tex}

\newpage
\appendix
\input{full_prelim.tex}
\input{elided.tex}
\input{lfold_new.tex}
\input{adversary.tex}

\input{tcos-variance}
\input{polytime.tex}
\input{round-or-sep.tex}
\input{acknowledgements.tex}

\bibliographystyle{alpha}
\bibliography{dblp,refs}

\end{document}

%% file: macros.tex
\usepackage[T1]{fontenc}
\usepackage[utf8]{inputenc}
\usepackage{typearea}
\usepackage{forloop}
\usepackage[nopatch=footnote]{microtype}
\usepackage{fullpage}
\usepackage{latexsym}
\usepackage{paralist}
\usepackage{natbib}

\usepackage{mathrsfs}
\usepackage{subfig}

\usepackage{comment} 
\usepackage{bbm}
\usepackage{framed} 
\usepackage{url} 
\usepackage{booktabs}
\usepackage{amsmath,amssymb}

\usepackage{amsthm}
\usepackage{thmtools} 
\usepackage{thm-restate}
\usepackage{nicefrac}
\usepackage{calc}
\usepackage{enumerate}
\usepackage{enumitem}
\usepackage[dvipsnames]{xcolor}
\usepackage{algorithm}
\usepackage{algorithmicx}
\usepackage{graphicx}
\usepackage[font=footnotesize,labelfont=bf]{caption}
\usepackage[nobreak=true]{mdframed}
\usepackage{appendix}
\usepackage[noend]{algpseudocode}
\usepackage[colorinlistoftodos]{todonotes}
\usepackage{xr}
\usepackage{array}
\usepackage{xspace}
\definecolor{lightgray}{gray}{0.9} 
\definecolor{ForestGreen}{rgb}{0.1333,0.5451,0.1333}
\definecolor{DarkRed}{rgb}{0.8,0,0}
\definecolor{Red}{rgb}{1,0,0}
\usepackage[linktocpage=true,
pagebackref=true,colorlinks,
linkcolor=DarkRed,citecolor=ForestGreen,
bookmarks,bookmarksopen,bookmarksnumbered]{hyperref}

\usepackage[capitalise]{cleveref}
\crefrangelabelformat{enumi}{#3#1#4--#5#2#6}
\usepackage{parskip}
\usepackage{chngcntr}
\usepackage{mathtools,stackengine}
\usepackage{multirow}

\stackMath
\newcommand{\stackGeq}[1]{%
	\setbox0=\hbox{${}\mathrel{\stackon[-1pt]{\geq}{\scriptstyle\text{#1\strut}}}{}$}
	\xdef\tmpwd{\dimexpr\the\wd0\relax}
	\kern.5\tmpwd\mathclap{\box0}&\kern.5\tmpwd
}

\usepackage{nameref}

\usepackage{tikz}
\usetikzlibrary{patterns}


\allowdisplaybreaks

\DeclareMathOperator*{\ber}{Ber}

\DeclareMathOperator*{\Exp}{Exp}
\DeclareMathOperator*{\expectation}{\mathbb{E}}
\let\poly\relax
\DeclareMathOperator*{\poly}{poly}

\DeclareMathOperator*{\uni}{Uniform}
\DeclareMathOperator*{\var}{Var}
\DeclareMathOperator*{\probability}{\Pr}

\DeclareMathOperator{\rank}{rank}
\newcommand{\Rset}{R}
\newcommand{\Rdist}{{\mathtt{R}}}
\newcommand{\ISet}{{I}}
\newcommand{\Iset}{\ISet}

\newcommand{\characteristic}{{\mathbf{1}}}
\renewcommand{\Reset}{{\texttt{Reset}}\xspace}
\newcommand{\tightcolorbox}[2]{\setlength{\fboxsep}{0.5pt}\colorbox{#1}{#2}}

\newcommand{\lvl}{i}
\newcommand{\Inc}{\mathrm{Inc}}
\newcommand{\Dec}{\mathrm{Dec}}

\newcommand{\Span}{\mathrm{span}}

\newcommand\eps{\varepsilon}
\renewcommand\emptyset{\varnothing}

\newcommand\norm[1]{\left\| #1 \right\|}

\newcommand\abs[1]{\lvert #1 \rvert}
\newcommand{\expect}{\expectation\expectarg}
\DeclarePairedDelimiterX{\expectarg}[1]{[}{]}{%
	\ifnum\currentgrouptype=16 \else\begingroup\fi
	\activatebar#1
	\ifnum\currentgrouptype=16 \else\endgroup\fi
}
\newcommand{\email}[1]{{\href{mailto:#1}{#1}}}

\newcommand{\defcal}[1]{\expandafter\newcommand\csname c#1\endcsname{{\mathcal{#1}}}}
\newcommand{\defbb}[1]{\expandafter\newcommand\csname b#1\endcsname{{\mathbb{#1}}}}
\newcommand{\defvec}[1]{\expandafter\newcommand\csname v#1\endcsname{{\mathbf{#1}}}}
\newcounter{calBbCounter}
\forLoop{1}{26}{calBbCounter}{
    \edef\letter{\alph{calBbCounter}}
		\edef\Letter{\Alph{calBbCounter}}
    \expandafter\defcal\Letter
		\expandafter\defbb\Letter
		\expandafter\defvec\letter
}

\DeclarePairedDelimiterX{\nicesetarg}[1]{\{}{\}}{%
	\ifnum\currentgrouptype=16 \else\begingroup\fi
	\activatebar#1
	\ifnum\currentgrouptype=16 \else\endgroup\fi
}

\newcommand{\innermid}{\nonscript\;\delimsize\vert\nonscript\;}
\newcommand{\activatebar}{%
	\begingroup\lccode`\~=`\|
	\lowercase{\endgroup\let~}\innermid 
	\mathcode`|=\string"8000
}

\DeclarePairedDelimiterXPP{\prob}[1]{\probability}{(}{)}{}{#1}
\DeclarePairedDelimiterXPP{\probarg}[2]{\probability_{#2}}{(}{)}{}{#1}

\newcommand\opt{\textsc{Opt}\xspace}

\newcommand\optr{\textsc{Opt}_{\textsc{Rec}}\xspace}

\counterwithin{equation}{section}

\usepackage{eqparbox}
\renewcommand{\algorithmiccomment}[1]{\hfill\eqparbox{COMMENT}{{\color{BrickRed}// #1}}}

\newcommand\symdif{\hspace{0.01in}\triangle\hspace{0.01in}}

\newcommand\mst{\textsc{MinimumSpanningTree}\xspace}

\newcommand\setcov{\textsc{SetCover}\xspace}
\newcommand\kcent{$k$-\textsc{Center}\xspace}
\newcommand\kmed{$k$-\textsc{Median}\xspace}
\newcommand\facloc{\textsc{FacilityLocation}\xspace}
\newcommand\loadbal{\textsc{LoadBalancing}\xspace}
\newcommand\matching{\textsc{BipartiteMatching}\xspace}
\newcommand\gmatching{\textsc{Matching}\xspace}

\theoremstyle{plain}

\newtheorem{theorem}{Theorem}[section]
\newtheorem{definition}[theorem]{Definition}
\newtheorem{proposition}[theorem]{Proposition}

\newtheorem{lemma}[theorem]{Lemma}

\newtheorem{observation}[theorem]{Observation}
\newtheorem{claim}[theorem]{Claim}

\newtheorem{example}[theorem]{Example}

\newtheorem{corollary}[theorem]{Corollary}

\newtheorem{remark}[theorem]{Remark}



\DeclareFontShape{T1}{cmr}{m}{scit} { <-> ssub * cmr/m/scsl }{}

\algdef{SE}[WHILE]{MultilineWhile}{EndWhile}[1]{\algorithmicwhile\ #1}{\algorithmicend\ \algorithmicwhile}

\newlength{\continueindent}
\usepackage{etoolbox}
\makeatletter
\newcommand*{\ALG@customparshape}{\parshape 2 \leftmargin \linewidth \dimexpr\ALG@tlm+\continueindent\relax \dimexpr\linewidth+\leftmargin-\ALG@tlm-\continueindent\relax}
\apptocmd{\ALG@beginblock}{\ALG@customparshape}{}{\errmessage{failed to patch}}
\makeatother

\makeatletter
\def\thm@space@setup{%
	\thm@preskip=\parskip \thm@postskip=0pt
}
\makeatother

\usepackage{etoolbox}
\usepackage{tikz}
\usetikzlibrary{tikzmark,positioning, arrows.meta, fit, backgrounds}

\usetikzlibrary{calc}

\errorcontextlines\maxdimen

\newcommand{\ALGtikzmarkcolor}{black}
\newcommand{\ALGtikzmarkextraindent}{4pt}
\newcommand{\ALGtikzmarkverticaloffsetstart}{-.5ex}
\newcommand{\ALGtikzmarkverticaloffsetend}{-.5ex}
\makeatletter
\newcounter{ALG@tikzmark@tempcnta}

\newcommand\ALG@tikzmark@start{%
	\global\let\ALG@tikzmark@last\ALG@tikzmark@starttext%
	\expandafter\edef\csname ALG@tikzmark@\theALG@nested\endcsname{\theALG@tikzmark@tempcnta}%
	\tikzmark{ALG@tikzmark@start@\csname ALG@tikzmark@\theALG@nested\endcsname}%
	\addtocounter{ALG@tikzmark@tempcnta}{1}%
}

\def\ALG@tikzmark@starttext{start}
\newcommand\ALG@tikzmark@end{%
	\ifx\ALG@tikzmark@last\ALG@tikzmark@starttext
	\else
	\tikzmark{ALG@tikzmark@end@\csname ALG@tikzmark@\theALG@nested\endcsname}%
	\tikz[overlay,remember picture] \draw[\ALGtikzmarkcolor] let \p{S}=($(pic cs:ALG@tikzmark@start@\csname ALG@tikzmark@\theALG@nested\endcsname)+(\ALGtikzmarkextraindent,\ALGtikzmarkverticaloffsetstart)$), \p{E}=($(pic cs:ALG@tikzmark@end@\csname ALG@tikzmark@\theALG@nested\endcsname)+(\ALGtikzmarkextraindent,\ALGtikzmarkverticaloffsetend)$) in (\x{S},\y{S})--(\x{S},\y{E});%
	\fi
	\gdef\ALG@tikzmark@last{end}%
}

\apptocmd{\ALG@beginblock}{\ALG@tikzmark@start}{}{\errmessage{failed to patch}}
\pretocmd{\ALG@endblock}{\ALG@tikzmark@end}{}{\errmessage{failed to patch}}
\makeatother

\algblock[with]{With}{EndWith}
\algblockdefx[With]{With}{EndWith}%
[1]{\textbf{with} #1 \textbf{do}}%
{}

\makeatletter
\ifthenelse{\equal{\ALG@noend}{t}}%
{\algtext*{EndWith}}
{}%
\makeatother

\newcommand\ind[1]{\mathbbm{1}\left\{#1\right\}} 

\newcommand\nf{\nicefrac}
\newcommand\defeq{\coloneqq}

\DeclareMathOperator{\Cov}{Cov}

%% file: intro.tex
\section{Introduction}

The high-level goal of low-recourse (a.k.a.\ consistent or stable) algorithms is to maintain near-optimal solutions to combinatorial optimization problems as constraints change dynamically, while making minimal edits to these solutions over time. Though interest in the field has burgeoned among both theoreticians and practitioners over the last few decades, balancing the dueling objectives of quality and consistency has proven to be a difficult and problem-specific endeavor. Few generic tools are known, and algorithms seem to require ad hoc combinatorial insights.

Recently, 
Bhattacharya, Buchbinder, Levin, and Saranurak \cite{BBLS23} 
proposed a principled framework for solving low-recourse optimization problems, following the classical relax-and-round paradigm of approximation algorithms. Their core contribution was an algorithm that, for the special class of mixed packing/covering problems, maintains a \emph{fractional} solution (approximately) satisfying the LP relaxation of any such problem, with low recourse as measured by the total $\ell_1$ movement of the fractional solution. This reduces the design of low-recourse algorithms for the class of mixed packing/covering problems to a low-recourse rounding question. 

The authors further gave a proof of concept of their framework by showing low recourse rounding algorithms for a small number of fundamental problems: \setcov, \mst, \loadbal and \matching \cite{BBLS23}. In follow-up work building on the same framework, an overlapping set of authors showed low-recourse rounding algorithms for \kcent, \kmed, and \facloc clustering~\cite{buchbindercompetitively}. Yet these rounding algorithms are vulnerable to the same criticism as the body of work on low-recourse algorithms: few generic tools are known, and algorithms seem to require ad hoc combinatorial insights. 

In this paper, we attempt to complete the picture by providing building blocks for low-recourse rounding algorithms for down-closed (i.e., packing) problems subject to element insertions/deletions. We call these primitives \emph{Dynamic Contention Resolution Schemes} (DCRS), and they are low-recourse dynamic analogs of the \emph{Contention Resolution Schemes} (CRS) proposed in the context of submodular optimization \cite{DBLP:journals/siamcomp/ChekuriVZ14} and their cousins \emph{Online Contention Resolution Schemes} (OCRS) \cite{feldman16online}, which are used broadly in online algorithms and are especially relevant for Bayesian selection problems.

\subsection{Our Results} 
\label{ssc:our_results}

Our first contribution is the following definition, which we formalize in \cref{sec:prelim_dcrs}.
\begin{definition}[DCRS, informal] 
    For $c \in [0,1]$ and $z \geq 0$, a \emph{$(c, z)$-DCRS}
    is an online algorithm that, given an online sequence of fractional points $\vx^1, \ldots, \vx^T \in [0,1]^\cN$ in a polytope relaxation of a down-closed constraint, performs a time-correlated sample $\Rset^1, \dotsc, \Rset^T$ of these vectors,\footnote{This is a sequence of subsets such that each individual set $\Rset^t$ contains each element $e \in \cN$ with probability $x^t_e$ independently, but the sequence $\Rset^1, \ldots, \Rset^T$ is jointly correlated over time in a natural way.} 
    and then further subselects $\Iset^1 \subseteq \Rset^1, \ldots, \Iset^T \subseteq \Rset^T$ with the properties that:
    \begin{enumerate}
        \item $\Iset^t$ is feasible with probability $1$.
        \item Every element $e \in \cN$, conditioned on being in the sample $\Rset^t$, is subselected into $\Iset^t$ with probability at least $c$.
        \item The total recourse of the sets $\Iset^1, \ldots, \Iset^T$ is at most $z$ times the $\ell_1$ movement of the vectors $\vx^1, \ldots, \vx^T$.
    \end{enumerate}
    We refer to $c$ and $z$ as the \emph{balance} and \emph{recourse} of the DCRS, respectively. 
\end{definition}

With this definition, our main technical contribution is that an $(\Omega(1), O(\log \rank(\cM)))$-DCRS exists for matroid constraints.
\begin{theorem}[Matroid DCRS, informal version of \Cref{thm:mat_dcrs_formal,thm:better-partition-dcrs}]
    \label{thm:mat_dcrs_informal}
    For every $\eps > 0$, there exist $(\nicefrac{1}{4}-\eps, O(\nicefrac{1}{\eps^3}\cdot \log \rank(\cM)))$-DCRSs for matroid constraints.
	If the vectors $\vx^1, \vx^2, \dotsc, \vx^T$ form a coordinate-wise increasing sequence, the recourse of these DCRSs improves to $O(\eps^{-1})$. If $\cM$ is a partition matroid, we show an improved $(1-e^{-1}, 4)$-DCRS.
\end{theorem}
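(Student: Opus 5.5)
We will construct the matroid DCRS in two layers: a time-correlated sampling scheme for the sets $\Rset^t$, and on top of it a dynamic selection rule producing $\Iset^t$. For sampling, each element $e$ receives an independent threshold $u_e \sim \uni[0,1]$, fixed once and for all, and we set $\Rset^t = \{e : u_e \le x^t_e\}$. Marginally $\Rset^t$ is a product sample of $\vx^t$. Moreover, $\expect{|\Rset^t \symdif \Rset^{t+1}|} = \norm{\vx^t-\vx^{t+1}}_1$, so the sampling layer already has recourse $1$. It therefore suffices to build a selection rule whose expected change in $\Iset$ is $O(\eps^{-3}\log\rank(\cM))$ per unit change of $\Rset$, while selecting each $e\in\Rset^t$ with probability at least $\nf14-\eps$.

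For the selection rule in a general matroid, we plan to dynamize the greedy/OCRS idea. First scale down: work with $(1-\eps)\vx^t$, or equivalently discard each sampled element independently with probability $\eps$. Then maintain $\Iset^t$ as a maximal independent subset of the scaled sample, built greedily in a fixed random order given by independent priorities $p_e$. With a fixed random order, the greedy independent set is a deterministic function of the sampled set. Each $e\in\Rset^t$ is selected iff it is not spanned by higher-priority sampled elements. Bounding this probability from below by $\nf14-\eps$ uses the standard CRS argument. Since $\vx\in$ the matroid polytope, the expected rank "pressure" on $e$ from earlier elements is controlled, and a Markov-type or rank-function argument over the random order gives $\nf12\cdot\nf12$ up to $\eps$. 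We expect the $\nf14$ to come from combining $\nf12$ from scaling with $\nf12$ from the random-order contention.

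For recourse, inserting or deleting one element in a greedy basis can trigger a chain of exchanges, because one insertion displaces one element, which then frees another, and so on. We therefore plan a layered, level-based structure instead of pure greedy. Partition the priorities into $O(\eps^{-1}\log\rank)$ geometric levels, and allow an element to be displaced only by an element from a strictly higher level. A potential argument then shows that an exchange cascade moves up levels and has expected length $O(\eps^{-2}\log\rank)$ per sample change. Combined with the $\eps$-slack, this yields $O(\eps^{-3}\log\rank)$. This cascade bound is the main obstacle. The attempt is to charge each exchange to the decrease of a potential such as $\sum_e \text{level}(e)$ of the selected set, using the matroid exchange property to guarantee that each displacement corresponds to a unique circuit. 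If that fails, the fallback is a periodic \Reset of the random order on geometric time scales, amortizing the cost.

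For the monotone case, elements are never deleted from $\Rset$. A displaced element never returns, and each insertion causes at most one removal on average in expectation over the random order, which gives $O(\eps^{-1})$. For partition matroids, each part is an independent rank-$k$ uniform-matroid problem. There we will use a different rule: within each part, keep the $k$ sampled elements with the smallest random ranks, where the ranks are derived from the thresholds $u_e/x_e$ so that selection mimics the $1-e^{-1}$ CRS for uniform matroids. An insertion or deletion in $\Rset$ changes the top-$k$ set by at most $2$, and a change in $x_e$ can also reorder rank $e$, which gives $4$ after accounting for both moves. The balance $1-e^{-1}$ follows from the standard Poisson-type calculation $\expect{\min(1,k/|\Rset\cap P|)}\ge 1-e^{-1}$ under $\sum_{e\in P}x_e\le k$.
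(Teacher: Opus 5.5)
\textbf{The proposal has a genuine gap in the general-matroid case: random-order greedy does not have constant balance, and the recourse plan targets an obstacle that does not exist.}

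Greedy in a fixed random order on an independently downsampled $\Rset^t$ is an \emph{oblivious} CRS: its choices do not depend on $\vx$. Oblivious schemes cannot have constant balance on graphic matroids (Fu et al., which the paper cites for exactly this reason). Here is a concrete example. Take vertices $u,v,w_1,\dots,w_N$, edges $uw_i$ and $w_iv$ for each $i$, and $e=uv$, all with $x=1/4$. Every edge set $S$ has $|S|\le 2\rank(S)$, so $\vx\in\frac12 P_\cM$. Yet $e$ is kept only if no path $uw_iv$ has both edges active and earlier than $e$. This happens with probability about $\int_0^1(1-U^2/16)^N\,dU=O(N^{-1/2})$, and any constant downsampling leaves this bound intact. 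No rank-pressure or Markov argument can rescue it: $\vx\in P_\cM$ bounds the total rank of the competitors, not the probability that a particular $e$ is spanned.

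The missing idea is the $\vx$-dependent chain of Feldman--Svensson--Zenklusen. An element is promoted to $S_i$ once $\Pr[e\in\Span(\Rdist(\vx(i-1))\cup S_i)]\ge\tau$. Greedy is then run separately in each level, contracting $S_{i+1}$ and restricting to $\Span(S_i)$. An unpromoted element is therefore spanned within its level with probability below $\tau$, which gives balance $1-\tau=1-b-\eps$. Taking $b=\frac12$ and downsampling gives $\frac14-\eps$.

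Your recourse plan also attacks a non-problem. With a \emph{fixed} order, the greedy basis is the lexicographically minimal basis of $\Rset$. One insertion or deletion in $\Rset$ changes it by at most one exchange, so there are no cascades. The real cost arises because the chain must move when $\vx$ moves. The paper handles this in two steps:
\begin{itemize}
\item It uses a friction threshold $\tau=b+\eps>b$. The potential $\mathbb{E}[\rank(\Rdist(\vy)\mid A)]$ drops by at least $\tau$ per promotion, which gives $|S_i|\le (rb/\tau)|S_{i-1}|$ and hence only $\ell=O(\eps^{-1}\log\rank(\cM))$ levels.
\item It rebuilds the chain suffix whenever $|S_i|>r\|\vx(i-1)\|_1/\tau$. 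Each rebuild is charged to the \emph{downward} $\ell_1$ movement of $\vx$, and a unit of decrease is charged at up to $\ell$ levels. This is where $\log\rank(\cM)$ and $\eps^{-3}$ come from. With no decreases, the reset cost vanishes and a geometric sum gives $O(\eps^{-1})$.
\end{itemize}

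\textbf{Your partition-matroid rule also fails to reach balance $1-e^{-1}$.} Conditioned on $e\in\Rset$, the value $u_e/x_e$ is uniform and independent of everything else. So within each part you pick a uniformly random subset of size $\min(k,|\Rset\cap P|)$ of the active elements. Take $k=1$, an element $e$ with $x_e=\delta$, and one other element in its part with value $1-\delta$. Then $e$ is kept with probability $\delta+(1-\delta)/2\to\frac12<1-e^{-1}$. So $\mathbb{E}[\min(1,k/|\Rset\cap P|)]\ge1-e^{-1}$ is false; it holds only in the Poisson limit of many small competitors.

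The paper's scheme needs $\vx$-dependence here too:
\begin{itemize}
\item Each $e$ gets a random priority $y_e$. It is \emph{alive} only if $z_e\ge x^t_e$, where $z_e\sim\mathrm{Exp}(y_e/c_P)$, so heavy elements are filtered more aggressively.
\item Random buckets split capacity $c_P$ into $c_P$ unit-capacity parts.
\item Greedy by priority then gives $\Pr[e\in\Iset^t\mid y_e=y]\ge x_e e^{-by}$, which integrates to $(1-e^{-b})/b$.
\end{itemize}
Recourse $4$ follows for two reasons. The threshold $z_e$ is crossed with probability at most $|x^t_e-x^{t-1}_e|$, so the alive set changes at most twice as often as $\Rset$. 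Each alive-set change then triggers at most one further change in $\Iset$.
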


It is known that there are matroids for which there is no $c$-balanced CRS for $c>1-e^{-1}$ even in the offline setting. Thus, it is natural to ask: Which classes of matroids admit CRSs with balance tending to $1$? Recently, Chekuri, Song, and Zhang \cite{DBLP:conf/sosa/ChekuriSZ24} provided one answer to this question. They showed that $k$-fold union matroids (defined formally in \cref{sec:kfold}), which play an important role in combinatorial optimization (see, e.g., \cite{Schrijver-book}), have $(1-o(1))$-balanced CRSs, and this result was later extended to OCRSs by 
Alon, Gravin, Pollner, Rubinstein, Wang,
Weinberg, and Zhan
\cite{DBLP:conf/innovations/AlonGPRWW025}. We show an analogous result for DCRSs.

\begin{theorem}[$k$-fold Union Matroid DCRS, informal version of \Cref{thm:kfold-mat_dcrs_formal}]
    \label{thm:lfold-union_dcrs_informal}
    There exist $(1 - O(\sqrt{\log(k)/k}), O((k/\log(k))^{3/2} \cdot \log \rank(\cM)))$-DCRSs for $k$-fold union matroid constraints. If the vectors $\vx^1, \vx^2, \dotsc, \vx^T$ form a coordinate-wise increasing series, the recourse of these DCRSs improves to $O(\sqrt{k/\log(k)})$.
\end{theorem}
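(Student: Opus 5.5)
Here $\cM$ is the $k$-fold union of a matroid $\cM_0$, so $I$ is independent in $\cM$ exactly when it splits into $k$ independent sets of $\cM_0$, and $\rank(\cM)\le k\cdot\rank(\cM_0)$. The plan is to reuse the machinery behind the general matroid DCRS (\cref{thm:mat_dcrs_informal}) with two changes: scale the input down by $(1-\delta)$ with $\delta = \Theta(\sqrt{\log(k)/k})$, and exploit concentration, which is available because the $k$-fold structure spreads the load.

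\textbf{Step 1 (sampling and feasibility).} Given $\vx^t \in P_{\cM}$, I will sample $\Rset^t$ from $(1-\delta)\vx^t$ using the same time-correlated coupling as in the matroid DCRS: each element keeps a persistent random threshold, so the $\ell_1$ movement of the sampled sets is bounded by that of the $\vx^t$. Then I will run the dynamic selection rule of the matroid DCRS: maintain $\Iset^t \subseteq \Rset^t$ greedily, and when an element is rejected, charge the rejection to the span of a subset of the current sample. The new ingredient is a structural fact about $k$-fold unions, in the spirit of Chekuri--Song--Zhang: by Nash-Williams/Edmonds, the rank of $\cM$ on any set $A$ is $\min_{B\subseteq A}\bigl(|A\setminus B| + k\cdot \rank_{\cM_0}(B)\bigr)$. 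So an element is blocked only if some set $B$ has $|\Rset\cap B| > k\cdot\rank_{\cM_0}(B)$. Since $\vx(B) \le k\cdot \rank_{\cM_0}(B)$ and the sample is scaled by $(1-\delta)$, a Chernoff bound gives that this overflow has probability $\exp(-\Omega(\delta^2 k)) = k^{-\Omega(1)}$ per "level". After summing over the relevant critical sets, which I will organize by a chain/level decomposition exactly as in the matroid proof, the overall rejection probability is $O(\delta)$. This yields balance $(1-\delta)(1-O(\delta)) = 1-O(\sqrt{\log k/k})$.

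\textbf{Step 2 (recourse).} The general matroid DCRS pays $O(\eps^{-3}\log\rank(\cM))$ recourse when its slack parameter is $\eps$. Here the effective slack is $\delta$, and I will rerun that recourse analysis with $\eps$ replaced by $\delta$. This gives $O(\delta^{-3}\log\rank(\cM)) = O((k/\log k)^{3/2}\log\rank(\cM))$. In the increasing case, no elements are removed, so only the insertion-triggered resamplings count, and the bound becomes $O(\delta^{-1}) = O(\sqrt{k/\log k})$, mirroring the $O(\eps^{-1})$ bound there.

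\textbf{Main obstacle.} The hardest part is making the concentration argument hold dynamically. The set of critical $B$'s changes over time, and the selection rule must not re-randomize too often, yet the rejection probability must still be bounded via Chernoff at every time step. I would first try to show that the matroid DCRS's invariant, namely that each rejected element is spanned by a "dense" region of the sample, can be strengthened in the $k$-fold case to require density exceeding $1+\delta$ relative to $\cM_0$-rank times $k$. The union bound over levels would then go through as in the static argument.
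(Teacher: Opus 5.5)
There is a genuine gap in Step 1, and it carries over into Step 2. You plan to reuse the matroid DCRS's chain. Its promotion rule (move $e$ up when $\Pr[e \in \Span(\Rdist(\vx(i-1)) \cup S_i)] \ge \tau$) only guarantees that an unpromoted element is spanned with probability below $\tau$, which gives balance $1-\tau$ (\cref{prop:matroid-DCRS-balance}). At the same time, \cref{lem:growth_bound} needs $\tau > b$ for the levels to shrink. With your scaling $b = 1-\delta$, this caps the balance below $\delta$, and relabeling $\eps$ as $\delta$ cannot repair that.

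The paper changes the chain itself (\cref{alg:lfold-DCRS}). Levels are spans in the base matroid $\cQ$. An element $e$ is promoted when its expected number of free copies, $\bE[\rank^*(e\times[k] \mid (\Rdist(\vx|_{\Span_\cQ(S_{i-1})})\times[1]) \cup (S_i\times[k]))]$, is at most $k-\tau$, with $\tau \in (bk,k)$ chosen on the scale of $k$. An unpromoted element then keeps $k-\tau = \Theta(\sqrt{k\log k})$ free copies in expectation. This per-element occupancy quantity drives both the growth bound (\cref{lem:fold_growth_bound}) and the concentration. Your ``dense region'' invariant in the Main obstacle paragraph does not identify it.

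The concentration step also fails as written. A Chernoff bound for a fixed $B$ does not control the event that \emph{some} $B$ overflows. The blocking witness in the matroid-union rank formula depends on the sample, and there are exponentially many candidates. A further union over levels would cost a factor $\ell$ that grows with $\log\rank(\cM)$, whereas the target balance is rank-independent. The paper avoids this by analyzing each $e$ only at its own level $i(e)$.

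What replaces Chernoff is the bicriterion concentration inequality (\cref{thm:bicrit_conc}), applied to the monotone $1$-Lipschitz map $S\mapsto -\rank^*(e\times[k]\mid (S\times[1])\cup(S_{i(e)+1}\times[k]))$. That inequality compares a \emph{downscaled} sample with the expectation under the original one. This is why the algorithm selects from a further-thinned sample $\Rset_d$ while building the chain from $\vx$. The result is failure probability $d^{k-\tau}$ and balance $d(1-d^{k-\tau})$; your proposal has no analogue of this second scaling.

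Once the chain is changed, the recourse bound must be re-derived, not transplanted. You need \cref{lem:fold_growth_bound} and the bound $\|\vx|_{\Span_\cQ(S)}\|_1 \le bk|S|$ from \cref{obs:wrong_rang_penalty}. You also need two facts: an addition to $S_i$ can evict up to $k$ elements of $\Iset_{i-1}$, and resets evict at most $k$ times as many $\Iset$-elements as $S$-elements. These factors of $k$ cancel only because $\tau = k((1+\eps)b+\eps)$ and $r = 1+\eps$ give $\tau - rbk = k\eps$.
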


We also show a simple but (arguably) surprising result for knapsack constraints, which adapts ideas from \cite{DBLP:journals/siamcomp/ChekuriVZ14} and \cite{feldman16online} with a twist.
\begin{theorem}[Knapsack DCRS, informal version of \Cref{thm:dcrs-knapsack}]
    \label{thm:knap_dcrs_informal}
    There exist $(\nicefrac{1}{16},\: 2)$-DCRSs for knapsack constraints.
\end{theorem}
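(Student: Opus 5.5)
Our approach to the $(\nicefrac{1}{16},2)$-DCRS for knapsack is to split the elements into \emph{big} elements (size more than half the capacity) and \emph{small} elements (size at most half), flip a fair coin that is drawn once at time $0$ and fixed forever, and run a dedicated scheme for the chosen class. This mirrors the offline CRS of \cite{DBLP:journals/siamcomp/ChekuriVZ14} and the OCRS of \cite{feldman16online}. Each class gets balance at least $\nicefrac18$, so the overall balance is $\nicefrac12\cdot\nicefrac18=\nicefrac1{16}$. Because the coin never changes, the recourse is that of the chosen sub-scheme.

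We first fix the time-correlated sample. Each element $e$ draws one threshold $u_e\sim \uni[0,1]$, used at all times, and we set $\Rset^t=\{e: u_e\le x^t_e\}$. Then $\Rset^t$ has the correct independent marginals at each time, and the expected number of membership changes of $e$ equals $\sum_t |x^{t}_e-x^{t-1}_e|$.

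The key twist is a rule that makes the selection depend only on the sample and on fixed randomness, so that a change of one element disturbs $\Iset$ by at most two elements.
\begin{itemize}
\item \textbf{Big elements.} At most one big element fits, and $\sum_{e\text{ big}} x_e\le 2$. Let $\Iset^t$ consist of the sampled big element with the smallest (fixed) priority, where priorities are a uniformly random permutation drawn once. If one element enters or leaves $\Rset$, then $\Iset$ changes by at most two elements: the old winner out and the new winner in.
\item \textbf{Small elements.} Keep $e\in\Rset^t$ if and only if a fixed independent coin $q_e$ accepts it, with retention probability $\nicefrac12$, and if moreover the total size of the retained sampled small elements is at most the capacity. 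Otherwise output $\emptyset$. A single change then costs at most one element when the threshold is not crossed, but a crossing could flush everything.
\end{itemize}

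The crossing problem is what drives the choice of the small-element rule. We therefore prefer a greedy-by-fixed-priority rule: with a random permutation fixed once, accept sampled small elements in priority order while they fit. This resembles the OCRS analysis of \cite{feldman16online}, where the size already used by elements earlier in the order is at most half the capacity with probability at least $\nicefrac12$ by Markov's inequality, since the expected load is $\le \nicefrac12$ after subsampling. The recourse of greedy under insertion or deletion can still cascade, however.

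\textbf{Main obstacle.} The hard step is obtaining recourse exactly $2$ per unit of $\ell_1$ movement for small elements. The way we plan to resolve it is to replace the greedy rule by a threshold rule: accept a sampled small $e$ if and only if the total size of sampled elements of higher priority is at most $\nicefrac12$ of the capacity. Feasibility holds because the accepted set is a priority prefix of total size at most $\nicefrac12+\nicefrac12$. Balance is at least $\nicefrac12$ times a constant, by Markov applied to the prefix load (expected $\le 1$, possibly after scaling $\vx$ by $\nicefrac12$ through the coin $q_e$).

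For recourse, changing one element $f$ affects only elements of lower priority than $f$, whose status can flip. Each such flip is a prefix-boundary shift, which in the worst case is large, so we bound it in expectation. A change of $f$ alters the prefix load by $s_f$, and the expected number of elements whose prefix load crosses the boundary is at most $s_f$ divided by the minimum... this fails for tiny sizes. So the fallback is to charge via the random permutation: the expected number of status changes is at most $1$ plus the expected count of boundary elements, and we would bound the latter by $1$. This is the step to verify carefully, and it yields the total of $2$.
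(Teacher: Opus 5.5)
There is a genuine gap, and it sits exactly at the step you flagged: the small-element rule. Your big-element rule is fine. The rule you settle on for small elements does not have bounded recourse, so the expected number of ``boundary'' elements cannot be bounded by $1$. That rule accepts a sampled small $e$ iff the sampled load ahead of it in a fixed random priority order is at most $\nf{1}{2}$. Here is a counterexample:
\begin{itemize}
\item Take one small element $f$ with $s_f=\nf{1}{2}$ whose $x_f$ oscillates between $0$ and $\eps$.
\item Add $N$ tiny elements of size $\nf{1}{(4N)}$ with $x_e=1$, so $\vx\in \nf{1}{2}\cdot P_\cK$.
\item When $f$ is absent, every sampled tiny element has prefix load at most $\nf14$ and is accepted.
\item When $f$ is present, $f$ is accepted, and all but at most one of the sampled tiny elements behind $f$ have prefix load above $\nf12$ and are rejected.
\end{itemize}
This holds with or without your retention coins $q_e$. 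Hence each entry or exit of $f$ from the sample flips $\Theta(N)$ output elements in expectation over the priorities, while contributing only $1$ to $\abs{R^t\symdif R^{t-1}}$. The recourse ratio is therefore $\Omega(N)$. The flipped set is everything whose prefix load lies in a window of width $s_f$ below $\nf12$. Tiny elements can fill that window arbitrarily densely, and randomizing the order does not prevent this.

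Your dismissal of plain greedy was premature: the cascade is an artifact of the order. The missing idea is to scan in a \emph{fixed non-decreasing size order}, which is what the paper does after the one-time fair coin that kills either all big or all small elements. Then inserting $e$ into the sample changes at most one other output element:
\begin{itemize}
\item Suppose $e$ is accepted, and let $L$ be the old load just before the first later element $g$ on which the two greedy runs disagree. The new load there is $L+s_e$, so necessarily $L+s_g\le 1<L+s_e+s_g$.
\item Afterwards the loads are $L+s_g$ (old) and $L+s_e$ (new). Every later $h$ has $s_h\ge s_g\ge s_e$, so both $L+s_g+s_h$ and $L+s_e+s_h$ exceed $1$, and both runs reject everything after $g$.
\end{itemize}
Deletions are symmetric, which gives recourse $2$ against any TCOS, so you also need not hard-wire the threshold sampler.

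Balance survives the fixed order. A sampled small $e$ fits whenever the other sampled small elements have total size at most $\nf12$; by Markov this fails with probability at most $2b$ for $\vx\in b\cdot P_\cK$. A sampled big $e$ is taken whenever no other big element is sampled, which also happens with probability at least $1-2b$. With the coin this is the $(b,(1-2b)/2)$ bound of Feldman, Svensson and Zenklusen, and $b=\nf14$ gives $b\cdot\frac{1-2b}{2}=\nf{1}{16}$.
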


Finally, following \cite{DBLP:journals/siamcomp/ChekuriVZ14} and \cite{feldman16online}, we show that DCRSs for different constraints can be generically combined to give DCRSs for the intersection of the constraints.

\begin{theorem}[Combiner Theorem, informal version of \Cref{thm:combiner}]
        Suppose the constraints $\cF_1, \ldots, \cF_h$ admit DCRSs with parameters $(c_1, z_1), \ldots, (c_h, z_h)$. Then, the intersection of these constraints $\bigcap_{i \in [h]} \cF_i$ admits a $(\prod_i c_i, \: \sum_i z_i)$-DCRS.
\end{theorem}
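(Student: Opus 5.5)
The plan is to run the $h$ given DCRSs in parallel on one shared time-correlated sample and output the intersection of their selected sets. Concretely, given the online sequence $\vx^1,\dots,\vx^T$ lying in the intersection of the relaxations of $\cF_1,\dots,\cF_h$, we draw a single time-correlated sample $\Rset^1,\dots,\Rset^T$ and feed the same sample to each DCRS $\pi_i$. Each $\vx^t$ lies in the relaxation of every $\cF_i$, so every $\pi_i$ receives a legal input. The $\pi_i$ use independent internal randomness, and $\pi_i$ produces $\Iset_i^t\subseteq \Rset^t$. We output $\Iset^t=\bigcap_{i\in[h]}\Iset_i^t$.

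Feasibility is immediate. Each $\Iset_i^t\in\cF_i$ with probability $1$, and each $\cF_i$ is down-closed, so $\Iset^t\in\cF_i$ for every $i$.

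For recourse, we use the elementary inequality
\[
\Bigl(\bigcap_i A_i\Bigr)\symdif\Bigl(\bigcap_i B_i\Bigr)\subseteq \bigcup_i (A_i\symdif B_i).
\]
Applied to consecutive time steps, it gives $|\Iset^t\symdif\Iset^{t+1}|\le\sum_i|\Iset_i^t\symdif\Iset_i^{t+1}|$. Summing over $t$, the total recourse is at most $\sum_i z_i$ times the $\ell_1$ movement, whether recourse is measured pathwise or in expectation.

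The main obstacle is the balance bound. We need
\[
\Pr\Bigl[e\in\bigcap_i\Iset_i^t \Bigm| e\in\Rset^t\Bigr]\ge\prod_i c_i,
\]
but the events $e\in\Iset_i^t$ for different $i$ are correlated through the shared sample $\Rset^t$. Each guarantee $\Pr[e\in\Iset_i^t\mid e\in\Rset^t]\ge c_i$ holds only on average over $\Rset^t$, so we cannot simply multiply them.

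The route I would try first mirrors \cite{DBLP:journals/siamcomp/ChekuriVZ14}. Work in a model where, conditioned on the sample history, each $\pi_i$ selects $e$ with a probability that is monotone: decreasing as the sample grows, that is, as more elements compete for $e$. Then, conditioned on $e\in\Rset^t$, the $\pi_i$ are independent given the sample, so
\[
\Pr\Bigl[e\in\bigcap_i\Iset_i^t \Bigm| e\in\Rset^t\Bigr]=\expect{\prod_i q_i(\Rset)},
\]
where each $q_i$ is a decreasing function of the sample. The sample $\Rset^t$ is a product distribution, so Harris--FKG gives $\expect{\prod_i q_i}\ge\prod_i\expect{q_i}\ge\prod_i c_i$.

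This requires that the formal DCRS definition in \cref{sec:prelim_dcrs} either includes such a monotonicity condition or defines balance in a way that is robust to conditioning on the sample. I would check which one it is. If the definition instead demands balance against an adversary, or conditional on the past trajectory, the same argument should go through with the FKG step applied conditionally, given the correlated history up to time $t$. I expect verifying that the time-correlated sampling preserves a product structure at each fixed time is where the real care is needed.
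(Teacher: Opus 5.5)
Your proposal is essentially the paper's proof. The paper runs the given DCRSs in parallel on the shared sample and outputs $\bigcap_i \Iset_i^t$. It bounds recourse by $\bigl(\bigcap_i A_i\bigr)\symdif\bigl(\bigcap_i B_i\bigr)\subseteq\bigcup_i(A_i\symdif B_i)$, and gets balance from monotonicity plus FKG via \cite[Lemma 1.6]{DBLP:journals/siamcomp/ChekuriVZ14}. The formal \Cref{thm:combiner} does assume monotone DCRSs, as you suspected.

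The paper does not redo the FKG step. It regards $\pi^i_{\vx^{\leq t}}(\Rset^{\leq t-1},\cdot)$ as a monotone $(b,c_i)$-balanced CRS at every step, which is the convention stated after \Cref{def:dcrs}, and cites that lemma. The shared-history issue you flag is therefore absorbed into that convention rather than argued separately.
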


The formal version of \Cref{thm:combiner} is more refined than the simplified version above.
In particular, it implies an $(\Omega(p^{-1}),\: O( \log n))$-DCRS for the intersection of $p$ matroid constraints and a constant number of knapsack constraints. This bound also generalizes to the case of unboundedly many matroid/knapsack constraints, provided that each element participates in at most $p$ matroid and $O(1)$ knapsack constraints (this is known as the intersection of a $p$-matchoid and $O(1)$-column-sparse packing integer program, or PIP).

\subsection{Implications}

Why should one care about DCRSs? As hinted in the introduction, when combined with the dynamic LP solver of \cite{BBLS23}, a DCRS for a down-closed constraint $\cF$ implies a low-recourse dynamic algorithm for maintaining solutions to optimization problems subject to $\cF$ over the course of element insertions and deletions. In fact, we show that we can solve this problem for a very general class of objectives, namely nonnegative monotone submodular functions, even when the objective function changes over time.

\begin{theorem}[DCRS $\Rightarrow$ Low-Recourse Algorithms, informal version of \Cref{thm:low_rec_submod}]
    Let $\cF$ be a  constraint for which there exists a $(c,z)$-DCRS. Then, for any $\beta$, there is an algorithm that gets a sequence of $T$ updates arriving online, where update $t \in [T]$ consists of an arbitrary set $A^t$ of active elements and an arbitrary non-negative monotone submodular function $f^t$. After receiving every update $t$, the algorithm produces an independent set $\Iset^t$ feasible under $\cF$ such that:
    \begin{enumerate}
        \item For all $t \in [T]$, the solution $\Iset^t$ is an approximate maximizer of $f^t$. If $f^t$ is nonnegative monotone submodular then
        \[\expect*{f^t(\Iset^t)} \geq (1-e^{-1}) \cdot c \cdot (1 - \eps) \cdot \beta  \cdot \opt^t,\] 
        and, if $f^t$ is linear, then
       \[\expect*{f^t(\Iset^t)} \geq c \cdot (1 - \eps) \cdot \beta  \cdot \opt^t.\]       
        \item The expected recourse is
        $O\left(\frac{z}{\eps} \log \frac{n}{\eps}\right) \optr^\beta$.
        \end{enumerate}
        Here, $\opt^t$ is the maximum value of $f^t$ for a feasible subset of $A_t$ and $\optr^\beta$ is the minimum recourse of any algorithm maintaining $\beta$-approximation.
\end{theorem}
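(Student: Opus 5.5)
The plan is to follow the relax-and-round paradigm: maintain a fractional solution of a relaxation with low $\ell_1$ movement, then round it online with the given $(c,z)$-DCRS. The first step is to write a packing/covering relaxation. For linear objectives this is direct. For monotone submodular $f^t$, I would use the multilinear extension $F^t$, or better its concave upper bound (the coverage-type LP or concave closure $f^{+}$). For any fixed $\vx$ in the polytope $P_\cF$, this bound satisfies $\mathbb{E}[f^t(\Rset)] \ge (1-e^{-1}) f^{+,t}(\vx)$ when $\Rset$ contains each $e$ independently with probability $x_e$; this is the standard correlation-gap bound. The fractional problem is then: keep $\vx^t \in P_\cF$, supported on $A^t$, with $f^{+,t}(\vx^t) \ge \beta(1-\eps)\opt^t$. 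We approximately encode this as a mixed packing/covering body, guessing $\opt^t$ up to $(1+\eps)$ factors, which costs only constants. The Positive Body Chasing algorithm of \cite{BBLS23} then maintains feasible $\vx^t$ with total $\ell_1$ movement $O(\frac1\eps\log\frac n\eps)$ times the movement of the best offline fractional sequence. That sequence is at most $\optr^\beta$, since any integral $\beta$-approximate solution sequence is itself a feasible fractional sequence and its characteristic vectors move exactly its recourse.

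The second step feeds $\vx^1,\dots,\vx^T$ into the DCRS to obtain $\Iset^t \subseteq \Rset^t$. Feasibility holds by property 1. The recourse is at most $z$ times the $\ell_1$ movement, which gives the bound $O(\frac z\eps\log\frac n\eps)\optr^\beta$ in expectation. For a linear $f^t$ the value is immediate: $\mathbb{E}[f^t(\Iset^t)] = \sum_e f^t(e)\Pr[e\in \Iset^t] \ge c\sum_e f^t(e)x^t_e \ge c(1-\eps)\beta\opt^t$.

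The main obstacle is the submodular value guarantee. The balance property only lower-bounds marginal probabilities, and those alone do not control $\mathbb{E}[f(\Iset)]$. In the offline CRS theory one needs monotonicity of the scheme, or the selection must be made with knowledge of the order. I would handle this with the standard trick for online/dynamic schemes. Write $f(\Iset)$ as a telescoping sum of marginals in a fixed order of the elements. Then use submodularity to lower-bound $f(\Iset) \ge \sum_{e\in \Iset} f(e \mid \Rset\cap\{\text{earlier}\})$ when the set compared against contains $\Iset\cap\{\text{earlier}\}$. Next, take expectations over $\Rset^t$, which at each fixed time is an independent product sample. Here I would need the DCRS's selection probability $c$ to hold conditioned on $\Rset^t\setminus\{e\}$, or at least on the earlier elements. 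If the formal definition in \cref{sec:prelim_dcrs} guarantees this conditional balance, the sum becomes $\ge c\,\mathbb{E}[f^t(\Rset^t)] \ge c(1-e^{-1})f^{+,t}(\vx^t)$, which gives the stated bound. Otherwise I would pass through a random-order or monotonization argument as in \cite{DBLP:journals/siamcomp/ChekuriVZ14}. I expect this step to dictate exactly which formal DCRS property is required.
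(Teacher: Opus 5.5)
Your pipeline matches the paper's. You run positive body chasing \cite{BBLS23} on a relaxation containing every integral $\beta$-approximate solution, then apply the DCRS; recourse is $z$ times the $\ell_1$ movement, and the linear case goes through marginals. The gap is the submodular value bound, which you correctly identify as the crux but leave unproved, and neither fallback you suggest works. Balance conditioned on $\Rset^t\setminus\{e\}$ is impossible for any constant $c$ already for a rank-one matroid: if $\Rset^t=\{e_1,\dots,e_m\}$, only one $e_j$ can be selected. Balance conditioned on the earlier elements is not part of \cref{def:dcrs}. A ``random-order or monotonization argument'' is not a concrete step, and it would also have to preserve recourse. Moreover, balance alone cannot suffice. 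Take $f(S)=\min\{|S|,1\}$, no constraint, and $x_e=1/n$, and the scheme that outputs $\Rset$ if $|\Rset|\geq 2$ and $\emptyset$ otherwise. Its balance is $\approx 1-e^{-1}$, but its value is $\approx 1-2e^{-1}<(1-e^{-1})^2\approx c\,F(\vx)$. This is why the formal version, \cref{thm:low_rec_submod}, assumes a \emph{monotone} DCRS, and the paper then invokes \cite[Theorem 1.3]{DBLP:journals/siamcomp/ChekuriVZ14}.

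The missing idea is Harris' (FKG) inequality, which completes your own argument. Your bound $f(\Iset)\geq f(\emptyset)+\sum_{e\in\Iset}f(e\mid\Rset\cap\{\text{earlier}\})$ is the first half of the proof of that theorem. Conditioned on $e\in\Rset$, the indicator $\ind{e\in\Iset}$ is non-increasing in the other (independent) coordinates of $\Rset$ by monotonicity of the scheme. The marginal $f(e\mid\Rset\cap\{\text{earlier}\})$ is also non-increasing, by submodularity. Hence the conditional expectation of their product is at least $\prob{e\in\Iset\mid e\in\Rset}\cdot\expect{f(e\mid\Rset\cap\{\text{earlier}\})}\geq c\cdot\expect{f(e\mid\Rset\cap\{\text{earlier}\})}$. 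Multiplying by $x_e$ and summing gives $\expect{f(\Iset)}\geq c\,F(\vx)$ with no conditional balance needed.

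One caution at time $t$: the history $\Rset^{\le t-1}$ is correlated with $\Rset^t$, so apply this to a monotone rule that depends only on $\Rset^t$ and on randomness independent of it. The paper's schemes provide one. For example, the chain of the matroid DCRS depends only on $\vx^{\le t}$, and $\Iset^t$ contains the monotone set of active $e$ not spanned by $(\Rset^t\cap\Span(S_{i(e)})-e)\cup S_{i(e)+1}$, whose balance is $1-\tau$.

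For the relaxation, the paper uses the covering extension $f^*$ instead of $f^+$. The condition $f^*_t(\vx)\geq\beta\opt^t$ is literally the family of covering constraints $f^t(S)+\sum_e f^t_S(\{e\})\,x_e\geq\beta\opt^t$ for all $S\subseteq\cN$, with nonnegative coefficients by monotonicity. Every integral $\beta$-approximate solution satisfies it, and \cite[Lemma 3.8]{vondrak2007submodularity} gives $F^t\geq(1-e^{-1})f^*_t$. Since $f^+\leq f^*$, this body contains yours. It also replaces your unspecified ``approximate encoding'' and removes the need to guess $\opt^t$.
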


\input{diagram.tex}

The relax-and-round pipeline underlying the general-purpose competitive-recourse dynamic algorithm from the last theorem is illustrated in \Cref{fig:framework_horizontal}. 
\Cref{tab:app} summarizes some specific results obtained by combining this last theorem with our DCRSs. In the following, we discuss several ways in which this theorem and the results that follow from it improve upon existing work.

\paragraph{Competitive Recourse.} Most work on low-recourse dynamic algorithms is concerned with \emph{absolute} recourse bounds. These are guarantees of the form ``After $T$ data point insertions or deletions, the algorithm incurs at most $c \cdot T$ recourse.'' Our work on the other hand extends the so-called \emph{competitive} recourse bounds championed by \cite{BBLS23}, which are refined guarantees of the form ``Over the course of the input sequence, the algorithm incurs recourse that is at most $c \cdot \optr$.'' Here $\optr$ is the minimal recourse of \emph{any} offline algorithm with full foreknowledge of the input that maintains an optimal solution at all times.\footnote{There are a small number of works prior to \cite{BBLS23} in which competitive recourse guarantees are implicit, e.g., \cite{brodal1999dynamic,bera2022new,azar2023competitive,avin2016online,avin2020dynamic}.} More generally, for any $\beta\leq 1$, we can compare our algorithms' recourse to $\optr^{\beta}$, which is the optimal recourse of an offline algorithm that maintains a $\beta$-approximate solution at all times. 

\paragraph{New Settings.} Our results extend to hitherto unstudied settings. For example, ours is the first low-recourse algorithm for knapsack constraints! One reason this problem has been overlooked is that non-trivial absolute recourse algorithms for fully-dynamic knapsack simply do not exist.\footnote{Consider a situation in which there is one item of size $1$ and value $C \gg 1$, and there are $1/\eps$ items of size $\eps$ and total value $1$. If the large element is toggled on the algorithm must hold it instead of small items, but if it is toggled off the algorithm must evict it and fetch at least $O(1/\eps)$ small items. Repeated toggling forces arbitrarily large recourse.} We circumvent this issue by studying \emph{competitive} recourse algorithms for this setting, and we are the first to do so.

We generalize the \matching result of \cite{BBLS23}\footnote{\cite{BBLS23} also study the \mst problem, but the comparison to our work is not apt: their goal is to find a min-weight matroid base (specifically a spanning tree), while we aim for a max-weight independent set.} to intersections of arbitrary matroids and even $p$-matchoids. In particular, we obtain the first competitive-recourse algorithms for (non-bipartite) \gmatching. We also get competitive-recourse algorithms for dynamically maintaining large arborescences, which can be represented as intersections of graphic and partition matroids. This problem was recently studied in the random order model by Dahlmeier and Hershkowitz \cite{DBLP:conf/waoa/DahlmeierH25}. It should also be noted that the result of \cite{BBLS23} was only for maximum \emph{cardinality} \matching, and did not extend to general linear objectives, let alone monotone submodular functions.

Competitive-recourse algorithms for submodular optimization were studied only very recently by Buchbinder, Naor and Wajc \cite{buchbinder2025chasingsubmodularobjectivessubmodular} for the special case of partition matroids. They achieve a tighter $(1-e^{-1})$-approximation for this case by elegantly combining body-chasing and rounding (we guarantee only $(1-e^{-1})^2$-approximation), but since their approach does not go through contention resolution schemes, it does not immediately generalize to matching.\footnote{To implement our algorithm in polynomial time without losing additional constant factors, we use a result from \cite{buchbinder2025chasingsubmodularobjectivessubmodular} to (approximately) separate  the relevant polytopes.} 
Several papers have explicitly considered absolute-recourse submodular optimization over cardinality constraints~\cite{DBLP:conf/icml/DuettingFLNZ24,DBLP:conf/stoc/DuttingFLNSZ25}. Absolute recourse bounds for dynamic submodular maximization also follow from low-update time dynamic algorithms when each time step involves a bounded number of updates to the solution (e.g. for cardinality~\cite{DBLP:conf/stoc/ChenP22,DBLP:conf/nips/LattanziMNTZ20} and matching~\cite{DBLP:conf/icalp/BanihashemBGHJM25} constraints), and from some online/streaming algorithms (e.g.\ for matroid  constraints~\cite{DBLP:journals/mp/ChakrabartiK15,DBLP:conf/icalp/ChekuriGQ15,DBLP:conf/icalp/FeldmanLNSZ22,DBLP:journals/mor/HarshawKFK22}).

\paragraph{Better bounds.} Besides significantly generalizing \cite{BBLS23}, we also improve several of their parameters, even in the special case of max-cardinality \matching that they consider. Specifically, they require $O(\log^2 n)$-competitive recourse, and in addition, they lose also a small inverse polynomial \emph{absolute} recourse term. We improve the recourse guarantee to $O(\log n)$, and we remove the additive term, yielding a true competitive recourse bound (we do however suffer a worse approximation ratio of $(1-e^{-1})^2 - \eps \approx 0.4$, while \cite{BBLS23} show a $(1-\eps)$-approximation). 

\begin{table}
    \centering
    \footnotesize{
    \def\arraystretch{1.3}

    \begin{tabular}{|>{\centering}m{0.20\textwidth}|>{\centering}m{0.15\textwidth}|>
    {\centering}m{0.18\textwidth}|>{\centering}m{0.18\textwidth}|} 
        \hline 
        \textbf{Problem} & \textbf{Approximation} & \textbf{Competitive Recourse} & \textbf{Reference}
        \tabularnewline
        \hline 
        \hline 
        Matroid & $\frac{1}{4}-\eps$ & $O\big(\frac{1}{\eps^{4}}\log^2\frac{n}{\eps}\big)$ & \Cref{thm:mat_dcrs_formal}
        \tabularnewline
        \hline 
        Partition Matroid & $1-e^{-1}-\eps$ & $O\big(\frac{1}{\eps}\log\frac{n}{\eps}\big)$  & \Cref{thm:better-partition-dcrs}
        \tabularnewline
        \hline 
        Matching & $(1-e^{-1})^2-\eps$ & $O\big(\frac{1}{\eps}\log\frac{n}{\eps}\big)$ & \Cref{thm:better-partition-dcrs,thm:combiner}
        \tabularnewline
        \hline 
        Knapsack & $\frac{1}{16} - \eps$ & $O\big(\frac{1}{\eps}\log\frac{n}{\eps}\big)$ & \Cref{thm:dcrs-knapsack}
        \tabularnewline
        \hline 
        $p$-Matchoid and $O(1)$-sparse packing PIP & $\Theta(1/p)$ & $O(p^3 \log^2 n)$ & \Cref{thm:mat_dcrs_formal,thm:dcrs-knapsack,thm:combiner}
        \tabularnewline
        \hline 
        Matching in $p$-Hypergraph & $\Theta(1/p)$ & $O(\log n)$ & \Cref{thm:better-partition-dcrs,thm:combiner}
        \tabularnewline
        \hline
        $k$-Fold Union Matroid & $1 - O\Big(\sqrt{\frac{\log k}{k}}\Big)$ & $O\Big(\!\left(\frac{k}{\log k}\right)^{3/2} \log^2 n\Big)$ & \Cref{thm:kfold-mat_dcrs_formal}
        \tabularnewline
        \hline 
    \end{tabular}
    }
    {\footnotesize{}\caption{\label{tab:app}
        Summary of our fully dynamic algorithms with competitive recourse, obtained by composing the relevant DCRS with \Cref{thm:low_rec_submod}. All results are for linear objective functions, but translate to monotone submodular functions with an additional factor $(1-e^{-1})$ loss in the approximation.
    }
    }{\footnotesize \par}
\end{table}

\subsection{Techniques and Outline}

We begin in \Cref{sec:matroid} with our main result, which is a construction of DCRSs for matroid constraints. Our jumping-off point is the chain construction of Feldman, Svensson, and Zenklusen~\cite{feldman16online}, originally designed for \emph{online} contention resolution. At high level, their idea is to recursively ``protect'' elements that are very likely to be spanned (the CRS requirement is that every element should have good probability to appear in the output solution if it is active), and then to take protected elements with higher priority over non-protected elements. The result is a chain decomposition of the ground set $\cN = N_0 \supsetneq N_1 \supsetneq \ldots \supsetneq N_\ell$, and given the chain, 
the (O)CRS can greedily add active elements to each level independently; the chain guarantees independence with respect to $\cM$ regardless of how each level is resolved.
The main contribution of Feldman et al.\ is to show that greedily protecting elements in danger of being spanned is a strategy that terminates; by virtue of this termination, every element is unlikely to be spanned if active, and thus the desired balance guarantee holds.

It is natural to try to maintain a similar chain dynamically in way that is stable with respect to changes to the input vector $\vx$, and indeed this is what we do. Unfortunately, off-the-shelf, the standard construction fails in a few ways. For one, small changes in the input vector $\vx$ can force significant changes to the chain. Second, the number of layers of the chain may be large a priori, and hence moving a single element in the chain can trigger a cascade of many other element moves. Our first natural remedy is to add ``friction'' to the chain dynamics: we simply increase the threshold for protecting elements above the strictly necessary parameter chosen by Feldman et al. This forces the expected rank of subsequent levels in the chain to shrink geometrically, and hence the number of levels is always logarithmically bounded.\footnote{
This geometric behavior was noted by Feldman, Svensson, and Zenklusen \cite{DBLP:conf/soda/FeldmanSZ26}, and arguably exposes some intuition as to why the original construction of \cite{feldman16online} terminates.} This change alone already suffices to imply low recourse if the input vector $\vx$ is assumed to be monotonically increasing: in this case elements only move to higher levels, and the friction mechanism forces the average height of elements of representative bases for the chain levels to be constant.

Nevertheless, when $\vx$ can increase and decrease arbitrarily, we cannot assume elements only graduate to higher levels, and the story is significantly more challenging. Our solution is to periodically rebuild the chain above a given level once the rank of this level deviates significantly from the $\vx$-mass restricted to that level (i.e.\ the expected size of the active set restricted to the level). Each rebuild incurs recourse that is proportional to the rank of the level being rebuilt, and hence the crux of the argument is to charge such rebuilds to $\ell_1$ movement of the $\vx$ vector. To do this, we show that to trigger a rebuild at level $\ell$, the $\vx$-vector restricted to level $\ell$ must have incurred \emph{downward} $\ell_1$ movement on the order of its rank. Since downward movement in a single coordinate contributes to the rebuild-potential of up to logarithmically many levels containing it, this is where we lose a logarithmic term in our recourse bounds. A final minor difficulty we have, in contrast to the original OCRS setting of Feldman et al., is that we cannot extract an independent set from our chain using a naive memoryless strategy, because this might incur too much recourse. Instead, we have to manually translate edits to the chain into edits to the independent set.  

In \Cref{sec:partition} we show a simplified construction with improved $O(1)$ competitive recourse for  partition matroids (which we can then translate to $O(1)$-recourse DCRS for Matching using \cref{sec:combine}). In \cref{sec:kfold} we provide finer guarantees for $k$-fold union matroids. The paper of Chekuri, Song, and Zhang~\cite{DBLP:conf/sosa/ChekuriSZ24} indirectly provides a CRS for this class by bounding the \emph{correlation gap};\footnote{\cite{DBLP:journals/siamcomp/ChekuriVZ14} showed that a correlation gap of $\alpha$ implies the existence of an $\alpha$-balanced CRS, and vice versa.} however, since the implicit CRS is constructed through LP duality, it seems challenging to adapt this CRS to the dynamic setting. Instead, we build on the work of \cite{DBLP:conf/innovations/AlonGPRWW025} showing OCRSs for $k$-fold uniform matroids, which in turn modified the chain construction OCRS of \cite{feldman16online} to use what they name the  \emph{generalized occupancy indicator} function in lieu of the rank function. We can adapt our chain construction analogously, in which case balance guarantees follow directly from the bicriterion concentration inequality argument of \cite{DBLP:conf/innovations/AlonGPRWW025}. The outstanding difficulty is once again to bound the recourse, and we do so by generalizing our analysis from \cref{sec:matroid} to use the occupancy function. 

Next, in \cref{sec:knapsack} we provide DCRSs for knapsack constraints. The CRS of \cite{DBLP:journals/siamcomp/ChekuriVZ14} processes elements in fixed, \emph{decreasing-size} order, accepting each active element into the solution so long as it fits. This seems promising to use off-the-shelf, since unlike the chain construction for the matroid case which must depend on the $\vx$ vector (see \cite{DBLP:conf/sosa/FuLTTWW022}), this ordering for the knapsack CRS is fully oblivious to the $\vx$-vector. Unfortunately, we have traded one ill for another: a single large item entering/leaving can now trigger the eviction/reloading of an unbounded number of small items.\footnote{By contrast, the matroid case becomes easy once the chain is fixed, because the matroid exchange axiom guarantees the output set changes by at most $2$ elements per element change to the active set.} To remedy the situation, unlike \cite{DBLP:journals/siamcomp/ChekuriVZ14}, we use \emph{increasing}-size order. This ensures that every element that enters/leaves the active set only triggers a single extra edit. To guarantee the balance of the CRS, we also need a randomized filtering step, which also appears in \cite{feldman16online}.

In \cref{sec:combine}, we show how to combine monotone DCRSs of different constraints to get DCRSs for intersections of these constraints. The proof of this result reuses arguments from the analogous result of 
\cite{DBLP:journals/siamcomp/ChekuriVZ14} for CRSs, but it has to pay special heed to the recourse. In \cref{sec:submod}, we formally prove the application of DCRSs to dynamic optimization problems with submodular objectives. The algorithm for this application uses the framework of~\cite{BBLS23} to produce low-recourse fractional solutions online and then feeds these to the DCRS rounding algorithm. Finally, though it is not a primary focus of our paper, 
in \cref{sec:adversary-lb} we demonstrate strong impossibilities for non-trivial competitive recourse bounds against adaptive adversaries,
and in \cref{sec:polytime,sec:round-or-sep} we argue that our algorithms can be adapted to run in polynomial time. 

\subsection{Other Related Work}

There is by now a long line of work on absolute recourse algorithms for a host of combinatorial optimization problems such as Steiner tree under terminal updates \cite{DBLP:journals/siamdm/ImaseW91,DBLP:journals/siamcomp/GuG016,DBLP:conf/soda/GuptaK14,lkacki2015power,DBLP:conf/focs/GuptaL20}, load balancing under job arrivals \cite{DBLP:journals/jcss/AwerbuchAPW01,GKS14,KLS23}, set cover under element insertions/deletions \cite{GKKP17,DBLP:conf/stoc/AbboudA0PS19,  DBLP:conf/focs/BhattacharyaHN19,DBLP:conf/focs/GuptaL20, DBLP:conf/soda/BhattacharyaHNW21,DBLP:conf/esa/AssadiS21}, facility location under client updates \cite{BLP22,guo2020facility}, and many fully dynamic graph problems under edge updates like edge orientation \cite{brodal1999dynamic,sawlani2020near,bera2022new}, graph coloring \cite{solomon2020improved}, maximal independent sets \cite{assadi2018fully}, and spanners \cite{baswana2012fully,bhattacharya2022simple}. In very recent work, Buchbinder, Naor, and Wajc
\cite{buchbinder2025chasingsubmodularobjectivessubmodular} show competitive recourse algorithms for maximizing a submodular function subject to partition matroid constraints, a special case of our main application: their focus is on polynomial time (approximate) separation oracles for submodular constraints, and they do not study dynamic contention resolution. 

Our DCRS is the dynamic relative of the CRS introduced by 
\cite{DBLP:journals/siamcomp/ChekuriVZ14} in the context of offline submodular maximization, and of the OCRS introduced by
\cite{feldman16online} in the context of Bayesian selection. In nearby work to ours,
\cite{DBLP:conf/sosa/FuLTTWW022} asked whether there exist \emph{oblivious} CRS/OCRS, for which the decisions of which elements of the active set $R$ to subselect are independent of the underlying vector $\vx$ from which $R$ is sampled. Unfortunately, their answer is negative even for the special case of graphic matroids. One can view our DCRSs and associated recourse guarantees as a relaxation of the obliviousness requirement, or as a quantitative upper bound on how non-oblivious CRSs in such settings must be. Finally,
\cite{avadhanula23fully} study a dynamic Bayesian selection problem related to OCRS with a title similar to ours; however the resemblance is only in name.

%% file: diagram.tex
\begin{figure}
\centering
\resizebox{\textwidth}{!}{
\begin{tikzpicture}[
    >=Stealth,
    block/.style={rectangle, draw=black, thick, fill=blue!5, text width=4cm, align=center, rounded corners, minimum height=1.5cm, font=\sffamily\small},
    contrib/.style={rectangle, draw=black, thick, fill=green!10, text width=2.8cm, align=center, rounded corners, minimum height=1cm, font=\sffamily\footnotesize},
    arrow/.style={->, thick}
]

\node[block] (bbls) {\textbf{Positive Body Chasing} \\ {\cite{BBLS23}} \\ Maintains fractional \\ solution $x^t$ with bounded $\ell_1$ recourse};

\draw[arrow] ++(-4.5cm,0) -- node[above, font=\footnotesize] {Updates} (bbls);

\node[block, right=1.5cm of bbls] (tcos) {\textbf{Time-Correlated Online Sampler (TCOS)} \\ Generates correlated active sets $R^t$};

\node[block, right=1.7cm of tcos] (dcrs) {\textbf{Dynamic Contention Resolution Scheme (DCRS)} \\ Subselects feasible \\ set $I^t \subseteq R^t$};

\draw[arrow] (bbls) -- node[above, font=\footnotesize] {$x^t \in b \cdot \mathcal{P}$} (tcos);
\draw[arrow] (tcos) -- node[above, font=\footnotesize] {$R^t \sim \Rdist(x^t)$} (dcrs);
\draw[arrow] (dcrs) -- node[above, font=\footnotesize] {Output $I^t \in \mathcal{F}$} ++(5.5cm,0);

\node[draw=red, thick, dashed, inner sep=16pt, fit=(bbls) (tcos) (dcrs), label={[red, font=\sffamily\bfseries\large]above:Competitive-Recourse Submodular Optimization (Theorem 6.1)}] (framework) {};

\node[contrib, below=1.5cm of dcrs] (matroid) {\textbf{Matroid Constraints} \\ (\cref{sec:matroid,sec:partition,sec:kfold})};
\node[contrib, left=0.4cm of matroid] (knapsack) {\textbf{Knapsack Constraints} \\ (\cref{sec:knapsack})};
\node[contrib, right=0.4cm of matroid] (combiner) {\textbf{Combiner Theorem} \\ (\cref{sec:combine})};

\draw[arrow] (matroid.north) -- (dcrs.south);
\draw[arrow] (knapsack.north) -- ([xshift=-1.2cm]dcrs.south);
\draw[arrow] (combiner.north) -- ([xshift=1.2cm]dcrs.south);

\end{tikzpicture}
}
\caption{The pipeline for competitive-recourse dynamic submodular optimization. }
\label{fig:framework_horizontal}
\end{figure}

%% file: prelim.tex
\section{Preliminaries}
\label{sec:prelim_dcrs}

For brevity, we provide an abridged preliminaries section here. Readers familiar with matroids, submodular functions and contention resolution can safely start here; note only that we assume without loss of generality that the matroids we consider are free of self-loops. 
We give extended background material in \cref{sec:full_prelim}. 

\paragraph{Basic Notation.}

We assume in this paper that we are faced with an \emph{oblivious} online adversary that fixes the input sequence \emph{prior} to the realizations of the active sets $\Rset^1, \ldots, \Rset^T$. This is in contrast to the so-called \emph{almighty} adversary that Feldman at al.\ assume in the OCRS context \cite{feldman16online}. In \cref{sec:adversary-lb}, we show no nontrivial results are possible even against an adaptive adversary that at time $t$ has access to the outcomes at times $1,\ldots, t-1$.

\paragraph{Time-Correlated Samplers.} A CRS picks a feasible subset of a random set $\Rset$ distributed like $\Rdist(\vx)$ (we denote such a set by $\Rset \sim \Rdist(\vx)$). Our DCRS task is to choose feasible subsets of a sequence of $\Rset^t$, each of which is individually distributed like $\Rdist(\vx^t)$. 
While there is only one way to independently sample $\Rset$ according to element marginals $\vx^t$, there are many ways of constructing sequences of such samples that are correlated across time (but, for any fixed $t$, independent across elements). We abstract the process of producing this sequence of $\Rset^t$ as follows.

\begin{definition}[Time-correlated online sampler]
\label{def:tcos}
    An \emph{$\alpha$-time-correlated online sampler ($\alpha$-TCOS)} is a randomized online algorithm that, given a sequence of vectors $\vx^1, \ldots, \vx^t \in [0,1]^\cN$ arriving online, upon arrival of each $\vx^t$ produces a subset $\Rset^t \subseteq \cN$ such that
    \begin{enumerate}
        \item $\Rset^t$ includes each $e \in \cN$ independently with probability $x^t_e$, 
        \item $\expect*{\abs{\Rset^t \symdif \Rset^{t-1}}} \leq \alpha \cdot \norm{\vx^t - \vx^{t-1}}_1$,
    \end{enumerate}
    where we fix $\vx^0 \defeq 0$ and use $\symdif$ to denote the symmetric difference between sets. 
\end{definition}
It is important that we have some sort of time-correlated sample if we hope to achieve a low recourse CRS---note that rounding $\vx^t$ to $\Rset^t$ independently for each $t \in [T]$ is not a TCOS for any $\alpha$.
\cite{BBLS23} use the following simple 1-TCOS to convert fractional competitive-recourse algorithms to integral ones:
\begin{example}[Threshold TCOS]
\label{ex:thresh-TCOS}
    At the outset, sample a uniform random threshold $\lambda_e \sim U[0,1]$ for each $e \in \cN$.
    Then, for each arriving vector $\vx^t$, let $\Rset^t = \{e \mid x^t_e \geq \lambda_e\}$.
\end{example}
An alternative 1-TCOS makes memoryless decisions:
\begin{example}[Markovian TCOS]
\label{ex:markov-TCOS}
    Start by setting $\Rset^0 \leftarrow \emptyset$. 
    For each subsequent $\vx^t$, let $\delta_e \defeq x^t_e - x^{t-1}_e$. 
    If $\delta_e \geq 0$ and $e \in \Rset^{t-1}$ then include $e$ in $\Rset^t$; if $e \not\in \Rset^{t-1}$ then include $e$ in $\Rset^t$ with probability $\frac{\delta_e}{1-x_e^{t-1}}$.
    If $\delta_e < 0$ and $e \not \in \Rset^{t-1}$, then $e \not \in \Rset^t$; if $\delta_e < 0$ and $e \in \Rset^{t-1}$, then $e \not \in \Rset^t$ with probability $\frac{-\delta}{x_e^{t-1}}$.
\end{example}
Note that, as $1$-TCOSs, these are both instance-optimal in terms of expected $\{\Rset^t\}$ recourse:
\begin{observation}
\label{obs:TCOS-1-impossibility}
    For any TCOS and any sequence $\{\vx^t\}_t$, $\expect{\abs{\Rset^t \symdif \Rset^{t-1}}} \geq \norm{\vx^t - \vx^{t-1}}_1$.
\end{observation}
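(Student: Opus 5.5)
Fix an element $e \in \cN$ and a time step $t$. The plan is to bound $\expect{\abs{\Rset^t \symdif \Rset^{t-1}}}$ coordinate by coordinate and then sum over $e$. By linearity of expectation,
\[
\expect{\abs{\Rset^t \symdif \Rset^{t-1}}} = \sum_{e \in \cN} \prob{\characteristic[e \in \Rset^t] \neq \characteristic[e \in \Rset^{t-1}]},
\]
so it suffices to show, for every $e$,
\[
\prob{\characteristic[e \in \Rset^t] \neq \characteristic[e \in \Rset^{t-1}]} \geq \abs{x^t_e - x^{t-1}_e}.
\]

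To prove this per-coordinate bound, let $X = \characteristic[e \in \Rset^{t-1}]$ and $Y = \characteristic[e \in \Rset^t]$. These are $\{0,1\}$-valued random variables defined on the same probability space, namely the randomness of the TCOS. By property 1 of \cref{def:tcos}, their marginals are $\expect{X} = x^{t-1}_e$ and $\expect{Y} = x^t_e$. For $t=1$ we use $\vx^0 = 0$ and $\Rset^0 = \emptyset$, so the same reasoning applies with $X \equiv 0$.

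The key fact is that for $\{0,1\}$-valued variables, $\characteristic[X \neq Y] = \abs{X - Y}$. Therefore
\[
\prob{X \neq Y} = \expect{\abs{X-Y}} \geq \abs{\expect{X - Y}} = \abs{x^{t-1}_e - x^t_e},
\]
where the inequality is Jensen's inequality (equivalently, the triangle inequality). This is exactly the coupling inequality: any coupling of two Bernoulli distributions disagrees with probability at least their total variation distance.

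Summing this bound over $e \in \cN$ gives $\expect{\abs{\Rset^t \symdif \Rset^{t-1}}} \geq \norm{\vx^t - \vx^{t-1}}_1$, as claimed. Note that the argument uses only the marginals of the individual sets, not the cross-element independence or any property of the correlation across time, so it holds for every TCOS.

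There is no real obstacle here. The only point needing care is the boundary case $t = 1$, which is handled by the convention $\vx^0 = 0$. As a remark, equality holds for both \cref{ex:thresh-TCOS} and \cref{ex:markov-TCOS}, since each realizes a monotone (maximal) coupling per coordinate; this is what justifies calling them instance-optimal.
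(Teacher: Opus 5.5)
Your proof is correct and takes the same route as the paper: a per-element bound $\Pr[e \in \Rset^t \symdif \Rset^{t-1}] \geq |x^t_e - x^{t-1}_e|$ that uses only the TCOS marginals, summed over $e$. The only difference is that you justify the paper's appeal to the optimal-coupling/total-variation fact explicitly, via $\Pr[X \neq Y] = \mathbb{E}|X-Y| \geq |\mathbb{E}[X-Y]|$.
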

\begin{proof}
    This follows from Property 1 of TCOSs, that each $\Rset^t \sim \Rdist(\vx^t)$. 
    Suppose $\delta = \abs{x_e^t - x_e^{t-1}}>0$; then $\abs{\prob{e \in \Rset^t} - \prob{e \in \Rset^{t-1}}} = \delta$ and even an optimal coupling of $[e \in \Rset^{t-1}]$ and $[e \in \Rset^{t}]$ has total variation distance $\delta$, and so $\prob{e \in \Rset^{t-1} \symdif \Rset^{t}} \geq \delta$.
    Summing over $e\in \cN$ yields the claim.
\end{proof}

\paragraph{Dynamic Contention Resolution Schemes.} We formally refer to a constraint as the collection $\cF \subseteq 2^{\cN}$ of sets that are feasible under this constraint. For example, given a matroid $(\cN, \cI)$, the corresponding constraint is $\cF = \cI$. Given this notation, we are ready to define our main object of study.

\begin{definition}[DCRS]
\label{def:dcrs}
    For $b, c \in [0,1]$ and $z \geq 0$, a $(b, c, z)$-DCRS $\pi$ for a down-closed constraint $\cF \subseteq 2^{\cN}$ and a polytope $P \subseteq [0, 1]^\cN$ relaxing $\cF$ is an online algorithm of the following form. In time step $t \in [T]$, the DCRS $\pi$ receives a point $\vx^t \in b \cdot P$ and a time-correlated sample $\Rset^t$ of $\vx^t$. The DCRS then outputs a subset $\Iset^t \subseteq \Rset^t$ with the properties:
    \begin{enumerate}
        \item $\Iset^t \in \cF$ with probability $1$.
        \item $\prob{e \in \Iset^t} \geq c \cdot x_e^t$ for every element $e \in \cN$ and time $t \in [T]$.
        \item $\sum_{t \in [T]} \expect*{\abs{\Iset^t \triangle \Iset^{t-1}}} \leq z \cdot \sum_{t \in [T]} \expect*{\abs{\Rset^t \triangle \Rset^{t-1}}}$.
    \end{enumerate}
	Notice that $\Iset^t$ might depend on the inputs of the DCRS from previous time steps. Therefore, we sometimes refer to the output $\Iset^t$ of the DCRS $\pi$ at time step $t$ as $\pi_{\vx^{\leq t}}(\Rset^{\leq t})$, where $\vx^{\leq t} \defeq (\vx^1, \vx^2, \dotsc, \vx^t)$ and $\Rset^{\leq t} \defeq (\Rset^1, \Rset^2, \dotsc, \Rset^t)$. It is sometimes also useful to treat the last set in this vector separately. Thus, we sometimes use the notation $(\Rset^{\leq t-1}, \Rset^t)$ to refer to the vector $\Rset^{\leq t}$.
\end{definition}

\begin{remark}
    Our definitions intentionally decouple the choice of TCOS from the DCRS. For one, this gives the DCRS the flexibility to work with an exogenous time-correlated sampler that the designer may not control. Secondly, sometimes one may prefer one TCOS over another for subtle reasons. For example, though both the Markovian TCOS and the Threshold TCOS achieve the optimal recourse parameter $\alpha = 1$, the recourse of the Markovian TCOS has significantly smaller variance. We discuss details in \cref{app:tcos}.
\end{remark}

Given a $(b, c, z)$-DCRS $\pi$, for every time step $t$, the random function $\pi_{\vx^{\leq t}}(\Rset^{\leq t - 1}, \cdot)$ can be viewed as a CRS. A property we use multiple times in this paper is that this CRS is $(b,c)$-balanced for every time step $t$. Accordingly, we sometimes refer to a $(b,c,z)$-DCRS as \emph{$(b,c)$-balanced with recourse $z$}. Another useful property of CRSs is monotonicity. A CRS $\pi'$ is monotone if for every fixed vector $\vx \in [0, 1]^\cN$, and any $e \in \Rset \subseteq \Rset' \subseteq \cN$, it holds that $\prob{e \in \pi'_\vx(\Rset)} \geq \prob{e \in \pi'_\vx(\Rset')}$. Using the same logic as before, we say that the DCRS $\pi$ is \emph{monotone} if $\pi_{\vx^{\leq t}}(\Rset^{\leq t - 1}, \cdot)$ is a monotone CRS for every time step $t$.

We complete this section by observing that, just as for offline CRS, one can use a $(b,c,z)$-DCRS of a polytope $P$ to round points $\vx \in P$ by downsampling $\Rset^t$ with probability $b$ (i.e. for each $e$, dropping $e$ from all $\Rset^t$ with probability $1 - b$), which yields a set distributed like $\Rdist(\vy)$ for the vector $\vy = b \cdot \vx \in b P$. 
To summarize:

\begin{remark} 
\label{rmk:remove_b}
    Any $(b,c,z)$-DCRS can be converted to a $(1,bc,bz)$-DCRS.
\end{remark}

%% file: matroid_new.tex
\section{DCRS for Matroid Constraints} \label{sec:matroid}

Our goal in this section is to prove the following main theorem:
\begin{restatable}[Matroid DCRS]{theorem}{matroiddcrsformal}
\label{thm:mat_dcrs_formal}
    For any $b \in (0,1)$ and any $\eps > 0$, every matroid $\cM$ admits a monotone $(b, 1-b - \eps, O(\nicefrac{1}{\eps^3}\cdot \log \rank(\cM)))$-DCRS.
    If the vectors $\vx^1, \vx^2, \dotsc, \vx^T$ form a coordinate-wise increasing sequence, the recourse of this DCRS improves to $O(\eps^{-1})$.
\end{restatable}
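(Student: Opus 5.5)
We will build on the chain construction of Feldman, Svensson and Zenklusen, adding ``friction'' and periodic rebuilding. Fix the current point $\vx = \vx^t \in b\cdot P_\cM$. Set $N_0 = \cN$. Given $N_i$, let $N_{i+1}$ be the set of elements $e\in N_i$ that are spanned too often:
\[\prob{e \in \Span(\Rset\cap N_i \setminus \{e\}) \text{ in } \cM/N_{i+1}} > b+\eps',\]
where $\eps' = \Theta(\eps)$. The threshold $b$ is what balance $1-b$ strictly needs; the extra $\eps'$ is the friction. The definition is circular, so we compute it by the usual fixed-point or greedy protection procedure.

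Given the chain $N_0 \supsetneq N_1 \supsetneq \dots \supsetneq N_\ell$, the output $\Iset$ is a union of independent sets $I_i$, one in each contracted matroid $\cM_i = (\cM|N_i)/N_{i+1}$. Each $I_i \subseteq \Rset\cap(N_i\setminus N_{i+1})$. Any such union is independent in $\cM$.

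\textbf{Balance.} Consider $e \in N_i \setminus N_{i+1}$ with $e \in \Rset$. By construction, $e$ is spanned in $\cM_i$ by the other active elements with probability at most $b+\eps'$. Independence of $\Rset$ across elements then gives $\prob{e\in\Iset \mid e\in\Rset}\ge 1-b-\eps'$.

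This holds for any maximal-greedy-type rule. We will maintain $I_i$ as a basis-like set that never drops an unspanned active element; concretely, $I_i$ is a basis of $\Rset\cap(N_i\setminus N_{i+1})$ in $\cM_i$. Monotonicity follows because enlarging $\Rset$ only makes $e$ more likely to be spanned, provided the chain is treated as fixed given $\vx^{\le t}$ and $\Rset^{\le t-1}$.

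\textbf{Geometric shrinking.} The key structural lemma is that $\rank(N_{i+1}) \le (1-\Omega(\eps))\rank(N_i)$. Roughly, elements of $N_{i+1}$ are spanned with probability more than $b+\eps'$, while $\vx(N_i)\le b\cdot\rank(N_i)$. A counting argument comparing $\expect{\rank(\Rset\cap N_i)}$ against the $\vx$-mass then forces the rank of the protected part to shrink by a factor. We adapt the termination proof of Feldman et al., using the friction to gain a factor. Hence $\ell = O(\eps^{-1}\log\rank(\cM))$.

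\textbf{Recourse, increasing case.} When $\vx$ increases coordinatewise, the sets $N_i$ only grow, so elements only move upward in the chain. We update each $I_i$ by matroid exchange, which costs $O(1)$ edits per element change in $\Rset$ or per element moving one level. We charge level moves with a potential $\sum_i \rank(N_i)$. This potential is bounded by $\rank(\cM)\cdot\sum_i(1-\Omega(\eps))^i = O(\rank/\eps)$. Relating $\rank$ to $\vx$-mass, which is at least an $\Omega(\eps)$ fraction of rank at each protected level, converts this into $O(\eps^{-1})$ times the $\ell_1$ movement.

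\textbf{Recourse, general case (main obstacle).} Here we do not update the chain continuously. Level $i$ is frozen until its $\vx$-mass drifts by a constant-$\eps$ factor from the value recorded at its last rebuild. A drift upward can be absorbed by re-protecting, as in the monotone case. A drift downward triggers a full rebuild of levels $\ge i$ from scratch, costing $O(\rank(N_i))$ edits, which is $O(\rank(N_i))$ by geometric decay.

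We must check two things about a frozen chain. First, it still gives balance $1-b-\eps$, since we used slack $\eps'$ and the drift is bounded. Second, a rebuild at level $i$ requires downward $\ell_1$ movement of $\vx$ inside $N_i$ of order $\eps^{2}\rank(N_i)$. Each coordinate lies in at most $\ell = O(\eps^{-1}\log\rank)$ levels, so charging gives total recourse $O(\eps^{-3}\log\rank(\cM))$ times the $\ell_1$ movement. By \cref{obs:TCOS-1-impossibility}, this is at most the expected $\Rset$-recourse.

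The delicate points are making the chain deterministic given $\vx^{\le t}$ and independent of $\Rset$, so that balance holds against an oblivious adversary, and translating each chain edit into $O(1)$ edits of $\Iset$ via exchange.
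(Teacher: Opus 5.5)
Your architecture is the paper's: threshold $\tau=b+\eps$ as friction, geometric decay of protected ranks via the potential $\expect{\rank(\Rset\mid A)}$, lazy rebuilds charged to downward movement with an $\ell$-fold overcount, and $O(1)$ exchange edits per chain change. The gap is in the fully dynamic case. There you freeze a level until its aggregate $\vx$-mass drifts by an $\eps$ factor, and you assert that balance survives ``since we used slack $\eps'$ and the drift is bounded.'' But balance rests on a per-element invariant: each $e$ must be spanned within its level, with respect to the \emph{current} $\vx$, with probability at most $\tau$. A small aggregate drift says nothing about any single element.

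For instance, let $\cM$ be the direct sum of a matroid carrying $\vx$-mass $M$ and a graphic gadget. The gadget consists of an edge $e=uv$ and $m$ internally disjoint $u$--$v$ paths of length two, all initially of value $0$, hence unprotected at level $0$. Now raise every path edge to $b/2$ and $x_e$ to a small $\eta$; this stays in $b\cdot P_\cM$. The level-$0$ mass changes by about $bm\ll\eps M$, so nothing triggers. Yet $e$ is now spanned by the other active elements with probability $1-(1-b^2/4)^m\to 1$. The certificate your balance proof uses therefore almost never holds. Under the natural rule that inserts a newly active element only if it is not spanned by the current $I_i$, $e$ is essentially never selected, and no choice of slack repairs this.

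The fix, which is what the paper does, is to make only \emph{demotions} lazy. At every step, push up every element that violates the threshold at its level for the current $\vx$, and remove elements from the protected sets $S_i$ only by a reset. Decreases of $\vx$ only lower span probabilities, so the invariant, and hence balance $1-\tau$, holds exactly at every step. The pushes are affordable because total additions to the $S_i$ are at most their final size (at most $\Inc(T)/(\tau-rb)$) plus total removals.

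Once you re-protect at every step, your reset trigger also fails. It watches net drift of a level's mass from a recorded value. But between emptyings, the growth lemma bounds $|S_i|$ only by $\norm{\bar\vx}_1/\tau$, where $\bar\vx$ is the coordinatewise \emph{maximum} of $\vx(i-1)$ since the last emptying. That quantity exceeds the current mass by up to the total \emph{decrease} in the period, which can be large with zero net drift when mass moves between coordinates. So geometric decay and the $O(\eps^{-1}\log\rank(\cM))$ bound on $\ell$ are not maintained.

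The paper instead resets level $i$ when $|S_i|>r\norm{\vx(i-1)}_1/\tau$. Combined with $\norm{\bar\vx}_1\le\norm{\vx(i-1)}_1+\Dec$, this shows a primary reset needs decrease at least $\tau(1-1/r)|S_i|$ since the last emptying of $S_i$. Cascaded emptyings are handled by the recursion $E_i\le(b/\tau)E_{i-1}+\Dec(T)/(\tau(1-1/r))$ on the number $E_i$ of removals from $S_i$, giving $O(\eps^{-2}\Dec(T))$ per level.

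Two smaller points:
\begin{itemize}
\item Monotonicity does not follow for an arbitrary exchange-maintained basis. The paper makes all chain-induced edits of $I_i$ independent of $\Rset^t$ and then fills greedily in a fixed order $\sigma$.
\item In the increasing case, your potential must be anchored at $\rank(N_1)\le\norm{\vx}_1/\tau$, not at $\rank(\cM)$, for the recourse to be competitive with $\ell_1$ movement.
\end{itemize}
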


In other words, we want an algorithm that over all steps $t \in [T]$ maintains a dynamic random independent subset $\Iset^t$ of $\Rset^t$ with a small expected number of changes with respect to $\sum_{t = 1}^T \expect{\abs{\Rset^t \symdif \Rset^{t-1}}}$, where each $\Rset^t$ is distributed according to $\Rdist(\vx^t)$.
We will generally think of these $\{\Rset^t\}$ as coming from an $\alpha$-TCOS for some constant $\alpha \geq 1$, in which case the expected change to $\Iset^t$ is small with respect to the change in $\{\vx^t\}$. 

We maintain two collections of sets.
\begin{itemize}
    \item The first collection is a chain construction $\mathcal{S}$. For convenience, we will track a \emph{representative basis} $S_i$ for each level $i$ of the chain, such that our chain decomposition is
    \[ \emptyset = \Span(S_{\ell+1}) \subseteq \Span(S_\ell) \subseteq \Span(S_{\ell - 1}) \subseteq \hdots \subseteq \Span(S_1) \subseteq \Span(S_0) = \cN.\] 
    We assume $S_0$ is an arbitrary fixed base of the matroid. We will later show that the index $\ell$ of the last non-empty level is always $O(\log \rank(\cN))$. 
    \item The second collection is $\Iset_0 \subseteq \Span(S_0) \setminus \Span(S_1), \Iset_1 \subseteq \Span(S_1)\setminus  \Span(S_2), \dotsc, \Iset_{\ell} \subseteq \Span(S_\ell)$, such that $\Iset_i \cup S_{i + 1} \in \cI$.\footnote{In matroid theory terms, $\Iset_i$ is independent in $\cM/{S_{i+1}}\vert_{\Span(S_i)}$, i.e. the matroid obtained from $\cM$ by contracting $S_{i + 1}$ and restricting to $\Span(S_{i})$.} We can think of these as the pieces of the final independent set our contention resolution scheme outputs, since our algorithm will output $\Iset := \bigcup_i \Iset_i$. 
\end{itemize}
\medskip

Feasibility follows from the fact that $\cS$ is a chain decomposition of $\cM$, and we will need to argue that our construction satisfies \emph{balance}, \emph{monotonicity}, and \emph{low recourse}.

\subsection{Warm-Up: the Increase-Only Case}

\label{sec:warmup}

We first give a simplified algorithm for the special case that $\vx$ only increases coordinate-wise.
For simplicity, we will let the simplified algorithm also contain a $1$-TCOS within it, meaning it gets the $\{\vx^t\}$ online and samples the sets $\{\Rset^t\}$ itself. 

The algorithm maintains the chain $\cS$ as follows. Originally, all $\{S_i\}_i$ and $\{I_i\}_i$ are empty, except for $S_0$. After every update step $\vx \gets \vx^t$, our algorithm updates $\mathcal{S}$ by pushing elements ``up'' a level when they have a high probability of being spanned (as determined by the parameter $\tau \in (b, 1)$), and then updates $\Iset$ by accounting for both changes to $\mathcal{S}$ and changes to $\Rset$. Given this process, the output is always determined to be $\Iset = \bigcup_{i} \Iset_i$.

For convenience, define $\vx(i) := \vx\vert_{\Span(S_i)}$ to be the part of $\vx$ relevant to level $i$.

\begin{algorithm}
	\caption{\textsc{Increase-Only Matroid DCRS}}
	\label{alg:increase-only}
	\begin{algorithmic}[1]
        \Require matroid $\cM$, $b \in (0,1)$, $\tau \in (b,1)$, online sequence $\{\vx^t\}$.
		\State{Initialize $S_i \gets \emptyset$ for all $i \geq 1$ and $\Iset_i \gets \emptyset$ for all $i \geq 0$ (recall that $S_0$ is a base of $\cM$).}
        \State{Sample $\lambda_e \sim \mathrm{Unif}([0,1])$ for all $e \in \cN$.}
        \For{\textbf{each} $t = 1, 2, \ldots, T$}
            \State{$\vx \gets \vx^t \in b \cdot P_{\cM}$ arrives online.} 
            \State{$\Rset \gets \{e \in \cN : \lambda_e \leq x_e\}$. \label{line:inc-Rset-def}}
            \State $i \leftarrow 1$.
            \While{$|S_{i-1}| > 0$}
            \algorithmiccomment{\textit{Move likely-spanned elements up to $S_i$, update $\Iset_{i-1}$ appropriately.}}
                \While{there is $e \in \Span(S_{i - 1}) \setminus \Span(S_i)$ obeying $\prob{e \in \Span(\Rdist(\vx(i-1)) \cup S_{i})} \geq \tau$\label{line:inc-add_condition}} 
                    \State Add $e$ to $S_{i}$ and remove $\Iset_{i-1} \leftarrow \Iset_{i-1} \setminus \{e\}$.
                    \If{$S_i \cup \Iset_{i-1}$ is no longer independent}
                        \State{There exists $e' \in \Iset_{i-1}$ in the unique cycle in $S_i \cup \Iset_{i-1}$; remove $e'$ from $I_{i-1}$. \label{line:inc-I-removal}} 
                    \EndIf
                \EndWhile
            \State $i \leftarrow i+1$.
            \EndWhile
            \State $i \leftarrow 1$
            \While{$|S_{i-1}|>0$}
                \algorithmiccomment{\textit{Add new elements from $\Rset$ to $\Iset_i-1$}.}
    	        \While{there is $e \in \Rset \cap (\Span(S_{i-1}) \setminus (\Span(S_{i} \cup \Iset_{i-1})))$}
                    \State Add $e$ to $\Iset_{i-1}$ \label{line:inc-I-add}.
                \EndWhile
            \State $i \leftarrow i+1$.
            \EndWhile
            \State \Return $\bigcup_{i} \Iset_i$.
        \EndFor
        \end{algorithmic}
\end{algorithm}

Since \Cref{alg:increase-only} is only a warm up, we do not provide the full analysis. In particular, we will save the formal proof of balance (Lemma~\ref{lem:warmup_balance}) for later in the context of the more general algorithm (see \Cref{ssc:balance}). As written, \Cref{alg:increase-only} is also technically non-monotone, and some additional technical details are necessary to make our full algorithm monotone.

\begin{lemma}[Balance] \label{lem:warmup_balance}
    When \Cref{alg:increase-only} terminates, the probability that element $e$ in level $i$ is included in $I_i$ (and therefore also in the final independent set) given that $e$ is active is at least $1 - \tau$. 
\end{lemma}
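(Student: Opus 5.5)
The plan is to reduce balance to the stopping condition of the inner \textbf{while} loop (line~\ref{line:inc-add_condition}), using two facts: the chain $\cS$ does not depend on the random thresholds, and at termination $\Iset_i$ spans all active level-$i$ elements modulo $S_{i+1}$. The argument holds for every fixed history and then for each fixed time step.

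\emph{Step 1 (the chain is deterministic).} The condition on line~\ref{line:inc-add_condition} depends only on $\vx$ and the current $S_{i-1}, S_i$, and never on $\lambda$ or $\Rset$. So the sets $S_0, S_1, \dots$ are a deterministic function of $\vx^1,\dots,\vx^t$. Fix such a chain and let $e$ lie in level $i$, i.e.\ $e\in \Span(S_i)\setminus\Span(S_{i+1})$. Since the inner loop has terminated, $e$ fails the add condition:
\[
\prob{e\in\Span(\Rdist(\vx(i))\cup S_{i+1})}<\tau .
\]
Let $R'\defeq (\Rset\cap\Span(S_i))\setminus\{e\}$. By monotonicity of $\Span$, the same bound holds for $\prob{e\in\Span(R'\cup S_{i+1})}$. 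Because $\Rset$ contains elements independently (the thresholds $\lambda_f$ are independent), $R'$ is independent of the event $e\in\Rset$. Hence
\[
\prob{e\in\Span(R'\cup S_{i+1}) \mid e\in\Rset}<\tau .
\]

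\emph{Step 2 (invariants on $\Iset_i$).} I would prove by induction over the algorithm's operations that, at the end of each time step:
\begin{enumerate}
\item[(a)] $\Iset_i\subseteq \Rset\cap(\Span(S_i)\setminus\Span(S_{i+1}))$;
\item[(b)] $\Iset_i\cup S_{i+1}\in\cI$;
\item[(c)] every $f\in\Rset\cap(\Span(S_i)\setminus\Span(S_{i+1}))$ lies in $\Span(\Iset_i\cup S_{i+1})$.
\end{enumerate}
Invariant (a) uses the increase-only assumption: with the threshold sampler, $\Rset$ only grows, and spans of the $S_j$ only grow. Elements promoted into $S_{i+1}$ are explicitly deleted from $\Iset_i$. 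Invariant (b) holds because promotion deletes the newly added element from $\Iset_i$ and breaks the unique resulting cycle (line~\ref{line:inc-I-removal}); independence then also gives $\Iset_i\cap\Span(S_{i+1})=\emptyset$. Invariant (c) is exactly the termination condition of the second loop (line~\ref{line:inc-I-add}).

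\emph{Step 3 (conclude).} Suppose $e\in\Rset$ but $e\notin\Iset_i$. By (c), $e\in\Span(\Iset_i\cup S_{i+1})$. By (a), $\Iset_i\subseteq R'$, so $e\in\Span(R'\cup S_{i+1})$. Step 1 then gives $\prob{e\notin\Iset_i\mid e\in\Rset}<\tau$, which is the claim.

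The main obstacle is the bookkeeping in invariant (a), namely that $\Iset_i$ never retains an inactive element or one that has left level $i$. This is where the increase-only structure and the threshold TCOS are used. Separately, one must read the add condition as not counting $e$'s own activity, and handle that via the monotonicity argument of Step~1.
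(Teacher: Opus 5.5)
Your proof is correct and is essentially the paper's own argument, which the paper gives for the general scheme in Proposition~\ref{prop:matroid-DCRS-balance}: the chain does not depend on $\Rset$, so $e$ fails the promotion test in iteration $\lvl(e)+1$; monotonicity of $\Span$ lets you drop $e$, and independence decouples $e\in\Rset$ from $(\Rset\cap\Span(S_{\lvl(e)}))-e$; finally, termination of the greedy fill loop together with $\Iset_{\lvl(e)}\subseteq\Rset\cap\Span(S_{\lvl(e)})$ forces $e\in\Iset_{\lvl(e)}$. The one adaptation is that for Algorithm~\ref{alg:increase-only} you obtain $\Iset_i\subseteq\Rset$ from the increase-only threshold sampler rather than from an explicit step $\Iset_{i-1}\gets\Iset_{i-1}\cap\Rset$. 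This is valid because your bound is taken over the marginal randomness of the thresholds, so read your phrase ``for every fixed history'' as a fixed sequence of $\vx$'s, not as conditioning on past active sets.
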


The rest of this section is devoted to bounding the expected recourse of \Cref{alg:increase-only}.
The following lemma establishes a key structural property we need for that purpose.

\begin{lemma} 
\label{lem:growth_bound}
    Given a vector $\vy \in [0, 1]^\cN$, consider a process that builds a set $A\subseteq \cN$ by starting with $A = \emptyset$ and adding elements to $A$ one-at-a-time. If every $e\in\cN$ added to $A$ obeys both $e \not \in \Span(A_e)$ and $\prob{e \in \Span(\Rdist(\vy) \cup A_e)} \geq \tau$, where $A_e$ is $A$ immediately before $e$ is considered, then the final size of $A$ is at most $\|\vy\|_1 / \tau$.
\end{lemma}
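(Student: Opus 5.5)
Let $A = \{a_1, \dots, a_k\}$ in order of addition, and write $A_j = \{a_1,\dots,a_{j-1}\}$ for the set just before $a_j$ was added. Sample $\Rset \sim \Rdist(\vy)$. The plan is a double-counting argument built on the random quantity $\rank(\Rset \cup A) - \rank(A)$. Since this is at most $\abs{\Rset}$ (a chain of single-element rank increments, each at most $1$), its expectation is at most $\|\vy\|_1$. Also, $A$ is independent, because each $a_j \notin \Span(A_j)$; hence $\rank(A) = k$.

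The second step bounds the same quantity from below by telescoping over the elements of $A$ in order:
\[
\rank(\Rset \cup A) = \rank(\Rset) + \sum_{j=1}^{k} \bigl[\rank(\Rset \cup A_{j+1}) - \rank(\Rset \cup A_j)\bigr].
\]
Each bracket equals $1 - \ind{a_j \in \Span(\Rset \cup A_j)}$. Therefore
\[
\rank(\Rset \cup A) - \rank(A) = \rank(\Rset) - \sum_{j} \ind{a_j \in \Span(\Rset\cup A_j)}.
\]
Used this way the sign runs the wrong direction, so I instead compare against $\rank(\Rset)$. By submodularity of rank, $\rank(\Rset \cup A) \le \rank(\Rset) + \rank(A)$, which only yields a trivial bound. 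The counting I actually need is this: the number of $j$ with $a_j \in \Span(\Rset \cup A_j)$ is at most $\rank(\Rset)$.

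To see this, note that $k = \rank(A) \le \rank(\Rset \cup A)$. From the telescoped identity, $\rank(\Rset\cup A) = \rank(\Rset) + k - \#\{j : a_j \in \Span(\Rset\cup A_j)\}$. So we need $\rank(\Rset \cup A) \ge k$, which gives $\#\{\text{spanned}\} \le \rank(\Rset) \le \abs{\Rset}$. Taking expectations and using the hypothesis that each $a_j$ is spanned with probability at least $\tau$ gives
\[
\tau k \le \sum_j \prob{a_j \in \Span(\Rset\cup A_j)} \le \expect{\abs{\Rset}} = \|\vy\|_1,
\]
so $k \le \|\vy\|_1/\tau$. The conditioning is harmless because each $A_j$ is deterministic given the process: the process depends on $\vy$ only, not on the sample.

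The only delicate point is setting up the telescoping identity correctly and recognizing that independence of $A$ supplies the needed lower bound $\rank(\Rset\cup A)\ge k$. Everything else is linearity of expectation. One caveat: if the process's choices were adaptive to some randomness, I would note that the probability condition is about a fresh independent sample $\Rdist(\vy)$, so we can fix the final $A$ and apply the argument to that fixed sequence.
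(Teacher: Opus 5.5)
Your proof is correct and follows the paper's approach. The paper tracks the potential $\bE[\rank(\Rdist(\vy)\cup A)-\rank(A)]$, which starts at most $\|\vy\|_1$, drops by exactly $\prob{e \in \Span(\Rdist(\vy)\cup A_e)} \geq \tau$ with each addition, and stays nonnegative; this is your pointwise identity $\rank(\Rset\cup A)-\rank(A) = \rank(\Rset) - \#\{j : a_j \in \Span(\Rset\cup A_j)\} \geq 0$ followed by taking expectations, and the detours in your write-up (the opening upper bound and the submodularity aside) can simply be dropped.
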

\begin{proof}
Let us study the evolution of the expectation $\expect{\rank(\Rdist(\vy) \vert A)}$ during the execution of the process. Notice that the value of this expectation at the beginning of the process, when $A = \emptyset$, is $\expect{\rank(\Rdist(\vy))} \leq \expect{\abs{\Rdist(\vy)}} = \norm{\vy}_1$. Furthermore, we claim that the expectation $\expect{\rank(\Rdist(\vy) \vert A)}$ decreases by at least $\tau$ whenever an element is added to $A$. To see that, consider an arbitrary element $e$ added to $A$, and recall that the value of the set $A$ before the addition of $e$ is $A_e$. Thus, the decrease in $\expect{\rank(\Rdist(\vy) \vert A)}$ following the addition of $e$ is
\begin{align*}
	\expect{\rank(\Rdist(\vy) \vert A_e)} -{}& \expect{\rank(\Rdist(\vy) \vert A_e + e)} \\
	={} &
	\rank(A_e + e) - \rank(A_e) + \expect{\rank(\Rdist(\vy) \cup A_e) - \rank(\Rdist(\vy) \cup A_e + e)} \\
	={} &
	1 - \prob{e \not \in \Span(\Rdist(\vy) \cup A_e)}
	=
	\prob{e \in \Span(\Rdist(\vy) \cup A_e)}
	\geq
	\tau
	\enspace,
\end{align*}
where the second equality holds since $e \not \in \Span(A_e)$.

To summarize, $\expect{\rank(\Rdist(\vy) \vert A)}$ is at most $\norm{\vy}_1$ at the beginning of the process, and decreases by at least $\tau$ whenever an element is added to $A$. Therefore, its final value is upper bounded by $\norm{\vy}_1 - \tau \abs{A}$. Since the expectation $\expect{\rank(\Rdist(\vy) \vert A)}$ is always non-negative, this upper bound must be also non-negative, i.e., it must hold that $\norm{\vy}_1 - \tau \abs{A} \geq 0$. The lemma now follows by rearranging this inequality.
\end{proof}

An important corollary of the last lemma is that both the $\vx$-mass and the rank of subsequent levels decrease geometrically.
\begin{corollary}
    \label{cor:geom_dec}
    For all $i \geq 0$ we have
        $\norm{\vx(i)}_1 \leq \frac{b}{\tau} \norm{\vx(i-1)}_1$, as well as $|S_i| \leq \frac{b}{\tau} |S_{i-1}|$.
\end{corollary}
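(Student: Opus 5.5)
The plan is to apply \Cref{lem:growth_bound} level by level, taking $\vy = \vx(i-1)$ and $A = S_i$. In \Cref{alg:increase-only}, an element $e$ enters $S_i$ only when two conditions hold. First, $e \in \Span(S_{i-1}) \setminus \Span(S_i)$, so $e \notin \Span(A_e)$. Second, $\prob{e \in \Span(\Rdist(\vx(i-1)) \cup S_i)} \geq \tau$ at the moment of insertion. These are exactly the hypotheses of \Cref{lem:growth_bound}, once we check that the condition still holds for the current vector $\vx(i-1)$.

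In the increase-only setting, $\vx$ grows coordinatewise and $S_{i-1}$ only grows. Hence $\vx(i-1)$ only increases over time, and by monotonicity of span the probability $\prob{e \in \Span(\Rdist(\vx(i-1)) \cup A_e)}$ can only increase as well. So every element ever added to $S_i$ still satisfies the hypotheses with respect to the current $\vy = \vx(i-1)$. The lemma then gives
\[
|S_i| \leq \frac{\norm{\vx(i-1)}_1}{\tau}.
\]

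Next we bound $\norm{\vx(i)}_1$. Since $\vx \in b \cdot P_{\cM}$ and $\vx(i)$ is supported on $\Span(S_i)$, the matroid polytope constraint gives
\[
\norm{\vx(i)}_1 = \vx(\Span(S_i)) \leq b \cdot \rank(\Span(S_i)) = b\,|S_i|,
\]
because $S_i$ is independent: each added $e \notin \Span(S_i)$. Combining this with the previous bound yields $\norm{\vx(i)}_1 \leq \frac{b}{\tau}\norm{\vx(i-1)}_1$, which is the first claim.

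For the second claim, we apply the same argument one level down to get $|S_i| \leq \norm{\vx(i-1)}_1/\tau$. The polytope bound at level $i-1$ gives $\norm{\vx(i-1)}_1 \leq b\,|S_{i-1}|$, using that $S_{i-1}$ is a basis of $\Span(S_{i-1})$. Together these give $|S_i| \leq \frac{b}{\tau}|S_{i-1}|$. The base case $i = 0$ is trivial with the convention $\Span(S_0) = \cN$: the rank bound reads $\norm{\vx(0)}_1 \leq b\,|S_0|$, and for $i = 0$ the claim is read as vacuous.

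The main obstacle is the validity step. We must ensure that elements inserted into $S_i$ at earlier times still meet the lemma's hypothesis for the present $\vx(i-1)$, and that the lemma's insertion order (each $A_e$ a prefix) matches the actual history of $S_i$. The monotonicity of $\vx$ and of $S_{i-1}$ in the increase-only case handles both, because $S_i$ only grows and does so in insertion order.
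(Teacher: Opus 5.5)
Your proposal is correct and essentially identical to the paper's proof. The paper likewise applies \Cref{lem:growth_bound} to the growth of $S_i$, with $\vy$ the final (coordinate-wise largest) value of $\vx(i-1)$, justified by the monotonicity of $\vx$ and $S_{i-1}$ in the increase-only setting, and then uses $\vx \in b \cdot P_\cM$ to chain $\norm{\vx(i)}_1 \leq b\,|S_i| \leq \frac{b}{\tau}\norm{\vx(i-1)}_1 \leq \frac{b}{\tau}\cdot b\,|S_{i-1}|$.
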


\begin{proof}
Since $\vx$ is coordinate-wise increasing, every element $e$ added to $S_i$ obeyed, before being added, the inequality $\prob{e \in \Span(\Rdist(\bar{\vx}(i-1)) \cup S_{i})} \geq \tau$, where $\bar{\vx}(i - 1)$ is the final value of $\vx(i - 1)$. Thus, applying \Cref{lem:growth_bound} with $\vy = \bar{\vx}(i-1)$ and $A = S_i$, we get 
\[ 
    \norm{\bar{\vx}(i)}_1 \leq b \cdot \rank(\Span(S_{i})) \leq  b \cdot |S_{i}| \leq \frac{b}{\tau} \cdot \norm{\bar{\vx}(i-1)}_1 \leq \frac{b}{\tau} (b\cdot |S_{i-1}|),
\]
where the first and last inequalities hold since $\vx \in b \cdot P_\cM$, and the second to last follows from \cref{lem:growth_bound}.
\end{proof}

To bound now the cumulative recourse of $\{\Iset^t\}_t$, note that the number of inserts and deletes is at most twice the number of inserts. Further, we look at how many chances a given element has in being inserted. For every $i$, one can verify that $\Span(\Iset_{i} \cup S_{i + 1})$ only grows during the execution of \Cref{alg:increase-only}. Therefore, every element $e \in \Span(S_{i})$ can be inserted into $\Iset_{i}$ at most once, and even that insert can only happen when $e$ becomes active at some point, i.e. with probability at most $x_e^T$. Thus the total amortized recourse is bounded by
\begin{align*}
    \sum_{t \in [T]} \expect{\abs{\Iset^t \symdif \Iset^{t-1}}} 
    &\leq 2 \sum_{t \in [T]} \expect{\abs{\Iset^t \setminus \Iset^{t-1}}} \leq 2 \sum_{e \in \mathcal{N}} \sum_{\substack{i: \\ e \in \Span(S_i)}} x^{T}_e = 2 \sum_{i \in [\ell]} \norm{\vx^T(i)} \\
    &\leq 2 \sum_{i=0}^\infty \left(\frac{b}{\tau}\right)^{i} \norm{\vx^T}_1 \leq \frac{2\tau}{\tau-b} \norm{\vx^T}_1
    = \frac{2\tau}{\tau-b} \sum_{t \in [T]} \norm{\vx^t - \vx^{t-1}}_1 \\
    &= \frac{2\tau}{\tau-b} \sum_{t \in [T]} \expect{\abs{\Rset^t \symdif \Rset^{t-1}}}
    \enspace.
\end{align*} 
Here the second-to-last inequality is due to \cref{cor:geom_dec}. 
Setting $\tau = b + \eps$ yields the $O(\eps^{-1})$ recourse bound of the increase-only case.

%% file: fully_dynamic.tex
\subsection{The Fully Dynamic Scheme}

The main challenge in extending \cref{alg:increase-only} to the dynamic setting is that the geometric rank decrease property may become violated when $\vx$ is allowed to decrease. To remedy the situation, we introduce a ``reset'' condition, whereby if any $S_i$ grows too large with respect to $\vx(i-1)$, then we reconstruct from scratch the suffix of the chain starting at $S_i$. The crux is to show that the cost of resets can be charged to downward $\ell_1$ movement of the vector $\vx$. 

After every dynamic update, \Cref{alg:matroid-DCRS} performs two kinds of local moves until a fixed-point is reached. 
The first (\Cref{line:main_S_loop}) reorganizes the chain decomposition $\{S_i\}$. 
This maintenance of the chain depends only on how $\vx$ has changed, and is independent of $\Rset$. 
The second (\Cref{line:main_T_loop}) updates the output based on changes to $\Rset$ and the $\{S_i\}$, and is not directly dependent on $\vx$. 
The changes to $\{\Iset_i\}$, which comprise the recourse of the algorithm, are attributable to either changes in the $\{S_i\}$ or changes in $\Rset$. Both these types of changes can be bounded by $\vx$ movement.

\cref{alg:matroid-DCRS} additionally makes use of an arbitrary order $\sigma$ over the elements (in contrast to \cref{alg:increase-only}). This order serves only a tie-breaking role but is necessary to guarantee monotonicity.

\begin{algorithm}[ht]
	\caption{\textsc{MatroidDCRS}}
	\label{alg:matroid-DCRS}
	\begin{algorithmic}[1]
        \Require matroid $\cM$, $b \in (0,1)$, $\tau \in (b,1)$, $r \in (1, \tau/b)$, online sequences $\{\vx^t\}$ and $\{\Rset^t\}$.
        \State{$\sigma \gets$ arbitrary ordering of $\cN$.}
		\State{Initialize $S_i \gets \emptyset$ and $\Iset_i \gets \emptyset$ for all $i \in \mathbb{Z}_+$ and $\Iset_0 \gets \emptyset$.}
        \For{\textbf{each} $t = 1, 2, \ldots, T$}
            \State{$\vx \gets \vx^t \in b \cdot P_{\cM}$ and $\Rset \gets \Rset^t$ arrive online.}
            \State $i \leftarrow 1$.
            \While{$|S_{i-1}| > 0$\label{line:main_S_loop}}
            \algorithmiccomment{\textit{Move likely-spanned elements to $S_i$, update $\Iset_{i-1}$, \Reset if necessary.}}
                    \If{$|S_i| > r \cdot \|\vx(i-1)\|_1 / \tau$\label{line:early_reset_condition}}
                       \State{\Reset($i$).}
                    \EndIf
                    \While{there is $e \in \Span(S_{i - 1}) \setminus \Span(S_i)$ obeying $\prob{e \in \Span(\Rdist(\vx(i-1)) \cup S_{i})} \geq \tau$ \label{line:add_condition}}
                        \State Add $e$ to $S_{i}$ and remove $\Iset_{i-1} \leftarrow \Iset_{i-1} \setminus \{e\}$.\label{line:add}
                        \If{$S_i \cup \Iset_{i-1}$ is no longer independent\label{line:T_remove_condition}}
                            \State{There exists $e' \in \Iset_{i-1}$ in the unique cycle in $S_i \cup \Iset_{i-1}$; remove $e'$ from $\Iset_{i-1}$.\label{line:T_remove}}
                        \EndIf
    		              \If{$|S_i| > r \cdot \|\vx(i-1)\|_1 / \tau$\label{line:reset_condition}}
                          \State \Reset($i$).
                        \EndIf
                    \EndWhile
                    \State $i \leftarrow i+1$.
            \EndWhile
            \State $i \leftarrow 1$.
            \While{$|S_{i-1}|>0$ \label{line:main_T_loop}}
                \algorithmiccomment{\textit{Add new elements from $\Rset$ to $\Iset_i$}.}
            	\State Update $\Iset_{i-1} \gets \Iset_{i-1} \cap \Rset$. \label{line:non-R_remove}
    	         \While{there is $e \in \Rset \cap (\Span(S_{i-1}) \setminus (\Span(S_{i} \cup \Iset_{i-1})))$ \label{line:T_add_condition}}
                    \State Add the $\sigma$-first such $e$ to $\Iset_{i-1}$.\label{line:add_T}
                \EndWhile
                \State $i \leftarrow i+1$.
            \EndWhile
            \State \Return $\cup_{i = 0}^{\infty} \Iset_i$.  \label{line:return}
        \EndFor
        \Statex
        \Procedure{Reset}{$i$}
            \For{every level $j \geq i$}
                \State Set $S_j \gets \emptyset$. \label{line:reset_removal_S}
                \State Set $\Iset_{j} \gets \emptyset$. \label{line:reset_removal_I}
            \EndFor
        \EndProcedure
        \end{algorithmic}
\end{algorithm}

We begin the analysis of \cref{alg:matroid-DCRS} by showing that it always terminates.

\begin{lemma} 
\label{lem:terminate}
    \cref{alg:matroid-DCRS} must always terminate.
\end{lemma}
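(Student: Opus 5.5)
We must show that every loop in a single time step of \cref{alg:matroid-DCRS} finishes. The second phase (the loop at \cref{line:main_T_loop}) is easy. For each level $i$, every iteration of the inner while loop at \cref{line:T_add_condition} adds an element $e \notin \Span(S_i \cup \Iset_{i-1})$ to $\Iset_{i-1}$. This increases $\rank(S_i \cup \Iset_{i-1})$ by one, so there are at most $\rank(\cM)$ iterations per level. The number of levels visited is finite because the chain is strictly nested: $S_{i}\subseteq \Span(S_{i-1})$ and every element added at level $i$ lies outside $\Span(S_i)$. Hence $\rank(S_i) = |S_i|$ and $\Span(S_i) \subsetneq \Span(S_{i-1})$ whenever $S_i \neq \emptyset$. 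So $|S_i|$ strictly decreases in $i$ among the non-empty levels, at most $\rank(\cM)+1$ levels are non-empty, and the outer loop stops once it reaches an empty $S_{i-1}$.

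The main work is the first phase (the loop at \cref{line:main_S_loop}), where \Reset could in principle cause repeated work. The plan is to argue level by level, since the outer loop only moves forward ($i$ increases) and \Reset$(i)$ only clears levels $j \ge i$, never levels below the current one. Fix a level $i$ and consider the inner while loop at \cref{line:add_condition}. Each iteration adds an element $e \in \Span(S_{i-1}) \setminus \Span(S_i)$ to $S_i$, so $S_i$ stays independent and $|S_i|$ increases by one. Between two resets of level $i$, the number of iterations is therefore at most $\rank(\cM)$.

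Resets can only be triggered at \cref{line:reset_condition} when $|S_i| > r\|\vx(i-1)\|_1/\tau$. Here I would use \Cref{lem:growth_bound}. During the current time step $\vx$ is fixed, and $\vx(i-1)$ does not change while we work on level $i$, since $S_{i-1}$ is untouched. After a \Reset$(i)$, $S_i$ is rebuilt from $\emptyset$ by a process satisfying exactly the hypotheses of \Cref{lem:growth_bound} with $\vy = \vx(i-1)$: each added element is outside $\Span(A_e)$ and has spanning probability at least $\tau$. Consequently $|S_i| \le \|\vx(i-1)\|_1/\tau < r\|\vx(i-1)\|_1/\tau$, using $r>1$. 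So after at most one reset per level (possibly one at \cref{line:early_reset_condition} or one at \cref{line:reset_condition}), the reset condition can never fire again at that level during this time step.

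Each level thus performs at most $2\rank(\cM)$ additions, and the inner loop terminates. The step I expect to need the most care is checking that no other event invalidates this: the removals from $\Iset_{i-1}$ at \cref{line:T_remove} do not affect $S_i$ or $\vx(i-1)$, and when $\|\vx(i-1)\|_1=0$ no element satisfies the add condition, so nothing is added. Combining the finite number of levels with finitely many iterations per level gives termination.
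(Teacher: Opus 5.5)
\textbf{There is a genuine gap: your bound on the number of levels is not justified, and it is the only thing stopping the loop at \cref{line:main_S_loop} from running forever.}

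Your per-level analysis of the first phase is sound. There is at most one \Reset per level. After it, \cref{lem:growth_bound} with $\vy=\vx(i-1)$ keeps $|S_i|\le\|\vx(i-1)\|_1/\tau$, so no second reset fires.

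The trouble is in your first paragraph. The two facts you cite are $S_i\subseteq\Span(S_{i-1})$ and that every added element lies outside $\Span(S_i)$. They only give that $S_i$ is independent and $\Span(S_i)\subseteq\Span(S_{i-1})$; they do not give strict containment. Take a single non-loop element $a$ with $S_0=\{a\}$. Adding $a$ to $S_1$ satisfies both facts and gives $\Span(S_1)=\Span(S_0)$ with $S_1\neq\emptyset$. The algorithm avoids this only because $x_a\le b<\tau$.

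Your argument never uses $\vx\in b\cdot P_\cM$ or $r<\tau/b$; it uses only $r>1$. Without those hypotheses, termination actually fails. With $x_a=b$, $\tau=b$ and $r=2$, the algorithm sets $S_i=\{a\}$ for every $i$ and never stops. Yet every fact your proof invokes remains true in that instance. So the claimed strict nesting, and hence the finiteness of the levels visited, is unproved.

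\textbf{How to close it.} The missing ingredient is the geometric shrinkage of the levels. When the inner loop at level $i$ exits, the checks on \cref{line:early_reset_condition,line:reset_condition} guarantee $|S_i|\le r\|\vx(i-1)\|_1/\tau$. Since $\vx\in b\cdot P_\cM$, we have $\|\vx(i-1)\|_1\le b\cdot\rank(\Span(S_{i-1}))=b|S_{i-1}|$. Hence $|S_i|\le(rb/\tau)|S_{i-1}|<|S_{i-1}|$, because $r<\tau/b$. Later iterations never touch $S_{i-1}$ or $S_i$. So the processed levels have strictly decreasing sizes starting from $|S_0|=\rank(\cM)$, and the first phase performs at most $\rank(\cM)+1$ iterations. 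This is essentially \cref{lem:set_size_bound}.

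The paper reaches the same conclusion differently. After any reset, or once the loop passes the last previously non-empty level, all subsequent levels are built from scratch. Then \cref{lem:growth_bound} yields $|S_i|\le(b/\tau)|S_{i-1}|$, again using $\tau>b$.

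For the second phase you do not need strict nesting at all. The $S_i$ are frozen there, and only finitely many levels are non-empty once the first phase has ended.
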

\begin{proof} For every time step $t \in [T]$ within the outer for loop, we need to argue that the while loops starting on \cref{line:main_S_loop,line:main_T_loop} terminate. The loop starting on \cref{line:main_T_loop} adds each of the $n$ elements to some $\Iset_i$ at most once, so we focus on understanding the other loop.

We will rely on the following fact from \cref{sec:warmup}. Define $d(i) := \max_j(j - i \ : \ S_j\neq \emptyset)$ (the number of levels above $i$ after which all levels are empty) and consider the evolution of $d(i)$ as a function of the iteration $i$ of the while loop from \cref{line:main_S_loop} and of the state of the chain $S_1, S_2, \ldots$ etc. If the loop ever reaches a level $i \geq 0$ such that $d(i) = 0$, then \cref{lem:growth_bound} guarantees that the loop will terminate. 
This proves the lemma for the case that $\Reset(i)$ is invoked at any point during the algorithm because immediately after this invocation, we indeed have $d(i)=0$. Thus, it remains to handle the case that \Reset is never invoked. 

If \Reset is never invoked and $d(i) >0$, then $d(i)$ decreases by $1$ after each iteration of \cref{line:main_S_loop} because the algorithm only promotes elements from level $i$ to $i+1$, and all other levels remain unchanged during this iteration. Thus, eventually $d(i)$ becomes $0$ also when \Reset is not invoked.
\end{proof}

The following lemma summaries additional important properties of \cref{alg:matroid-DCRS}.

\begin{restatable}{lemma}{correctnessprops}
\label{lem:properties}
\cref{alg:matroid-DCRS} maintains the invariants that, for every $i \geq 1$,
\begin{compactenum}[\bf (a)]
	\item $S_i \in \cI$,\label{item:S_independent}
	\item $\Span(S_i) \subseteq \Span(S_{i - 1})$, \label{item:laminar}
	\item $\Iset_{i-1} \cup S_{i} \in \cI$, and \label{item:I_independent}
	\item $\Iset_{i - 1} \subseteq \Span(S_{i - 1}) \setminus \Span(S_{i})$. \label{item:I_inclusion}
\end{compactenum}
\end{restatable}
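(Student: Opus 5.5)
We prove all four invariants simultaneously by induction over the elementary operations of \cref{alg:matroid-DCRS}. All sets start empty except the fixed base $S_0$, so every invariant holds trivially at initialization: $\Span(S_1)=\Span(\emptyset)$ contains no non-loop elements, and we assumed the matroid is loopless. Each modification to the collections $\{S_i\}$, $\{\Iset_i\}$ is one of five kinds:
\begin{compactenum}[(i)]
\item adding an element $e$ to $S_i$ on \cref{line:add} (together with removing $e$ from $\Iset_{i-1}$);
\item removing an element $e'$ from $\Iset_{i-1}$ on \cref{line:T_remove};
\item a call to \Reset$(i)$;
\item the intersection on \cref{line:non-R_remove};
\item adding an element to $\Iset_{i-1}$ on \cref{line:add_T}.
\end{compactenum}
We check that each operation preserves (a)--(d) for every index, assuming they held beforehand.

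Removals from the $\Iset_j$'s, namely operations (ii) and (iv), cannot break any invariant: (c) and (d) are down-closed in $\Iset_{j}$, and (a) and (b) do not involve $\Iset_j$. For a \Reset$(i)$, all $S_j$ and $\Iset_j$ with $j \geq i$ become empty. Then (a) and (b) hold trivially for indices above $i$. The invariant $\Span(S_i) \subseteq \Span(S_{i-1})$ holds since the span of the empty set lies in every span. For (c) and (d) at index $i$, we need $\Iset_{i-1} \in \cI$ and $\Iset_{i-1}\subseteq \Span(S_{i-1})$. Both follow from the previous invariants, since $\Iset_{i-1}\cup S_i \in \cI$ and shrinking the span of $S_i$ only weakens the exclusion requirement. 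For operation (v), the element $e$ satisfies $e\in \Span(S_{i-1})\setminus \Span(S_i\cup \Iset_{i-1})$, so $S_i\cup\Iset_{i-1}+e$ is independent, which gives (c). Moreover $e \notin \Span(S_i)$, so (d) is preserved.

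The main case is operation (i). The element $e$ lies in $\Span(S_{i-1})\setminus\Span(S_i)$, so $S_i+e$ is independent, which gives (a). Also $\Span(S_i+e)\subseteq \Span(S_{i-1})$, which gives (b) at index $i$.

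For (c) at index $i$, after $e$ is removed from $\Iset_{i-1}$ the set $S_i+e\cup\Iset_{i-1}$ contains at most one circuit. This holds because $S_i\cup\Iset_{i-1}$ was independent (by induction) and we added a single element. That circuit must use $e$ and cannot lie inside $S_i+e$, so it contains some $e'\in \Iset_{i-1}$. Removing $e'$ on \cref{line:T_remove} restores independence.

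The subtle point is invariant (d) at index $i$: elements of $\Iset_{i-1}$ must avoid the \emph{enlarged} span $\Span(S_i+e)$. Suppose $f\in\Iset_{i-1}$ survives and $f\in\Span(S_i+e)$. Then $S_i+e+f$ contains a circuit. After the removal of $e'$, however, this set is a subset of the independent set $(S_i+e)\cup\Iset_{i-1}$, which is a contradiction.

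Invariants at index $i+1$ also need checking. Invariant (b) at $i+1$ holds because $\Span(S_{i+1})\subseteq\Span(S_i)\subseteq\Span(S_i+e)$. Invariant (d) at $i+1$ only requires $\Iset_i\subseteq\Span(S_i)$, which is preserved under enlarging $S_i$. Invariant (c) at $i+1$ does not involve $S_i$.

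The same argument applies to \Cref{alg:increase-only} verbatim.
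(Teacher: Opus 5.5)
Your proof is correct and is essentially the paper's argument. Both are an induction over the elementary operations of \cref{alg:matroid-DCRS}. Both use the unique-circuit property (\cref{lem:circuit}) to restore independence of $\Iset_{i-1}\cup S_i$ after an addition to $S_i$. Both then deduce that the surviving elements of $\Iset_{i-1}$ avoid $\Span(S_i+e)$ from that restored independence together with the explicit removal of $e$ from $\Iset_{i-1}$. The differences are organizational: you split cases by operation, while the paper splits them by invariant. One small point: the addition on \cref{line:add} and the repair on \cref{line:T_remove} should be declared a single atomic step, since (c) and (d) can fail transiently between them; your analysis of case (i) already treats them this way.
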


We leave the straightforward proof \cref{lem:properties} to \cref{app:proofs}.
Here, we show that this lemma implies that the output set of our algorithm is feasible.

\begin{corollary} \label{cor:feasible}
\Cref{alg:matroid-DCRS} returns an independent subset of $\Rset^t$.
\end{corollary}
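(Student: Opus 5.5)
The plan is to derive the claim directly from the invariants of \cref{lem:properties}, using them at the moment the algorithm returns on \cref{line:return}. The output is $\Iset = \bigcup_{i} \Iset_i$, so two things must be shown: (i) $\Iset \subseteq \Rset^t$, and (ii) $\Iset \in \cI$.

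For (i), consider the loop starting on \cref{line:main_T_loop}. For each nonempty level it first sets $\Iset_{i-1} \gets \Iset_{i-1} \cap \Rset$ on \cref{line:non-R_remove}, and afterwards adds only elements of $\Rset$ on \cref{line:add_T}. I must also handle levels $j$ with $S_{j} = \emptyset$, since the loop does not visit them. For such a level, $\Span(S_j) = \Span(\emptyset)$ is the set of loops, which is empty because we assume $\cM$ has no self-loops. Invariant (d) then forces $\Iset_j = \emptyset$. Hence every element of $\Iset$ lies in $\Rset$.

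For (ii), I will show by downward induction on $i$ that $J_i \defeq \bigcup_{j \geq i} \Iset_j$ satisfies $J_i \in \cI$ and $J_i \subseteq \Span(S_i)$.
\begin{itemize}
\item \emph{Base case.} Take $i$ beyond the last nonempty $S$-level. By the argument above, $J_i = \emptyset$.
\item \emph{Containment.} Invariant (d) gives $\Iset_{i-1} \subseteq \Span(S_{i-1})$. Combined with $J_i \subseteq \Span(S_i) \subseteq \Span(S_{i-1})$ from invariant (b), this yields $J_{i-1} \subseteq \Span(S_{i-1})$.
\item \emph{Independence.} Invariant (c) gives $\Iset_{i-1} \cup S_i \in \cI$. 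Invariant (a) gives $S_i \in \cI$, and $J_i \subseteq \Span(S_i)$ holds by induction.
\end{itemize}

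The key matroid step for independence is the following. Suppose $A \cup S \in \cI$, $B \in \cI$, $B \subseteq \Span(S)$, and $A \cap \Span(S) = \emptyset$. Then $A \cup B \in \cI$. To prove it, extend $B$ to a basis $B'$ of $\Span(S)$, so that $\rank(B') = \rank(S) = |S|$. Then
\[
\rank(A \cup B') \geq \rank(A \cup S) = |A| + |S| = |A| + |B'|,
\]
where the inequality holds because $\Span(A \cup B') \supseteq \Span(S)$. So $A \cup B'$ is independent, and hence so is its subset $A \cup B$. The disjointness needed to count $|A \cup B'| = |A| + |B'|$ comes from invariant (d), which gives $\Iset_{i-1} \cap \Span(S_i) = \emptyset$.

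Applying this with $A = \Iset_{i-1}$, $S = S_i$, and $B = J_i$ gives $J_{i-1} \in \cI$. At $i = 0$, the set $J_0 = \Iset$ is independent.

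The only delicate point is that the invariants must hold at the instant of return. \Cref{lem:properties} states that they are maintained throughout, so I would simply invoke it as the paper permits, with no additional obstacle.
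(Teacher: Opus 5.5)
Your proof is correct and follows essentially the paper's route. The subset-of-$\Rset^t$ argument is the same. Your downward induction with the basis-extension lemma is a level-by-level version of the paper's telescoping bound $\rank(\Iset_i \mid \bigcup_{i' > i} \Iset_{i'}) \geq \rank(\Iset_i \mid \Span(S_{i+1})) = \rank(\Iset_i \mid S_{i+1}) = |\Iset_i|$, and both rest on invariants (a)--(d).

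One step should be made explicit. The levels skipped by the loop on \cref{line:main_T_loop} are exactly those with $S_j = \emptyset$, and this needs invariant (b) together with the absence of loops: if $S_{j-1} = \emptyset$ then $\Span(S_j) \subseteq \Span(\emptyset) = \emptyset$. The paper invokes parts (b) and (d) at exactly this point.
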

\begin{proof}
Recall that \Cref{alg:matroid-DCRS} returns $\Iset = \cup_i \Iset_i$ on \Cref{line:return}.
Parts~\eqref{item:laminar} and~\eqref{item:I_inclusion} of \cref{lem:properties} guarantee together that the loop on Line~\ref{line:main_T_loop} iterates over all the indexes $i$ corresponding to a non-empty set $\Iset_i$. For every such $i$, \cref{line:non-R_remove} removes from $\Iset_i$ every element that does not belong to $R$. Additionally, the condition on \cref{line:T_add_condition} explicitly makes sure that only elements of $R$ are added to $\Iset_i$. Thus, by the end of the iteration of \cref{alg:matroid-DCRS}, $\Iset = \bigcup_{i} \Iset_i$ is a subset of $\Rset = \Rset^t$. The rest of this proof is devoted to showing that $\Iset$ is also independent in $\cM$.

For every $i \geq 0$ and integer $0 \leq i' < i$, Parts~\eqref{item:laminar} and~\eqref{item:I_inclusion} of \cref{lem:properties} guarantee that $\Iset_{i} \subseteq \Span(S_{i}) \subseteq \Span(S_{i' + 1})$. Thus, by the submodularity of the rank function, we have
\[
	\rank(\Iset_i \mid \cup_{i' = i + 1}^{\ell} \Iset_{i'})
	\geq
	\rank(\Iset_i \mid \Span(S_{i + 1}))
	=
	\rank(\Iset_i \mid S_{i + 1}),
\]
where the equality holds since $\rank(A \cup \Span(B)) = \rank(A \cup B)$ for every two sets $A$ and $B$. Using this inequality and writing $\rank(\Iset)$ as a telescoping sum,
\[
	\rank(\Iset)
	=
	\sum_{i = 0}^\infty \rank(\Iset_i \mid \cup_{i' = i + 1}^{\ell} \Iset_{i'})
	\geq
	\sum_{i = 0}^\infty \rank(\Iset_i \mid S_{i+1})
	=
	\sum_{i = 0}^\infty |\Iset_i|
	=
	|\Iset|,
\]
where the second equality holds by Part~\eqref{item:I_independent} of \cref{lem:properties} and the fact that $\Iset_i \cap S_{i+1} = \emptyset$ (which is implied by Part \eqref{item:I_inclusion}), and the last equality holds since Part~\eqref{item:I_inclusion} of \cref{lem:properties} implies that the sets $\Iset_0, \Iset_1, \dotsc, $ etc.\ are disjoint. This completes the proof that $\Iset$ is independent.
\end{proof}

\subsection{Balance} \label{ssc:balance}

In this section we analyze the balance of our algorithm. For every element $e \in \cM$, let us denote by $\lvl(e)$ the maximum $i$ such that $e \in \Span(S_{i})$.

\begin{proposition} 
\label{prop:matroid-DCRS-balance}
    \Cref{alg:matroid-DCRS} has a balance of $1 - \tau$.
\end{proposition}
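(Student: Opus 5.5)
Fix a time step $t$ and an element $e$. The chain $\{S_i\}$ at the end of the first while loop (\cref{line:main_S_loop}) depends only on $\vx^{\leq t}$ and not on $\Rset^t$. Since the adversary is oblivious and the chain maintenance never reads $\Rset$, we may condition on the whole history $\vx^{\leq t}$, $\Rset^{\leq t-1}$ and the resulting final chain. Under this conditioning, $\Rset^t \sim \Rdist(\vx^t)$ still holds, with each element independent. The previous output sets $\Iset_i^{t-1}$ may be correlated with $\Rset^t$ through the TCOS, so the argument must not rely on how $\Iset_i$ was built. Instead, I would show directly that the final $\Iset_i$ satisfies a deterministic inclusion property: $e\in\Iset_{\lvl(e)}$ whenever $e\in\Rset$ and $e\notin\Span\big((\Rset\cap \Span(S_i)\setminus\{e\}) \cup S_{i+1}\big)$, with $i=\lvl(e)$.

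The deterministic step goes as follows. By \cref{lem:properties}(d), only $\Iset_{i}$ can contain $e$. After \cref{line:non-R_remove} we have $\Iset_i\subseteq \Rset\cap\Span(S_i)$, and the inner loop on \cref{line:T_add_condition} stops only when every $e'\in\Rset\cap(\Span(S_i)\setminus\Span(S_{i+1}))$ lies in $\Span(S_{i+1}\cup\Iset_i)$. Suppose $e\in\Rset$ but $e\notin\Iset_i$. Then $e\in \Span(S_{i+1}\cup \Iset_i)\subseteq \Span(S_{i+1}\cup (\Rset\cap\Span(S_i)\setminus\{e\}))$, which contradicts the assumed condition. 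So the event ``$e$ is active and not spanned by the other active elements of level $i$ together with $S_{i+1}$'' implies $e\in\Iset^t$.

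It remains to bound the probability of this event. Independence of $\Rset$ across elements gives
\[\prob{e\in\Iset^t}\ge x_e\cdot \prob{e\notin \Span(\Rdist(\vx(i)) \setminus\{e\} \cup S_{i+1})}.\]
Since $e \in \Span(S_i)\setminus\Span(S_{i+1})$, adding $e$ itself to the sample does not affect whether $e$ is spanned, so this probability equals $1-\prob{e\in\Span(\Rdist(\vx(i))\cup S_{i+1})}$. The loop on \cref{line:main_S_loop} terminates by \cref{lem:terminate}, and it terminates only when no element of $\Span(S_i)\setminus\Span(S_{i+1})$ satisfies the condition on \cref{line:add_condition}. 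So this last probability is below $\tau$, and we get $\prob{e\in\Iset^t\mid \text{history}}\ge (1-\tau)x^t_e$. Averaging over the history gives balance $1-\tau$.

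The main obstacle is justifying the fixed-point claim at level $i=\lvl(e)$ at the \emph{final} state. Processing later levels, or a \Reset at some level $j>i$, could change $S_{i+1}$ after level $i$ was checked. To handle this, I would note that the loop processes levels in increasing order. Level $i$ is settled when iteration $i+1$ finishes, and later iterations modify only $S_j$ for $j\ge i+2$ or reset levels $j\ge i+2$. Neither changes $S_{i+1}$ nor $\vx(i)$, so the condition at level $i$ stays false. One subtle case remains: a reset triggered within iteration $i+1$ empties $S_{i+1}$, and the inner loop then refills it, so the fixed-point still holds when the loop exits. I also need to check that $\lvl(e)$ is well defined via \cref{lem:properties}(b). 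For elements in the top level, $S_{i+1}=\emptyset$ and the same argument applies.
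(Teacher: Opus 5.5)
Your proof follows the paper's argument: the history-free implication obtained from termination of the loop on \cref{line:T_add_condition} (using $\Iset_{\lvl(e)}\subseteq\Rset\cap\Span(S_{\lvl(e)})$), independence of the event $e\in\Rset$ from $\Rset-e$, and the failed condition on \cref{line:add_condition} at level $\lvl(e)$, with the chain not depending on $\Rset$. You also check that iterations $j\ge\lvl(e)+2$ never modify $S_{\lvl(e)}$ or $S_{\lvl(e)+1}$, a point the paper leaves implicit, and that check is correct.

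However, one step is false as written: conditioning on $\Rset^{\le t-1}$ does \emph{not} leave $\Rset^t\sim\Rdist(\vx^t)$. A TCOS only fixes the marginal law of each $\Rset^t$. For the threshold TCOS with $\vx^t=\vx^{t-1}$ one has $\Rset^t=\Rset^{t-1}$. So on a history with $e\notin\Rset^{t-1}$, the element $e$ is never active, and your bound $\Pr[e\in\Iset^t\mid\text{history}]\ge(1-\tau)x^t_e$ fails. Hence ``averaging over the history'' is not a valid route.

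The repair is simply to drop that conditioning. By obliviousness, the chain is a deterministic function of $\vx^{\le t}$. Your inclusion argument holds for every starting state of the sets $\Iset_i$. Therefore the event $\{e\in\Rset^t,\ e\notin\Span((\Rset^t\cap\Span(S_{\lvl(e)})-e)\cup S_{\lvl(e)+1})\}$ is determined by $\Rset^t$ alone, and its probability can be computed from the marginal $\Rset^t\sim\Rdist(\vx^t)$. This is exactly what the paper does. (The same argument, applied for a fixed history to a fresh $\Rset\sim\Rdist(\vx^t)$, gives the per-step CRS statement about $\pi_{\vx^{\le t}}(\Rset^{\le t-1},\cdot)$, which is presumably what you intended.)

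A smaller slip concerns your claimed equality $\Pr[e\notin\Span((\Rdist(\vx(i))-e)\cup S_{i+1})]=1-\Pr[e\in\Span(\Rdist(\vx(i))\cup S_{i+1})]$. It is wrong, because $e\in\Rset$ always places $e$ in $\Span(\Rset\cup S_{i+1})$. Only the inequality $\ge$ holds (by monotonicity of $\Span$), and that inequality is all the argument needs.
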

\begin{proof}
Consider an arbitrary element $e \in \cM$. In iteration $\lvl(e) + 1$ of the loop starting on \cref{line:main_S_loop} of \cref{alg:matroid-DCRS}, the inner loop on \cref{line:add_condition} terminated, which implies that $e$ does not obey the condition of this inner loop. However, the definition of $\lvl(e)$ implies that $e \in \Span(S_{\lvl(e)}) \setminus \Span(S_{\lvl(e) + 1})$, and so the part of the condition that $e$ violates must be the inequality $\prob{e \in \Span(\Rdist(\vx(\lvl(e))) \cup S_{\lvl(e) + 1})} \geq \tau$, and thus 
\begin{align*}
	\prob{e \not \in \Span((R \cap \Span(S_{\lvl(e)}) - e) \cup S_{\lvl(e) + 1})}
	\geq{} &
	\prob{e \not \in \Span((R \cap \Span(S_{\lvl(e)})) \cup S_{\lvl(e) + 1})} \\
	={} &
	\prob{e \not \in \Span(\Rdist(\vx(\lvl(e))) \cup S_{\lvl(e) + 1})}
	>
	1 - \tau
	,
\end{align*}
where the equality holds since one can verify that while the values of the sets $\Iset_0, \Iset_1, \dotsc, \Iset_{\ell}$ depend on the value of $\Rset$, the construction of the sets $S_1, S_2, \dotsc, S_\ell$ (and thus, also $\lvl(e)$) is independent of the value of $\Rset$. Notice now that since Part~\eqref{item:I_inclusion} of \cref{lem:properties} guarantees that $\Iset_{\lvl(e)}$ is a subset of $\Span(S_{\lvl(e)})$ and \cref{cor:feasible} guarantees that $\Iset_{\lvl(e)} \subseteq \Iset \subseteq \Rset$,
\begin{align*}
	\prob*{e \in \Rset \text{ and } e \not \in \Span((\Iset_{\lvl(e)}{ } - e) \cup S_{\lvl(e) + 1})} 
	&\geq \prob*{e \in \Rset \text{ and } e \not \in \Span((\Rset \cap \Span(S_{\lvl(e)}) - e) \cup S_{\lvl(e) + 1})} \\
	&=
	\prob*{e \in \Rset} \cdot \prob{e \not \in \Span((\Rset \cap \Span(S_{\lvl(e)}) - e) \cup S_{\lvl(e) + 1})}
	\\
	&\geq
	x_e \cdot (1 - \tau),
\end{align*}
where the equality holds since the membership of each element in $\Rset \sim \Rdist(\vx)$ is independent.

To complete the proof of the proposition, it remains to observe that the event that both $e \in \Rset$ and $e \not \in \Span((\Iset_{\lvl(e)} - e) \cup S_{\lvl(e) + 1})$ implies that $e \in \Iset_{\lvl(e)}$, and thus $e \in \Iset$. Assume towards a contradiction that this is not the case; then by the definition of $\lvl(e)$, $e \in \Span(S_{\lvl(e)})$ and so
\[
	e
	\in
	\Span(S_{\lvl(e)}) \setminus \Span((\Iset_{\lvl(e)} - e) \cup S_{\lvl(e) + 1})
	=
	\Span(S_{\lvl(e)}) \setminus \Span(\Iset_{\lvl(e)} \cup S_{\lvl(e) + 1})
	,
\]
which contradicts the fact that in iteration number $\lvl(e)$ of the loop starting on \cref{line:main_T_loop} of \cref{alg:matroid-DCRS}, the inner loop on \cref{line:T_add_condition} terminated without adding $e$ to $\Iset_{\lvl(e)}$.
\end{proof}


\subsection{Recourse} \label{sec:recourse}

We are ready to prove the recourse bound. Let $\Inc(T)$ and $\Dec(T)$ be the total amount of increase (resp. decrease) to $\vx$, i.e.
\begin{equation}
\label{eq:inc_dec_def}
    \Inc(T) := \sum_{t = 1}^T \sum_{e \in \cN} \max(0, x^t_e - x^{t-1}_e), \qquad\quad
    \Dec(T) := \sum_{t = 1}^T \sum_{e \in \cN} \max(0, x^{t - 1}_e - x^{t}_e)
    . 
\end{equation}
Note that $\sum_{t = 1}^T \norm{\vx^t - \vx^{t-1}}_1 = \Inc(T) + \Dec(T)$. 

In analogy to the increase-only case, we start by showing that the number of nonempty levels is always logarithmically bounded.

\begin{lemma} \label{lem:set_size_bound}
When \cref{alg:matroid-DCRS} terminates the processing of $\vx^t$, $|S_i| \leq (rb/\tau)^{i - j} \cdot |S_j|$ for every two integers $0 \leq j \leq i$. In particular, for $\ell = \lceil \log_{\tau / (rb)} \rank(\cM)\rceil$, we have $|S_{\ell + 1}| \leq (rb/\tau)^{\ell + 1} \cdot |S_0| = (rb/\tau)^{\ell + 1} \cdot \rank(\cM) < 1$.
\end{lemma}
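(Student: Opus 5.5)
It suffices to prove the one-step inequality $|S_i| \leq (rb/\tau)\cdot|S_{i-1}|$ for every $i \geq 1$ at the moment \cref{alg:matroid-DCRS} finishes processing $\vx^t$. The general bound then follows by chaining this inequality $i-j$ times. The ``in particular'' statement is obtained by taking $j=0$ and using $|S_0| = \rank(\cM)$ together with the choice of $\ell$. Here we need $rb/\tau < 1$, which holds because $r < \tau/b$. The choice of $\ell$ makes $(rb/\tau)^{\ell}\rank(\cM) \le 1$, and one more factor of $rb/\tau<1$ makes the bound strictly smaller than $1$.

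To get the one-step inequality I will combine two ingredients. The first is the reset guard: \cref{line:early_reset_condition,line:reset_condition} ensure that, after iteration $i$ of the loop on \cref{line:main_S_loop} completes, $|S_i| \leq r\|\vx(i-1)\|_1/\tau$. Indeed, the check on \cref{line:early_reset_condition} handles the state at the start of the iteration. The check on \cref{line:reset_condition} is performed after every single addition, and if it fires, \Reset empties $S_i$. Additions after a reset are still checked. So when the inner loop exits, the inequality holds.

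The second ingredient is the bound $\|\vx(i-1)\|_1 \le b\,\rank(\Span(S_{i-1})) = b|S_{i-1}|$. This holds because $\vx\in bP_\cM$, so the mass on any set is at most $b$ times its rank, and $S_{i-1}$ is independent by \cref{lem:properties}\eqref{item:S_independent}. Combining the two ingredients gives $|S_i| \le (rb/\tau)|S_{i-1}|$.

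The main obstacle is showing that the reset guarantee for level $i$ is still valid at the \emph{end} of the processing of $\vx^t$, since later iterations could alter $S_{i-1}$, $S_i$, or $\vx(i-1)$. I would argue this as follows. The quantity $\vx$ is fixed during the step. Iterations $i' > i$ only add elements to $S_{i'}$ with $i'\ge i+1$, or reset levels $\ge i'$. Hence neither $S_{i-1}$ nor $S_i$ (and so neither $\vx(i-1)$) changes after iteration $i$. The one subtlety is a \Reset triggered at iteration $i$ itself, which empties all levels $j\ge i$. After that, $S_i$ is rebuilt and kept under the guard, and higher levels start empty and are rebuilt in later iterations, where they are guarded in the same way.

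A second subtlety is levels reached when the loop stops because $|S_{i-1}|=0$. For such levels I must check that $S_i=\emptyset$, so the inequality holds trivially. This follows from \cref{lem:properties}\eqref{item:laminar}: $\Span(S_i)\subseteq\Span(S_{i-1})=\Span(\emptyset)$, which consists only of loops, and there are none. Hence $S_i=\emptyset$ because $S_i$ is independent.
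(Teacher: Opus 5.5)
Your proposal is correct and takes essentially the paper's route. The guard on \cref{line:early_reset_condition,line:reset_condition} gives $|S_i| \le r\|\vx(i-1)\|_1/\tau$ at the end of iteration $i$, and $\vx\in b\cdot P_\cM$ gives $\|\vx(i-1)\|_1\le b|S_{i-1}|$; this bound persists because $S_{i-1}$ and $S_i$ are untouched afterwards, and chaining together with the choice of $\ell$ yields the rest. Your additional treatment of the levels lying beyond the point where the while loop stops, via $\Span(S_i)\subseteq\Span(\emptyset)=\emptyset$, is a correct detail that the paper leaves implicit.
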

\begin{proof}
For every $i \in [\ell]$, \cref{line:early_reset_condition} and \cref{line:reset_condition} of \cref{alg:matroid-DCRS} guarantee together that when iteration $i$ of the loop starting on \cref{line:main_S_loop} terminates, we have the inequality 
\[
	|S_i|
	\leq
	\frac{r \cdot \|\vx(i-1)\|_1}{\tau}
	\leq
	\frac{r \cdot (b \cdot |S_{i - 1}|)}{\tau}
	=
	\frac{rb}{\tau} \cdot |S_{i - 1}|
	,
\]
where the second inequality holds since $\vx \in b \cdot P_\cM$. Since $S_i$ and $S_{i - 1}$ do not change after this point until the processing of $\vx^t$ by \cref{alg:matroid-DCRS} ends, the last inequality holds also when this processing terminates. Furthermore, iterating this inequality over a range of levels implies that for every two integers $0 \leq j \leq i \leq \ell$
\[
	\abs{S_i}
	\leq
	\Big(\frac{rb}{\tau}\Big)^{i - j} \cdot \abs{S_j}
	,
\]
which completes the proof of the first part of the lemma. Towards the second part, notice that
\[
	\abs{S_{\ell + 1}}
	\leq
	\Big(\frac{rb}{\tau}\Big)^{\ell + 1} \cdot |S_0|
	=
	\Big(\frac{rb}{\tau}\Big)^{\ell + 1} \cdot \rank(\cM)
	\leq
	\Big(\frac{rb}{\tau}\Big)^{\log_{\tau / (rb)} \rank(\cM) + 1} \cdot \rank(\cM)
	=
	\frac{rb}{\tau}
    <
    1
	.
	\qedhere
\]
\end{proof}

We get the following immediate corollary.

\begin{corollary} \label{cor:non-empty_count}
    When \cref{alg:matroid-DCRS} terminates the processing of $\vx^t$ the number of nonempty levels is upper bounded by $\ell = \lceil \log_{\tau / (rb)} \rank(\cM)\rceil$.    
\end{corollary}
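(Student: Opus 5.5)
The corollary follows directly from \cref{lem:set_size_bound}, combined with the nesting structure of the chain given by \cref{lem:properties}. I will identify the nonempty levels with a prefix of indices and then show that this prefix ends at index $\ell$.

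\textbf{Step 1: the nonempty levels form a prefix.} By Part~\eqref{item:laminar} of \cref{lem:properties}, the spans are nested:
\[
	\Span(S_{i}) \subseteq \Span(S_{i-1}).
\]
In addition, every $S_i$ is independent by Part~\eqref{item:S_independent}. Hence if $S_{i-1} = \emptyset$, then $\Span(S_i) \subseteq \Span(\emptyset)$. Since the matroid has no loops, $\Span(\emptyset) = \emptyset$, so $S_i = \emptyset$.

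The same conclusion also follows from the first part of \cref{lem:set_size_bound} with $j = i-1$, which gives $|S_i| \leq (rb/\tau)\,|S_{i-1}| = 0$. Either way, once some level is empty, all higher levels are empty.

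\textbf{Step 2: level $\ell+1$ is empty.} The second part of \cref{lem:set_size_bound} gives $|S_{\ell+1}| < 1$. Since $|S_{\ell+1}|$ is an integer, $S_{\ell+1} = \emptyset$. By Step 1, every $S_i$ with $i \geq \ell+1$ is then empty.

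\textbf{Step 3: the sets $\Iset_i$ add no further levels.} If "level" is read to include the sets $\Iset_i$, then Part~\eqref{item:I_inclusion} of \cref{lem:properties} gives $\Iset_i \subseteq \Span(S_i)$. So $\Iset_i$ can be nonempty only if $S_i$ is nonempty.

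\textbf{Conclusion.} Only levels with indices $0, \dots, \ell$ can be nonempty. This gives the stated bound $\ell = \lceil \log_{\tau/(rb)} \rank(\cM)\rceil$, up to the harmless off-by-one from whether the fixed base $S_0$ is counted among the levels.

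There is no real obstacle here. The only point needing care is Step 1, the prefix property, and it is settled by monotonicity of span together with the absence of loops.
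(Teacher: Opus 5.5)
Your proof is correct and matches the paper, which treats this corollary as immediate from the second part of \cref{lem:set_size_bound}: $|S_{\ell+1}|<1$ forces $S_{\ell+1}=\emptyset$, and the nesting of spans from \cref{lem:properties} (with no loops) then empties every higher level. Your explicit prefix argument and your note on the off-by-one from $S_0$ (the later lemmas indeed only sum over $S_1,\dots,S_\ell$) spell out what the paper leaves implicit.
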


Henceforth, we use $\ell := \lceil \log_{\tau / (rb)} \rank(\cM)\rceil$ to refer to the above bound on the number of nonempty levels.

\begin{lemma} \label{lem:remaining}
When Algorithm~\ref{alg:matroid-DCRS} terminates, $\sum_{i = 1}^\ell |S_i| \leq \frac{\Inc(T)}{\tau - rb}$.
\end{lemma}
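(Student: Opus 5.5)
The plan is a potential-function argument generalizing the proof of \cref{lem:growth_bound}. The quantity $\expect{\rank(\Rdist(\vy) \mid A)}$ drops by at least $\tau$ whenever an element is added to $A$; here I apply this idea simultaneously to all levels. For the current state of the algorithm, define
\[
	\Phi \defeq \sum_{i \geq 1} \expect*{\rank\big(\Rdist(\vx(i-1)) \mid S_i\big)} \;-\; rb\sum_{i\geq 1}|S_i| \;+\; \tau \sum_{i \geq 1} |S_i|.
\]
The target invariant is $(\tau - rb)\sum_{i\ge 1}|S_i| \leq \Inc(t)$ after every time step $t$. I would prove it by showing $\Phi \leq \Inc(t)$ at all times, since the first term of $\Phi$ is non-negative. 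If the plain version of $\Phi$ does not telescope correctly, I will reweight the levels. For instance, I could replace the first term by the contribution of the elements of $\Span(S_{i-1}) \setminus \Span(S_i)$ only, so that every coordinate of $\vx$ is counted in exactly one level and an increase of $x_e$ raises $\Phi$ by at most the increase itself.

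The key steps, in order, are as follows. (1) An increase of $\vx$ by $\delta$ in some coordinates raises each conditional-rank term by at most the added mass in that level, so $\Phi$ grows by at most $\delta$. With the per-level (disjoint) version this is exactly charged to $\Inc$. (2) A decrease of $\vx$ never increases any conditional-rank term, because rank is monotone and $\Rdist$ can be coupled monotonically; so decreases are free. (3) Adding $e$ to $S_i$ on \cref{line:add} decreases $\expect{\rank(\Rdist(\vx(i-1))\mid S_i)}$ by at least $\tau$, exactly as computed in \cref{lem:growth_bound}, while the $\tau|S_i|$ term increases by $\tau$. The side effect is that $\Span(S_i)$ grows, so level $i+1$'s term $\expect{\rank(\Rdist(\vx(i))\mid S_{i+1})}$ may gain mass. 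I would bound this gain by the mass newly entering $\Span(S_i)$, which overall is at most $b$ per unit of $\rank(S_i)$ since $\vx \in b\cdot P_\cM$. The $-rb|S_i|$ term is there precisely to pay for it. (4) A \Reset($i$) empties $S_j$ for $j\ge i$. Its conditional-rank terms become at most $\|\vx(i-1)\|_1$, while $\tau|S_i|$ drops by more than $r\|\vx(i-1)\|_1$ by the reset condition. Since $r>1$, $\Phi$ does not increase.

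I expect the main obstacle to be step (3) combined with the bookkeeping of how enlarging $\Span(S_i)$ feeds mass into higher levels. A crude bound charges the entire mass $\|\vx(i)\|_1 \le b|S_i|$ at once, and the loss must be absorbed exactly by the $-rb|S_i|$ term; the factor $r$ must come out of the reset threshold on \cref{line:early_reset_condition,line:reset_condition}, as in \cref{lem:set_size_bound}. My first attempt is to accept the slack. If the per-step accounting fails, the fallback is an end-state argument. Apply \cref{lem:growth_bound}-style counting level by level to the additions made since the last reset of that level. Then use $\|\vx(i)\|_1\le b|S_i|$ and the post-step invariant $|S_{i}|\le r\|\vx(i-1)\|_1/\tau$ to sum a geometric-type inequality $\tau\sum_i|S_i| \le (\text{increase charged}) + rb\sum_i|S_i|$, and rearrange.
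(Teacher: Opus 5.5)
Your primary route, the per-step potential $\Phi$, does not go through as written. The two steps that fail are exactly the ones carrying the content. In step (3), adding $e$ to $S_i$ raises the level-$(i+1)$ term by up to the mass of $\Span(S_i+e)\setminus\Span(S_i)$, and this is not $O(b)$ per addition. The constraint $\vx\in b\cdot P_\cM$ gives only the cumulative bound $\|\vx(i)\|_1\le b|S_i|$, and a single addition can pull in $\Theta(b|S_i|)$ mass at once: in a graphic matroid, the first edge joining two components of $S_i$ brings the whole cut into $\Span(S_i)$. You also cannot bank the slack $b|S_i|-\|\vx(i)\|_1$ inside the potential. That term grows under decreases of $\vx$, once per level containing the coordinate, and this lemma must treat decreases as free.

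In step (4), which you did not flag, the accounting is off. Emptying $S_i$ also deletes the credit $-rb|S_i|$, so the $S$-terms of $\Phi$ fall only by $(\tau-rb)\sum_{j\ge i}|S_j|$, not by $\tau|S_i|$. Your bound on the increase of the conditional-rank terms is $\|\vx(i-1)\|_1<\tau|S_i|/r$. This does not fit under $(\tau-rb)|S_i|$ whenever $\tau/r>\tau-rb$. That happens, for instance, with the parameters of \cref{thm:mat_dcrs_formal} and $b\ge\eps$, where $\tau-rb=\eps/(1+b)$ while $\tau/r$ is about $b+\eps$. So your argument does not show that $\Phi$ is non-increasing at a reset.

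Your fallback is the correct proof and is essentially the paper's, provided you use it asymmetrically across levels.

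Use \cref{lem:growth_bound} only at level $1$. Let $J$ be the interval since $S_1$ was last emptied; during $J$, $S_1$ changes only through \cref{line:add}. Each added element satisfied the insertion condition for the then-current $\vx$; note $\vx(0)=\vx$ because $S_0$ is a base. Hence it also satisfies the condition for the coordinatewise maximum $\bar\vx$ of $\vx$ over $J$. This single fixed vector is what lets you apply the lemma across several time steps, giving $\tau|S_1|\le\|\bar\vx\|_1\le\Inc(T)$, where the last step uses $\vx^0=0$.

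For $i\ge 2$, do not use growth counting. The analogous maximum $\bar\vx(i-1)$ is not controlled by $b|S_{i-1}|$ once decreases occur. Summing such bounds over levels would also charge each increase up to $\ell$ times. Use only the end-of-step reset threshold $\tau|S_i|\le r\|\vx(i-1)\|_1\le rb|S_{i-1}|$.

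Summing gives $\tau\sum_{i=1}^{\ell}|S_i|\le\Inc(T)+rb\sum_{i=1}^{\ell-1}|S_i|$. Rearranging, using $\tau>rb$, yields the lemma. This is equivalent to the paper's geometric-series step through \cref{lem:set_size_bound}.
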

\begin{proof}
Let $J$ be the time interval between the last moment in which $S_1$ was the empty set and the termination of the execution of Algorithm~\ref{alg:matroid-DCRS}, and let $\bar{\vx}$ be the coordinate-wise maximum of the values $\vx$ takes throughout the interval $J$. The definition of the interval $J$ implies that $\Reset(1)$ is never invoked during it, and thus, \cref{alg:matroid-DCRS} changes $S_1$ during this interval only via \cref{line:add}. This line adds to $S_1$ only elements $e$ that prior to their addition obeyed both $e \not \in \Span(S_1)$ and
\[
	\prob{e \in \Span(\Rdist(\bar{\vx}) \cup S_1)}
	\geq
	\prob{e \in \Span(\Rdist(\vx) \cup S_1)}
	\geq
	\tau
\]
(here $\vx$ denotes $\vx^t$ at the step when $e$ is added to $S_1$). This process satisfies the conditions of \cref{lem:growth_bound} with $\vy = \bar{\vx}$, and so we get that the size of the set $S_1$ at the end of interval $J$, i.e.\ when \cref{alg:matroid-DCRS} terminates, obeys
\[
	|S_1|
	\leq
	\frac{\|\bar{\vx}\|_1}{\tau}
	\leq
	\frac{\Inc(T)}{\tau}
	,
\]
where the second inequality holds since $\vx^0=0$ and so $\Inc(T)$, the total increase in $\vx$, is an upper bound on $\|\bar{\vx}\|_1$.
Thus by \cref{lem:set_size_bound},
\begin{align*}
	\sum_{i = 1}^\ell |S_i|
	\leq{} &
	\sum_{i = 1}^\ell \Big(\frac{rb}{\tau}\Big)^{i - 1} \cdot |S_1|
    \leq
	\frac{|S_1|}{1 - rb/\tau}
	\leq
	\frac{\Inc(T) / \tau}{1 - rb/\tau}
	=
	\frac{\Inc(T)}{\tau - rb}
	.
	\qedhere
\end{align*}
\end{proof}

We now bound the number of element removals from the sets $S_1, S_2, \dotsc, S_\ell$.
\begin{lemma} 
\label{lem:removed}
    The total number of elements \Cref{alg:matroid-DCRS} removes from the sets $S_1, S_2, \dotsc, S_\ell$ is upper bounded by $\frac{\ell \cdot \Dec(T)}{(\tau - b)(1 - 1/r)}$. Thus,
    \[
        \sum_{t \in [T]} \sum_{i \in [\ell]} \abs{S^{t-1}_i \setminus S^t_i } \leq \frac{\ell \cdot \Dec(T)}{(\tau - b)(1 - 1/r)}
        .
    \]
\end{lemma}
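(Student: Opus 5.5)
Elements leave the sets $S_i$ only through \Reset: \cref{line:add} only adds, and \cref{line:reset_removal_S} is the only place anything is removed from an $S_j$. So I would charge each invocation of $\Reset(i)$, which removes $\sum_{j \geq i} |S_j|$ elements, to downward movement of $\vx$ restricted to $\Span(S_{i-1})$. By \cref{lem:set_size_bound}, the elements removed by $\Reset(i)$ number at most $|S_i|/(1 - rb/\tau) = O(|S_i|)$. Strictly, that lemma is stated at the end of processing. However, the levels $j>i$ have not been touched yet in the current step, and the reset condition was not met at those levels at the end of the previous step, so the geometric bound still applies with $S_i$ at its value just before the reset. Hence it suffices to bound, for each fixed $i$, the sum of $|S_i|$ over all resets of level $i$ by $O(\Dec(T)/((\tau-b)(1-1/r)))$, and then sum over the at most $\ell$ levels. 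I expect the constants to absorb the factor $\tau/(\tau-rb)$, or that removals at level $j>i$ are charged to level $j$'s own epoch; I would sort this out when writing.

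Fix a level $i$ and consider an \emph{epoch}: the interval between the moment $S_i$ was last emptied (by a reset of some level $\le i$) and the next $\Reset(i)$ triggered by the condition at level $i$. Resets triggered at lower levels $i'<i$ also empty $S_i$, but those are charged to level $i'$. During the epoch, $S_i$ only grows via \cref{line:add}. Every element $e$ added at a time when the vector was $\vx$ satisfied $\prob{e \in \Span(\Rdist(\vx(i-1)) \cup S_i)} \ge \tau$.

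The key step is the following. Let $\bar\vx$ be the coordinatewise maximum of $\vx|_{\Span(S_{i-1})}$ over the epoch. I would also need $S_{i-1}$ fixed during the epoch, which holds, since a change to $\Span(S_{i-1})$ through a reset at level $\le i-1$ would end the epoch. \Cref{lem:growth_bound} with $\vy=\bar\vx$ gives $|S_i| \le \|\bar\vx\|_1/\tau$, and so also $|S_i| \le b|S_{i-1}|/\tau$. This alone is not enough, because the reset fires when $|S_i| > r\|\vx(i-1)\|_1/\tau$ for the \emph{current} $\vx$. Combining the two bounds gives $\|\bar\vx\|_1 - \|\vx(i-1)\|_1 \ge \tau|S_i|(1-1/r)$. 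This is a gap between the running maximum and the current value. Downward movement inside $\Span(S_{i-1})$ during the epoch is at least this gap, but only if the running maximum is attained at the start.

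To handle that, I would instead use the sharper quantity $\Inc_{\text{epoch}}$. Write $\vx^{\mathrm{s}}$ for the value at the start of the epoch. Then $\|\bar\vx\|_1 \le \|\vx^{\mathrm{s}}(i-1)\|_1 + \Inc_{\text{epoch}}$ and $\|\vx(i-1)\|_1 = \|\vx^{\mathrm{s}}(i-1)\|_1 + \Inc_{\text{epoch}} - \Dec_{\text{epoch}}$. Plugging these in, $\tau|S_i| \le \|\bar\vx\|_1$ and $\tau|S_i| > r\|\vx(i-1)\|_1$ give $\Dec_{\text{epoch}} \ge \|\bar\vx\|_1 - \|\vx(i-1)\|_1 > \tau|S_i|(1-1/r)$.

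This yields a factor $1/(\tau(1-1/r))$ rather than the stated $1/((\tau-b)(1-1/r))$. The $\tau-b$ presumably appears because the removed elements include the higher levels, or because, just after the previous reset, the elements re-added are only bounded through $\|\vx\| \le b|S_{i-1}|$. Either way, these losses fit the same template. Finally, the epochs of level $i$ are disjoint in time, so the $\Dec_{\text{epoch}}$ values sum to at most $\Dec(T)$, and summing over $i\in[\ell]$ gives the factor $\ell$.

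The main obstacle I anticipate is the early reset at \cref{line:early_reset_condition}. There the reset happens before any addition in the current step, so $|S_i|$ was set in earlier steps. I need the bound $|S_i| \le \|\bar\vx\|_1/\tau$ to hold across steps within the epoch, which is exactly why $\bar\vx$ is taken as a running maximum over the whole epoch rather than the current vector.
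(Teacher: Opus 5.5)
Your plan, carried out as written, proves only $\sum_t\sum_{i\in[\ell]}|S_i^{t-1}\setminus S_i^t| \le \frac{\ell\cdot\Dec(T)}{(\tau-rb)(1-1/r)}$, not the stated bound. Neither of your suggested fixes is an argument.

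The loss comes from your first step. Charging everything removed by a $\Reset(i)$ to $|S_i|/(1-rb/\tau)$ uses the ratio $rb/\tau$ of \cref{lem:set_size_bound}, and that ratio is dictated by the reset threshold. Combined with your (correct) primary-reset bound $|S_i| < \Dec_{\mathrm{epoch}}/(\tau(1-1/r))$, this gives the factor $\frac{1}{(\tau-rb)(1-1/r)}$. Since $r>1$, this is strictly larger than $\frac{1}{(\tau-b)(1-1/r)}$. The lemma is an exact inequality, so no constant absorbs the ratio $(\tau-b)/(\tau-rb)$. The weaker version would still suffice for \cref{prop:recourse_bound}, whose bound already carries $\tau-rb$, but it is not the statement.

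The missing idea is to account at level $i$ for \emph{every} emptying of $S_i$. This includes the non-primary ones caused by $\Reset(i')$ with $i'<i$, and you should bound these with \cref{lem:growth_bound} rather than with the reset threshold. Between two consecutive emptyings of $S_i$, the set $S_{i-1}$ is never emptied. So if $\hat S_i$ denotes $S_i$ just before a non-primary emptying, then $|\hat S_i| \le \|\bar\vx(i-1)\|_1/\tau \le (b|\hat S_{i-1}| + \Dec_{\mathrm{int}})/\tau$. This has ratio $b/\tau$ and no factor $r$. For primary emptyings, your own bound $|\hat S_i| < \Dec_{\mathrm{int}}/(\tau(1-1/r))$ applies.

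Now sum over all emptyings of $S_i$; the intervals between them are disjoint, and each non-primary emptying of $S_i$ also empties $S_{i-1}$. This gives $E_i \le \frac{b}{\tau}E_{i-1} + \frac{\Dec(T)}{\tau(1-1/r)}$ for the total number $E_i$ of removals from $S_i$. With $E_0=0$ this unrolls to $E_i \le \frac{\Dec(T)}{(\tau-b)(1-1/r)}$, and summing over $i\in[\ell]$ gives the lemma.

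Separately, $S_{i-1}$ is \emph{not} fixed during an epoch of level $i$. Resets at levels $\le i-1$ are excluded, but \cref{line:add} at level $i-1$ keeps enlarging $S_{i-1}$, so $\vx(i-1)$ gains new coordinates. As a result, your identity $\|\vx(i-1)\|_1 = \|\vx^{\mathrm{s}}(i-1)\|_1 + \Inc_{\mathrm{epoch}} - \Dec_{\mathrm{epoch}}$ fails, and so does the accompanying bound on $\|\bar\vx\|_1$.

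The inequality you actually need, $\|\bar\vx\|_1 - \|\vx(i-1)\|_1 \le \Dec_{\mathrm{epoch}}$, still holds by a coordinatewise argument. Because $\Span(S_{i-1})$ only grows during the epoch, a coordinate that lies in it when it attains its maximum still lies in it at the end. Its drop from that maximum to its final value is then at most its total decrease, no matter when the maximum is attained. So the $\Inc_{\mathrm{epoch}}$ detour, and your worry that the running maximum must be attained at the start, are unnecessary.
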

\begin{proof}
Fix $i \in [\ell]$. In the following, we will refer to every time in which $\Reset(i')$ is invoked for some $i' \leq i$ as an \emph{emptying} of $S_i$. Furthermore, if $i' = i$ we call this a \emph{primary emptying} of $S_i$. Let $k$ be the number of times that $S_i$ is emptied, and for every integer $j \in [k]$, denote by $s_{j}$ the time of the $j$th emptying of $S_i$. It is also useful to denote by $s_{0}$ the time at which the algorithm begins.

For every $j \in [k]$, let $\hat{S}_{i - 1, j}$ and $\hat{S}_{i, j}$ denote the sets $S_{i - 1}$ and $S_i$, respectively, immediately before time $s_{j}$.
Observe that between reset times $s_{j-1}$ and $s_j$, no elements are removed from $S_i$.
Let $\bar{\vx}^{(i - 1, j)}$ be the coordinate-wise maximum of the values taken by the vector $\vx(i - 1)$ between times $s_{j - 1}$ and $s_{j}$. 
Every element added to $S_i$ in the time range we consider was added on \Cref{line:add} and so satisfies the \Cref{line:add_condition} condition 
\[
    \tau \leq \probarg{e \in \Span(\Rset \cup S_{i})}{\Rset \sim \Rdist(\vx(i-1))} 
    \leq \probarg{e \in \Span(\Rset \cup S_{i})}{\Rset \sim \Rdist(\bar \vx^{(i-1,j)})} 
     .
\]
Thus, the construction of $\hat S_{i,j}$ satisfies \cref{lem:growth_bound} for $\vy = \bar{\vx}^{(i-1,j)}$, and so $\abs{\hat{S}_{i, j}} \leq \norm{\bar{\vx}^{(i - 1, j)}}_1/\tau$.

By the definition of the {\Reset} procedure in \cref{alg:matroid-DCRS}, every emptying of $S_{i - 1}$ is accompanied by an emptying of $S_i$ at the same time. Thus, $S_{i - 1}$ is not emptied between times $s_{j - 1}$ and $s_j$, which implies that between these times elements are only added to $S_{i - 1}$. Now denote by $\hat{\vx}^{(i - 1, j)}$ the vector $\vx(i-1)$ immediately before time $s_j$. Since between times $s_{j - 1}$ and $s_j$ elements are only added to $S_{i - 1}$, any coordinate of $\hat{\vx}^{(i - 1, j)}$ that is lower than the corresponding coordinate of $\bar{\vx}^{(i - 1, j)}$ implies that the difference of value must have been due to some decrease in $\vx$ between times $s_{j - 1}$ and $s_{j}$. Thus, if we extend \eqref{eq:inc_dec_def} to let $\Dec(s_{j - 1}, s_{j})$ be the total size of the decreases between these times, i.e.
\[
    \Dec(s_{j - 1}, s_{j}) := \sum_{\substack{t:\text{ processing of } \vx^t \\ \text{starts between times}\\\text{$s_{j - 1}$ and $s_j$}}} \ \sum_{e \in \cN} \ \max(0, x^{t - 1}_e - x^{t}_e)
    ,
\]
then we get
\begin{equation} \label{eq:non-primary_emptying}
	|\hat{S}_{i, j}|
	\leq
	\frac{\norm{\bar{\vx}^{(i - 1, j)}}_1}{\tau}
	\leq
	\frac{\|\hat{\vx}^{(i - 1, j)}\|_1 + \Dec(s_{j - 1}, s_{j})}{\tau}
	\leq
	\frac{b \cdot |\hat{S}_{i - 1, j}| + \Dec(s_{j - 1}, s_{j})}{\tau}
	,
\end{equation}
where the last inequality holds since $\hat{\vx}^{(i - 1, j)}=\vx(i-1)$ at time $s_{j}$ and $\vx \in b \cdot P_\cM$ at all times.
If the emptying of $S_i$ at time $s_j$ is primary, then $\Reset(i)$ is invoked at this time, which by \Cref{line:early_reset_condition,line:reset_condition} implies $\abs{\hat{S}_{i, j}} > r \cdot \|\hat{\vx}^{(i - 1, j)}\|_1 / \tau$ and hence
\[
	|\hat{S}_{i, j}|
	\leq
	\frac{\|\hat{\vx}^{(i - 1, j)}\|_1 + \Dec(s_{j - 1}, s_{j})}{\tau}
	<
	\frac{\tau \cdot |\hat{S}_{i, j}| / r + \Dec(s_{j - 1}, s_{j})}{\tau}
	=
	\frac{|\hat{S}_{i, j}|}{r} + \frac{\Dec(s_{j - 1}, s_{j})}{\tau}
	,
\]
which by rearrangement gives
\begin{equation} \label{eq:primary_emptying}
	\abs{\hat{S}_{i, j}}
	<
	\frac{\Dec(s_{j - 1}, s_{j})}{\tau(1 - 1/r)}
	.
\end{equation}
Adding Inequality~\eqref{eq:non-primary_emptying} for every non-primary emptying of $S_i$ and Inequality~\eqref{eq:primary_emptying} for every primary emptying of $S_i$, we get
\begin{align*}
	\sum_{j = 1}^{k} |\hat{S}_{i, j}|
	\leq{} &
	\sum_{j = 1}^k \characteristic\Big[\begin{array}{ll}\text{the $j$-th emptying of}\\\text{$S_i$ was non-primary}\end{array}\Big] \cdot \frac{b \cdot |\hat{S}_{i - 1, j}|}{\tau} +  \sum_{j = 1}^{k} \frac{\Dec(s_{j - 1}, s_{j})}{\tau(1 - 1/r)}\\
	\leq{} &
	\frac{b}{\tau} \cdot \sum_{j = 1}^k \characteristic\Big[\begin{array}{ll}\text{the $j$-th emptying of}\\\text{$S_i$ was non-primary}\end{array}\Big] \cdot |\hat{S}_{i - 1, j}| + \frac{\Dec(T)}{\tau(1 - 1/r)}
	,
\end{align*}
where the second inequality holds since every decrease in $\vx$ is counted by $\Dec(s_{j - 1}, s_{j})$ for at most one value $j \in [k]$.

Denote now by $E_i$ the total number of element removals from $S_i$. Then, the leftmost side of the last inequality is $E_i$, and the first term on the rightmost side of this equality is $(b/\tau) \cdot E_{i - 1}$ because the fact that an emptying of $S_i$ at time $s_j$ is non-primary implies that $S_{i - 1}$ was also emptied at this time. Hence, the last inequality can be rewritten as
\[
	E_i \leq \frac{b}{\tau} \cdot E_{i - 1} + \frac{\Dec(T)}{\tau(1 - 1/r)}
	.
\]
Since this inequality holds for every $i \in [\ell]$, it implies an upper bound on $E_i$. Iterating yields
\begin{align*}
	E_i
	\leq{} &
	\sum_{i' = 1}^{i} \Big(\frac{b}{\tau}\Big)^{i' - 1} \cdot \frac{\Dec(T)}{\tau(1 - 1/r)} + \Big(\frac{b}{\tau}\Big)^{i} \cdot E_0\\
	\leq{} &
	\sum_{i' = 0}^{\infty} \Big(\frac{b}{\tau}\Big)^{i'} \cdot \frac{\Dec(T)}{\tau(1 - 1/r)}
	=
	\frac{1}{1 - b/\tau} \cdot \frac{\Dec(T)}{\tau(1 - 1/r)}
	=
	\frac{\Dec(T)}{(\tau - b)(1 - 1/r)}
	,
\end{align*}
where the second inequality uses the fact that $E_0 = 0$ as no elements are ever removed from $S_0$.
The lemma now follows by observing that the number of element removals from all the sets $S_1, S_2, \dotsc, S_\ell$ together is $\sum_{i = 1}^\ell E_i$.
\end{proof}

\begin{corollary} 
\label{cor:S_i_additions}
    The total number of additions of elements to the sets $S_1, S_2, \dotsc, S_\ell$ satisfies
    \[
        \sum_{t \in [T]} \sum_{i \in [\ell]} \abs{S^t_i \setminus S^{t-1}_i} 
        \leq \frac{\Inc(T)}{\tau - rb} + \frac{\ell \cdot \Dec(T)}{(\tau - b)(1 - 1/r)}
        .
    \]
\end{corollary}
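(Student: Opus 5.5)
The plan is a conservation-of-mass argument. Every element that is ever added to one of the sets $S_1, \dotsc, S_\ell$ either is still present when the algorithm terminates, or was removed at some point. An element may enter and leave the same level several times, so we count each addition as a separate event. Each addition event to $S_i$ is matched either with a later removal event from $S_i$ or with the element's final membership in $S_i$. Under this matching, the additions out of $S_i$ at one event can be charged to the next removal of that copy, and the surviving copies are counted once. This gives
\[
	\sum_{t \in [T]} \sum_{i \in [\ell]} \abs{S^t_i \setminus S^{t-1}_i}
	\leq
	\sum_{i \in [\ell]} \abs{S^T_i} + \sum_{t \in [T]} \sum_{i \in [\ell]} \abs{S^{t-1}_i \setminus S^t_i},
\]
because all sets start empty (except $S_0$, which is excluded from the sum).

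We then bound the two terms on the right. The first is at most $\frac{\Inc(T)}{\tau - rb}$ by \cref{lem:remaining}. The second is at most $\frac{\ell \cdot \Dec(T)}{(\tau - b)(1 - 1/r)}$ by \cref{lem:removed}. Summing the two bounds gives the corollary.

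The only subtlety is the bookkeeping in the displayed inequality. The differences $S^t_i \setminus S^{t-1}_i$ are measured between consecutive time steps. Within the processing of a single $\vx^t$, an element might be removed by a \Reset and then re-added. In that case the net change across the step undercounts both events, but it does so symmetrically.

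To handle this, we apply the telescoping identity $\abs{S^t_i} - \abs{S^{t-1}_i} = \abs{S^t_i \setminus S^{t-1}_i} - \abs{S^{t-1}_i \setminus S^t_i}$ for each $i$ and $t$. Summing over $t$ gives
\[
	\sum_{t} \abs{S^t_i \setminus S^{t-1}_i} = \abs{S^T_i} + \sum_t \abs{S^{t-1}_i \setminus S^t_i},
\]
since $S^0_i = \emptyset$. Thus the displayed inequality in fact holds with equality at the level of per-step set differences.

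We also need $\abs{S^T_i}$ to be zero for $i > \ell$. This holds by \cref{cor:non-empty_count}, which bounds the number of nonempty levels by $\ell$. Summing the identity over $i \in [\ell]$ and substituting the two lemmas then finishes the proof.
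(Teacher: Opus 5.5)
Your proof is correct and is essentially the paper's argument: additions are bounded by the elements remaining at termination (\cref{lem:remaining}) plus the removals (\cref{lem:removed}). Your per-step telescoping identity matches the bookkeeping to the per-step differences in the statement a bit more cleanly than the paper's phrasing. The remark about levels $i > \ell$ is unnecessary, since the sum only ranges over $i \in [\ell]$.
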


\begin{proof}
Every element added to any of the sets $S_1, S_2, \dotsc, S_\ell$ must either remain in this set when the algorithm terminates or correspond to an element removal at some step along the way.
By \cref{lem:remaining}, only $\frac{\Inc(T)}{\tau/r - b}$ elements remain in the sets $S_1, S_2, \dotsc, S_\ell$ upon termination; 
and by \cref{lem:removed}, the total number of removals from the sets $S_1, S_2, \dotsc, S_\ell$ is at most $\frac{\ell \cdot \Dec(T)}{(\tau - b)(1 - 1/r)}$.
\end{proof}

So far, we have bounded the number of changes in the sets $S_1, S_2, \dotsc, S_\ell$. However, the recourse is the change in $\Iset = \cup_{i = 0}^{\ell} \Iset_i$. Therefore, we now need to bound the number of changes in the sets $\Iset_0, \Iset_1, \dotsc, \Iset_{\ell}$.

\begin{lemma} 
\label{lem:T_removals}
    The total number of removals from the output set is bounded by
    \[
        \sum_t \abs{\Iset^{t-1} \setminus \Iset^{t}} 
        \leq \sum_t \abs{\Rset^{t-1} \setminus \Rset^{t}} + \frac{2\cdot \Inc(T)}{\tau - rb} + \frac{3\ell \cdot \Dec(T)}{(\tau - b)(1-1/r)}
        .
    \]
\end{lemma}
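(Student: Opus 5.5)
We classify every removal of an element from some $\Iset_i$ during the processing of $\vx^t$ by the line of \cref{alg:matroid-DCRS} responsible for it, and charge each class to a quantity already bounded. Since an element leaving the output set $\Iset = \bigcup_i \Iset_i$ must be removed from the unique $\Iset_i$ containing it (the $\Iset_i$ are disjoint by \cref{lem:properties}\eqref{item:I_inclusion}), $\sum_t \abs{\Iset^{t-1}\setminus \Iset^t}$ is at most the total number of removal events. There are four sources.
\begin{itemize}
\item[(i)] \Cref{line:non-R_remove} removes elements that are no longer in $\Rset$.
\item[(ii)] \Cref{line:add} removes an element $e$ promoted into $S_i$ from $\Iset_{i-1}$.
\item[(iii)] \Cref{line:T_remove} removes an element $e'$ from the cycle created by such a promotion.
\item[(iv)] \Cref{line:reset_removal_I} empties the $\Iset_j$ during a \Reset.
\end{itemize}

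For (i), an element removed at time $t$ because it is not in $\Rset^t$ must have been in $\Iset^{t-1}\subseteq \Rset^{t-1}$ (\cref{cor:feasible}). Hence these removals are at most $\abs{\Rset^{t-1}\setminus\Rset^t}$ per step, giving the first term. For (ii) and (iii), each addition of an element to some $S_i$ causes at most one removal of each kind. By \cref{cor:S_i_additions}, (ii) and (iii) together contribute at most $2\cdot\frac{\Inc(T)}{\tau-rb} + \frac{2\ell\cdot\Dec(T)}{(\tau-b)(1-1/r)}$. This already accounts for the $\Inc$ term and two of the three copies of the $\Dec$ term.

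The remaining task, and the main obstacle, is (iv): showing that the number of elements discarded from the $\Iset_j$ by resets is at most $\frac{\ell\cdot\Dec(T)}{(\tau-b)(1-1/r)}$, i.e.\ at most the number of element removals from the $S_j$'s, bounded in \cref{lem:removed}. The plan is to compare $\abs{\Iset_j}$ with $\abs{S_j}$ at the moment of a reset. A \Reset$(i)$ empties $S_j$ and $\Iset_j$ for all $j \geq i$. By \cref{lem:properties}, $\Iset_j\cup S_{j+1}$ is independent and contained in $\Span(S_j)$, so $\abs{\Iset_j}\leq \rank(\Span(S_j)) - \abs{S_{j+1}} \le \abs{S_j}-\abs{S_{j+1}}\le \abs{S_j}$. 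Thus the elements removed from $\Iset_j$ in this reset number at most $\abs{S_j}$, which are exactly the elements simultaneously removed from $S_j$. The one subtlety is $j = i-1$: \Reset$(i)$ does not touch $\Iset_{i-1}$, and by \cref{lem:properties}\eqref{item:I_inclusion} no $\Iset_{i-1}$ lies in levels $\geq i$. So every reset removal from some $\Iset_j$ with $j\geq i\geq 1$ is matched injectively to a removal from $S_j$ at that same reset. Summing over resets and applying \cref{lem:removed} gives the third copy of $\frac{\ell\,\Dec(T)}{(\tau-b)(1-1/r)}$.

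If the independence inequality fails at some intermediate moment (e.g.\ in the middle of the inner loop on \cref{line:add_condition}), we will instead check that the invariants of \cref{lem:properties} hold right before each \Reset call. They do, because \cref{line:T_remove} restores independence before \cref{line:reset_condition} is tested. Adding (i)–(iv) yields the stated bound.
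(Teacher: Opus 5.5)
Your proposal is correct and follows the paper's proof essentially step for step. The paper classifies removals the same way and charges each class the same way:
\begin{itemize}
\item your (ii)+(iii) is its type (a), bounded by twice \cref{cor:S_i_additions};
\item your (i) is its type (b), bounded by $\sum_t |\Rset^{t-1}\setminus\Rset^t|$;
\item your (iv) is its type (c), bounded through $|\Iset_j|\le|S_j|$ by \cref{lem:removed}.
\end{itemize}
Your sharper bound $|\Iset_j|\le|S_j|-|S_{j+1}|$ and your check that the invariants of \cref{lem:properties} hold at each \Reset call are valid but not needed for the stated inequality.
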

\begin{proof}
    We prove the stronger fact that $\sum_t \sum_i \abs{\Iset^{t-1}_i \setminus \Iset^{t}_i}$ obeys this bound.
    Consider how \cref{alg:matroid-DCRS} removes elements from the sets $\Iset_0, \Iset_1, \dotsc, \Iset_{\ell}$.
    For fixed level $i$, elements are removed from $I_i$ either 
    (a) due to additions to $S_{i-1}$ (\Cref{line:add,line:T_remove}), or
    (b) due to removals from $\Rset$ (\Cref{line:non-R_remove}), or 
    (c) as part of a $\Reset$ operation (\Cref{line:early_reset_condition,line:reset_condition}).
    This is an exhaustive (and exclusive) list of all removals from $\Iset_i$ across all moments in the iteration of \Cref{alg:matroid-DCRS}, and so denoting these removal counts (over all $i \in [\ell]$) by $C^{(a)}, C^{(b)}, C^{(c)}$, we have
    \[
        \sum_t \abs{\Iset^{t-1} \setminus \Iset^{t}}  \leq \sum_t \sum_i \abs{\Iset^{t-1}_i \setminus \Iset^{t}_i}
        \leq C^{(a)} + C^{(b)} + C^{(c)}
        .
    \]

    First, consider removals of type (a).
    If \cref{line:add} or \cref{line:T_remove} of \cref{alg:matroid-DCRS} remove a single element $e'$ from one of the sets $\Iset_0, \Iset_1, \dotsc, \Iset_{\ell}$, then \cref{line:add} added an element $e$ to one of the sets $S_1, S_2, \dotsc, S_{\ell}$. Thus, the number of elements removed by these lines from the sets $\Iset_0, \Iset_1, \dotsc, \Iset_{\ell}$ is upper bounded by twice the total number of elements added to the sets $S_1, S_2, \dotsc, S_{\ell}$, which is upper bounded in \cref{cor:S_i_additions} by
    \begin{equation}
    \label{eq:Ca-bound}
        C^{(a)} \leq 2 \sum_{t} \sum_{i} \abs{S^t_i \setminus S^{t-1}_i} 
        \leq 2 \left[\frac{\Inc(T)}{\tau - rb} + \frac{\ell \cdot \Dec(T)}{(\tau - b)(1 - 1/r)}\right]
        .
    \end{equation}

    Next consider removals of type (b).
    \cref{line:non-R_remove} removes elements $e \in \Iset_i$ that no longer belong to $\Rset$. Since \cref{alg:matroid-DCRS} adds elements to $\Iset_i$ only on \cref{line:add_T}, which appears after \cref{line:non-R_remove}, every element $e$ removed from $\Iset_i$ by \cref{line:non-R_remove} must have been in $\Iset_i \setminus R$ already at the beginning of the processing of $\vx^t$. However, \Cref{line:non-R_remove} and the condition on \Cref{line:add_condition} together guarantee that $\Iset_i \setminus \Rset$ was empty when the processing of $\vx^{t-1}$ terminated. Thus, every $e \in \Iset_i \setminus R$ removed by \cref{line:non-R_remove} must have been removed from $\Rset$ since the previous execution of \cref{alg:matroid-DCRS}.
    Hence
    \begin{equation}
    \label{eq:Cb-bound}
        C^{(b)} \leq \sum_t \abs{\Rset^{t - 1} \setminus \Rset^{t}}.
    \end{equation}

    Finally consider removals of type (c), due to resets.
    Observe that 
    whenever $\Reset$ removes all of $\Iset_i$ , it also removes all of $S_{i}$ (see \Cref{line:reset_removal_S,line:reset_removal_I}). By Parts~\eqref{item:I_independent} and~\eqref{item:I_inclusion} of \cref{lem:properties}, $S_{i}, \Iset_i \in \cI$ and $\Iset_i \subseteq \Span(S_{i})$, which together imply $|\Iset_i|
		=
		\rank(\Iset_i)
		\leq
		\rank(\Span(S_{i}))
		=
		\rank(S_{i})
		=
		|S_{i}|$.
    Thus, the number of elements removed by $\Reset$ from the sets $\Iset_i$ is upper bounded by the number of elements removed by $\Reset$ from the sets $S_i$. By \Cref{lem:removed}, we get
    \begin{equation}
    \label{eq:Cc-bound}
        C^{(c)}
        =
        \bigg\{\begin{array}{c}\text{\# of elements removed} \\ \text{from the sets $\Iset_i$ by \Reset}\end{array}\bigg\}
        \leq
        \bigg\{\begin{array}{c}\text{\# of elements removed} \\ \text{from the sets $S_i$ by \Reset}\end{array}\bigg\}
        \leq \frac{\ell \cdot \Dec(T)}{(\tau - b)(1 - 1/r)}
        .
    \end{equation}
    Combining \Cref{eq:Ca-bound,eq:Cb-bound,eq:Cc-bound} proves the claim.
\end{proof}

We are now ready to bound the recourse of our matroid DCRS.
\begin{proposition}\label{prop:recourse_bound}
    The total expected recourse of \Cref{alg:matroid-DCRS} is at most 
    \[ 
        O\left(\frac{r\ell}{(\tau - rb)(r - 1)}\right) \cdot \sum_{t = 1}^T \expect*{\abs{\Rset^t \symdif \Rset^{t-1}}}
        .
    \]
    In the incremental model, i.e. when $\Dec(T) = 0$, the expected recourse improves to 
    \[
        O\left(\frac{1}{\tau - rb}\right) \cdot \sum_{t = 1}^T \expect*{\abs{\Rset^t \symdif \Rset^{t-1}}} 
        .
    \]
\end{proposition}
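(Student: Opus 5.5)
The plan is to convert the removal bound of \cref{lem:T_removals} into a bound on total recourse, and then to relate $\Inc(T)$ and $\Dec(T)$ to the movement of the samples $\Rset^t$.

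\textbf{Step 1: reduce recourse to removals.} Since $\Iset^0 = \emptyset$, every element added to the output is either still present at the end or is removed later. Hence
\[
\sum_t \abs{\Iset^t \symdif \Iset^{t-1}} \le 2\sum_t \abs{\Iset^{t-1}\setminus \Iset^t} + \abs{\Iset^T}.
\]
The leftover term satisfies $\abs{\Iset^T} \le \abs{\Rset^T} \le \sum_t \abs{\Rset^t \setminus \Rset^{t-1}}$, because $\Rset^0=\emptyset$ and every element of $\Rset^T$ must have been inserted at some step. So it suffices to bound the removals, and \cref{lem:T_removals} does this up to the $\Inc$/$\Dec$ terms.

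\textbf{Step 2: turn $\Inc$ and $\Dec$ into $\Rset$-movement.} By \cref{obs:TCOS-1-impossibility} applied per step,
\[
\Inc(T)+\Dec(T) = \sum_t \norm{\vx^t-\vx^{t-1}}_1 \le \sum_t \expect*{\abs{\Rset^t\symdif\Rset^{t-1}}}.
\]
This holds for any TCOS, since each $\Rset^t\sim\Rdist(\vx^t)$. The bounds in \cref{lem:T_removals} and \cref{lem:removed} are deterministic given the $\vx$-sequence. The $\vx$-sequence is fixed by the oblivious adversary, so the chain evolution does not depend on $\Rset$, and taking expectations is immediate.

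\textbf{Step 3: collect coefficients.} Since $b<1$ we have $\tau-rb\le \tau-b$. Also $1/(1-1/r) = r/(r-1)$. Hence
\[
\frac{3\ell}{(\tau-b)(1-1/r)} \le \frac{3r\ell}{(\tau-rb)(r-1)}
\quad\text{and}\quad
\frac{2}{\tau-rb} \le \frac{2r\ell}{(\tau-rb)(r-1)},
\]
using $r/(r-1)\ge 1$ and $\ell\ge 1$. The $\Rset$-removal terms carry coefficient $O(1)$, which is dominated as well. Summing gives the first bound.

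\textbf{Incremental case.} When $\Dec(T)=0$, the $\ell$-dependent terms vanish and only $O(1/(\tau-rb))\cdot\Inc(T)$ plus $O(1)$ times the $\Rset$-movement remains. This gives the second bound, noting that $1/(\tau-rb)\ge 1$ because $\tau\le 1$.

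The step needing the most care is Step 1's accounting of additions versus removals. Additions on \cref{line:add_T} have no direct charging of their own, so the argument relies on the fact that each one is matched by a later removal or by final membership. I would verify this holds even when an element is removed and re-added within a single time step, since counting symmetric differences per step only makes the bound smaller.
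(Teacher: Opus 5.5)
Your proposal is correct and follows the paper's proof: both bound the recourse by twice the number of removals plus the size of the final output, invoke \cref{lem:T_removals}, and convert $\Inc(T)+\Dec(T)$ into expected $\Rset$-movement via \cref{obs:TCOS-1-impossibility}. The one cosmetic difference is that you bound the leftover $\abs{\Iset^T}$ pathwise by $\sum_t \abs{\Rset^t\setminus\Rset^{t-1}}$, whereas the paper uses $\expect{\abs{\Rset^T}} = \norm{\vx^T}_1 \le \Inc(T)$; also, $\tau - rb \le \tau - b$ holds because $r > 1$ and $b \ge 0$, not because $b < 1$.
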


\begin{proof}
    The number of additions to $\Iset$ is at most the number of deletions plus however many elements remain when \cref{alg:matroid-DCRS} terminates. Since $\Iset = \cup_{i = 0}^{\ell} \Iset_i \subseteq \Rset$ and $\expect{\abs{R^T}} = \norm{\vx^T}_1$, we can bound the expected number of elements that remain by $\Inc(T)$. Thus, the total expected number of additions and deletions is at most $\Inc(T)$ plus two times the expected number of deletions. 
    Deploying the bound derived in \cref{lem:T_removals} and taking the expectation,
    \begin{align*}
    	\sum_t \expect*{\abs{\Iset^t \symdif \Iset^{t-1}}}
        &\leq 2 \sum_t \expect*{\abs{\Rset^t \setminus \Rset^{t-1}}} + \left(\frac{4}{\tau - rb} +1 \right) \cdot \Inc(T) + \frac{6\ell}{(\tau - b)(1-1/r)}\cdot \Dec(T) \\
        &= 2 \sum_t \expect*{\abs{\Rset^t \setminus \Rset^{t-1}}} + O\left( \frac{r\ell}{(\tau - rb)(r - 1)}\right) \cdot (\Inc(T) + \Dec(T)) , \\
        &=O\left( \frac{r\ell}{(\tau - rb)(r - 1)}\right) \cdot \sum_t \expect*{\abs{\Rset^t \symdif \Rset^{t-1}}}
        ,
    \end{align*}
    where the last line follows from the earlier observation that $\Inc(T) + \Dec(T) = \sum_t \norm{\vx^t - \vx^{t-1}}_1$, together with \Cref{obs:TCOS-1-impossibility}. When $\Dec(T) = 0$, we get the improved bound
		    \begin{align*}
    	\sum_t \expect*{\abs{\Iset^t \symdif \Iset^{t-1}}}
        &\leq 2 \sum_t \expect*{\abs{\Rset^t \setminus \Rset^{t-1}}} + \left(\frac{4}{\tau - rb} +1 \right) \cdot \Inc(T) \\
        &= 2 \sum_t \expect*{\abs{\Rset^t \setminus \Rset^{t-1}}} + O\left( \frac{1}{\tau - rb}\right) \cdot \Inc(T) , \\
        &=O\left( \frac{1}{\tau - rb}\right) \cdot \sum_t \expect*{\abs{\Rset^t \symdif \Rset^{t-1}}}
        .
				\qedhere
    \end{align*}
\end{proof}

\subsection{Monotonicity}
\label{sec:matroid-dcrs-monotone}

Recall from \Cref{sec:prelim_dcrs} the definition of a monotone CRS, and that a DCRS is monotone if it is a monotone CRS at every step $t \in [T]$ (conditioned on the history $\Rset^{\leq t-1}$).
This property will be useful for combining DCRS for various constraint families, as we will see in \Cref{sec:combine}.

We establish that our matroid DCRS is monotone.

\begin{proposition}
\label{prop:matroid-dcrs-monotone}
    \Cref{alg:matroid-DCRS} is monotone.
\end{proposition}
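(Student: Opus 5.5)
The plan is to prove something stronger than monotonicity in probability: a pointwise, deterministic monotonicity statement. Fix the time step $t$, the history $\vx^{\le t}$ and $\Rset^{\le t-1}$, and any internal randomness of the algorithm. Then show that for every $e \in \Rset \subseteq \Rset'$, if $e$ is output when $\Rset^t = \Rset'$, it is also output when $\Rset^t = \Rset$. Taking probabilities over everything else then gives $\prob{e \in \pi_{\vx^{\le t}}(\Rset^{\le t-1},\Rset)} \ge \prob{e \in \pi_{\vx^{\le t}}(\Rset^{\le t-1},\Rset')}$.

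First I would separate what depends on $\Rset^t$ from what does not. The loop starting on \cref{line:main_S_loop} reads only $\vx$ and the current sets $\{S_i\},\{\Iset_i\}$, never $\Rset^t$. This is the same observation already used in the proof of \cref{prop:matroid-DCRS-balance}. If needed, I would fix a deterministic rule (say via $\sigma$) for the choice of $e'$ on \cref{line:T_remove} so that this is literally true. Hence the chain $S_0,S_1,\dots$ at step $t$, and the sets $J_i$ (the value of $\Iset_i$ just before \cref{line:non-R_remove} in iteration $i+1$ of the loop on \cref{line:main_T_loop}), are the same under $\Rset$ and $\Rset'$. By \cref{lem:properties}, $J_i \subseteq \Span(S_i)\setminus\Span(S_{i+1})$ and $J_i \cup S_{i+1} \in \cI$. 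Moreover, the iterations of the loop on \cref{line:main_T_loop} for different levels do not interact: iteration $i+1$ touches only $\Iset_i$ and reads only $S_i, S_{i+1}, \Rset$. So it suffices to analyze the level $i = \lvl(e)$ containing $e$, since only $\Iset_{\lvl(e)}$ can contain $e$.

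At level $i$, the final $\Iset_i$ is obtained by starting from $J_i \cap \Rset$ and then running greedy in $\sigma$ order. Greedy adds each element of $\Rset \cap (\Span(S_i)\setminus \Span(S_{i+1}))$ not spanned by $S_{i+1}$ together with the current $\Iset_i$. This is exactly the matroid greedy algorithm in the matroid $\cM / (S_{i+1} \cup (J_i\cap \Rset))$. I would then invoke (and briefly reprove) the standard greedy characterization: an element $e \notin J_i$ ends up in $\Iset_i$ iff
\[
e \notin \Span\bigl(S_{i+1} \cup (J_i \cap \Rset) \cup \{f \in \Rset \cap \Span(S_i) : f <_\sigma e\}\bigr).
\]
The proof is the usual one: when $e$ is examined, the span of the current $\Iset_i \cup S_{i+1}$ equals the span of this set. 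A point needing care is that the inner while loop on \cref{line:T_add_condition} repeatedly picks the $\sigma$-first eligible element. I would check that this coincides with a single $\sigma$-ordered pass, because the eligible set only shrinks as $\Iset_i$ grows.

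The comparison is then immediate. If $e \in J_i$, then $e$ survives \cref{line:non-R_remove} under both $\Rset$ and $\Rset'$ since $e \in \Rset$, and it stays in $\Iset_i$. Otherwise the set inside the span above is monotone in $\Rset$, so its value under $\Rset$ is contained in its value under $\Rset'$. If $e$ is not spanned under $\Rset'$, it is not spanned under $\Rset$, and so $e$ is selected under $\Rset$ as well.

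I expect the main obstacle to be the bookkeeping in the first step rather than the matroid argument. It must be verified carefully that nothing produced by the chain-maintenance loop, including \Reset calls and the removals on \cref{line:add,line:T_remove}, depends on $\Rset^t$. It must also be verified that the tie-breaking choices are deterministic given the history, so that the coupling of the two runs with $\Rset$ and $\Rset'$ is valid.
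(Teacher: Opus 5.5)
Your proof is correct and follows essentially the same route as the paper. Both arguments show that the chain $\{S_i\}$ and the sets $\Iset_i$ entering the loop on \cref{line:main_T_loop} are independent of $\Rset^t$, treat that loop as a $\sigma$-ordered greedy pass, and compare spans under $\Rset$ and $\Rset'$. The only difference is packaging. The paper proves by induction along the pass that the span of $S_{i+1}$ plus the current $\Iset_i$ under $\Rset$ is contained in the corresponding span under $\Rset'$, whereas you give that span in closed form. Your version also handles the intersection with $\Rset$ on \cref{line:non-R_remove} explicitly, which the paper's ``$\Iset_i=\Iset_i'$ at the outset'' glosses over.
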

\begin{proof}
    We will consider the DCRS at an arbitrary step $t \in [T]$, and consider its behavior when the active set is $\Rset^t = \Rset$ vs. $\Rset^t = \Rset'$, where $\Rset \subseteq \Rset' \subseteq \cN$.
    Fix a common history $\vx^{\leq t}$ and $\Rset^{\leq t-1}$ and corresponding (deterministic) $\{S^{t-1}_i\}$ and $\{\Iset^{t-1}_i\}$, which are the level bases and the level outputs at the end of step $t-1$.
    If we denote our DCRS by $\pi$, then $\Iset = \cup_i \Iset_i \defeq \pi_{\vx^{\leq t}}(\Rset^{\leq t - 1}, \Rset)$ and $\Iset' = \cup_i \Iset_i' \defeq \pi_{\vx^{\leq t}}(\Rset^{\leq t - 1}, \Rset')$ are the outputs of our DCRS for round-$t$ active sets $\Rset$ and $\Rset'$, respectively.
    Since our $\pi_{\vx^{\leq t}}(\Rset^{\leq t - 1}, \cdot)$ is deterministic, it suffices to show that $\Iset'_i \cap \Rset \subseteq \Iset_i$; that is, that if $e \in \Rset$ is not included in $\Iset_i$, then $e$ is not included in $\Iset'_i$ either.

    Our first observation is that the level bases $\{S^{t}_i\}$ computed in step $t$ depend only on the previous bases $\{S^{t-1}_i\}$ and on $\vx^{t}$, but not on $\Rset^t$. 
    This is because the decision of whether to reset (\cref{line:early_reset_condition,line:reset_condition}) depends only on $S_i$, $S_{i-1}$, and $\vx^t$, and all element additions to $S_i$ happen on \Cref{line:add}, whose gate condition (\Cref{line:add_condition}) depends only on $S_i$, $S_{i-1}$, and $\vx^t$.
    Therefore the CRS $\pi_{\vx^{\leq t}}(\Rset^{\leq t - 1}, \cdot)$ operates with a well-defined collection of level bases $\{S_i^t\}$, independent of its input $\Rset^t$.

    Our next observation is that up to \Cref{line:main_T_loop}, the modifications that $\pi_{\vx^{\leq t}}(\Rset^{\leq t - 1}, \cdot)$ makes to the $\{\Iset_i\}$ are independent of $\Rset^t$. 
    In particular, the only modifications made to derive each $\Iset_i$ from the $\Iset^{t-1}_i$ up to that point are removals on \Cref{line:add,line:T_remove}, which depend (according to \Cref{line:T_remove_condition}) only on the $S_i$'s, which we have established evolve independently of $\Rset^t$.

    Therefore $\Iset_i$ and $\Iset_i'$ are the same at the beginning of the loop that starts on \Cref{line:main_T_loop}. This loop is then equivalent to making a pass over the elements according to the order $\sigma$, and greedily adding each element $e$ in this order to $\Iset_i$ (resp. $\Iset_i'$) if $e \in \Rset \cap (\Span(S_{i-1}) \setminus \Span(S_i \cup \Iset_i))$ (resp. $e \in \Rset' \cap (\Span(S_{i-1}) \setminus \Span(S_i \cup \Iset_i'))$).
    We claim that whenever an element $e$ is considered, it holds that $\Span(S_i \cup \Iset_i) \subseteq \Span(S_i \cup \Iset_i')$; and thus, $e\in \Rset$ is added to $\Iset_i'$ only when it is added to $\Iset_i$, which guarantees the desired inclusion $\Iset_i' \cap \Rset \subseteq \Iset_i$.

    So why is $\Span(S_i \cup \Iset_i) \subseteq \Span(S_i \cup \Iset_i')$ for all of the $\Iset_i$ and $\Iset_i'$ at the moments at which each $e$ is considered on the loop?
    This is clearly true initially, since $\Iset_i=\Iset_i'$ at the outset.
    For each $e$, consider all $f \in \Iset_i$ that precede $e$ in $\sigma$ order.
    Clearly, $f \in \Rset \cap \Span(S_{i-1}) \subseteq \Rset' \cap \Span(S_{i-1})$.
    Thus, if $f \not \in \Span(S_i \cup \Iset_i')$ when $f$ is considered, then $f$ is added to $\Iset_i'$, which guarantees that $f \in \Span(S_i \cup \Iset_i')$ after it is considered.
    Therefore, when $e$ is considered, it must hold that $\Iset_i \subseteq \Span(S_i \cup \Iset_i')$, and hence also $\Span(S_i \cup \Iset_i) \subseteq \Span(S_i \cup \Iset_i')$, which is what we wanted to prove.
\end{proof}

\subsection{Setting the Parameters}
\label{sec:matroid-conclusion}

We conclude by combining the above guarantees for \Cref{alg:matroid-DCRS} and choosing parameters to make a self-contained statement of our main result for matroids, restated here.

\matroiddcrsformal*

\begin{proof}
    \Cref{prop:matroid-dcrs-monotone} establishes that \Cref{alg:matroid-DCRS} is monotone, and \Cref{prop:matroid-DCRS-balance} establishes that the balance is $1 - \tau$, which is $1 - b - \eps$, when we set $\tau = b + \eps$ (we assume $\eps + b < 1$, as otherwise \cref{thm:mat_dcrs_formal} guarantees zero balance, which is trivial). To get a bound also on the recourse, we need to choose a value also for $r$, which we set to $1 + \frac{\eps}{1 + b}$. This choice of $r$ yields
\[
	\ell
	=
	\lceil \log_{\tau / (rb)} \rank(\cN)\rceil
	=
    	\lceil \log_{1 + \eps/((1 + b)(rb))} \rank(\cN)\rceil
	\leq
	\lceil \log_{1 + \eps/(2b)} \rank(\cN)\rceil
	=
	O(b\eps^{-1} \log \rank(\cN))
	,
\]
and thus, leads to a recourse of 
\begin{align*}
    O\left(\frac{r\ell}{(\tau - rb)(r - 1)}\right) &= O\left(\frac{(1 + \eps/(1 + b)) (b\eps^{-1} \log \rank(\cN))}{(b + \eps - (1 + \eps(1 + b)^{-1})b)(1 + \eps/(1 + b) - 1)}\right) \\
    &= O\left(\frac{(1 + \eps/(1 + b)) (b\eps^{-1} \log \rank(\cN))}{\eps^2 / (1 + b)^2}\right) = O\left(\eps^{-3} \log \rank(\cN)\right)
    .
\end{align*}
In the increase-only model, the recourse improves to 
\[
    O\left(\frac{1}{\tau - rb}\right) = O\left(\frac{1}{b + \eps - (1 + \eps/(1 + b))b}\right) = O\left(\frac{1 + b}{\eps}\right) = O\left(\frac{1}{\varepsilon}\right). \qedhere
\]
\end{proof}

%% file: partition.tex
\section{Partition Matroids}
\label{sec:partition}

In this section we show a simple DCRS for partition matroids with improved approximation factor of $(1-e^{-1})$ and $O(1)$ recourse bound, which matches the results of \cite{buchbinder2025chasingsubmodularobjectivessubmodular} (up to the exact constant in the recourse). Our technique has the advantage that it generalizes to ($p$-hypergraph) matchings, which are ($p$-)matchoids constructed from partition matroids.

A partition matroid $\cM$ is defined by a partition $(P_1, \ldots, P_h)$ of $\cN$, and a capacity $c_P$ for every part $P$ of the partition. A subset $I \subseteq \cN$ is defined to be independent if $|I \cap P| \leq c_P$ for every part $P$.

\begin{theorem}[Partition Matroid DCRS]
\label{thm:better-partition-dcrs}
For any $b \in [0, 1]$, there is a monotone $(b, \frac{1-e^{-b}}{b}, 4)$-DCRS for partition matroids with arbitrary capacities. In particular, for $b=1$, the balance is at least $1-1/e$.
\end{theorem}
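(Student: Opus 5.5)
The plan is to use a completely \emph{oblivious} rule that handles each part separately by random priorities. At the outset, the DCRS samples an independent priority $\theta_e \sim \mathrm{Unif}([0,1])$ for every $e \in \cN$. These are drawn independently of the TCOS randomness and of the whole input sequence, which is legitimate because the adversary is oblivious. At every step $t$ and for every part $P$, the output $\Iset^t \cap P$ is the set of the $\min(c_P, \abs{\Rset^t \cap P})$ elements of $\Rset^t \cap P$ with the smallest priorities. Feasibility is immediate, since $\abs{\Iset^t \cap P} \leq c_P$ for every part $P$. The rule ignores both $\vx^{\leq t}$ and the history, so the three remaining properties only need to be checked for a single fixed realization of $\theta$, or for a single fixed time step.

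For monotonicity, fix $\theta$ and sets $\Rset \subseteq \Rset'$, and let $e \in \Rset \cap P$. The rank of $e$ by priority within $\Rset' \cap P$ is at least its rank within $\Rset \cap P$. Hence if $e$ is among the $c_P$ smallest priorities in $\Rset'$, it is also among them in $\Rset$. This gives the required inclusion pointwise in $\theta$. Because it holds for every realization of $\theta$, it also holds conditioned on any history $\Rset^{\leq t-1}$, which is exactly what \Cref{sec:prelim_dcrs} requires.

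For balance, fix $t$ and write $\vx = \vx^t \in b \cdot P_\cM$, so that $\sum_{f \in P} x_f \leq b\, c_P$. The sample $\Rset^t \sim \Rdist(\vx)$ is independent of $\theta$. Conditioned on $e \in \Rset^t$ and on $\theta_e = u$, each other $f \in P$ is independently both active and of smaller priority with probability $x_f u$. For $c_P = 1$ this gives
\[
\prob{e \in \Iset^t \mid e \in \Rset^t} = \int_0^1 \prod_{f \in P - e} (1 - x_f u)\, du \geq \int_0^1 e^{-bu}\, du = \frac{1-e^{-b}}{b}.
\]
For general $c_P$ the event of interest is that fewer than $c_P$ of these independent Bernoulli$(x_f u)$ variables fire, and their total mean is at most $b\,c_P u$.

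I expect this general-capacity bound to be the main technical step, namely showing
\[
\int_0^1 \prob*{\mathrm{Bin}\text{-like sum with mean} \leq c b u \text{ is} < c}\, du \geq \frac{1-e^{-b}}{b}.
\]
I would first handle it by the standard convexity argument. Among independent Bernoulli sums with a given mean, the probability of being below $c$ is minimized in the Poisson limit. Then I would check that $u \mapsto \Pr[\mathrm{Pois}(cbu) < c]$ integrates to at least the $c = 1$ value, which is the classical fact that the uniform-matroid CRS balance $1 - e^{-c}c^c/c!$ is increasing in $c$, adapted to the scaling by $b$. If the cleanest route fails, a fallback is a direct Markov-style bound. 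Alternatively, one can reduce to the capacity-one case by splitting $P$ into $c_P$ random sub-parts, accepting a small loss. I would avoid the fallback if possible, since the stated constant is exact.

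For recourse, I would analyze one element change at a time. Changes $\Rset^{t-1} \to \Rset^t$ can be processed as a sequence of single-element insertions or deletions, since the output is a deterministic function of $\Rset$ given $\theta$. Inserting $e$ into $\Rset \cap P$ changes the top-$c_P$ set by adding at most $e$ and evicting at most one element. Deleting $e$ removes at most $e$ and promotes at most one element. So each unit of $\abs{\Rset^t \symdif \Rset^{t-1}}$ causes at most $2$ changes to $\Iset$. The triangle inequality then gives $\abs{\Iset^t \symdif \Iset^{t-1}} \leq 2\abs{\Rset^t \symdif \Rset^{t-1}}$ pointwise. Taking expectations yields recourse $2 \leq 4$, which proves the theorem.
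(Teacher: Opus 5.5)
\textbf{The gap is in the balance step, and no other proof technique will close it, because the oblivious top-$c_P$-by-priority rule does not achieve $(1-e^{-b})/b$.}

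Your displayed inequality $\int_0^1 \prod_{f\in P-e}(1-x_f u)\,du \ge \int_0^1 e^{-bu}\,du$ runs the wrong way. Since $1-a\le e^{-a}$, you only get $\prod_f(1-x_f u)\le e^{-u\sum_f x_f}$.

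A concrete counterexample: take a part $P=\{e,f\}$ with $c_P=1$, $x_e=\eta$ and $x_f=b-\eta$. Then
\[
\prob{e\in\Iset^t\mid e\in\Rset^t}=\int_0^1\bigl(1-(b-\eta)u\bigr)\,du=1-\tfrac{b-\eta}{2}.
\]
As $\eta\to 0$ this tends to $1-b/2$. That value is strictly below $(1-e^{-b})/b$ for every $b\in(0,1]$, because $e^{-b}<1-b+b^2/2$. For $b=1$ you get $1/2<1-1/e$.

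Your Poisson-extremality claim is backwards for the same reason. At fixed mean, the Poisson limit maximizes the probability of few successes (already $\prod_i(1-p_i)\le e^{-\sum_i p_i}$). Concentrating the mass on few elements is the worst case. Splitting into $c_P$ random sub-parts cannot rescue the scheme either, since the counterexample already has $c_P=1$.

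\textbf{The missing idea is to make the rule depend on $\vx^t$ through exponential-clock thinning, which is what the paper does.}

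The construction is as follows.
\begin{enumerate}
\item Each element gets a priority $y_e\sim\uni(0,1)$ and a threshold $z_e\sim\Exp(y_e/c_P)$, both drawn once at the outset.
\item Element $e$ is \emph{alive} at time $t$ if $e\in\Rset^t$ and $z_e\ge x^t_e$.
\item Each element is assigned to one of $c_P$ uniformly random capacity-one buckets.
\item In priority order, accept each alive element whose bucket is still empty.
\end{enumerate}
This gives $\prob{f\text{ alive and } y_f\le y}=c_P(1-e^{-yx_f/c_P})$. Hence $f$ blocks $e$ with probability exactly $1-e^{-yx_f/c_P}$, and
\[
\prob{e\in\Iset^t\mid y_e=y}=x_e\exp\Bigl(-y\sum_{f\in P}x_f/c_P\Bigr)\ge x_e e^{-by}.
\]
The non-blocking product is now an exact exponential rather than being dominated by one, and integrating over $y$ gives $(1-e^{-b})/b$.

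Your monotonicity argument carries over essentially unchanged, and so does the ``one change to the processed set causes at most one further change'' recourse argument. The difference is that the processed set is now the alive set, which can change when $\vx$ moves even if $\Rset$ does not. Bounding $\prob{z_e\in[x^{t-1}_e,x^t_e]}\le|x^t_e-x^{t-1}_e|/c_P$ and applying \Cref{obs:TCOS-1-impossibility} shows the alive set changes at most twice as often as $\Rset$. That factor is why the recourse constant is $4$ rather than the $2$ your oblivious rule would give.
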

\begin{proof}
    The algorithm is simple. For every $e \in \cN$, let $P$ be the part containing $e$. Then: 
    \begin{itemize}
        \item Assign $e$ a random \emph{priority} $y_e \sim \uni(0,1)$, and let $\sigma$ be the ordering of $\cN$ according to increasing priority. Note that $\sigma$ is uniformly random and fixed for all time steps $t \in [T]$. 
        \item Assign $e$ a random \emph{acceptance threshold} $z_e \sim \Exp(y_e / c_P)$ to $e$. We say $e$ is \emph{alive at time $t$} if both $e \in R^t$ and $z_e \geq x^t_e$. Observe that for every time $t \in [T]$, conditioned on $e \in R^t$, the probability that $e$ is alive is $\exp(-y_e x^t_e/c_P)$. Call $A^t$ the set of alive elements at time $t$.
        \item Assign $e$ a uniformly random \emph{bucket} among $B^P_1, B^P_2, \dotsc, B^P_{c_P}$. Let $B^P_e$ denote the bucket to which $e$ was assigned.
    \end{itemize}
    
     At every time $t$, process the elements in the order $\sigma$, and accept each element $e \in \cN$ into the solution $I^t$ if (a) $e$ is alive, and (b) no other elements from the same bucket have been accepted so far, i.e. $|I^t \cap B^P_e| = 0$ at this point.

     \begin{claim}
        \label{claim:alive_by_tau_prob}
         For a fixed time $t$, part $P$, element $e \in P$ and threshold $\tau \in [0, 1]$, the probability that both $e$ is alive at time $t$ and $y_e \leq \tau$ is $c_P \cdot (1-\exp(-\tau \cdot x_e / c_P))$.
     \end{claim}
     \begin{proof}
         Conditioned on $y_e = y$, element $e$ is alive at time $t$ with probability $x_e \cdot \exp(-y \cdot x_e / c_P)$ (because $e \in R^t$ and $z_e \geq x^t_e$ are independent events). Integrating, we get
         \[
            \prob{e \in A^t \, \land \, y_e \leq \tau} = \int_0^\tau \prob{e \in A^t \mid y_e = y} \, dy = \int_0^\tau x_e \cdot e^{-y x_e/c_P} \, dy = c_P\left(1- e^{-\tau x_e/c_P}\right). \qedhere
        \]
     \end{proof}

     \begin{claim}
         For a fixed time $t$ and element $e \in \cN$, conditioned on $y_e = y$, the probability $e$ is accepted into $I^t$ is at least $x_e \cdot \exp(-b y)$.
     \end{claim}
     \begin{proof}
        We say that an element $e'$ in the same part $P$ as $e$ is \emph{blocking} $e$ if (a) $e'$ is alive, (b) $e'$ has higher priority than $e$, and (c) $e'$ is also assigned to bucket $B_e^P$. Since the assignment of bucket is independent of priority and alive status, by \cref{claim:alive_by_tau_prob} and the uniform choice of bucket, the probability that $e'$ does not block $e$ is $1 - \frac{1}{c_P} \cdot c_P(1-\exp(-y x_{e'} / c_P)) = \exp(-y x_{e'} / c_P)$.
       
        Element $e$ is accepted into $I^t$ if $e$ is alive (which happens with probability $x_e \exp(-y x_e / c_P)$) and no other element $e'$ blocks $e$. Since the events that $e'$ blocks $e$ are independent across choices of $e'$, and independent from the event that $e$ is alive, we get
        \begin{align*}\prob{e \in I^t \mid y_e = y}
        &= x_e \exp(-y x_e / c_P) \cdot \prod_{\substack{e' \in P \\ e' \neq e}} \exp\left(-\frac{y x_{e'}}{c_P}\right) = x_e \exp\left(- \sum_{e' \in P} \frac{y x_{e'}}{c_P}\right) \geq x_e \exp(-b \cdot y),
        \end{align*}
        where we used the fact that $\sum_{e' \in P} x_{e'} \leq b\cdot c_P$ because $\vx \in b\cdot P_\cM$.
     \end{proof}

    From here we can easily conclude our balance claim: the probability $e \in \cN$ is accepted into $I^t$ is
    \[\prob{e \in I^t} = \int_0^1 \prob{e \in I^t \mid y_e = y} \, dy \geq \int_0^1 x_e \cdot e^{-b y}\: dy = x_e \cdot \frac{(1-e^{-b})}{b}.\]

    For the recourse analysis, we first claim that for any time $t$ and element $e$, we have $\prob{e \in A^t \symdif A^{t-1}} \leq 2\cdot \prob{e \in R^t \symdif R^{t-1}}$. Let us explain why this is the case when $\delta \defeq x_e^t - x_e^{t-1} \geq 0$ (the $\delta < 0$ case is analogous). In the case we consider,
    \[
        \prob{e \in A^t \symdif A^{t-1}} \leq \prob{e \in R^t \symdif R^{t-1}} + \prob{z_{e} \in [x_e^{t-1}, x_{e}^{t}]} \leq 2\cdot \prob{e \in R^t \symdif R^{t-1}},
    \] 
    where the second inequality holds since the fact that $c_P \geq 1$ implies that
		\begin{align*}
			\prob{z_{j} \in [x_e^{t-1}, x_{e}^{t}]} ={} & \int_0^1 [e^{-x_e^{t - 1} y / c_P} - e^{-x_e^t y / c_P}] \, dy = c_P \cdot \bigg[\frac{e^{-x_e^{t} / c_P} - 1}{x_e^t} - \frac{e^{-x_e^{t - 1} / c_P} - 1}{x_e^{t - 1}}\bigg] \\ ={} & -c_P \cdot \int_{x_e^{t - 1}}^{x_e^t} \frac{e^{-x / c_P}(x/c_P + 1) - 1}{x^2} \,  dx \leq -c_P \cdot \int_{x_e^{t - 1}}^{x_e^t} \frac{(1 - x / c_P)(x/c_P + 1) - 1}{x^2}  \, dx\\ ={} & -c_P \cdot \int_{x_e^{t - 1}}^{x_e^t} (- 1/c^2_P)  \, dx 
            =\frac{\delta}{c_P} \leq \delta.
		\end{align*}
    Observing now that each element $e \in A^t \symdif A^{t-1}$ can trigger at most a single change of another element in $\Iset$ (if $e$ is inserted into $A^t$ it pushes out at most one element, and if $j$ is removed from $A^{t-1}$, it causes at most one new element to be inserted), we get
    \[
        \expect{\abs{I^t \symdif I^{t-1}}} \leq 2\cdot\expect{\abs{A^t \symdif A^{t-1}}} \leq 4\cdot  \expect{\abs{R^t \symdif R^{t-1}}}.
    \]
    
		To verify monotonicity, consider arbitrary fixed $e \in \Rset' \subseteq \Rset$. The acceptance of $e$ depends on the realizations of $\{y_i\}_i$ and $\{z_i\}_i$. However, for every particular realization of these values, if $e$ is accepted to the solution when $\Rset$ is the realization of the active set, then when $\Rset'$ is the realization of the active set, only fewer alive elements precede $e$ in $\sigma$, and thus $e$ is still accepted into the solution.
\end{proof}

%% file: knapsack.tex
\section{DCRS for Knapsack Constraints}
\label{sec:knapsack}
 
In this section we move on and show a DCRS for knapsack constraints. A knapsack constraint $\cK$ is a linear packing constraint $\sum_{i \in [n]} s_i x_i \leq 1$, where the coefficient $s_i$ of every element $i \in [n]$ belongs to $[0,1]$. This coefficient is also known as the \emph{size} of element $i$. We denote by $\cP_\cK$ the polytope of fractionally feasible solutions for the knapsack constraint, i.e., vectors in $[0, 1]^n$ that obey the constraint.

\begin{theorem}[Knapsack DCRS]
\label{thm:dcrs-knapsack}
    For any $b \in [0,\nf{1}{2}]$, there are monotone $(b,\: (1-2b)/2,\: 2)$-DCRSs for knapsack constraints.
\end{theorem}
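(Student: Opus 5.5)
The plan is to combine a global random \emph{mode} coin with a greedy rule that processes items in \emph{increasing}-size order. Split the items into \emph{big} ones ($s_i > 1/2$) and \emph{small} ones ($s_i \le 1/2$), and fix a total order $\sigma$ on each class by increasing size, with ties broken arbitrarily but consistently. At the outset, flip a single fair coin that stays fixed for all time steps; this is the randomized filtering step in the spirit of \cite{feldman16online}.
\begin{itemize}
\item In \emph{small mode}, at every time $t$ we scan the active small items $\Rset^t$ in $\sigma$ order and accept each one if it still fits.
\item In \emph{big mode}, we accept only the $\sigma$-first active big item, if one exists.
\end{itemize}
The output $\Iset^t$ is feasible in both modes: in big mode it is a single item, and in small mode the greedy rule only adds items that fit.

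\textbf{Balance.} Fix an item $e$ and condition on the coin landing in $e$'s mode, which costs a factor $1/2$. Since $\Rset^t \sim \Rdist(\vx^t)$ includes items independently, the event $e \in \Rset^t$ is independent of the other items.
\begin{itemize}
\item If $e$ is small, it is rejected only if the active small items preceding it have total size more than $1 - s_e \ge 1/2$. Their expected total size is at most $\sum_i s_i x_i \le b$, because $\vx \in b\cdot\cP_\cK$. By Markov's inequality this happens with probability at most $2b$.
\item If $e$ is big, it is rejected only if some earlier big item is active. The big items satisfy $\sum_{i \text{ big}} x_i < 2\sum_i s_i x_i \le 2b$, so by a union bound this has probability at most $2b$.
\end{itemize}
Either way, $\prob{e \in \Iset^t} \ge x_e \cdot \frac{1}{2}(1-2b)$, which is the claimed balance. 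We need $b \le 1/2$ for this to be nonnegative.

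\textbf{Recourse.} The key structural step is that in small mode the accepted set is always a \emph{prefix}, in $\sigma$ order, of the active small items. Once some item $a$ fails to fit, every later item has size at least $s_a$ and the load never decreases, so every later item fails as well. I then show that toggling one item $e$ of size $s$ in the active set changes the output by $e$ itself plus at most one other item.
\begin{itemize}
\item \emph{Insertion of $e$.} Items of the prefix after $e$ have size at least $s$, so removing the last accepted one frees at least $s$. Hence at most one item is evicted.
\item \emph{Removal of $e$.} Write $a$ for the size of the first rejected item. The old slack is less than $a$, so the new slack is less than $a + s$. If that item is now admitted, the remaining slack is less than $s$, which is at most the size of any subsequent item. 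Hence at most one item is added.
\end{itemize}
In big mode the claim is immediate, since the output is a single item. Therefore $\abs{\Iset^t \symdif \Iset^{t-1}} \le 2\abs{\Rset^t \symdif \Rset^{t-1}}$ pointwise, provided I process the change $\Rset^{t-1} \to \Rset^t$ one item at a time and note that the final output depends only on $\Rset^t$. Taking expectations gives recourse $2$.

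\textbf{Monotonicity.} For a fixed coin outcome the scheme is deterministic. Enlarging the active set only adds load before $e$ in small mode, and only adds possible competitors before $e$ in big mode, so an item accepted under $\Rset' \supseteq \Rset$ is also accepted under $\Rset$.

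\textbf{Main obstacle.} I expect the delicate point to be the one-extra-change argument in the removal case, because it must exploit the increasing-size order exactly. The slack comparison above is the step I would write out carefully, including ties in size, which the consistent tie-breaking in $\sigma$ handles.
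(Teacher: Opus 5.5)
Your proposal is correct and follows the paper's proof essentially step for step: the same fixed fair coin choosing big versus small items, the same non-decreasing-size greedy scan (accepting the first active big item is exactly what that greedy does in big mode), the same ``one toggle changes at most one other output item'' recourse argument, and the same monotonicity check. The only differences are presentational. You prove the $(1-2b)/2$ balance directly (Markov for small items, a union bound for big items), whereas the paper cites \cite[Theorem 2.9]{feldman16online}, and your prefix property spells out the paper's terse recourse and monotonicity arguments.
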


Note that in particular we get constant-competitive recourse. This is surprising since adding or removing a single large element to the solution a priori could trigger a large number of small-element deletions to restore feasibility, or a large number of small-element insertions to restore balance. This is in contrast to the matroid case, where the exchange axiom guarantees (for a fixed chain) that a single insertion/deletion only triggers one additional insertion/deletion.

To overcome this hurdle, we process the elements in \emph{non-decreasing} order of size. This may seem strange because the original CRS of \cite[Lemma 4.15]{DBLP:journals/siamcomp/ChekuriVZ14} proceeds in precisely the reverse non-\emph{increasing} size order, and this order is important for the analysis! However, it turns out that the analysis really requires only that large elements (e.g.\ those with size $\geq 1/2)$ be considered before all small elements (e.g.\ those with size $\leq 1/2$); the internal ordering of large/small elements does not matter. Hence we can use an idea which appears in \cite[Theorem 2.9]{feldman16online} of choosing between large and small elements randomly at the outset, throwing out the unchosen elements, and then proceeding in our desired non-decreasing order.

\begin{proof}[Proof of \Cref{thm:dcrs-knapsack}]
    Given knapsack constraint $\cK$ and a vector $\vx \in P_\cK$,
    let $\cN_{\text{big}} := \{e \in \cN \mid s_e > \nicefrac{1}{2}\}$ be the subset of big elements.

    At the outset of the online sequence, the DCRS designates the big elements to be alive  with probability $1/2$, and otherwise sets the small elements $\cN \setminus \cN_{\text{big}}$ to be alive. Then, at every time step $t \in [T]$, the DCRS scans the elements of $\cN$ in (fixed) \emph{non-decreasing} size order $\sigma$ and greedily accepts each element $e$ to the solution $\Iset^t$ if (i) it is alive, (ii) it belongs to $\Rset^t$, and (iii) adding it to $\Iset^t$ does not violate the feasibility of this set. In their proof \cite[Theorem 2.9]{feldman16online}, Feldman, Svensson, and Zenklusen show that this process is $(b,(1-2b)/2)$-balanced even if the order in which the elements are scanned is determined adversarially. Therefore we achieve the same balance parameters for our specific choice of non-decreasing size order.

    To bound the recourse of our DCRS, note that when an element enters or leaves the active set, because we scan the elements in a non-decreasing size order, this can cause at most one later item to enter or leave $\Iset^t$. 
    Hence,
    \[
        \sum_{t \in [T]} \expect*{\abs{\Iset^t \symdif \Iset^{t-1}}} \leq 2 \sum_{t \in [T]} \expect*{\abs{\Rset^t \symdif \Rset^{t-1}}}.
    \]
		
	It remains to verify that our DCRS is also monotone. Fix a realization $\Rset^t$ and consider an element $e \in \cN$ that belongs to $\Iset^t$ given  $\Rset^t$. It suffices to argue that for any $e' \in \Rset^t \setminus \{e\}$, if the realization were instead $\Rset' = \Rset^t \setminus \{e'\}$, then $e$ would still be included in the new output $\Iset'$. If $e' > e$, the elements in the solution when $e$ is considered remain unchanged, i.e. $\Iset' \cap \sigma_{<e} = \Iset^t \cap \sigma_{<e}$. (Here $e' > e$ denotes that $e$ precedes $e'$ in order $\sigma$, and $\sigma_{<e}$ denotes the elements preceding $e$ in $\sigma$.) If $e' < e$, then it cannot be the case that there is some $e' < e'' < e$ such that $e'' \in \Iset' \setminus \Iset^t$, because by virtue of the non-decreasing order of $\sigma$, if $e'' \in \Iset'$ then also $e'' \in \Iset^t$ because $e''$ is considered before $e$, and $e$ fits in $\Iset^t$. Hence the knapsack is only less full at the point at which $e$ is processed, i.e. $\Iset' \cap \sigma_{<e} \subseteq \Iset^t \cap \sigma_{<e}$. Thus there is still room for $e$, and $e \in \Iset'$.
\end{proof}
The claimed approximation ratio of $1/16$ in \Cref{tab:app} is derived by choosing $b = 1/4$, which is the value $b \in [0,\nf{1}{2}]$ maximizing $b \cdot \frac{1-2b}{2}$.

%% file: combiner.tex
\section{Combining Constraints}

\label{sec:combine}

Next, we show a combiner theorem which allows us to construct DCRS for the intersections of constraints when each one of these constraints individually admits a monotone DCRS.
In particular, this theorem allows us to construct DCRS for $k$-matchoids (intersections of an arbitrary number of matroids where every element participates in at most $k$ of the matroids), and $h$-sparse PIPs (packing integer programs with column sparsity at most $h$).

\begin{theorem}[Combiner Theorem]
\label{thm:combiner}
Consider $h$ constraints $\cF_1, \cF_2, \dots, \cF_h$ over ground sets $\cN_1, \cN_2, \dotsc,\allowbreak \cN_h$ such that each element $e$ appears in at most $p$ of the ground sets. Let us denote by $C_e$ the set of indices of the ground sets containing $e$, and let $P_i$ be a polytope relaxing $\cF_i$. Suppose there is a monotone $(b, c_i, r_i)$-DCRS $\pi^i$ for each constraint $\cF_i$ and polytope $P_i$, and there exist bounds $c$ and $r$ such that for each element $e \in \cN$, $\prod_{i \in C_e} c_{i} \geq c$ and $\sum_{i \in C_e} r_{i} \leq r$. Then, there is a monotone $(b, c, r)$-DCRS $\pi$ for the constraint $\cF = \{S \subseteq \cN \mid \forall_{i \in [h]} S \cap \cN_i \in \cF_i\}$ and the polytope $P = \{\vx \in [0, 1]^\cN \mid \vx|_{\cN_i} \in P_i\}$ relaxing it. In particular, one can always choose $c \defeq \prod_{i = 1}^h c_i$ and $r \defeq \sum_{i = 1}^h r_i$. Moreover, if all the DCRSs $\pi^i$ are implementable in polynomial time for all $i \in [h]$, then so is $\pi$.
\end{theorem}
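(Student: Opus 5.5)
The construction is the standard one of \cite{DBLP:journals/siamcomp/ChekuriVZ14}. At every time step $t$, feed each DCRS $\pi^i$ the restricted point $\vx^t|_{\cN_i} \in b \cdot P_i$ and the restricted sample $\Rset^t \cap \cN_i$. Each $\pi^i$ then outputs $\Iset^t_i \subseteq \Rset^t \cap \cN_i$, and the combined DCRS outputs $\Iset^t \defeq \{e \in \Rset^t \mid e \in \Iset^t_i \text{ for all } i \in C_e\}$. All the $\pi^i$ run on the same sample sequence, with their internal randomness independent of one another. Note that $\{\Rset^t \cap \cN_i\}_t$ is again a time-correlated sample of $\{\vx^t|_{\cN_i}\}_t$, since restriction preserves both independence across elements and the marginals. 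Feasibility is immediate: $\Iset^t \cap \cN_i \subseteq \Iset^t_i \in \cF_i$, and $\cF_i$ is down-closed. Polynomial-time implementability is equally immediate.

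For recourse, an element $e$ lies in $\Iset^t \symdif \Iset^{t-1}$ only if $e \in \Rset^t \symdif \Rset^{t-1}$ or $e \in \Iset^t_i \symdif \Iset^{t-1}_i$ for some $i \in C_e$. Hence
\[
\textstyle\sum_t \expect{\abs{\Iset^t \symdif \Iset^{t-1}}} \le \sum_t\sum_i \expect{\abs{\Iset^t_i \symdif \Iset^{t-1}_i}} \le \sum_i r_i \sum_t \expect{\abs{(\Rset^t \symdif \Rset^{t-1})\cap \cN_i}} = \sum_t \sum_{e} \prob{e \in \Rset^t\symdif\Rset^{t-1}} \sum_{i \in C_e} r_i,
\]
and the last quantity is at most $r \sum_t \expect{\abs{\Rset^t\symdif\Rset^{t-1}}}$. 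The first inequality needs some care. If $e$ leaves $\Rset$ while it lies in $\Iset$, then it also leaves every $\Iset_i$ (since $\Iset_i \subseteq \Rset \cap \cN_i$), so that change is already charged to the $\pi^i$ terms. If $e$ enters $\Iset$, it entered some $\Iset_i$ or entered $\Rset$, and in the latter case, by the same containment, it also entered each $\Iset_i$ for $i \in C_e$. Either way every change is counted by some $\Iset_i$ symmetric difference. (If $C_e=\emptyset$ the element is unconstrained; treat it as trivially accepted with coefficient $1$, or assume every $e$ lies in some $\cN_i$.)

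For balance and monotonicity, fix the history $\vx^{\le t}, \Rset^{\le t-1}$. The maps $\pi^i_{\vx^{\le t}|_{\cN_i}}(\Rset^{\le t-1}\cap\cN_i,\cdot)$ are then monotone CRSs with independent randomness, each $(b,c_i)$-balanced. Two issues need handling here. First, balance of $\pi^i$ is a statement about $\Rset^t \sim \Rdist(\vx^t)$ drawn independently of the conditioning, so I will use that in the DCRS definition balance holds conditioned on the past samples. This is the reading the paper adopts when it says $\pi_{\vx^{\le t}}(\Rset^{\le t-1},\cdot)$ is a $(b,c)$-balanced CRS for every $t$. Second, I will apply the argument of \cite{DBLP:journals/siamcomp/ChekuriVZ14}. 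Conditioned on $e \in \Rset^t$, the events $\{e \in \pi^i(\Rset^t\cap\cN_i)\}$ are, as functions of $\Rset^t\setminus\{e\}$, each non-increasing by monotonicity. By the Harris--FKG inequality over the product measure on $\Rset^t - e$, together with independence of the internal randomness, $\prob{e \in \Iset^t \mid e\in\Rset^t} \ge \prod_{i\in C_e}\prob{e\in\Iset^t_i\mid e\in\Rset^t} \ge \prod_{i \in C_e} c_i \ge c$. Monotonicity of $\pi$ follows because, with all internal randomness fixed, the product of non-increasing events is non-increasing in $\Rset$. The main obstacle I expect is making this FKG step rigorous in the dynamic setting, where the $\pi^i$ carry internal state correlated with the past. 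I would resolve it by conditioning on the entire past (samples and internal randomness up to time $t-1$), noting that the fresh step-$t$ coins of each $\pi^i$ remain independent across $i$. I would also check that the correlated TCOS still gives $\Rset^t$ product-distributed given the conditioning, or else accept this as part of the model assumption that balance is stated per step given the history. The choices $c=\prod_i c_i$ and $r=\sum_i r_i$ then follow trivially.
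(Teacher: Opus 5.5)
Your proposal is essentially the paper's proof. It uses the same intersection construction and the same chain of recourse inequalities; your case analysis justifying the first inequality, which the paper leaves implicit, is correct. Balance and monotonicity rest on the same monotone-CRS intersection lemma of \cite[Lemma 1.6]{DBLP:journals/siamcomp/ChekuriVZ14} (i.e., Harris--FKG). The dynamic subtlety you flag is handled in the paper exactly by your fallback: treat each $\pi^i_{\vx^{\leq t}}(\Rset^{\leq t-1},\cdot)$ as a per-step monotone $(b,c_i)$-balanced CRS and invoke that lemma. Your argument is therefore at the same level of rigor as the paper's. Note that your primary route, conditioning on the full past, would not suffice on its own, since given the past $\Rset^t$ is generally not distributed as $\Rdist(\vx^t)$ (e.g.\ under the Markovian TCOS).
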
 

\begin{proof}
Let us define the output $\Iset^t$ of the DCRS $\pi$ at $t$ as follows. For every $e \in \cN$, $e$ belongs to $\Iset^t$ if
\[
	e \in \bigcap_{i \in C_e} \pi^i_{\vx^{\leq t}}(R^{\leq t})
	.
\]
Recall that $\pi^i_{\vx^{\leq t}}(R^{\leq t - 1}, \cdot)$ is a monotone $(b_i, c_i)$-balanced CRS for every $i \in [h]$, and thus, by \cite[Lemma 1.6]{DBLP:journals/siamcomp/ChekuriVZ14}, the set $\Iset^t$ defined above can be viewed as a monotone CRS, which implies that $\pi$ is a monotone DCRS. Furthermore, by the proof of the same \cite[Lemma 1.6]{DBLP:journals/siamcomp/ChekuriVZ14}, the probability that an element $e \in \cN$ belongs to $\Iset^t$ is at least 
$\prod_{i \in C_e} c_{i} \geq c$, and hence, $\pi$ is $(b, c)$-balanced.

It remains to observe that the recourse of the DCRS $\pi$ is bounded. 
Letting $R_i \defeq R\cap \cN_i$, we have
\begin{align*}
	\sum_{t=1}^T \bE[|\Iset^t \symdif \Iset^{t-1}|]
	&\leq \sum_{t=1}^T \sum_{i=1}^h \bE[|\pi^i_{\vx^{\leq t}}(R^{\leq t}) \symdif \pi^i_{\vx^{\leq t - 1}}(R^{\leq t - 1})|] 
    \leq \sum_{i=1}^h r_i \cdot \sum_{t=1}^T \bE[|R_i^t \symdif R_i^{t-1}|] \\
    &= \sum_{t=1}^T \sum_{e \in \cN} \left( \sum_{i \in C_e} r_i \right) \Pr[e \in R^t \symdif R^{t-1}|] 
	\leq r \cdot \sum_{t=1}^T \bE[|R_i^t \symdif R_i^{t-1}|] .
	\qedhere
\end{align*}
\end{proof}

We conclude the section with a note about the (asymptotically) optimal setting of parameters for various settings.
\begin{enumerate}
    \item For a single partition matroid, we set $b = 1$ and use \Cref{thm:better-partition-dcrs} to get a $(1-e^{-1},4)$-DCRS.
    \item For matching, we again set $b = 1$ and use \cref{thm:combiner} to get a $((1-e^{-1})^2,8)$-DCRS.
    \item For $p$-matchoid constraints, we set $b = \Theta(1/p)$ and $\eps = \Theta(1/p)$ to get a $(\Omega(1/p), O(p^3 \log n))$-DCRS. 

    Note on where $p^3$ in the recourse comes from: Given a sequence of points $\vx^1,\ldots, \vx^T$ in the $p$-matchoid polytope $P$ and corresponding TCOS $\Rset^1,\ldots, \Rset^T$, we first downsample each coordinate with probability $b = \Theta(1/p)$, which yields active sets $\widehat \Rset^1,\ldots, \widehat \Rset^T$ distributed according to $\widehat \vx^1 = b\cdot \vx^1, \ldots, \widehat \vx^T = b\cdot \vx^T$ which lie in the scaled polytope $b\cdot P$.
    Combining the $(b,c,z)$-DCRS for the individual matroids (\Cref{thm:mat_dcrs_formal}) using \cref{thm:combiner}, we get a DCRS that outputs a sequence of feasible solutions $I^1, \ldots, I^T$, and these have expected recourse
    \[\sum_t \expect{\abs{I^t \symdif I^t}} \leq pz \cdot \sum_t \expect{\abs{\widehat R^t \symdif \widehat R^t}} = pzb \cdot \sum_t \expect{\abs{R^t \symdif R^t}}.\]
    When $b = \Theta(1/p)$ and $\eps = \Theta(1/p)$, this indeed yields a $(\Omega(1/p), O(p^3 \log n))$-DCRS. 
    
    In the special case of $p$-uniform hypergraph matching, where the $p$-matchoid is comprised only of partition matroids, again setting $b = \Theta(1/p)$, we get an improved $(\Omega(1/p), O(1))$-DCRS. 
\end{enumerate}

%% file: submod.tex
\section{Low-Recourse Submodular Optimization}

\label{sec:submod}
We conclude with our flagship application of DCRSs to competitive-recourse dynamic submodular optimization. 

\begin{theorem}[DCRS $\Rightarrow$ Low-Recourse Algorithms]
\label{thm:low_rec_submod}
    Let $\cF$ be a down-closed constraint and $\cP \subseteq [0, 1]^\cN$ be a down-closed relaxation of $\cF$ for which there exists a monotone $(b,c,z)$-DCRS. Then, for any constant $\beta \in [0, 1]$, there is an algorithm that gets a sequence of $T$ updates arriving online, where update $t \in [T]$ consists of an arbitrary set $A^t$ of active elements and an arbitrary non-negative monotone submodular function $f^t$, and after receiving every update $t$, produces an independent set $I^t \in \cF$ such that:
    \begin{enumerate}
        \item For all $t \in [T]$, the solution $I^t$ is an approximate maximizer of $f^t$. If $f^t$ is nonnegative monotone submodular then
        \[\expect*{f^t(I^t)} \geq (1-e^{-1}) \cdot bc \cdot (1 - \eps) \cdot \beta  \cdot \opt^t,\] 
        and, if $f^t$ is linear, then
       \[\expect*{f^t(I^t)} \geq bc \cdot (1 - \eps) \cdot \beta  \cdot \opt^t.\]   
        \item The expected recourse is
        \[\sum_{t \in [T]} \expect*{\lvert I^t \symdif I^{t-1}\rvert} \leq O\left(\frac{z}{\eps} \log \frac{n}{\eps}\right) \optr^\beta.\]
        \end{enumerate}
        Here, $\opt^t$ is the maximum value obtained by $f^t$ over all $S \in 2^{A^t} \cap \cF$, and $\optr^\beta$ is a lower bound on the recourse of any algorithm maintaining $f(I^t) \geq \beta  \cdot \opt^t$ for all $t \in [T]$.
\end{theorem}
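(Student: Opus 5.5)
The plan is the relax-and-round pipeline of \Cref{fig:framework_horizontal}. It has three stages. First, maintain online a fractional point $\vx^t \in b\cdot\cP$ that is a good fractional solution for $f^t$ restricted to $A^t$. Second, feed the sequence $\{\vx^t\}$ to a $1$-TCOS (say the Markovian TCOS of \Cref{ex:markov-TCOS}) to obtain active sets $\Rset^t \sim \Rdist(\vx^t)$. Third, apply the given monotone $(b,c,z)$-DCRS to the pairs $(\vx^t, \Rset^t)$ and output $\Iset^t$. Feasibility of $\Iset^t$ is immediate from property~1 of \Cref{def:dcrs}. Since the DCRS never outputs elements with $x^t_e=0$, we will also force $x^t_e = 0$ for $e \notin A^t$ in the fractional stage, which gives $\Iset^t \subseteq A^t$.

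\textbf{Fractional stage.} For linear $f^t$, consider the packing/covering body $\{\vx \in b\cdot\cP : \vx|_{\cN\setminus A^t}=0,\ f^t(\vx) \geq (1-\eps)\beta b\,\opt^t\}$. The first step is to argue that this body is nonempty and that the integral $\beta$-approximate solutions of the offline recourse-optimal algorithm witness a feasible path. Scaling those indicator vectors by $b$ gives points in the body whose total $\ell_1$ movement is at most $b\cdot\optr^\beta \leq \optr^\beta$. With this in hand, we invoke the positive body chasing algorithm of \cite{BBLS23}, which maintains a point of the body (up to the $(1-\eps)$ slack) with total movement $O(\frac1\eps\log\frac n\eps)\cdot\optr^\beta$. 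For monotone submodular $f^t$, we replace the linear constraint by a constraint on the multilinear/concave closure. The natural choice is the constraint family $\sum_{e} f^t(e\mid S)x_e \geq f^t(\text{target})-f^t(S)$ over all $S$, i.e.\ the standard LP for the concave closure $f^{+}$, which is a covering constraint. A point satisfying it has $F^t(\vx) \geq (1-e^{-1}) f^+(\vx) \geq (1-e^{-1})(1-\eps)\beta b\,\opt^t$, using the correlation gap between the multilinear extension $F$ and $f^+$. This is where the extra $(1-e^{-1})$ factor enters, and where the separation result of \cite{buchbinder2025chasingsubmodularobjectivessubmodular} is needed for efficiency; for existence, only the covering structure matters.

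\textbf{Rounding stage.} For the approximation guarantee, fix $t$ and condition on the history. By \Cref{def:dcrs}, $\pi_{\vx^{\le t}}(\Rset^{\le t-1},\cdot)$ is a monotone $(b,c)$-balanced CRS. The standard lemma of \cite{DBLP:journals/siamcomp/ChekuriVZ14} then gives $\expect{f^t(\Iset^t)} \geq c\,F^t(\vx^t)$ for monotone submodular $f^t$ (with equality-type linearity for linear $f^t$). Combining this with the fractional stage gives the claimed bounds. The one subtlety is that $\Rset^t$ must be distributed as $\Rdist(\vx^t)$ \emph{independently of the conditioning}. That holds marginally but not conditionally on $\Rset^{\le t-1}$. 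I would therefore state the CVZ argument pointwise over histories, using only that the DCRS's balance holds averaged over $\Rset^t$, and check that its monotonicity-based argument goes through for the averaged CRS.

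For recourse, combine property~3 of the DCRS with the $1$-TCOS property: $\sum_t\expect{|\Iset^t\symdif\Iset^{t-1}|}\leq z\sum_t\norm{\vx^t-\vx^{t-1}}_1$. Then plug in the body-chasing bound to get $O(\frac z\eps\log\frac n\eps)\optr^\beta$.

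I expect the main obstacle to be the fractional stage for submodular objectives. There we must verify that the concave-closure covering formulation fits the positive body chasing framework, whose constraint coefficients are nonnegative by monotonicity, with an $\eps$ loss. We must also check that the offline optimum's trajectory is feasible for it with movement $\optr^\beta$, accounting for changing $f^t$ and $A^t$.
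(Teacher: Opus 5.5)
Your pipeline is the paper's: chase the packing/covering bodies with \cite{BBLS23}, pass the fractional points through a TCOS into the DCRS, get the approximation from \cite[Theorem 1.3]{DBLP:journals/siamcomp/ChekuriVZ14}, and get the recourse from Property~3 of \Cref{def:dcrs} combined with the chasing bound. You put the factor $b$ inside the body (chasing in $b\cdot\cP$ with target $b\beta\opt^t$) instead of feeding $b\vx^t$ to the DCRS as the paper does. That change is harmless: the scaled offline witness is feasible because $f(S)+b\sum_{e\in I}f(e\mid S)\geq(1-b)f(S)+b f(S\cup I)\geq b f(I)$.

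One step is mis-justified. The constraints $f^t(S)+\sum_e f^t(e\mid S)x_e\geq v$ for all $S$ say that the covering extension $f^*_t(\vx)$ is at least $v$, not the concave closure $f^+_t(\vx)$. Since $f^+\leq f^*$, possibly strictly, the inequality $(1-e^{-1})f^+_t(\vx)\geq(1-e^{-1})v$ in your chain does not follow. Replace the correlation-gap step by $F^t(\vx)\geq(1-e^{-1})f^*_t(\vx)$ \cite[Lemma 3.8]{vondrak2007submodularity}, which is what the paper uses.

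You are right that conditioning is an issue, but your repair does not go through as stated. The paper's proof does not discuss it; it treats $\pi_{\vx^{\leq t}}(\Rset^{\leq t-1},\cdot)$ as a monotone $(b,c)$-balanced CRS and applies \cite[Theorem 1.3]{DBLP:journals/siamcomp/ChekuriVZ14} directly. Your two readings both fail:
\begin{itemize}
\item Pointwise over histories: for a fixed history, $\Rset^t$ is not distributed as $\Rdist(\vx^t)$. The per-history CVZ bound concerns a fresh sample, so it says nothing about $\pi_{\vx^{\leq t}}(\Rset^{\leq t-1},\Rset^t)$.
\item Averaged over histories: once you average over histories correlated with $\Rset^t$, nothing in \Cref{def:dcrs} keeps the map $\Rset^t\mapsto\Iset^t$ monotone, so the FKG step has nothing to act on.
\end{itemize}
Marginal balance alone handles linear $f^t$ but not submodular $f^t$.

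A sound fix for the DCRSs of this paper is to exhibit a CRS $\pi'$ with these properties:
\begin{itemize}
\item it depends only on $\vx^{\leq t}$, $\Rset^t$, and randomness independent of the TCOS;
\item it is monotone with the same balance;
\item it satisfies $\pi'(\Rset^t)\subseteq\Iset^t$.
\end{itemize}
Then $\expect{f^t(\Iset^t)}\geq\expect{f^t(\pi'(\Rset^t))}$ by monotonicity of $f^t$, and CVZ applies to $\pi'$ using only the marginal fact $\Rset^t\sim\Rdist(\vx^t)$.

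For \Cref{alg:matroid-DCRS}, take $\pi'(\Rset)=\{e\in\Rset : e\notin\Span((\Rset\cap\Span(S_{\lvl(e)})-e)\cup S_{\lvl(e)+1})\}$. The chain depends only on $\vx^{\leq t}$, and the proof of \Cref{prop:matroid-DCRS-balance} shows this set lies inside $\Iset^t$. The partition-matroid and knapsack DCRSs are already history-free. Intersecting such $\pi'$'s as in \Cref{thm:combiner} preserves the property. For a completely general DCRS, however, this is an extra hypothesis that the theorem needs.
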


Before giving the proof, we briefly describe the \emph{positive body chasing} framework introduced by \cite{BBLS23} to maintain low $\ell_1$ movement fractional solutions. In their setting, we are given a sequence of bodies $K_t = \{ \vx^t \in \mathbb{R}_+^n | C^t\vx^t \geq 1, P^t \vx^t \leq 1 \}$ revealed online, where $C^t$ and $P^t$ are non-negative matrices. The authors of \cite{BBLS23} designed an online algorithm for this problem with the following guarantee.
\begin{theorem}[\cite{BBLS23}]
	\label{thm:postive_bodies}
	For any $\eps \in (0, 1]$, there is an $O(\nicefrac{1}{\eps} \log (\nicefrac{d}{\eps}))$-competitive algorithm for chasing positive bodies in $\ell _1$ such that $\vx^t \in K_t^{1 -\eps} = \{ \vx^t \in \mathbb{R}_+^n \mid C^t\vx^t \geq 1-\eps, P^t\vx^t \leq 1 \}$ at time $t$, and $d$ is the maximal number of non-negative coefficients in a covering constraint.\footnote{The original theorem statement in \cite{BBLS23} violates the packing constraints instead of the covering constraints (i.e. $C^t\vx^t \geq 1, P^t\vx^t \leq 1+\eps$) but one can easily convert between the two versions by scaling $\vx$ by a factor of $1+\eps$.}
\end{theorem}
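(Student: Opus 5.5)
The plan is to obtain this statement directly from the original guarantee of \cite{BBLS23}, as the footnote suggests, rather than reprove positive body chasing from scratch. That original guarantee is: for every $\eps' \in (0,1]$ there is an $O(\nicefrac{1}{\eps'} \log(\nicefrac{d}{\eps'}))$-competitive online algorithm that maintains $\vx^t \ge 0$ with $C^t \vx^t \geq 1$ and $P^t \vx^t \leq 1+\eps'$. Its competitive ratio is measured against the minimum total $\ell_1$ movement of any offline sequence $\vy^t \in K_t$. I will run that algorithm with $\eps' = \eps$, obtaining a sequence $\tilde{\vx}^t$, and output the rescaled points $\vx^t \defeq \tilde{\vx}^t/(1+\eps)$ online.

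\textbf{Feasibility.} Both constraint families are linear with non-negative matrices, so the bounds scale directly. On the packing side, $P^t \vx^t = P^t \tilde{\vx}^t/(1+\eps) \leq 1$. On the covering side, $C^t \vx^t \geq 1/(1+\eps) \geq 1-\eps$. Non-negativity is preserved, so $\vx^t \in K_t^{1-\eps}$ for every $t$.

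\textbf{Competitiveness.} The movement of the output is $\sum_t \norm{\vx^t - \vx^{t-1}}_1 = \frac{1}{1+\eps}\sum_t \norm{\tilde{\vx}^t - \tilde{\vx}^{t-1}}_1$. This is at most the movement of $\tilde{\vx}$, which by \cite{BBLS23} is $O(\nicefrac{1}{\eps}\log(\nicefrac{d}{\eps}))$ times the offline optimum over sequences in $K_t$. That offline benchmark is exactly the one in the statement, so the competitive ratio carries over unchanged. Since $\vx^0 = \tilde{\vx}^0/(1+\eps) = 0$, the initial step causes no issue.

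The only real content is the cited algorithm itself. If one wanted it self-contained, I would follow \cite{BBLS23}. For each arriving body, their algorithm takes a Bregman/entropic projection, i.e.\ it multiplicatively increases the variables in a violated covering constraint. The increase uses rates $\propto (x_j + \eps/d)$, which is where the $\log(d/\eps)$ comes from. Simultaneously, it decreases variables in violated packing constraints. The analysis is a primal--dual potential argument, using a KL-divergence-type potential between the online point and the offline point, and charging online movement to offline movement.

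\textbf{Main obstacle.} The delicate step is handling the mixed packing/covering interaction, so that decreases forced by packing do not cascade. It is exactly this step that necessitates the $(1+\eps)$ slack on the packing side. Since that is established in \cite{BBLS23}, I will treat it as a black box; only the scaling conversion above needs checking here.
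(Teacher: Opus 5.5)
Your proposal is correct and is exactly the argument the paper intends: the paper gives no proof beyond its footnote, which says to run the \cite{BBLS23} algorithm (which violates the packing constraints by a factor $1+\eps$) and rescale by $1+\eps$. Your checks that $P^t\vx^t \le 1$, that $C^t\vx^t \ge 1/(1+\eps) \ge 1-\eps$, and that the $\ell_1$ movement only shrinks relative to the same offline benchmark over $K_t$ are precisely what the footnote leaves implicit, and like the paper you rightly treat the underlying \cite{BBLS23} result as a black box, so your sketch of its internals is not needed.
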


Violating the constraints by a factor of $1-\eps$ was necessary for \cite{BBLS23} to show \cref{thm:postive_bodies}: the violation will translate to a small loss in the approximation factor for us, and will not make a qualitative difference.

\begin{proof}[Proof of \cref{thm:low_rec_submod}]

The algorithm is to use the positive body chasing framework of \cite{BBLS23} to maintain a fractional solution to a relaxation of the problem of optimizing $f^t$, and then to apply the given DCRS for $\cF$.
	
	Since $\cF$ is a down-closed constraint system, the polytope $\cP$ can be expressed with packing constraints $Px \leq 1$ for some matrix $P$ with nonnegative entries. We now formulate the following positive body chasing problem. At time $t$, define the polytope
	\begin{align}
        K_t := \left\{\vx \in \mathbb{R}_{\geq 0}^{\cN} \: \Big\vert  \:
	    P\vx \leq 1, \quad f^t(S) + \sum_{e \in \cN} f^t_S(\{e\}) \, x_e \geq \beta \cdot \opt^t \quad \forall S \subseteq \cN, \quad x_e \leq 0 \quad \forall e \not \in A^t \right\}. \label{line:polytope}
    \end{align}
	  Observe that this is a pure packing/covering constraint system. Furthermore, if we denote by $f^*_t$ and $F^t$ the covering and multilinear extensions of $f^t$, respectively, then the exponential set of covering constraints in $K_t$ enforces that $f^*_t(\vx) \geq \beta \cdot \opt^t$ for every $\vx \in K_t$, which implies $F^t(\vx) \geq (1-e^{-1}) \cdot \beta \cdot \opt^t$ by \cite[Lemma 3.8]{vondrak2007submodularity}.

        Applying the positive body chasing framework of \cite{BBLS23} to $\{K_t\}_t$ produces in an online fashion vectors $\vx^t \in K_t^{1-\eps}$ such that $\sum_{t \in [T]} \|\vx^t-\vx^{t-1}\|_1 \leq  O(\nicefrac{1}{\eps} \log (\nicefrac{n}{\eps})) \cdot \optr^{\beta}$, where $\optr^{\beta}$ is the minimum value the left hand side of this inequality can take for any sequence of vectors $\vx^t \in K_t$. Note that $\optr^{\beta}$ is a lower bound on the recourse of any algorithm maintaining an integral solution $I^t$ respecting the constraint $\cF$ and $f(I^t) \geq \beta  \cdot \opt^t$ for all $t \in [T]$, because $K_t$ is a relaxation of the set of solutions maximizing $f^t$ up to approximation factor $\beta$. Since each vector $\vx^t$ output by \cite{BBLS23} lies in $K_t^{1-\eps}$, by definition we get that $P\vx^t \leq 1$, only active coordinates are nonzero, and $f_t^*(\vx^t) = \min_{S \subseteq \cN} \{f^t(S) + \sum_{e \in \cN} f^t_S(\{e\}) \, x^t_e\} \geq (1-\eps)\cdot \beta \cdot \opt^t$. Again by \cite[Lemma 3.8]{vondrak2007submodularity}, the last condition implies that $F^t(\vx^t) \geq (1-\eps) \cdot (1-e^{-1}) \cdot \beta \cdot \opt^t$.

    	To show the first part of the lemma, we apply the monotone $(b,c,z)$-DCRS whose existence is assumed by the theorem to the sequence $b\cdot \vx^1, \ldots, b \cdot \vx^T$ and return the sets $I^1, \ldots, I^T$ it produces.
        Since $I^t$ is the output of a monotone $(b,c)$-CRS for $\vx^t$, by \cite[Theorem 1.3]{DBLP:journals/siamcomp/ChekuriVZ14}, $\expect{f(I^t)} \geq bc \cdot F(\vx^t) \geq bc (1-e^{-1}) \cdot (1-\eps) \cdot \beta \cdot \opt^t$, as desired.
      
        The second part of the theorem follows directly from the DCRS guarantee that $\sum_t \bE[|I^t \symdif I^{t-1}|] \leq z \cdot \sum_{t\in[T]} \|\vx^t -\vx^{t-1}\|_1 \leq O\left(\frac{z}{\eps} \log \frac{n}{\eps}\right) \optr^{\beta}$.

        To get the improved guarantee for linear $f_t$, one needs to change the definition of $K_t$ for such functions to the simpler polytope 
	    \begin{align}
            K_t := \left\{\vx \in \mathbb{R}_{\geq 0}^{\cN} \quad \Big\vert  \quad
            P\vx \leq 1, \quad F^t(\vx) \geq \beta \cdot \opt^t, \quad x_e \leq 0 \quad \forall e \not \in A^t \right\},
        \end{align}
        which directly implies that $F^t(\vx) \geq \beta \cdot \opt^t$ for every $\vx \in K_t$, and thus, saves the $(1-e^{-1})$ loss factor. The polytope $K_t^{1 - \eps}$ needs to be changed in similar way, and one can verify that all the properties that we need from the polytopes $K_t$ and $K_t^{1 - \eps}$ are preserved by these modifications. \end{proof}

        The astute reader might observe that for monotone submodular functions, $K_t$ involves an exponential number of constraints even when $P$ is specified by a polynomial number of constraints. Though this is not a focus of our paper, we show in \cref{sec:round-or-sep} that, nevertheless, there is an oracle that runs in polynomial time and (approximately) separates $K_t$, which is sufficient to obtain the same guarantees as above.

%% file: full_prelim.tex
\section{Extended Preliminaries}
\label{sec:full_prelim}

\paragraph{Matroids.}

A \emph{matroid} $\cM$ is the pair $\cM = (\cN, \cI)$ 
where $\cN$ is a finite ground set of $n$ elements, and $\cI \subseteq 2^{\cN}$ is a family of \emph{independent} sets that respects the matroid axioms: $\cI$ is nonempty, down-closed,\footnote{The set $\cI$ is down-closed if $S \in \cI$ implies $T \in \cI$ for all $T \subseteq S$.} and satisfies the property that for all $A, B \in \cI$ with $\abs{A} < \abs{B}$, there is some element $e \in B\setminus A$ for which $A \cup \{e\} \in \cI$. 
The \emph{rank} function of $\cM$, denoted by $\rank_\cM\colon 2^{\cN} \rightarrow \mathbb{N}$, is the function $\rank_\cM(S) = \max \{ |I| : I \subseteq S, I \in \cI \}$, i.e. the cardinality of the largest independent subset of $S$. The rank function $\rank_\cM$ is a monotone submodular function (defined below), and has the additional properties that $\rank_\cM(\emptyset) = 0$ and that adding a single element to a set $S$ always increases $\rank_\cM(S)$ by either $0$ or $1$. It is well-known that the polytope 
\[
    P_{\cM} \defeq \{ \vx \in \mathbb{R}^\cN_+: \ \vx(S) \leq \rank_\cM(S) \ \forall S \subseteq \cN, \ x_e \geq 0 \  \forall e \in \cN\}
\]
is the convex hull of the characteristic vectors of the independent sets of $\cM$.

The span of a set $S \subseteq \cN$ with respect to a matroid $\cM$, denoted by $\Span_\cM(S)$, is the set of elements in $\cN$ whose addition to $S$ would not change $\rank_\cM(S)$. When the matroid $\cM$ is clear from the context, we drop the suffix $\cM$ from $\rank_\cM$ and $\Span_\cM$. The rank of the entire ground set $\cN$ is called the \emph{rank of the matroid} $\cM$, and is denoted by $\rank(\cM) \triangleq \rank_\cM(\cN)$. An independent set whose size is $\rank(\cM)$ is called a \emph{base} of $\cM$. By the augmentation axiom of matroids, an independent set $S \in \cI$ is a base of $\cM$ if and only if it is an inclusion-wise maximal independent set. Similarly, a \emph{circuit} of a matroid is an inclusion-wise minimal dependent set---i.e. a set that is not independent, but removing any single element from it produces an independent set. The following is a useful property related to circuits.
\begin{lemma}[(39.35) in~\cite{Schrijver-book}] \label{lem:circuit}
Given an independent set $S$ of a matroid $\cM = (\cN, \cI)$ and an arbitrary element $e \in \cN$. The set $S \cup \{e\}$ contains at most a single circuit of $\cM$.
\end{lemma}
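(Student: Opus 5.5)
This is the standard fact that an independent set plus one element contains at most one circuit. I will argue by contradiction using only the circuit-elimination axiom, which I will first derive from the independence axioms in the appendix.

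\textbf{Step 1 (reduction).} If $e \in S$, or $S \cup \{e\}$ is independent, then $S\cup\{e\}$ contains no circuit and we are done. So assume $e \notin S$ and $S+e$ is dependent. Since $S$ is independent and $\cI$ is down-closed, every circuit $C \subseteq S+e$ must contain $e$, for otherwise $C \subseteq S$ would be independent.

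\textbf{Step 2 (circuit elimination).} The key tool is: if $C_1 \neq C_2$ are circuits and $e \in C_1 \cap C_2$, then $(C_1 \cup C_2) - e$ contains a circuit. I will prove this with the rank function, whose submodularity is already noted in the preliminaries. Because circuits are minimal dependent sets, $\rank(C_i) = |C_i| - 1$. Minimality also prevents one circuit from containing another, so $C_1 \cap C_2$ is a proper subset of $C_1$ and hence independent, giving $\rank(C_1\cap C_2) = |C_1 \cap C_2|$. Submodularity then yields
\[
\rank(C_1\cup C_2) \le |C_1|-1+|C_2|-1-|C_1\cap C_2| = |C_1\cup C_2|-2.
\]
Removing $e$ lowers $|C_1 \cup C_2|$ by exactly one and lowers the rank by at most one. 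Therefore
\[
\rank((C_1\cup C_2)-e) \le |(C_1\cup C_2)-e|-1,
\]
so $(C_1\cup C_2)-e$ is dependent and contains a circuit.

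\textbf{Step 3 (conclusion).} Suppose $S+e$ contained two distinct circuits $C_1, C_2$. By Step 1, both contain $e$. By Step 2, $(C_1\cup C_2)-e \subseteq S$ contains a circuit. This contradicts the independence of $S$.

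The only step with real content is Step 2. Its delicate point is justifying $\rank(C_1\cap C_2)=|C_1\cap C_2|$, which depends on the intersection being a proper subset of a circuit. That in turn rests on the minimality of circuits: if $C_1 \subseteq C_2$ then $C_1 = C_2$, contradicting distinctness.
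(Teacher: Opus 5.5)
Your proof is correct. The paper gives no proof of this lemma and simply cites Schrijver, so there is no internal argument to compare against. What you supply is the standard self-contained proof: every circuit of $S+e$ must contain $e$, and circuit elimination then places a circuit inside $(C_1\cup C_2)-e\subseteq S$. You derive circuit elimination cleanly from submodularity of $\rank$, using that $C_1\cap C_2$ is a proper subset of $C_1$ and hence independent.

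One wording fix in Step~2: when you remove $e$, the inequality you actually need is monotonicity, $\rank((C_1\cup C_2)-e)\le\rank(C_1\cup C_2)\le|(C_1\cup C_2)-e|-1$. Your remark that the rank drops ``by at most one'' is the reverse inequality and plays no role.

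Your argument rests on submodularity of the rank function, which the paper's preliminaries assert without proof. If you prefer to work only from the independence axioms as the paper states them, there is an equally short route. Suppose $S+e$ is dependent; then by augmentation every maximal independent subset of $S+e$ has size $|S|$. For $f\in C$, extend $C-f$ to such a maximal subset; this shows $S+e-f\in\cI$. For $f\notin C$, the set $S+e-f\supseteq C$ is dependent. Hence any circuit $C\subseteq S+e$ equals $\{f\in S+e : S+e-f\in\cI\}$, which depends only on $S$ and $e$, so the circuit is unique.

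This second route also shows directly that deleting any element of the circuit restores independence. That is exactly the property the paper relies on when it repairs $S_i\cup I_{i-1}$ after adding an element to $S_i$.
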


For simplicity of exposition, in this paper we assume $\cM$ contains no \emph{self-loops}, which are circuits of size $1$ (a self-loop element $e$ will never appear in any independent set, and must have $x_e = 0$ for any $\vx \in P_\cM$, hence it can safely be ignored). 

A richer class of down-closed constraint families $\cI$ on a ground set $\cN$ are those which can be represented as a \emph{matroid intersection}, meaning there is a collection of matroids $\{\cM_\alpha\}$ with common $\cN$ such that $A \in \cI$ if and only if $A \in \cI_\alpha$ for all $\cM_\alpha = (\cN, \cI_\alpha)$.
It is also useful to refine this definition to reflect the number of matroids that meaningfully constrain each element $e \in \cN$. 
A down-closed family $(\cN, \cI)$ is a \emph{$p$-matchoid} if it is given by a collection of matroids $\{\cM_\alpha=(\cN_\alpha, \cI_\alpha)\}$ such that $\cN_\alpha \subseteq \cN$, each $e \in \cN$ appears in at most $p$ of the $\cN_\alpha$, and $A \subseteq \cN$ is independent if and only if $A \cap \cN_\alpha \in \cI_\alpha$ for all of the $\cM_\alpha$.

\paragraph{Submodular Functions.}

A set function $f\colon 2^{\cN} \rightarrow \mathbb{R}$
over a ground set $\cN$ is \textit{submodular} if $f(A \cap B) + f(A \cup B) \leq f(A) + f(B)$
for any $A, B \subseteq \cN$.  It is \textit{monotone} if $f(A) \leq f(B)$
for all $A \subseteq B \subseteq \cN$. The \textit{contraction} of $f\colon 2^{\cN} \rightarrow \mathbb{R}$
onto $\cN \setminus T$ is defined as
\[
    f_T(S) = f(S \mid T) \defeq f(S \cup T) - f(T).
\]
If $f$ is submodular then $f_T$ is also submodular for any
$T \subseteq \cN$.
In this work submodular functions are assumed to be nonnegative, unless otherwise stated.
The \textit{multilinear extension} of $f$ is denoted by $F\colon[0,1]^{\cN} \rightarrow \mathbb{R}$ and defined as
\[
	F(\vx) \defeq \expect{f(\Rdist(\vx))} = \sum_{S \subseteq \cN} \prod_{e \in S} x_e \mspace{-9mu} \prod_{e' \in \cN \setminus S} \mspace{-9mu} (1 - x_{e'}) f(S), 
\]
where $\Rdist(\vx)$ contains each element $e \in \cN$ independently with probability $x_e$. The \textit{covering extension} of $f$ (defined but not named in \cite{vondrak2007submodularity}) is:
\[
	f^*(\vx) \defeq \min_{S \subseteq \cN} \bigg\{f(S) + \sum_{e \in \cN} f(\{e\} \mid S) \, x_e\bigg\}. 
\]

It is known (\cite[Lemma 3.8]{vondrak2007submodularity}) that $f^*(\vx) \geq F(\vx) \geq (1-e^{-1}) \cdot f^*(\vx)$ for all $\vx \in [0,1]^\cN$.

\paragraph{Offline Contention Resolution Schemes.}

Chekuri, Vondr\'ak, and Zenklusen \cite{DBLP:journals/siamcomp/ChekuriVZ14} introduced the following concept, which they named Contention Resolution Schemes.

\begin{definition}[CRS] For $b, c \in [0,1]$, a \emph{$(b, c)$-balanced contention resolution scheme (CRS)} for a polytope $P_\cF$ that relaxes a down-closed constraint $\cF$ is an algorithm $\pi$ that receives as input a point $\vx \in b \cdot P_{\cF}$ and a set $\Rset \subseteq \cN$ whose elements are sampled independently according to $\vx$. The algorithm outputs a subset $\pi_\vx(\Rset) \subseteq \Rset$ with the properties that
\begin{enumerate}
    \item $\pi_\vx(\Rset) \in \cF$ with probability $1$.
    \item $\probarg{e \in \pi_\vx(\Rset)}{\Rset \sim \Rdist(\vx)} \geq c \cdot x_e$ for every element $e\in \cN$.
\end{enumerate}
\end{definition}

A concept introduced by \cite{DBLP:conf/sosa/FuLTTWW022} is that of \emph{oblivious} CRS, which is a (possibly randomized) CRS obeying the property that for all $\Rset'$, the distribution of $\pi_\vx(R')$ is the same for all input vectors $\vx$.

\paragraph{Chain Decompositions for Matroids}

A height $\ell$ \emph{chain decomposition} $\cS$ for a matroid $\cM$ over ground set $\cN$ is a sequence $\cN = \cN_0 \supseteq \cN_1 \supseteq \ldots \supseteq N_\ell \supseteq N_{\ell+1} = \emptyset$. 

%% file: elided.tex
\section{Deferred Proofs} \label{app:proofs}

\correctnessprops*

\begin{proof}[Proof of \cref{lem:properties}]
Notice that all the properties stated by the lemma are trivially satisfied when our algorithm is initialized, i.e., when $S_i = \emptyset$ for every $i \geq 1$ and $\Iset_i = \emptyset$ for every $i \geq 0$. We prove below that no operation of \cref{alg:matroid-DCRS} can violate the properties of the lemma, assuming these properties held at all times prior to this operation. For the purpose of this proof, we view an execution of the procedure {\Reset} as single operation.

\paragraph{Property~\eqref{item:S_independent}: $S_i \in \cI$.}

$S_i$ is modified by \cref{alg:matroid-DCRS} in two places. One place is the {\Reset} procedure. If this procedure modifies $S_i$, it makes this set empty, and thus independent. The other place is \cref{line:add}. The condition on \cref{line:add_condition} of the algorithm guarantees that every element $e$ added to $S_i$ by \cref{line:add} does not belong to $\Span(S_i)$ prior to the addition, and therefore, adding it to $S_i$ keeps this set independent.

\paragraph{Property~\eqref{item:laminar}: $\Span(S_i) \subseteq \Span(S_{i - 1})$.}

Again, we notice that $S_i$ and $S_{i - 1}$ are modified by \cref{alg:matroid-DCRS} in two places. One place is the {\Reset} procedure. This procedure can either make $S_i$ or both $S_i$ and $S_{i-1}$ empty; in both cases, the property is maintained. The other place is \cref{line:add}, which can add elements to the sets $S_i$ and $S_{i - 1}$. Clearly, the addition of elements to $S_{i - 1}$ cannot cause the property to become violated. Addition of elements to $S_i$ is more problematic. However, \cref{line:add_condition} guarantees that every element $e$ added to $S_i$ belongs to $\Span(S_{i - 1})$, which means that the span of $S_i$ remains contained within the span of $S_{i - 1}$ after the addition.

\paragraph{Property~\eqref{item:I_independent}: $\Iset_{i - 1} \cup S_i \in \cI$.}

The set $\Iset_{i - 1}$ is modified in multiple places in \cref{alg:matroid-DCRS}. In most of these places the modification consists only of removal of elements, which cannot cause the property to be violated. The sole exception is \cref{line:add_T}, which adds elements to $\Iset_{i - 1}$. However, the condition on \cref{line:T_add_condition} of the algorithm explicitly ensures every element added by \cref{line:add_T} to $\Iset_i$ does not belong (prior to its addition) to $\Span(S_{i} \cup \Iset_{i-1})$. Thus its addition keeps $S_{i} \cup \Iset_{i - 1}$ independent.

Let us now consider the places in \cref{alg:matroid-DCRS} that can modify the sets $S_{i}$. There are two such places. One place is the {\Reset} procedure, but this procedure can only remove elements from $S_i$ (by making it empty), and thus, it cannot cause the property to be violated. The other place is \cref{line:add}. An addition of an element to $S_i$ by this line can transiently make $S_i \cup \Iset_{i-1}$ dependent, but we claim that \cref{line:T_remove} immediately restores the independence of $S_i \cup \Iset_{i - 1}$. Indeed, if \cref{line:add} makes $S_i \cup \Iset_{i - 1}$ dependent, then by Lemma~\ref{lem:circuit}, the set $S_i \cup \Iset_{i - 1}$ contains only one circuit $C$. The circuit $C$ cannot be fully contained by $S_i$ because Property~\eqref{item:S_independent} guarantees that $S_i$ is independent. Thus, there is an element $e' \in C \cap \ISet_{i - 1}$, and the removal of such an element by \cref{line:T_remove} removes the sole circuit of $S_i \cup \ISet_{i - 1}$, and thus, makes it independent.

\paragraph{Property~\eqref{item:I_inclusion}: $\Iset_{i-1} \subseteq \Span(S_{i-1}) \setminus \Span(S_{i})$.}

We have already proved that the property $\Iset_{i - 1} \cup S_{i} \in \cI$ is preserved, which in particular implies that $\Iset_{i - 1}$ does not contain any elements of $\Span(S_{i}) \setminus S_i$. The set $\Iset_{i - 1}$ also does not contain any elements of $S_i$ since \cref{alg:matroid-DCRS} explicitly removes from $\Iset_{i - 1}$ any element added to $S_i$. Thus, we concentrate on proving that $\Iset_{i-1}$ remains a subset of $\Span(S_{i-1})$.

The set $\Iset_{i-1}$ is modified in multiple places in \cref{alg:matroid-DCRS}. In most of these places the modification consists only of removal of elements, which cannot cause a violation for the inclusion $\Iset_{i-1} \subseteq \Span(S_{i-1})$. The sole exception is \cref{line:add_T}, which adds elements to $\Iset_{i-1}$. However, the condition on \cref{line:T_add_condition} of the algorithm explicitly ensures every element added by \cref{line:add_T} belongs to $\Span(S_{i-1})$.
We finally consider lines where \cref{alg:matroid-DCRS} may modify the sets $S_{i-1}$. There are two such places. One place is the {\Reset} procedure. Whenever this procedure modifies $S_{i-1}$ it sets both $S_{i-1}$ and $\Iset_{i-1}$ to be empty, and so the inclusion $\Iset_{i-1} \subseteq \Span(S_{i - 1})$ is preserved. The other place is \cref{line:add}, but this line adds an element to $S_{i-1}$, and thus also preserves the inclusion.
\end{proof}

%% file: lfold_new.tex
\section{DCRS for \texorpdfstring{$k$}{k}-Fold Union Matroids Constraints} 
\label{sec:kfold}

Following a recent line of work \cite{DBLP:conf/sosa/ChekuriSZ24,DBLP:conf/innovations/AlonGPRWW025} on (online) contention resolution schemes for a class of matroids called \emph{$k$-fold unions}, in this section we show that our approach from \Cref{sec:matroid} can be made to provide DCRSs with balance $(1-o_k(1))$ for this class.

\begin{definition}[$k$-fold matroid union]
\label{def:ell-fold-union}
    Given a matroid $\cQ = (\cN, \cI_\cQ)$ and an integer $k \geq 1$, the \emph{$k$-fold union} of $\cQ$ is the matroid 
    \[
        \cQ^k := \underbrace{\cQ \lor \cQ \lor \ldots \lor \cQ}_{k \text{ times}} = (\cN, \cI_\cQ^k),
    \] 
    where $\cI_\cQ^k := \{I_1 \cup I_2 \cup \ldots \cup I_k \mid I_1, I_2, \ldots, I_k \in \cI\}$. In other words, the independent sets of $\cQ^k$ are unions of $k$ independent sets of $\cQ$.
\end{definition}

A simple example: the $k$-uniform matroid is nothing other than the $k$-fold union of the $1$-uniform matroid, and hence $k$-fold union matroids are generalizations of $k$-uniform matroids.

As mentioned above, in this section we assume the matroid $\cM$ for which we design a DCRS is equal to $\cQ^k$ for some matroid $\cQ = (\cN,\cI_\cQ)$. Let us also denote by $\cQ_*$ the extended matroid $(\cN \times [k], \cI_*)$, where $\cI_*$ contains a set $S \subseteq 2^{\cN}\times [k]$ if
\[
	|S \cap (\{u\} \times [k])| \leq 1
	\quad
	\forall\; u \in \cN
	\qquad
	\text{and}
	\qquad
	\{u \in \cN \mid |S \cap (\{u\} \times [k])| = 1\} \in \cI_\cQ
	.
\]
In the following, we use $\Span^*$ and $\Span_\cQ$ to refer to the span with respect to the matroids $\cQ_*^k$ and $\cQ$, respectively. Similarly, we use $\rank^*$ and $\rank_\cQ$ to denote the rank functions of these matroids. The following is a useful observation about these definitions
\begin{observation} \label{obs:equivalence_multi_k}
For every set $S$ that independent in $\cQ$, it holds that
\[
	\Span^*(S \times [k]) = \Span_\cQ(S) \times [k]
	.
\]
\end{observation}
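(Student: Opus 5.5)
The plan is to compute ranks in $\cQ_*^k$ (the $k$-fold union of $\cQ_*$) directly and to compare $\rank^*(S \times [k])$ with $\rank^*((S \times [k]) + (u,j))$ for an arbitrary $(u,j) \in \cN \times [k]$. The element $(u,j)$ lies in $\Span^*(S \times [k])$ exactly when adding it does not increase the rank. So it suffices to show that the rank stays the same when $u \in \Span_\cQ(S)$ and increases by one when $u \notin \Span_\cQ(S)$.

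The first step is a general upper bound. For any $A \subseteq \cN \times [k]$, let $\mathrm{pr}(A) \subseteq \cN$ denote its projection to the first coordinate. Every independent set of $\cQ_*$ contained in $A$ has at most one copy of each $u$, and its projection is an independent set of $\cQ$ inside $\mathrm{pr}(A)$. Hence its size is at most $\rank_\cQ(\mathrm{pr}(A))$. A set independent in $\cQ_*^k$ is a union of $k$ such sets, so $\rank^*(A) \leq k \cdot \rank_\cQ(\mathrm{pr}(A))$.

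The second step is the matching lower bound for $S \times [k]$ itself. For each $j \in [k]$, the set $S \times \{j\}$ is independent in $\cQ_*$ because $S \in \cI_\cQ$. Their union $S \times [k]$ is therefore independent in $\cQ_*^k$, giving $\rank^*(S \times [k]) = k|S|$.

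The third step handles the case $u \in \Span_\cQ(S)$. Then $\mathrm{pr}((S \times [k]) + (u,j)) = S + u$, which has $\cQ$-rank $|S|$. The upper bound gives rank at most $k|S|$, so nothing is gained and $(u,j) \in \Span^*(S \times [k])$. This shows $\Span_\cQ(S) \times [k] \subseteq \Span^*(S \times [k])$.

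The fourth step handles the case $u \notin \Span_\cQ(S)$, so that $S + u \in \cI_\cQ$. Replace the $j$-th piece by $(S \times \{j\}) \cup \{(u,j)\}$. This piece still has one copy per element, and its projection $S + u$ is independent in $\cQ$, so it is independent in $\cQ_*$. Keeping the other pieces $S \times \{j'\}$ unchanged, the union $(S \times [k]) + (u,j)$ is independent in $\cQ_*^k$ and has size $k|S| + 1$. Hence $(u,j) \notin \Span^*(S \times [k])$, which gives the reverse inclusion.

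The only delicate point is the upper bound in the first step, namely that each piece of a $\cQ_*^k$-independent set contributes at most $\rank_\cQ$ of the projection. This is immediate from the definition of $\cI_*$. If $\Span^*$ is instead read as the span in $\cQ_*$ itself, the same argument works with $k$ replaced by $1$: the copies $(u,1), \dots, (u,k)$ are parallel elements, and $\rank_*(A) = \rank_\cQ(\mathrm{pr}(A))$.
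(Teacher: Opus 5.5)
Your proof is correct and follows essentially the paper's route: the inclusion $\Span^*(S \times [k]) \subseteq \Span_\cQ(S) \times [k]$ is proved by the same explicit partition into $(S \times \{j\}) + (u,j)$ and the sets $S \times \{j'\}$ for $j' \neq j$. For the reverse inclusion, the paper partitions the independent set $(S \times [k]) + (e,i)$ into $k$ sets independent in $\cQ_*$ and uses pigeonhole to see that the piece containing $(e,i)$ projects onto $S + e$; you instead bound each piece by $\rank_\cQ(S+u) = |S|$ via $\rank^*(A) \leq k \cdot \rank_\cQ(\mathrm{pr}(A))$, which is the same counting fact phrased as a rank bound.
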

\begin{proof}
Consider an arbitrary element $(e, i) \not \in \Span^*(S \times [k])$. By the choice of $(e, i)$, the set $S \times [k] + (e, i)$ is independent in $\cQ_*^k$, and thus, there is a way to partition $S \times [k] + (e, i)$ into $k$ sets $A_1, A_2, \dotsc, A_k$ that are independent in $\cQ_*$. Assume without loss of generality that $(e, i) \in A_1$. For every element $e' \in S$, since $S \times [k]$ contains $(e', i')$ for all $i' \in [k]$, the definition of $\cQ_*$ guarantees that each one of the sets $A_i$ contains $(e', i')$ for exactly one value $i' \in [k]$. In particular, this is true for $A_1$. Thus,
\[
	S + e
	=
	\{e' \in \cN \mid (e', i') \in A_1\}
	\in
	\cI_\cQ
	,
\]
where the membership in $\cI_\cQ$ follows from the fact that $A_1$ is independent in $\cQ_*$. Thus, $e \not \in \Span_\cQ(S)$, and $(e, i) \not \in \Span_\cQ(S) \times [k]$. Since this is true for an arbitrary element $(e, i) \not \in \Span^*(S \times [k])$, it implies $\Span^*(S \times [k]) \supseteq \Span_\cQ(S) \times [k]$.

To see that inclusion holds also in the reverse direction, notice that for every element $e \not \in \Span_\cQ(S)$, we have that $S + e \in \cI_\cQ$. This implies that $S \times [k] + (e, i)$ is independent in $\cQ_*^k$ because $S \times [k] + (e, i)$ can be partitioned into the $k$ sets $S \times [1] + (e, i), S \times [2], S \times [3], \dotsc, S \times [k]$, and all these sets belong to $\cQ_*$.
\end{proof}

Using the above-defined notation, we can now state, as \cref{alg:lfold-DCRS}, our DCRS for $k$-Fold Union matroids. Compared to \cref{alg:matroid-DCRS}, \cref{alg:lfold-DCRS} has an extra parameter $d \in (0, 1)$. The algorithm denotes by $R_d^t$ a subset of $R^t$ that is a $d\alpha$-time correlated sample of the vector $d \cdot \vx$. Notice that such a sample can be obtained from $\{\Rset^t\}$ in multiple ways. For example, one can preselect a set $D \sim \Rdist(d \cdot \characteristic_\cN)$, and then set $R_d^t = R^t \cap D$. Alternatively, one can add every element that is added to $R^t$ also to $R_d^t$ with probability $d$, and remove from $R_d^t$ every element that is removed from $R^t$. For clarity, we highlight the most important changes from \cref{alg:matroid-DCRS}.
Note that we have also replaced $R$ and $\Span$ with $R_d$ and $\Span_\cQ$ respectively and modified some of the parameter ranges.

\begin{algorithm}[h]
	\caption{\textsc{$k$-foldUnionDCRS}}
	\label{alg:lfold-DCRS}
	\begin{algorithmic}[1]
        \Require matroid $\cM = \cQ^k$, $b \in (0, 1)$, $\tau \in (bk, k)$, $r \in (1, \tau / (bk))$, online sequences $\{\vx^t\}$ and $\{\Rset^t\}$
	      \State{$\sigma \gets$ arbitrary ordering of $\cN$.}
				\State{\tightcolorbox{lightgray}{$\ell \gets \lceil \log_{\tau / (rbk)} \rank(\cM)\rceil$.}}
				\State{Initialize $S_i \gets \emptyset$, $\Iset_i \gets \emptyset$ for all $i\in[\ell]$ and $\Iset_0 \gets \emptyset$.}
        \For{\textbf{each} $t = 1, 2, \ldots, T$}
    		\State $\vx \gets \vx^t \in b \cdot P_{\cM}$ and $\Rset_d \gets \Rset^t_d$ arrive online.
    		\For{$i = 1, 2, \ldots, \ell$ \label{line:fold_main_S_loop}}
                \If{$|S_i| > r \cdot \|\vx|_{\Span_\cQ(S_{i - 1})}\|_1 / \tau$\label{line:fold_early_reset_condition}}
                   \State{\Reset($i$).}
                \EndIf\smash{\hspace{5mm}{\color{lightgray}\rule[-26.7mm]{137.8mm}{16.3mm}}}
                \MultilineWhile{
                \parbox[t]{0.7\linewidth}{$\exists \: e \in \Span_\cQ(S_{i - 1}) \setminus \Span_\cQ(S_i)$ obeying \\ {\tightcolorbox{lightgray}{$\bE[\rank^*(e \times [k] \mid (\Rdist(\vx|_{\Span_\cQ(S_{i - 1})}) \times [1]) \cup (S_{i} \times [k]))] \leq k - \tau$}~~\algorithmicdo} \label{line:fold_add_condition}}}
										\While{there is $e' \in \Iset_{i - 1}$ such that $\rank^*((e', 1) \mid ((\Iset_{i - 1} - e') \times [1]) \cup ((S_i + e) \times [k])) = 0$\label{line:fold_T_remove_condition}}
                    \State Remove $e'$ from $\Iset_{i - 1}$.\label{line:fold_T_remove}
                    \EndWhile
                    \State {Add $e$ to $S_{i}$.\label{line:fold_add}}
                    \If{$|S_i| > r \cdot \|\vx|_{\Span_\cQ(S_{i - 1})}\|_1 / \tau$\label{line:fold_reset_condition}}
                        \State \Reset($i$). 
                    \EndIf
                \EndWhile
            \EndFor
            \For{$i = 0, 1, \ldots, \ell$ \label{line:fold_main_T_loop}}
            	\State 	Update $\Iset_i \gets \Iset_i \cap R_d$.\label{line:fold_non-R_remove}
                \While{there exists \tightcolorbox{lightgray}{$e \in R_d \cap \Span_\cQ(S_{i})$ such that $(e, 1) \not \in \Span^*((S_{i+1} \times [k]) \cup (\Iset_i \times [1]))$}\label{line:fold_T_add_condition}}
                	\State Add the $\sigma$-first such $e$ to $\Iset_i$.\label{line:fold_add_T}
                \EndWhile
            \EndFor
            \Return $\cup_{i=0}^\ell \Iset_i$
		\EndFor 
            \Statex
        \Procedure{Reset}{$i$}
            \For{Level $j=i,\ldots, \ell$}
                \State Set $S_j \gets \emptyset$.
                \State Set $\Iset_{j} \gets \emptyset$.
            \EndFor
        \EndProcedure
	\end{algorithmic}
\end{algorithm}

The analysis of \cref{alg:lfold-DCRS} is based on the following lemma, which is a counterpart of \cref{lem:growth_bound}.

\begin{lemma} 
\label{lem:fold_growth_bound}
    Given a vector $\vy \in [0, 1]^\cN$, consider a process that constructs a set $A$ by starting with $A = \emptyset$, and then adding elements to $A$ one by one. If every element $e \in \cN$ added to $A$ by this process obeys both $e \not \in \Span_\cQ(A_e)$ and $\bE[\rank^*(e \times [k] \mid (\Rdist(\vy) \times [1]) \cup (A_e \times [k]))] \leq k - \tau$, where $A_e$ is the set $A$ before $e$ is added to it, then the final size of the set $A$ is at most $\|\vy\|_1 / \tau$.
\end{lemma}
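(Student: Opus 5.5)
We mirror the proof of \cref{lem:growth_bound}, using as potential the expected conditional rank in the extended matroid $\cQ_*^k$:
\[
\Phi(A) \defeq \bE\big[\rank^*\big((\Rdist(\vy) \times [1]) \mid (A \times [k])\big)\big].
\]
At $A = \emptyset$, $\Phi(\emptyset) \leq \bE[\abs{\Rdist(\vy)}] = \norm{\vy}_1$, since each element of $\Rdist(\vy)\times[1]$ adds at most one to the rank. Since $\Phi$ is always nonnegative, it suffices to show that every addition of an element $e$ to $A$ decreases $\Phi$ by at least $\tau$. Rearranging $\norm{\vy}_1 - \tau\abs{A} \geq 0$ then gives the lemma.

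To compute the decrease, write $R = \Rdist(\vy)\times[1]$ and $B = A_e\times[k]$. By the chain rule for rank,
\[
\rank^*(R \mid B) - \rank^*(R \mid B \cup (e\times[k])) = \rank^*(e\times[k] \mid B) - \rank^*(e\times[k] \mid R\cup B).
\]
Taking expectations, the decrease equals $\rank^*(e\times[k] \mid A_e\times[k]) - \bE[\rank^*(e\times[k]\mid R\cup B)]$. By hypothesis, the subtracted term is at most $k-\tau$, so it remains to show $\rank^*(e\times[k] \mid A_e\times[k]) = k$.

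This is where \cref{obs:equivalence_multi_k} enters. By induction, $A_e$ is independent in $\cQ$, because every added element satisfies $e\notin\Span_\cQ(A_e)$. Hence $A_e + e \in \cI_\cQ$, and $(A_e+e)\times[k]$ is independent in $\cQ_*^k$: partition it into the $k$ sets $(A_e+e)\times\{j\}$, each of which lies in $\cI_*$. Therefore $\rank^*((A_e+e)\times[k]) = k\abs{A_e}+k$ and $\rank^*(A_e\times[k]) = k\abs{A_e}$, so the marginal is exactly $k$. (Independence of $A_e \times [k]$ in $\cQ_*^k$ follows by the same partition argument.) The decrease is therefore at least $k-(k-\tau)=\tau$.

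The only real point of care is this last step: the $k$ copies of $e$ must each contribute a full unit of rank on top of $A_e\times[k]$. If the direct partition argument had trouble, the fallback is to invoke \cref{obs:equivalence_multi_k}, which gives $(e,j)\notin\Span^*(A_e\times[k])$, and then add the copies one at a time using the partition $(A_e\times\{j\}) + (e,j)$. The rest is identical to \cref{lem:growth_bound}.
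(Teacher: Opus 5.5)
Your proof is correct and follows the paper's argument essentially verbatim. It uses the same potential $\bE[\rank^*(\Rdist(\vy)\times[1] \mid A\times[k])]$ and the same chain-rule computation of the decrease. It also relies on the same fact that $A_e\times[k]$ and $(A_e+e)\times[k]$ are independent in $\cQ_*^k$, so the marginal rank of $e\times[k]$ is exactly $k$. Your direct partition into the sets $(A_e+e)\times\{j\}$ is all that step needs, so the fallback you mention is unnecessary.
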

\begin{proof}
Let us study the evolution of the expectation $\bE[\rank^*(\Rdist(\vy) \times [1] \mid A \times [k])]$ during the execution of the process. Notice that the value of this expectation at the beginning of the process, when $A = \emptyset$, is $\bE[\rank^*(\Rdist(\vy) \times [1])] \leq \bE[|\Rdist(\vy)|] = \|\vy\|_1$. Furthermore, we claim that the expectation $\bE[\rank^*(\Rdist(\vy) \mid A \times [k])]$ decreases by at least $k\tau$ whenever an element is added to $A$. To see that, consider an arbitrary element $e$ added to $A$, and recall that the value of the set $A$ before the addition of $e$ is $A_e$. Thus, the decrease in $\bE[\rank^*(\Rdist(\vy) \times [1] \mid A \times [k])]$ following the addition of $e$ is
\begin{align*}
	\bE[\rank^*(\Rdist(\vy) \times [1] \mid {}&A_e \times [k])] - \bE[\rank^*(\Rdist(\vy) \times [1] \mid (A_e + e) \times [k])]\\
	={} &
	\rank^*((A_e + e) \times [k]) - \rank^*(A_e \times [k]) \\&+ \bE[\rank^*((\Rdist(\vy) \times [1]) \cup (A_e \times [k])) - \rank^*((\Rdist(\vy) \times [1]) \cup (A_e + e) \times [k])]\\
	={} &
	k \cdot |A_e + e| - k \cdot |A_e| - \bE[\rank^*(e \times [k] \mid \Rdist(\vy) \times [1]) \cup (A_e \times [k])]\\
	\geq{} &
	k - (k - \tau)
	=
	\tau
	,
\end{align*}
where the second equality holds since both $A_e$ and $A_e + e$ are independent in $\cQ$, which implies that their Cartesian products with $[k]$ are independent in $\cQ_*^k$.

To summarize, $\bE[\rank^*(\Rdist(\vy) \times [1] \mid A \times [k])]$ is at most $\|\vy\|_1$ at the beginning of the process, and decreases by at least $\tau$ whenever an element is added to $A$. Therefore, its final value is upper bounded by $\|\vy\|_1 - \tau|A|$. Since the expectation $\bE[\rank^*(\Rdist(\vy) \times [1] \mid A \times [1])]$ is always non-negative, this upper bound must be also non-negative, i.e., it must hold that $\|\vy\|_1 - \tau|A| \geq 0$. The lemma now follows by rearranging this inequality.
\end{proof}

Using the last lemma, we can now prove that \cref{alg:lfold-DCRS} always terminates. The proof of the next corollary is similar to the proof of \cref{lem:terminate}, but is somewhat more involved.

\begin{corollary} \label{cor:fold_terminate}
\cref{alg:lfold-DCRS} always terminates.
\end{corollary}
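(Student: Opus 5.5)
We would mirror the proof of \cref{lem:terminate}, handling the three loops of \cref{alg:lfold-DCRS} separately. The outer \textbf{for} loops are bounded, running over $t \in [T]$ and $i \le \ell$, so only the inner \textbf{while} loops need attention. The loop on \cref{line:fold_T_add_condition} adds each element of $\cN$ to $\Iset_i$ at most once per iteration $i$, because elements are never removed inside it. The loop on \cref{line:fold_T_remove_condition} strictly shrinks $\Iset_{i-1}$ at every step, so it runs at most $n$ times per invocation. It therefore remains to show that the loop on \cref{line:fold_add_condition} terminates for each level $i$.

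Fix a level $i$ and one execution of that loop, and split into two cases according to whether $\Reset(i)$ is invoked during it.

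\emph{Case 1: $\Reset(i)$ is invoked.} Right after the last such invocation, $S_i = \emptyset$. From then on, elements are only added to $S_i$, via \cref{line:fold_add}. During this single iteration $i$ the set $S_{i-1}$ and $\vx$ are fixed, so we apply \cref{lem:fold_growth_bound} with $\vy = \vx|_{\Span_\cQ(S_{i-1})}$ and $A = S_i$. Its hypotheses hold: the loop condition guarantees $e \notin \Span_\cQ(A_e)$ and the required expected-rank inequality. Hence $|S_i| \le \|\vy\|_1/\tau$, so only finitely many additions can occur. Indeed, at most $\|\vy\|_1/\tau + 1$ additions occur before either the loop stops or \cref{line:fold_reset_condition} fires again, since $r>1$ means the reset threshold $r\|\vy\|_1/\tau$ is reached no earlier than this bound.

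\emph{Case 2: $\Reset(i)$ is never invoked.} We cannot start from the empty set here, because $S_i$ was built in earlier time steps with respect to other vectors. The way around this is to use a bound that does not depend on history. Every addition on \cref{line:fold_add} requires $e \notin \Span_\cQ(S_i)$, and we need to check that $S_i$ stays independent in $\cQ$ (the analogue of \cref{lem:properties}\eqref{item:S_independent}). Given that, each addition increases $|S_i|$ by one with $|S_i| \le \rank_\cQ(\cN)$, so there are at most $\rank_\cQ(\cN)$ additions. This argument works in both cases and is simpler than tracking $d(i)$ as in \cref{lem:terminate}; the growth lemma is only needed if one wants a quantitative bound.

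The main obstacle is the subtlety relative to \cref{lem:terminate}: \cref{line:fold_T_remove_condition} now sits inside the addition loop and removes elements from $\Iset_{i-1}$ while $S_i$ is changing. We must make sure these removals do not feed back into the loop condition on \cref{line:fold_add_condition}. That condition depends only on $S_{i-1}$, $S_i$ and $\vx$, and not on $\Iset_{i-1}$ or $R_d$, so there is no interaction. One more check is needed: \cref{line:fold_add} preserves independence of $S_i$ in $\cQ$. This follows because the loop condition explicitly requires $e \notin \Span_\cQ(S_i)$.
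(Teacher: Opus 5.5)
Your handling of the two loops over the sets $\Iset_i$, and of the case with no reset, is fine. The argument for the loop on \cref{line:fold_add_condition} has a genuine gap, though: you never show that $\Reset(i)$ is invoked only finitely many times within one iteration $i$.

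The only way this loop can fail to terminate is by cycling. Write $\vy = \vx|_{\Span_\cQ(S_{i-1})}$. Then $S_i$ would grow past the threshold $r\norm{\vy}_1/\tau$, get reset, grow again, get reset again, and so on. Since $\vx$ and $S_{i-1}$ are fixed during the iteration, a deterministic choice of $e$ would reproduce the same growth from $\emptyset$ after every reset. The rank bound $|S_i| \le \rank_\cQ(\cN)$ does not exclude this, because it only limits the additions between two consecutive resets.

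Your Case~1 starts ``right after the last such invocation,'' which presupposes exactly the finiteness that has to be proved. Your ``Indeed'' sentence even allows \cref{line:fold_reset_condition} to fire again. For the same reason, your closing claim is wrong: the rank argument does not work in both cases, and \cref{lem:fold_growth_bound} is needed for more than a quantitative bound. The growth lemma is precisely what bounds the number of resets.

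The repair uses ingredients you already have, applied after the \emph{first} reset rather than the last. Immediately after any invocation of $\Reset(i)$, the set $S_i$ is empty and $\vy$ stays fixed for the rest of the iteration. So \cref{lem:fold_growth_bound} gives $|S_i| \le \norm{\vy}_1/\tau \le r\norm{\vy}_1/\tau$ from then on, and the strict inequality required by \cref{line:fold_reset_condition} can never hold again. Hence $\Reset(i)$ is invoked at most once per iteration. This is exactly the paper's argument, phrased there as a contradiction between two consecutive invocations. Combined with your rank bound on the additions between resets, it shows the loop terminates.
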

\begin{proof}
\cref{alg:lfold-DCRS} includes only three loops that do not repeat a pre-determined number of times. One loop starts on \cref{line:fold_T_remove_condition}. Each iteration of this loop removes an element from $\Iset_i$, and since $\Iset_i$ is a finite set, this loop can only repeat a finite number of times. The next loop starts on \cref{line:fold_T_add_condition}. Each iteration of this loop picks an element $e \not \in \Iset_i$ and adds it to $\Iset_i$, which increases the size of $\Iset_i$. Since the size of each set $\Iset_i$ cannot exceed the size of the ground set, this loop can only repeat a finite number of times. The third loop starts on \cref{line:fold_add_condition}. Each iteration of this loop picks an element $e \not \in \Span_\cQ(S_i) \supseteq S_i$ and adds it to $S_i$, which increases the size of $S_i$. Each iteration of this loop can also invoke $\Reset(i)$, which decreases the size of $S_i$. However, we prove below that this cannot happen more than once for every value of $i$, which proves that this loop must also terminate after a finite number of iterations since the size of each set $S_i$ also cannot exceed the size of the ground set.

Assume towards a contradiction that \Reset$(i)$ is invoked at least twice by a single execution of \cref{alg:lfold-DCRS}. Let us denote by $\Iset_1$ and $\Iset_2$ two consecutive such invocations, and let $\hat{S}_{i}$ denote the value of the set $S_{i}$ just before $\Iset_2$. Immediately after invocation $\Iset_1$, the set $S_i$ is empty. Consider now an arbitrary element $e$ added by \cref{alg:lfold-DCRS} between this point and the invocation $\Iset_2$, and let $S_{i, e}$ denote the set $S_i$ just before the addition of $e$. By the condition of \cref{line:fold_add_condition} of \cref{alg:lfold-DCRS}, the element $e$ obeys the following two properties.
\begin{itemize}
	\item $e \not \in \Span_\cQ(S_{i, e})$, and
	\item $\bE[\rank^*(e \times [k] \mid (\Rdist(\vx|_{\Span_\cQ(S_{i - 1})}) \times [1]) \cup (S_{i} \times [k]))] \leq k - \tau$.
\end{itemize}
The above properties show together that the process of the growth of $S_i$ between the invocations $\Iset_1$ and $\Iset_2$ has all the properties required from the process by \cref{lem:fold_growth_bound} with respect to the vector $\vy = \vx|_{\Span_\cQ(S_{i - 1})}$. Thus, by this lemma, the size of $S_i$ just before the invocation $\Iset_2$ of \Reset($i$) is upper bounded by $\|\vx|_{\Span_\cQ(S_{i - 1})}\|_1/\tau$. However, this contradicts the fact that \cref{alg:lfold-DCRS} invoked \Reset($i$) at this point because the condition on \cref{line:fold_reset_condition} of the algorithm implies that \Reset($i$) is invoked only when the size of $S_i$ is strictly larger than $r \cdot \|\vx|_{\Span_\cQ(S_{i - 1})}\|_1/\tau \geq \|\vx|_{\Span_\cQ(S_{i - 1})}\|_1/\tau$.
\end{proof}

Next, we need a counterpart for \cref{lem:properties}.

\begin{lemma} \label{lem:fold_properties}
\cref{alg:lfold-DCRS} maintains the invariant that, for every $i \in [\ell + 1]$,
\begin{compactenum}[\bf (a)]
	\item $S_i \in \cI_\cQ$,\label{item:fold_S_independent}
	\item $\Span_\cQ(S_i) \subseteq \Span_\cQ(S_{i - 1})$, \label{item:fold_laminar}
	\item $\rank^*(\Iset_{i - 1} \times [1] \mid S_{i} \times [k]) = |\Iset_{i - 1}|$, which implies that $(\Iset_{i - 1} \times [1]) \cup (S_{i} \times [k])$ is independent in $\cQ_*^k$, and \label{item:fold_I_independent}
	\item $\Iset_{i - 1} \times [1] \subseteq \Span^*(S_{i - 1} \times [k]) \setminus \Span^*(S_{i} \times [k])$. \label{item:fold_I_inclusion}
\end{compactenum}
\end{lemma}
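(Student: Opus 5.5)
We mirror the proof of \cref{lem:properties}: all four properties hold trivially at initialization (all $S_i$ and $\Iset_i$ empty), and we check that no single operation of \cref{alg:lfold-DCRS} can break them, assuming they held before. We treat each invocation of \Reset as one operation. The operations to consider are \Reset (which empties $S_j$ and $\Iset_j$ for all $j \geq i$), the removals on \cref{line:fold_T_remove} and \cref{line:fold_non-R_remove}, the additions to $S_i$ on \cref{line:fold_add}, and the additions to $\Iset_i$ on \cref{line:fold_add_T}.

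Properties~\eqref{item:fold_S_independent} and~\eqref{item:fold_laminar} are handled exactly as before. \Reset only empties sets, and when it empties $S_{i-1}$ it also empties $S_i$. \cref{line:fold_add} adds an element $e \in \Span_\cQ(S_{i-1}) \setminus \Span_\cQ(S_i)$, which keeps $S_i$ independent in $\cQ$ and keeps its span inside $\Span_\cQ(S_{i-1})$.

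For Property~\eqref{item:fold_I_independent}, removals from $\Iset_{i-1}$ cannot hurt it. The reason is that $\rank^*(\cdot \times [1] \mid S_i \times [k])$ is submodular, so if the full set $\Iset_{i-1}\times[1]$ is independent in the contraction, every subset is as well. An addition on \cref{line:fold_add_T} is guarded by $(e,1) \notin \Span^*((S_i \times [k]) \cup (\Iset_{i-1}\times[1]))$, so the marginal rank is $1$ and the identity is preserved.

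The delicate step, and the one I expect to be the main obstacle, is the addition of $e$ to $S_i$ on \cref{line:fold_add}. Here the inner loop on \cref{line:fold_T_remove_condition} first deletes elements $e'$ of $\Iset_{i-1}$ until every $(e',1)$ has positive marginal rank with respect to $((\Iset_{i-1}-e')\times[1]) \cup ((S_i+e)\times[k])$. When the loop stops, no element of $\Iset_{i-1}\times[1]$ lies in the span of the rest together with $(S_i+e)\times[k]$. The set $(S_i+e)\times[k]$ is independent in $\cQ_*^k$ by Property~\eqref{item:fold_S_independent} and the product argument from \cref{obs:equivalence_multi_k}. It then remains to conclude that $(\Iset_{i-1}\times[1]) \cup ((S_i+e)\times[k])$ is independent. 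For this we use the standard matroid fact that if $B$ is independent and no element of $A$ is spanned by $(A - a) \cup B$, then $A \cup B$ is independent: otherwise a circuit exists, it must meet $A$, and any $a$ in that circuit is spanned by the rest. Independence gives the rank identity $\rank^*(\Iset_{i-1}\times[1] \mid (S_i+e)\times[k]) = |\Iset_{i-1}|$. Note that $\Iset_{i-1}$ never contains two copies of the same element, so the identification between $\Iset_{i-1}$ and $\Iset_{i-1}\times[1]$ is harmless.

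For Property~\eqref{item:fold_I_inclusion}, avoidance of $\Span^*(S_i\times[k])$ follows from Property~\eqref{item:fold_I_independent}. Independence of $(\Iset_{i-1}\times[1])\cup(S_i\times[k])$ forbids any $(e',1)$ in $\Span^*(S_i \times [k])$, including elements of $S_i$ itself, since $(e',1)$ would duplicate a member of $S_i \times [k]$. Membership in $\Span^*(S_{i-1}\times[k])$ holds for the following reasons:
\begin{itemize}
\item additions on \cref{line:fold_add_T} require $e \in \Span_\cQ(S_{i-1})$, which by \cref{obs:equivalence_multi_k} gives $(e,1) \in \Span^*(S_{i-1}\times[k])$;
\item additions to $S_{i-1}$ only enlarge its span;
\item \Reset empties $\Iset_{i-1}$ whenever it empties $S_{i-1}$.
\end{itemize}
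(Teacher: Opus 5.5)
Your proposal is correct and follows essentially the same operation-by-operation invariant argument as the paper. The only substantive difference is the key step of adding $e$ to $S_i$. You use a circuit argument: a circuit of $(\Iset_{i-1}\times[1])\cup((S_i+e)\times[k])$ would have to contain some $(e',1)$ spanned by the rest. The paper instead telescopes $\rank^*(\Iset_{i-1}\times[1]\mid(S_i+e)\times[k])$ element by element and lower-bounds each term via submodularity and the termination condition of the loop on \cref{line:fold_T_remove_condition}. These are two standard proofs of the same fact.

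Two small points to tidy up. First, you never say why \Reset($i$), which empties $S_i$ but not $\Iset_{i-1}$, preserves (c); it does, because shrinking the conditioning set cannot lower the marginal rank. Second, in (d) it is the rank identity of (c), not mere independence of the union, that keeps elements of $S_i$ out of $\Iset_{i-1}$, since such an overlap would not create a dependent set.
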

\begin{proof}
Notice that all the properties stated by the lemma are trivially satisfied when our algorithm is initialized, i.e., when $S_i = \emptyset$ for every $i \in [\ell]$ and $\Iset_i = \emptyset$ for every $i \in \{0\} \cup [\ell]$. We show below that no operation of \cref{alg:matroid-DCRS} can violate the properties of the lemma, assuming these properties held at all times prior to this operation, and thus, the properties are maintained throughout the execution of \cref{alg:lfold-DCRS}. For the purpose of this proof, we view an execution of the procedure {\Reset} as single operation.

The proofs that Properties~\eqref{item:fold_S_independent} and~\eqref{item:fold_laminar} are not violated by any operation are almost identical to the arguments used in the proof of \cref{lem:properties} to show that \cref{alg:matroid-DCRS} maintains Properties~\eqref{item:S_independent} and~\eqref{item:laminar} of \cref{lem:properties}. The sole modification that needs to be done in both arguments is that the references to $\Span$ should be replaced with references to $\Span_Q$. The proofs that Properties~\eqref{item:fold_I_independent} and~\eqref{item:fold_I_inclusion} are not violated by any operation of \cref{alg:lfold-DCRS} are given below.

\paragraph{Property~\eqref{item:fold_I_independent}: $\rank^*(\Iset_{i - 1} \times [1] \mid S_{i} \times [k]) = |\Iset_{i - 1}|$.}

The set $\Iset_{i-1}$ is modified in multiple places in \cref{alg:lfold-DCRS}. In most of these places the modification consists only of removal of elements, which cannot cause the property to be violated. The sole exception for that is \cref{line:fold_add_T} of \cref{alg:lfold-DCRS}. This line adds elements to $\Iset_{i-1}$. However, the condition on \cref{line:fold_T_add_condition} of the algorithm makes sure that for every element $e$ added by \cref{line:fold_add_T} to $\Iset_{i-1}$, the pair $(e, 1)$ is not spanned (prior to the addition) by $(S_i \times [k]) \cup (\Iset_{i-1} \times [1])$, and therefore, the addition of $e$ to $\Iset_i$ increases $\rank^*(\Iset_{i - 1} \times [1] \mid S_{i} \times [k])$ by $1$, and thus, preserves the property.

Let us now consider the places in \cref{alg:lfold-DCRS} that can modify the sets $S_i$. There are two such places. One place is the {\Reset} procedure, which only removes elements from $S_i$, and thus, cannot make the property to become violated. The other place is \cref{line:fold_add} of \cref{alg:lfold-DCRS}. Consider the situation when this line adds an element $e$ to $S_{i}$ in a certain iteration of loop starting on \cref{line:fold_add_condition} of \cref{alg:lfold-DCRS}, and let $S_i$ and $\Iset_{i-1}$ denote their values just before this addition. Since we already proved that $S_i$ always belongs to $\cI_Q$, including after the addition of $e$, we get $\rank^*((S_i + e) \times [k]) = |(S_i + e) \times [k]|$. Let us now denote by $v_1, v_2, \ldots, v_{|\Iset_{i-1}|}$ the elements of $\Iset_{i-1}$ in an arbitrary order. Then,
\begin{align*}
	\rank^*(\Iset_{i-1} \times [1] \mid (S_{i} &{}+ e) \times [k])\\
	={} &
	\sum_{j = 1}^{|\Iset_i{i-1}|} \rank^*((v_j, 1) \mid ((S_{i} + e) \times [k]) \cup (\{v_1, v_2, \dotsc, v_{j - 1}\} \times [1]))\\
	\geq{} &
	\sum_{j = 1}^{|\Iset_{i-1}|} \rank^*((v_j, 1) \mid ((S_{i} + e) \times [k]) \cup ((\Iset_{i-1} - v_j) \times [1]))
	=
	\sum_{j = 1}^{|\Iset_{i-1}|} 1
	=
	|\Iset_{i-1}|
	,
\end{align*}
where the inequality follows from the submodularity of $\rank^*$, and the second equality holds since the loop on \cref{line:fold_T_remove_condition} terminated before the addition of $e$ to $S_i$. 

\paragraph{Property~\eqref{item:fold_I_inclusion}: $\Iset_{i-1} \times [1] \subseteq \Span^*(S_{i - 1} \times [k]) \setminus \Span^*(S_{i} \times [k])$.}

We have already proved the property that $\rank^*(\Iset_{i-1} \times [1] \mid S_i \times [k]) = |I_{i - 1}|$, which in particular implies that $\Iset_{i-1} \times [1]$ does not contain any elements of $\Span^*(S_i \times [k])$. Thus, we concentrate on proving that $\Iset_{i - 1} \times [1]$ remains a subset of $\Span(S_{i - 1} \times [k])$. The set $\Iset_{i-1}$ is modified in multiple places in \Cref{alg:lfold-DCRS}. In most of these places the modification consists only of removal of elements, which cannot cause a violation for the inclusion $\Iset_{i - 1} \times [1] \subseteq \Span^*(S_{i - 1} \times [k])$. The sole exception for that is \cref{line:fold_add_T} of \cref{alg:lfold-DCRS}. This line adds elements to $\Iset_{i - 1}$. However, the condition on \cref{line:fold_T_add_condition} of the algorithm makes sure that an element $e$ is added by \cref{line:fold_add_T} to $\Iset_{i - 1}$ only when $e \in \Span_\cQ(S_{i - 1})$, which by \Cref{obs:equivalence_multi_k}, implies $(e, 1) \in \Span^*(S_{i - 1} \times [k])$.

Let us now consider the places in \cref{alg:lfold-DCRS} that can modify the set $S_{i - 1}$. There are two such places. One place is the {\Reset} procedure. Notice that whenever this procedure modifies $S_{i - 1}$, it sets both $S_{i - 1}$ and $\Iset_{i - 1}$ to be empty, which makes the inclusion $\Iset_{i-1} \times [1] \subseteq \Span^*(S_{i - 1} \times [k])$ trivial. The other place is \cref{line:fold_add} of \cref{alg:lfold-DCRS}, but this line adds an element to $S_{i - 1}$, and thus, cannot make the inclusion $\Iset_{i - 1} \times [1] \subseteq \Span^*(S_{i - 1} \times [k])$ stop holding.
\end{proof}

\begin{corollary} \label{cor:fold_feasible}
The set $\Iset = \cup_{i = 0}^{\ell} \Iset_i$ returned by \cref{alg:lfold-DCRS} is always an independent in $\cM$ subset of $R^t_d$ (and thus, also of $R^t$).
\end{corollary}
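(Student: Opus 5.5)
We follow the proof of \cref{cor:feasible} almost verbatim, but carry out the rank computation in the extended matroid $\cQ_*^k$ and only then translate back to $\cM = \cQ^k$.

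\emph{Subset of $R_d^t$.} Exactly as before, Parts~\eqref{item:fold_laminar} and~\eqref{item:fold_I_inclusion} of \cref{lem:fold_properties} show that every nonempty $\Iset_i$ has $i \le \ell$, so the loop on \cref{line:fold_main_T_loop} visits it. \cref{line:fold_non-R_remove} intersects it with $R_d$, and \cref{line:fold_add_T} only adds elements of $R_d$. Hence $\Iset \subseteq R_d^t \subseteq R^t$.

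\emph{Independence.} We show that $\Iset \times [1]$ is independent in $\cQ_*^k$. Fix $i$ and $i' > i$. Part~\eqref{item:fold_I_inclusion} gives $\Iset_{i'} \times [1] \subseteq \Span^*(S_{i'} \times [k])$. Part~\eqref{item:fold_laminar} gives $\Span_\cQ(S_{i'}) \subseteq \Span_\cQ(S_{i+1})$, and by \cref{obs:equivalence_multi_k} (both $S_{i'}$ and $S_{i+1}$ are independent in $\cQ$ by Part~\eqref{item:fold_S_independent}) this yields $\Span^*(S_{i'} \times [k]) \subseteq \Span^*(S_{i+1}\times[k])$. So $\bigcup_{i'>i} \Iset_{i'} \times [1] \subseteq \Span^*(S_{i+1}\times[k])$, and submodularity of $\rank^*$ gives
\[
\rank^*\Big(\Iset_i\times[1] \;\Big|\; \textstyle\bigcup_{i'>i}\Iset_{i'}\times[1]\Big) \ge \rank^*(\Iset_i \times [1] \mid S_{i+1}\times[k]) = |\Iset_i|,
\]
where the equality is Part~\eqref{item:fold_I_independent}. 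The sets $\Iset_i$ are disjoint by Part~\eqref{item:fold_I_inclusion}, so telescoping over $i$ gives $\rank^*(\Iset\times[1]) \ge |\Iset|$. Thus $\Iset\times[1]$ is independent in $\cQ_*^k$.

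\emph{Translation to $\cM$.} An independent set of $\cQ_*^k$ is a union $A_1\cup\dots\cup A_k$ of sets independent in $\cQ_*$. Each projection $\{u : (u,j)\in A_j \text{ for some } j\}$ of a single $A_j$ lies in $\cI_\cQ$. Since every element of $\Iset\times[1]$ has a distinct first coordinate, projecting the $A_j$ to $\cN$ writes $\Iset$ as a union of $k$ sets in $\cI_\cQ$. Hence $\Iset \in \cI_\cQ^k$, i.e., $\Iset$ is independent in $\cM$.

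The main obstacle is this last step, where one must check carefully that the definition of $\cQ_*$ makes the projection of each $A_j$ independent in $\cQ$. A secondary obstacle is making sure \cref{obs:equivalence_multi_k} is applied only to $\cQ$-independent sets, which Part~\eqref{item:fold_S_independent} guarantees.
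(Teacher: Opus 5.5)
Your proposal is correct and follows the paper's proof step for step: the same $R_d$-membership argument, the same telescoping rank bound in $\cQ_*^k$ from Parts (b)--(d) of the properties lemma, and the same projection of a decomposition of $\Iset \times [1]$ into $\cQ_*$-independent sets back to a decomposition of $\Iset$ into $\cQ$-independent sets. You also invoke the equivalence observation, together with Part (a), to turn $\Span_\cQ(S_{i'}) \subseteq \Span_\cQ(S_{i+1})$ into the matching inclusion of $\Span^*$ sets; the paper leaves this step implicit, and your justification is valid.
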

\begin{proof}
For every $i \in \{0\} \cup [\ell]$, \cref{line:fold_non-R_remove} of \cref{alg:lfold-DCRS} removes from $\Iset_{i}$ every element that does not belong to $R_d$, and then the condition on \cref{line:fold_T_add_condition} of the algorithm explicitly makes sure that only elements of $R_d$ are added to $\Iset_i$. Thus, by the end of the iteration of \cref{alg:lfold-DCRS}, $\Iset_i \subseteq R_d$. Since this is true for every $i \in \{0\} \cup [\ell]$, $\Iset = \cup_{i = 0}^{\ell} \Iset_i$ is also a subset of $R_d = R^t_d$. The rest of this proof is devoted to showing that $\Iset$ is also independent in $\cM$.

For every $i \in \{0\} \cup [\ell]$ and integer $i + 1 \leq i' \leq \ell$, Parts~\eqref{item:fold_laminar} and~\eqref{item:fold_I_inclusion} of \cref{lem:fold_properties} guarantee that $\Iset_{i'} \times [1] \subseteq \Span^*(S_{i'} \times [k]) \subseteq \Span^*(S_{i + 1} \times[k])$. Thus, by the submodularity of the rank function, we have
\[
	\rank^*(\Iset_i \times [1] \mid \cup_{i' = i + 1}^{\ell} (\Iset_{i'} \times [1]))
	\geq
	\rank^*(\Iset_i \times [1] \mid \Span^*(S_{i + 1} \times [k]))
	=
	\rank^*(\Iset_i \times [1] \mid S_{i + 1} \times [k])
	,
\]
where the equality holds since $\rank^*(A \cup \Span(B)) = \rank^*(A \cup B)$ for every two sets $A$ and $B$. Using this inequality, we can now get
\begin{align*}
	\rank^*(I \times [1])
	={} &
	\sum_{i = 0}^\ell \rank^*(\Iset_i \times [1] \mid \cup_{i' = i + 1}^{\ell} (\Iset_{i'} \times [1]))\\
	\geq{} &
	\sum_{i = 0}^\ell \rank^*(\Iset_i \times [1] \mid S_{i + 1} \times [k])
	=
	\sum_{i = 0}^\ell |\Iset_i \times [1]|
	=
	|\Iset \times [1]|
	,
\end{align*}
where the second equality holds by Part~\eqref{item:fold_I_independent} of \cref{lem:fold_properties}, and the last equality holds since Part~\eqref{item:fold_I_inclusion} of \cref{lem:fold_properties} implies that the sets $\Iset_1, \Iset_2, \dotsc, I_\ell$ are disjoint.

The above inequality shows that $\Iset \times [1]$ is independent in $\cQ_*^k$. In other words, $I \times [1]$ can be discomposed into $k$ disjoint sets $A^*_1, A^*_2, \dotsc, A^*_k$ such that each one of them is independent in $\cQ_*$. Let us now define, for every $i \in [k]$, $A_i = \{u \in \cN \mid (u, 1) \in A^*_i\}$. By the definition of $\cQ_*$, the sets $A_1, A_2, \dotsc, A_k$ are all independent in $\cQ$. Furthermore, every element of $\Iset$ belongs to one of these sets, and thus, $\Iset$ is independent in $\cQ^k = \cM$.
\end{proof}

The next two subsections analyze the balance and recourse of \cref{alg:lfold-DCRS}. They are similar to \cref{ssc:balance} and \cref{sec:recourse}, but require some modifications. Technically, it is also necessary to prove that \cref{alg:lfold-DCRS} is monotone, but we omit this proof since it is very similar to the corresponding proof regarding \cref{alg:matroid-DCRS} from \cref{sec:matroid-dcrs-monotone}.

\subsection{Balance}

In this section we analyze the balance of \cref{alg:lfold-DCRS}. We begin with the following observation.
\begin{observation} \label{obs:wrong_rang_penalty}
For every set $A \subseteq \cN$, $\rank_\cM(\Span_\cQ(A)) \leq k \cdot \rank_\cQ(A)$.
\end{observation}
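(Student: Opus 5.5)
The plan is to reduce everything to rank computations in $\cQ$. Let $B$ be a maximal $\cQ$-independent subset of $A$. Then $\Span_\cQ(A) = \Span_\cQ(B)$ and $\rank_\cQ(A) = |B|$, so it suffices to prove $\rank_\cM(\Span_\cQ(B)) \leq k|B|$.

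Next, fix an arbitrary set $I \subseteq \Span_\cQ(B)$ that is independent in $\cM = \cQ^k$. By \cref{def:ell-fold-union}, we can write $I = I_1 \cup \dots \cup I_k$ with each $I_j \in \cI_\cQ$. Each $I_j$ lies inside $\Span_\cQ(B)$, so it is an independent subset of $\Span_\cQ(B)$ in $\cQ$. Hence
\[
|I_j| \leq \rank_\cQ(\Span_\cQ(B)) = \rank_\cQ(B) = |B|.
\]
It follows that $|I| \leq \sum_j |I_j| \leq k|B|$. Taking the maximum over all such $I$ gives $\rank_\cM(\Span_\cQ(A)) \leq k \cdot \rank_\cQ(A)$.

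The only fact I rely on beyond the definitions is $\rank_\cQ(\Span_\cQ(B)) = \rank_\cQ(B)$, which is standard. I do not expect a real obstacle. The one point needing care is that the union in \cref{def:ell-fold-union} need not be disjoint, so I bound $|I|$ by $\sum_j |I_j|$ rather than claiming equality.
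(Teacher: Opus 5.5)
Your proof is correct and is essentially the paper's argument: you decompose an $\cM$-independent subset of $\Span_\cQ(A)$ into $k$ sets that are independent in $\cQ$, and bound each one by $\rank_\cQ(\Span_\cQ(A)) = \rank_\cQ(A)$. Two small differences: passing through a maximal $\cQ$-independent $B \subseteq A$ is harmless but unnecessary, and bounding $|I| \leq \sum_j |I_j|$ spares you the assumption, which the paper makes without comment, that the decomposition can be taken to be a partition.
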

\begin{proof}
Let $B$ be a base of $\Span_\cQ(A)$ with respect to $\cM$. Since $B$ is independent in $\cM$, there must be a partition of $B$ into $k$ sets $B_1, B_2, \dotsc, B_k$ such that all the sets in this partition are independent in $\cQ$. Notice now that for every $j \in [k]$, $B_j$ is a subset of $\Span_\cQ(A)$, and thus, $|B_j| = \rank_\cQ(B_j) \leq \rank_\cQ(\Span_\cQ(A)) = \rank_\cQ(A)$. The observation now follows since the definition of $B$ implies that $\sum_{j = 1}^k |B_j| = |B| = \rank_\cM(\Span_\cQ(A))$.
\end{proof}

Using the last observation, we can now prove the following counterpart of \cref{lem:set_size_bound}.
\begin{lemma} \label{lem:fold_set_size_bound}
When \cref{alg:lfold-DCRS} terminates the processing of $\vx^t$, $|S_i| \leq (rbk/\tau)^{i - j} \cdot |S_j|$ for every $i \in [\ell]$ and integer $0 \leq j \leq i$. In particular, $|S_\ell| \leq (rbk/\tau)^\ell \cdot |S_0| = (rbk/\tau)^\ell \cdot \rank(\cN) \leq 1$.
\end{lemma}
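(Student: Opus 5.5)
The plan is to mirror the proof of \cref{lem:set_size_bound}, replacing the role of $\vx \in b\cdot P_\cM$ applied to $\Span(S_{i-1})$ with \cref{obs:wrong_rang_penalty}.

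The core per-level inequality is
\[
	|S_i| \leq \frac{rbk}{\tau}\cdot |S_{i-1}|
\]
at the end of iteration $i$ of the loop on \cref{line:fold_main_S_loop}. To obtain it, proceed in three steps.
\begin{enumerate}
\item Both the early reset check (\cref{line:fold_early_reset_condition}) and the check after each addition (\cref{line:fold_reset_condition}) guarantee $|S_i| \leq r\cdot \|\vx|_{\Span_\cQ(S_{i-1})}\|_1/\tau$ when the iteration ends. If the inequality failed, \Reset$(i)$ would have emptied $S_i$.
\item Since $\vx\in b\cdot P_\cM$, we have
\[
	\|\vx|_{\Span_\cQ(S_{i-1})}\|_1 \leq b\cdot \rank_\cM(\Span_\cQ(S_{i-1})).
\]
\item By \cref{obs:wrong_rang_penalty}, this rank is at most $k\cdot\rank_\cQ(S_{i-1}) = k|S_{i-1}|$. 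The last equality uses Part~\eqref{item:fold_S_independent} of \cref{lem:fold_properties}, which gives $S_{i-1}\in\cI_\cQ$.
\end{enumerate}

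Next I would argue that $S_{i-1}$ and $S_i$ are unchanged for the rest of the processing of $\vx^t$. Iteration $i'>i$ only adds to $S_{i'}$ or resets levels $\geq i'$, so neither set is touched again. Hence the per-level inequality holds at termination, and iterating it gives $|S_i| \leq (rbk/\tau)^{i-j}|S_j|$ for all $0\le j\le i\le\ell$.

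For the ``in particular'' part, take $j=0$. Here $|S_0| = \rank(\cN)$, reading $S_0$ as a base of $\cM$, as in \cref{sec:matroid}. This gives $|S_\ell| \leq (rbk/\tau)^{\ell}\rank(\cN)$. With $\ell=\lceil\log_{\tau/(rbk)}\rank(\cM)\rceil$, we get $(\tau/(rbk))^\ell \geq \rank(\cM)$, so the bound is at most $1$. The parameter range $r<\tau/(bk)$ makes the base exceed $1$, so the logarithm is well defined.

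The main obstacle is an interpretive subtlety about $S_0$. In \cref{alg:lfold-DCRS}, the relevant level-$0$ set must satisfy $\Span_\cQ(S_0)=\cN$, while the statement uses $|S_0| = \rank(\cN)$. If $S_0$ is a base of $\cQ$, then the base case $i=1$ of the chain argument becomes $|S_1| \leq (rbk/\tau)\,|S_0|$ with $|S_0|=\rank_\cQ(\cN)$, and the iterated bound is stronger. If instead $S_0$ is a base of $\cM$, then $|S_0|=\rank(\cM)$ and the inequality for $i=1$ follows directly from $\|\vx\|_1\le b\,\rank(\cM)$, with no need for \cref{obs:wrong_rang_penalty}. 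In either reading the claimed inequality holds, since $\rank_\cQ(\cN)\le\rank(\cM)$. I would note this explicitly in the proof and handle $j=0$ separately.
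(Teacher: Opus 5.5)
Your proposal is correct and follows the paper's proof essentially step for step. The reset checks on \cref{line:fold_early_reset_condition,line:fold_reset_condition} give $|S_i|\le r\|\vx|_{\Span_\cQ(S_{i-1})}\|_1/\tau$, then $\vx\in b\cdot P_\cM$ together with \cref{obs:wrong_rang_penalty} bounds the right-hand side by $(rbk/\tau)|S_{i-1}|$, and you iterate and plug in the choice of $\ell$. Your separate handling of $S_0$ is a fair refinement: the paper writes $\rank_\cQ(S_0)=|S_0|$ while also taking $|S_0|=\rank(\cN)$, but only $\rank_\cQ(S_{i-1})\le|S_{i-1}|$ is needed, so the argument goes through under either reading.
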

\begin{proof}
For every $i \in [\ell]$,
\[
	\norm{\vx|_{\Span_\cQ(S_{i - 1})}}_1
	\leq
	b \cdot \rank_\cM(\Span_\cQ(S_{i - 1}))
	\leq
	bk \cdot \rank_\cQ(S_{i - 1})
	=
	bk \cdot |S_{i - 1}|
	,
\]
where the first inequality holds since $\vx \in b \cdot \cP_\cM$, and the second inequality follows from \cref{obs:wrong_rang_penalty}.
\Cref{line:fold_early_reset_condition,line:fold_reset_condition} of \cref{alg:lfold-DCRS} now guarantee together that when iteration $i$ of the loop starting on \cref{line:fold_main_S_loop} of the algorithm terminates, we have the inequality 
\[
	|S_i|
	\leq
	\frac{r \cdot \|\vx|_{\Span_\cQ(S_{i - 1})}\|_1}{\tau}
	\leq
	\frac{r \cdot (bk \cdot |S_{i - 1}|)}{\tau}
	=
	\frac{rbk}{\tau} \cdot |S_{i - 1}|
	.
\]
Since $S_i$ and $S_{i - 1}$ do not change after this point in the processing of $\vx^t$ by \cref{alg:lfold-DCRS}, the last inequality holds also when this processing terminates. Furthermore, by combining this inequality for different values of $i$, we get for every $i \in [\ell]$ and integer $0 \leq j \leq i$ that
\[
	|S_i|
	\leq
	\Big(\frac{rbk}{\tau}\Big)^{i - j} \cdot |S_j|
	,
\]
which completes the proof of the first part of the lemma. To see why the second part of the lemma holds as well, notice that
\[
	|S_\ell|
	\leq
	\Big(\frac{rbk}{\tau}\Big)^\ell \cdot |S_0|
	=
	\Big(\frac{rbk}{\tau}\Big)^\ell \cdot \rank(\cN)
	\leq
	\Big(\frac{rbk}{\tau}\Big)^{\log_{\tau / (rbk)} \rank(\cN)} \cdot \rank(\cN)
	=
	1
	.
	\qedhere
\]
\end{proof}

Following is a corollary of \cref{lem:fold_set_size_bound}. In this corollary, and in the rest of the analysis of \cref{alg:lfold-DCRS}, we use a definition of $i(e)$ similar to the definition used \cref{sec:matroid}, but based on $\Span_\cQ$ instead of $\Span_\cM$. In other words, for every element $e \in \cN$, $i(e)$ is the maximal integer such that $e \in \Span_\cQ(S_{i(e)})$. Despite the change in the definition of $i(e)$, it still holds that $0 \leq i(e) \leq \ell$ for every element $e$ of $\cQ$.

\begin{corollary} \label{cor:fold_prob_span}
For every element $e \in \cN$, $\bE[\rank^*(e \times [k] \mid (\Rdist(\vx|_{\Span_\cQ(S_{i(e)})}) \times [1]) \cup (S_{i(e) + 1} \times [k]))] > k - \tau$. 
\end{corollary}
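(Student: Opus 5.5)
The plan mirrors the first part of the proof of \cref{prop:matroid-DCRS-balance}: the inequality should follow from the fact that the inner while loop on \cref{line:fold_add_condition} terminated in iteration $i(e)+1$ of the loop on \cref{line:fold_main_S_loop} without adding $e$.

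First, I would check that iteration $i(e)+1$ actually exists and that $e$ is a candidate in it. By \cref{lem:fold_set_size_bound}, $|S_\ell| \leq 1$, and I would argue (or arrange via the choice of $\ell$, as in \cref{cor:non-empty_count}) that the levels indexed beyond the loop range are empty. Then $i(e) \leq \ell$, and $S_{i(e)+1}$ is the set handled in iteration $i(e)+1$; if $i(e)=\ell$, the set $S_{\ell+1}$ is empty and the statement must be checked directly. By maximality of $i(e)$, we have $e \in \Span_\cQ(S_{i(e)}) \setminus \Span_\cQ(S_{i(e)+1})$, so $e$ satisfies the membership part of the while condition on \cref{line:fold_add_condition} with $i = i(e)+1$.

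Second, I would note that after iteration $i(e)+1$ ends, neither $S_{i(e)}$ nor $S_{i(e)+1}$ changes for the rest of the processing of $\vx^t$. Later iterations only modify higher levels, and $\Reset(i')$ with $i' > i(e)+1$ leaves these two sets untouched. So the values of $S_{i(e)}$ and $S_{i(e)+1}$ at the end of processing coincide with their values when that while loop exited. The loop exits only when no candidate satisfies the condition. Since $e$ satisfies the span part, it must violate the expectation part, which gives exactly the claimed strict inequality $\bE[\rank^*(e\times[k]\mid \ldots)] > k-\tau$.

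The main obstacle is the boundary case together with the fact that the loop runs only for $i \leq \ell$. When $i(e)=\ell$, there is no iteration $\ell+1$, so the argument above does not apply. I would try to handle this using $S_{\ell+1}=\emptyset$. The quantity $\bE[\rank^*(e\times[k] \mid \Rdist(\vx|_{\Span_\cQ(S_\ell)})\times[1])]$ is at least $k - \bE[\text{number of $\Rdist$ elements spanning a copy of $e$}]$. I would bound this using $\|\vx|_{\Span_\cQ(S_\ell)}\|_1 \leq bk|S_\ell| \leq bk < \tau$, which yields a value greater than $k-\tau$; this uses the range $\tau > bk$. If this direct estimate fails, the fallback is to redefine $\ell$ one larger so that iteration $\ell+1$ exists and the uniform argument covers every $e$.
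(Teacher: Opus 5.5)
Your proposal is correct and follows the paper's proof essentially step for step. For $i(e)<\ell$ you use termination of the inner loop in iteration $i(e)+1$: since $e \in \Span_\cQ(S_{i(e)})\setminus\Span_\cQ(S_{i(e)+1})$, the expectation part of the condition must fail. For $i(e)=\ell$ you use $S_{\ell+1}=\emptyset$ together with $\rank^*(e\times[k]\mid R\times[1]) \geq k-|R|$ and $\|\vx|_{\Span_\cQ(S_\ell)}\|_1\leq bk|S_\ell|\leq bk<\tau$.

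That direct estimate does go through, so the fallback of enlarging $\ell$ is unnecessary. It only needs $\rank^*(\{e\}\times[k])=k$, i.e.\ that $e$ is not a loop of $\cQ$, which follows from $e$ not being a loop of $\cM$.
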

\begin{proof}
Consider first the case that $i(e) < \ell$, in this case we know that in iteration number $i(e) + 1$ of the loop starting on \cref{line:fold_main_S_loop} of \cref{alg:lfold-DCRS}, the internal loop on \cref{line:fold_add_condition} terminated, which implies that $e$ does not obey the condition of this internal loop. However, by the definition of $i(e)$, $e \in \Span_\cQ(S_{i(e)}) \setminus \Span_\cQ(S_{i(e) + 1})$, and thus, the part of the condition that $e$ violates must be the inequality $\bE[\rank^*(e \times [k] \mid (\Rdist(\vx|_{\Span_\cQ(S_{i(e)})}) \times [1]) \cup (S_{i(e) + 1} \times [k]))] \leq k - \tau$.

The rest of this proof is devoted to the case of $i(e) = \ell$. In this case, $S_{i(e) + 1} = S_{\ell + 1} = \emptyset$ and
\[
	\|\vx|_{\Span_\cQ(S_{i(e)})}\|_1
	=
	\|\vx|_{\Span_\cQ(S_{\ell})}\|_1
	\leq
	b \cdot \rank_\cM(\Span_\cQ(S_{\ell}))
	\leq
	bk \cdot \rank(S_{\ell})
	=
	bk \cdot |S_{\ell}|
	\leq
	bk
	,
\]
where the first inequality holds since $\vx \in b \cdot \cP_\cM$, the second inequality holds by \cref{obs:wrong_rang_penalty} and the last inequality follows from \cref{lem:fold_set_size_bound}. Thus,
\begin{align*}
	\bE[\rank^*(&e \times [k] \mid (\Rdist(\vx|_{\Span_\cQ(S_{i(e)})}) \times [1]) \cup (S_{i(e) + 1} \times [k]))]\\
	={} &
	\bE[\rank^*(e \times [k] \mid \Rdist(\vx|_{\Span_\cQ(S_{i(e)})}) \times [1])]
	\geq
	\rank^*(e \times [k]) - \bE[|\Rdist(\vx|_{\Span_\cQ(S_{i(e)})}) \times [1]|]\\
	={} &
	\rank^*(e \times [k]) - \|\vx|_{\Span_\cQ(S_{i(e)})}\|_1
	\geq
	k - bk
	>
	k - \tau
	,
\end{align*}
where the penultimate inequality uses the fact that since $e$ is not a self-loop in $\cM = \cQ^k$, $\{e\}$ is independent in $\cQ$, and thus, $\{e\} \times [1]$ is independent in $\cQ_*^k$.
\end{proof}

We now need the following known result.

\begin{theorem}[{\cite[Theorem 3]{DBLP:conf/innovations/AlonGPRWW025}}]
    \label{thm:bicrit_conc}
        Let $f\colon \{0,1\}^\cN \rightarrow \mathbb{R}$ be a monotone $1$-Lipschitz function. Let $X_1 \sim \ber(p_1), \ldots, X_n \sim \ber(p_n)$ be i.i.d.\ Bernoulli random variables, and for any $s \geq 0$, let $X_1^{(s)} \sim \ber(e^{-s} p_1), \ldots, X_n \sim \ber(e^{-s} \cdot p_n)$ be Bernoullis with scaled down parameters. 
        Then, for any $s \in (0,1]$ and $t > 0$,
        \[
            \Pr[f(X^{(s)}_1, \ldots, X^{(s)}_n) \geq \bE[f(X_1, \ldots, X_n)] + t] \leq e^{-st}.
        \]
    \end{theorem}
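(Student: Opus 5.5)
The plan is to bound the exponential moment $\bE[e^{s f(X^{(s)}_1, \ldots, X^{(s)}_n)}]$ and then apply Markov's inequality. Concretely, I aim to prove the moment bound
\[
	\bE\big[e^{s f(X^{(s)})}\big] \leq e^{s \bE[f(X)]},
\]
where $X=(X_1,\ldots,X_n)$ and $X^{(s)}$ is its scaled-down counterpart. Given this bound, Markov's inequality applied to $e^{s f(X^{(s)})}$ gives
\[
	\Pr\big[f(X^{(s)}) \geq \bE[f(X)] + t\big] \leq e^{-s(\bE[f(X)]+t)}\, \bE\big[e^{s f(X^{(s)})}\big] \leq e^{-st},
\]
which is exactly the claim. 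Note that the scaled variables appear on the left of the moment bound and the unscaled ones on the right; the scaling by $e^{-s}$ is what pays for the convexity of the exponential.

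\textbf{One-coordinate case.} First I handle $n=1$. Monotonicity and the $1$-Lipschitz property mean we can write $f(0)=a$ and $f(1)=a+d$ with $d\in[0,1]$. Set $q = e^{-s}p$. After dividing both sides by $e^{sa}$, the target inequality becomes
\[
	1 + e^{-s}p\,(e^{sd}-1) \leq e^{spd}.
\]
The function $d \mapsto e^{sd}$ is convex and $d\in[0,1]$, so $e^{sd}-1 \leq d(e^{s}-1)$. Hence the left-hand side is at most $1 + p\,d\,(1-e^{-s})$. Using $1-e^{-s}\leq s$, this is at most $1+spd \leq e^{spd}$. This step is where the scaling matters and is the only real computation; I expect it to be the main point to get right.

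\textbf{Induction on $n$.} Next I induct on the number of coordinates. Condition on the value $x_n$ of the last coordinate. The restriction $f(\cdot, x_n)$ is again monotone and $1$-Lipschitz in the remaining $n-1$ coordinates, so the induction hypothesis gives
\[
	\bE\big[e^{s f(X'^{(s)}, x_n)}\big] \leq e^{s\varphi(x_n)}, \qquad \text{where } \varphi(x_n) \defeq \bE[f(X', x_n)]
\]
and $X'$ denotes the first $n-1$ coordinates. Since $X'^{(s)}$ is independent of $X^{(s)}_n$, taking expectation over $X^{(s)}_n$ yields
\[
	\bE\big[e^{s f(X^{(s)})}\big] \leq \bE\big[e^{s\varphi(X_n^{(s)})}\big].
\]

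\textbf{Closing the induction.} The averaged function $\varphi$ inherits monotonicity and the $1$-Lipschitz property from $f$, since both are preserved under averaging over $X'$. Therefore the one-coordinate case applies to $\varphi$ and gives $\bE[e^{s\varphi(X_n^{(s)})}] \leq e^{s\bE[\varphi(X_n)]}$. Finally, $\bE[\varphi(X_n)] = \bE[f(X)]$ by independence of the coordinates of $X$, which completes the induction.

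The only subtlety is the order of operations in the induction. We first average $f$ over the \emph{unscaled} variables $X'$ to form $\varphi$, and only then apply the one-dimensional bound to the scaled variable $X_n^{(s)}$. This ordering is what lets the hybrid between scaled and unscaled distributions telescope cleanly, and the restriction $s\in(0,1]$ is harmless for the argument.
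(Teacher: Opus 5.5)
Your proof is correct. The paper gives no proof of this statement: it imports it from Alon et al.\ as a black box. So your argument is a self-contained route rather than a reconstruction of anything in the paper.

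The key one-coordinate step holds. For $d\in[0,1]$ and any $s>0$, convexity gives $e^{sd}-1\le d(e^{s}-1)$, hence $1+e^{-s}p(e^{sd}-1)\le 1+pd(1-e^{-s})\le 1+spd\le e^{spd}$.

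The tensorization is also sound. You apply the induction hypothesis to each slice $f(\cdot,x_n)$, which stays monotone and $1$-Lipschitz. You then apply the one-dimensional bound to the averaged function $\varphi$, whose increment $\bE[f(X',1)-f(X',0)]$ lies in $[0,1]$. Only independence of coordinates is used. The statement's ``i.i.d.'' should really read ``independent'', since the $p_i$ differ.

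Your route also gives a slightly stronger conclusion than the cited statement. You prove the exponential-moment domination $\bE[e^{sf(X^{(s)})}]\le e^{s\bE[f(X)]}$, and you never use $s\le 1$, so the tail bound holds for every $s>0$. In the paper's only use of the theorem (\Cref{prop:balance_fold}, with $s=-\ln d$), this removes the need to check $d\ge e^{-1}$. With the chosen $d=1-\frac{\ln(k-\tau)}{k-\tau}\ge 1-e^{-1}$, that check passes anyway. Citing the result buys the paper only brevity.
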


Using this theorem, we can prove the balance of our algorithm.

\begin{proposition} \label{prop:balance_fold}
\cref{alg:lfold-DCRS} has a balance of $d(1 - d^{k - \tau})$.
\end{proposition}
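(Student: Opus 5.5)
The plan is to mirror the proof of \cref{prop:matroid-DCRS-balance}, replacing the ``spanned with probability at most $\tau$'' argument by the concentration bound of \cref{thm:bicrit_conc}. Fix an element $e$, write $i = i(e)$, and let $\vy = \vx|_{\Span_\cQ(S_i)}$. As in the matroid case, the sets $S_1, S_2, \dotsc, S_\ell$ (and hence $i(e)$) are built independently of $\Rset_d$. So conditioned on the history, $\Rset_d \cap \Span_\cQ(S_i)$ is distributed like $\Rdist(d\cdot\vy)$, with independent coordinates.

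\textbf{Step 1 (sufficient condition for acceptance).} Since the loop on \cref{line:fold_T_add_condition} terminates, and by Part~\eqref{item:fold_I_inclusion} of \cref{lem:fold_properties} together with \cref{cor:fold_feasible} we have $\Iset_i \subseteq \Rset_d \cap \Span_\cQ(S_i)$, we get the following: if $e \in \Rset_d$ and $(e,1) \notin \Span^*\big((S_{i+1}\times[k]) \cup ((\Rset_d \cap \Span_\cQ(S_i) - e)\times[1])\big)$, then $e \in \Iset_i$. The argument is the same contradiction as in \cref{prop:matroid-DCRS-balance}.

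\textbf{Step 2 (reduce to a marginal-rank statement).} Define, for $X \subseteq \Span_\cQ(S_i) - e$,
\[
g(X) \defeq k - \rank^*\big(e\times[k] \mid (X\times[1]) \cup (S_{i+1}\times[k])\big).
\]
This $g$ is monotone. It is also $1$-Lipschitz, because adding one element $(u,1)$ to the conditioning set changes the marginal rank of $e\times[k]$ by at most one, by submodularity and the fact that ranks increase by at most one per element. I would then show that the event ``$(e,1)$ is spanned'' implies $g(X) \geq k$, i.e.\ that the marginal rank of $e\times[k]$ is $0$. Equivalently: if some copy $(e,j)$ is not spanned, then $(e,1)$ is not spanned either. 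To prove this, take an independent partition of $(X\times[1]) \cup (S_{i+1}\times[k]) + (e,j)$ into $k$ sets independent in $\cQ_*$, and swap the roles of copies so that $e$ appears with index $1$. This is possible because $\cQ_*$ only constrains the projection to $\cN$ and allows at most one copy per element. I expect this exchange step to be the main obstacle; it needs care because $X$ uses only copy $1$, so the copies of $e$ are not literally symmetric. If it fails in full generality, I would instead bound the probability that $(e,1)$ is spanned by $\Pr[g \geq k]$ using the augmentation axiom on the two witnessing sets.

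\textbf{Step 3 (concentration).} By \cref{cor:fold_prob_span}, $\bE[g(\Rdist(\vy))] < \tau$, and by monotonicity removing $e$ from the sample only decreases $g$. Apply \cref{thm:bicrit_conc} with $e^{-s} = d$, i.e.\ $s = \ln(1/d)$, and threshold $t = k - \bE[g] > k-\tau$. This gives
\[
\Pr\big[g(\Rset_d \cap \Span_\cQ(S_i) - e) \geq k\big] \leq e^{-st} = d^{\,t} \leq d^{\,k-\tau}.
\]
The theorem requires $s \le 1$, i.e.\ $d \ge 1/e$. If $d < 1/e$, I would either use that the tail is monotone in $s$ (sampling with a smaller parameter is dominated), or split the downsampling into $\lceil s\rceil$ successive stages, each with parameter at most $1$. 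Finally, $e \in \Rset_d$ occurs with probability $d x_e$, independently of the other coordinates. Combining this with Steps 1--2 gives $\Pr[e\in\Iset] \geq d x_e (1-d^{k-\tau})$, which is the claimed balance.
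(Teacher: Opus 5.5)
Your proposal is correct and is essentially the paper's proof. The paper uses the same three ingredients you do:
\begin{itemize}
\item the copy-symmetry fact that $(e,i)\in\Span^*(C)$ iff $(e,j)\in\Span^*(C)$ whenever $C\subseteq(\cN-e)\times[k]$;
\item monotonicity to drop $e$ from the sample;
\item \cref{thm:bicrit_conc} with $s=\ln(1/d)$, applied to the negated marginal rank $\rank^*(e\times[k]\mid\cdot)$.
\end{itemize}
Your worry about asymmetry in the exchange step is unfounded: apply the swap to a basis of the conditioning set plus $(e,j)$, and since that set contains no other copy of $e$, replacing $(e,j)$ by $(e,1)$ keeps the part independent in $\cQ_*$.

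The paper silently assumes $s\le 1$, i.e.\ $d\ge 1/e$. This holds for its later choice $d=1-\ln(k-\tau)/(k-\tau)\ge 1-1/e$. Neither of your fallbacks for $d<1/e$ would recover $d^{k-\tau}$: domination in $s$ only gives $e^{-(k-\tau)}$, and staged downsampling re-centers at a new mean at each stage. It is cleaner to simply restrict to $d\ge 1/e$.
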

\begin{proof}
Consider an arbitrary element $e \in \cM$. We need to prove that $\prob{e \in \Iset^t \mid e \in R} \geq x_ed(1 - d^{k - \tau})$. By \cref{cor:fold_prob_span},
\begin{align} \label{eq:chernoff_based_bound}
	\Pr[(e, 1) \in \Span^*(&((R_d \cap \Span_Q(S_{i(e)}) - e) \times [1]) \cup (S_{i(e) + 1} \times [k]))]\\\nonumber
	={} &
	\Pr[(e, 1) \in \Span^*(((\Rdist(d \cdot \vx|_{\Span_Q(S_{i(e)})} - e)) \times [1]) \cup (S_{i(e) + 1} \times [k]))]\\\nonumber
	={} &
	\Pr[\rank^*(\{e\} \times [k] \mid ((\Rdist(d \cdot \vx|_{\Span_Q(S_{i(e)})}) - e) \times [1]) \cup (S_{i(e) + 1} \times [k])) = 0]\\\nonumber
	\leq{} &
	\Pr[\rank^*(\{e\} \times [k] \mid (\Rdist(d \cdot \vx|_{\Span_Q(S_{i(e)})}) \times [1]) \cup (S_{i(e) + 1} \times [k])) = 0]\\\nonumber
    \leq{} &
	\Pr\big[\rank^*(\{e\} \times [k] \mid (\Rdist(d \cdot \vx|_{\Span_Q(S_{i(e)})}) \times [1]) \cup (S_{i(e) + 1} \times [k])) \\\nonumber&\mspace{36mu}\leq \bE[\rank^*(\{e\} \times [k] \mid (\Rdist(\vx|_{\Span_\cQ(S_{i(e)})}) \times [1]) \cup (S_{i(e) + 1} \times [k]))] - (k - \tau)\big] \\\nonumber
	\leq{} &
	e^{-(-\ln d) \cdot (k - \tau)}
	=
	d^{k - \tau}
	,
\end{align}
where the first equality holds since one can verify that while the values of the sets $\Iset_0, \Iset_1, \dotsc, \Iset_{\ell}$ depend on the value of $R_d$, the construction of the sets $S_1, S_2, \dotsc, S_\ell$ (and thus, also $i(e)$) is independent of the value of $R_d$; the second equality holds since for every set $S \subseteq (\cN - e) \times [k]$ and two integers $i, j \in [k]$, $(e, i) \in \Span^*(S)$ if and only if $(e, j) \in \Span^*(S)$; the first inequality follows from the submodularity of $\rank^*$; and the last inequality follows from \cref{thm:bicrit_conc} because $-\rank^*(\{e\} \times [k] \mid (S \times [1]) \cup (S_{i(e) + 1} \times [k]))$ is a monotone $1$-Lipschitz function of $S \subseteq \cN$.

Combining this observation with the fact that Part~\eqref{item:fold_I_inclusion} of \cref{lem:fold_properties} guarantees that $\Iset_{i(e)} \times [1]$ is a subset of $\Span^*(S_{i(e)} \times [k])$ and \cref{cor:fold_feasible} guarantees that $\Iset_{i(e)} \subseteq \cup_{i = 0}^\ell \Iset_i \subseteq R_d$, we get
\begin{align*}
	\Pr[e \in R_d &\text{ and } (e, 1) \not \in \Span^*(((\Iset_{i(e)} - e) \times [1]) \cup (S_{i(e) + 1} \times [k]))]\\
	\geq{} &
	\Pr[e \in R_d \text{ and } (e, 1) \not \in \Span^*(((R_d \times [1]) \cap \Span^*(S_{i(e)} \times [k]) - (e, 1)) \cup (S_{i(e) + 1} \times [k]))]\\
	={} &
	\Pr[e \in R_d \text{ and } (e, 1) \not \in \Span^*(((R_d \cap \Span_Q(S_{i(e)}) - e) \times [1]) \cup (S_{i(e) + 1} \times [k]))]\\
	={} &
	\Pr[e \in R_d] \cdot \Pr[(e, 1) \not \in \Span^*(((R_d \cap \Span_Q(S_{i(e)}) - e) \times [1]) \cup (S_{i(e) + 1} \times [k]))]\\
	\geq{} &
	dx_e \cdot (1 - d^{k - \tau})
	,
\end{align*}
where the first equality uses \cref{obs:equivalence_multi_k}, the second equality holds since the membership of each element in $R_d$ is independent, and the last inequality holds due to Inequality~\eqref{eq:chernoff_based_bound}.

To complete the proof of the proposition, it only remains to observe that the event that both $e \in R_d$ and $e \not \in \Span^*(((\Iset_{i(e)} - e) \times [1]) \cup (S_{i(e) + 1} \times [k]))$ implies that $e \in \Iset_{i(e)}$, and thus, $e \in \cup_{i = 0}^\ell \Iset_i = \Iset^t$. Assume towards a contradiction that this is not the case, then
\[
	(e, 1)
	\not \in
	\Span^*(((\Iset_{i(e)} - e) \times [1]) \cup (S_{i(e) + 1} \times [k]))
	=
	\Span^*((\Iset_{i(e)} \times [1]) \cup (S_{i(e) + 1} \times [k]))
	,
\]
and additionally, by the definition of $i(e)$, $e \in \Span_\cQ(S_{i(e)})$. This contradicts the fact that in iteration number $i(e) + 1$ of the loop starting on \cref{line:fold_main_T_loop} of \cref{alg:lfold-DCRS}, the internal loop on \cref{line:fold_T_add_condition} terminated without adding $e$ to $\Iset_{i(e)}$.
\end{proof}

\subsection{Recourse}

In this section, we analyze the recourse of \cref{alg:lfold-DCRS}. We begin with the following counterpart of \cref{lem:remaining}. The proof of this lemma follows the same line of arguments as the proof of \cref{lem:remaining}, but now we need to use \cref{lem:fold_growth_bound} and \cref{lem:fold_set_size_bound} instead of \cref{lem:growth_bound} and \cref{lem:set_size_bound}.

\begin{lemma} \label{lem:fold_remaining}
When \cref{alg:lfold-DCRS} terminates, $\sum_{i = 1}^\ell |S_\ell| \leq \frac{\Inc(T)}{\tau - rbk}$.
\end{lemma}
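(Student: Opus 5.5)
The plan is to follow the proof of \cref{lem:remaining} essentially verbatim, replacing its two ingredients by their $k$-fold counterparts. (The statement has a typo: the sum is meant to be $\sum_{i=1}^\ell |S_i|$.)

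\textbf{Step 1: bound $|S_1|$.} Let $J$ be the time interval from the last moment at which $S_1$ was empty until the termination of \cref{alg:lfold-DCRS}. During $J$, $\Reset(1)$ is never invoked; $\Reset(i)$ for $i\ge 2$ does not touch $S_1$. Hence $S_1$ changes during $J$ only through additions on \cref{line:fold_add}. Let $\bar{\vx}$ be the coordinate-wise maximum of the vectors $\vx$ over $J$.

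Since $S_0$ is a base, $\Span_\cQ(S_0)$ is essentially all of $\cN$, so the relevant vector at level $0$ is just $\vx$. Every element $e$ added to $S_1$ during $J$ satisfies $e\notin \Span_\cQ(S_1)$ and
\[
\bE[\rank^*(e\times[k]\mid (\Rdist(\vx)\times[1])\cup(S_1\times[k]))]\le k-\tau .
\]
The expected marginal rank is non-increasing as the random set grows, by submodularity and a monotone coupling of $\Rdist(\vx)$ inside $\Rdist(\bar{\vx})$. So the same inequality holds with $\bar{\vx}$ in place of $\vx$. Thus the growth of $S_1$ during $J$ satisfies the hypotheses of \cref{lem:fold_growth_bound} with $\vy=\bar{\vx}$, which gives
\[
|S_1|\le \frac{\|\bar{\vx}\|_1}{\tau}\le \frac{\Inc(T)}{\tau}.
\]
The last inequality holds because $\vx^0=0$, so every coordinate of $\bar{\vx}$ is bounded by the total increase in that coordinate.

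\textbf{Step 2: sum over levels.} \Cref{lem:fold_set_size_bound} gives $|S_i|\le (rbk/\tau)^{i-1}|S_1|$. Summing the geometric series, which converges because $r<\tau/(bk)$, yields
\[
\sum_{i=1}^\ell |S_i|\le \frac{|S_1|}{1-rbk/\tau}\le \frac{\Inc(T)}{\tau-rbk}.
\]

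\textbf{Expected obstacle.} The only delicate point is the monotonicity used in Step 1: replacing $\vx$ by $\bar{\vx}$ must preserve the add condition. The map $S\mapsto \rank^*(e\times[k]\mid (S\times[1])\cup(S_1\times[k]))$ is non-increasing in $S$ by submodularity of $\rank^*$. Coupling $\Rdist(\vx)\subseteq\Rdist(\bar{\vx})$ then shows the expectation only decreases when $\vx$ is raised to $\bar{\vx}$. This monotonicity argument replaces the probability-of-being-spanned monotonicity used in \cref{lem:remaining}.
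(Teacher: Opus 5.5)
Your proposal is correct and takes the same route as the paper. The paper proves this lemma by rerunning the argument of \cref{lem:remaining} with \cref{lem:fold_growth_bound} and \cref{lem:fold_set_size_bound} in place of their matroid counterparts, which is exactly your Steps 1--2; your submodularity/coupling check that raising $\vx$ to $\bar{\vx}$ preserves the add condition supplies the one detail the paper leaves implicit, and you are right that the sum should read $\sum_{i=1}^\ell |S_i|$.
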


We can also get the following counterpart of \cref{lem:removed}.

\begin{lemma} \label{lem:fold_removed}
The total number of elements \Cref{alg:lfold-DCRS} removes from the sets $S_1, S_2, \dotsc, S_\ell$ satisfies 
    \[
        \sum_{t \in [T]} \sum_{i \in [\ell]} \abs{S^{t-1}_i \setminus S^t_i } \leq \frac{\ell \cdot \Dec(T)}{(\tau - bk)(1 - 1/r)}
        .
    \]
\end{lemma}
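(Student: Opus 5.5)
The plan is to mirror the proof of \cref{lem:removed} line by line. The modifications are: spans are taken with respect to $\cQ$; the level mass is $\vx|_{\Span_\cQ(S_{i-1})}$; the growth bound is \cref{lem:fold_growth_bound} instead of \cref{lem:growth_bound}; and the per-level mass bound is $\|\vx|_{\Span_\cQ(S_{i-1})}\|_1 \le bk\,|S_{i-1}|$, which follows from $\vx\in b\cdot P_\cM$ and \cref{obs:wrong_rang_penalty}.

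First fix $i\in[\ell]$ and call any invocation of $\Reset(i')$ with $i'\le i$ an \emph{emptying} of $S_i$. Such an emptying is \emph{primary} if $i'=i$. Let $s_1,\dots,s_k$ be the emptying times and $s_0$ the start. Write $\hat S_{i,j},\hat S_{i-1,j}$ for the sets just before $s_j$, and $\hat\vx^{(i-1,j)}$ for $\vx|_{\Span_\cQ(S_{i-1})}$ at that moment. Let $\bar\vx^{(i-1,j)}$ be the coordinatewise maximum of $\vx|_{\Span_\cQ(S_{i-1})}$ over $[s_{j-1},s_j]$.

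Between consecutive emptyings, $S_i$ only grows via \cref{line:fold_add}. Each added $e$ satisfies $e\notin\Span_\cQ(S_i)$ and the expected conditional $\rank^*$ condition of \cref{line:fold_add_condition} with respect to the current restricted vector. That expectation is monotone non-increasing in the sampling vector, because $\rank^*(e\times[k]\mid\cdot)$ is non-increasing in the conditioning set by submodularity. Hence the condition also holds with $\vy=\bar\vx^{(i-1,j)}$, and \cref{lem:fold_growth_bound} gives $|\hat S_{i,j}|\le \|\bar\vx^{(i-1,j)}\|_1/\tau$.

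Since $S_{i-1}$ is not emptied in $(s_{j-1},s_j)$, it only grows there, and so $\Span_\cQ(S_{i-1})$ only grows. Therefore any coordinate where $\bar\vx^{(i-1,j)}$ exceeds $\hat\vx^{(i-1,j)}$ reflects a decrease of $\vx$ in this window, giving $\|\bar\vx^{(i-1,j)}\|_1\le \|\hat\vx^{(i-1,j)}\|_1+\Dec(s_{j-1},s_j)$. For a non-primary emptying we bound $\|\hat\vx^{(i-1,j)}\|_1\le bk|\hat S_{i-1,j}|$, which yields $|\hat S_{i,j}|\le (bk|\hat S_{i-1,j}|+\Dec(s_{j-1},s_j))/\tau$. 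For a primary emptying, the reset conditions (\cref{line:fold_early_reset_condition,line:fold_reset_condition}) give $\|\hat\vx^{(i-1,j)}\|_1<\tau|\hat S_{i,j}|/r$. Rearranging gives $|\hat S_{i,j}|<\Dec(s_{j-1},s_j)/(\tau(1-1/r))$.

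Summing over $j$, and using that the windows partition time so that each decrease is counted once, we get $E_i\le \frac{bk}{\tau}E_{i-1}+\frac{\Dec(T)}{\tau(1-1/r)}$. Here $E_i$ counts removals from $S_i$, and non-primary emptyings of $S_i$ coincide with emptyings of $S_{i-1}$. Unrolling with $E_0=0$ gives $E_i\le \frac{\Dec(T)}{(\tau-bk)(1-1/r)}$, using $bk<\tau$. Summing over the $\ell$ levels gives the claim; the per-step sum $\sum_t\sum_i|S_i^{t-1}\setminus S_i^t|$ counts only net removals, so it is at most $\sum_i E_i$.

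The main obstacle I expect is the monotonicity step that replaces $\vx$ by $\bar\vx$ inside the $\rank^*$ condition. In \cref{lem:removed} this is a probability of being spanned, which is obviously monotone. Here one must check that $\bE[\rank^*(e\times[k]\mid (\Rdist(\vy)\times[1])\cup (S_i\times[k]))]$ is non-increasing in $\vy$ under the natural coupling of $\Rdist(\vy)\subseteq\Rdist(\vy')$ for $\vy\le\vy'$. This follows from submodularity of $\rank^*$. A further subtlety is that the restriction set $\Span_\cQ(S_{i-1})$ itself grows during a window; I would handle this by taking the maximum over the restricted vectors, which are coordinatewise dominated by their final-support versions.
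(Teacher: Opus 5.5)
Your proposal is correct and follows the paper's route: the paper proves this lemma by rerunning the argument of \cref{lem:removed} with \cref{lem:fold_growth_bound} in place of \cref{lem:growth_bound} and with the level-mass bound $bk\,|S_{i-1}|$ obtained from \cref{obs:wrong_rang_penalty}, which gives the recursion $E_i \le \frac{bk}{\tau}E_{i-1} + \frac{\Dec(T)}{\tau(1-1/r)}$ exactly as you derive it. You also make explicit two points the paper leaves implicit, and you handle both correctly: that the expected conditional $\rank^*$ condition survives replacing $\vx$ by the coordinatewise maximum $\bar\vx^{(i-1,j)}$ (by submodularity under the monotone coupling), and that $\sum_t\sum_i \abs{S_i^{t-1}\setminus S_i^t} \le \sum_i E_i$.
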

\begin{proof}
The proof of this lemma follows the same line of argument as the proof of \cref{lem:removed}, with every use of \cref{lem:fold_growth_bound} replaced with a use of \cref{lem:growth_bound}. One complication, however, is that we can no longer bound $\|\vx|_{\Span_\cQ(S_i)}\|_1$ by $b \cdot |S_i|$. Instead, we need to use \cref{obs:wrong_rang_penalty} to bound this expression by $bk \cdot |S_i|$.
\end{proof}

Replacing \cref{lem:remaining} and \cref{lem:removed} with \cref{lem:fold_remaining} and \cref{lem:fold_removed} in the proof of \cref{cor:S_i_additions}, we get the following counterpart.

\begin{corollary} 
\label{cor:fold_S_i_additions}
    The total number of additions of elements to the sets $S_1, S_2, \dotsc, S_\ell$ satisfies
    \[
        \sum_{t \in [T]} \sum_{i \in [\ell]} \abs{S^t_i \setminus S^{t-1}_i} 
        \leq \frac{\Inc(T)}{\tau - rbk} + \frac{\ell \cdot \Dec(T)}{(\tau - bk)(1 - 1/r)}
        .
    \]
\end{corollary}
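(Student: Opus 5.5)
The plan is to mirror the proof of \cref{cor:S_i_additions} exactly, with the $k$-fold counterparts substituted for the matroid lemmas. I will charge every addition of an element to one of the sets $S_1, S_2, \dotsc, S_\ell$ to one of two fates. Either the element is still present in that set when \cref{alg:lfold-DCRS} terminates, or it is later removed from that set by some invocation of \Reset. The latter is the only way \cref{alg:lfold-DCRS} ever removes elements from the sets $S_i$ (\cref{line:fold_add} only adds).

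The first step is to justify this charging. Every set $S_i$ starts empty, and a single element may be added to and removed from $S_i$ several times. Hence the number of additions to $S_i$ over the whole execution equals the number of removals from $S_i$ plus the final size $|S_i|$. Summing over $i \in [\ell]$, it suffices to bound two quantities: the total removals $\sum_{t}\sum_{i}|S^{t-1}_i \setminus S^t_i|$, and the final total size $\sum_{i=1}^{\ell} |S_i|$. One small point is that counting additions through per-step differences $|S^t_i \setminus S^{t-1}_i|$ can only undercount actual additions, since an element added and then reset within the same step does not appear. So the identity yields an upper bound in the needed direction, and the same holds for removals.

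The second step is to plug in the two earlier lemmas. \cref{lem:fold_remaining} bounds the final total size by $\Inc(T)/(\tau - rbk)$, reading its summand as $|S_i|$. \cref{lem:fold_removed} bounds the total number of removals by $\ell \cdot \Dec(T)/((\tau - bk)(1-1/r))$. Adding these two bounds gives the claimed inequality.

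There is no real obstacle here; the substantive work lives in \cref{lem:fold_remaining} and \cref{lem:fold_removed}, which I take as given. The only care needed is the bookkeeping remark above: it ensures that the per-time-step differences in the statement are dominated by the true counts of additions and removals, and those true counts are what the two lemmas bound.
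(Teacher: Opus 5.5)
Your proposal is correct and matches the paper's argument: the paper obtains this corollary by rerunning the proof of \cref{cor:S_i_additions} with \cref{lem:fold_remaining} and \cref{lem:fold_removed} substituted for \cref{lem:remaining} and \cref{lem:removed}. Your detour through the true addition and removal counts is harmless but not needed. Telescoping $|S^t_i|-|S^{t-1}_i| = |S^t_i\setminus S^{t-1}_i| - |S^{t-1}_i\setminus S^t_i|$ already gives the identity ``additions $=$ removals $+$ final size'' exactly for the per-step differences, which is the form in which \cref{lem:fold_removed} is stated.
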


Our next goal is to bound the number of changes in the sets $\Iset_0, \Iset_2, \dotsc, \Iset_{\ell}$. 
\begin{lemma} \label{lem:fold_T_removals}
The total number of removals from the output set is bounded by
\[
        \expect*{\sum_t \abs{\Iset^{t-1} \setminus \Iset^{t}} }
        \leq \frac{k \cdot \Inc(T)}{\tau - rbk} + \frac{2k\ell \cdot \Dec(T)}{(\tau - bk)(1 - 1/r)} + d \cdot \sum_{t} \abs{\Rset^t \setminus \Rset^{t-1}}
        .
    \]
\end{lemma}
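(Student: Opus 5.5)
We follow the structure of the proof of \cref{lem:T_removals}. The plan is to bound the stronger quantity $\sum_t\sum_i \abs{\Iset^{t-1}_i\setminus\Iset^t_i}$ and to split every removal from some $\Iset_i$ into three disjoint types:
\begin{itemize}
\item (a) removals by the loop on \cref{line:fold_T_remove_condition}, triggered by the addition of an element $e$ to some $S_i$;
\item (b) removals on \cref{line:fold_non-R_remove}, caused by an element leaving $R_d$;
\item (c) removals performed by \Reset.
\end{itemize}
We denote the corresponding counts by $C^{(a)},C^{(b)},C^{(c)}$ and bound each separately.

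\textbf{Type (a).} This is where the factor $k$ enters, and we expect it to be the main obstacle. In the general matroid case, adding $e$ to $S_i$ forces at most one circuit and hence at most two removals. Here, adding $e$ to $S_i$ adds the $k$ copies $\{e\}\times[k]$ to the set $(\Iset_{i-1}\times[1])\cup(S_i\times[k])$. Before the addition, this set is independent in $\cQ_*^k$ by Part~\eqref{item:fold_I_independent} of \cref{lem:fold_properties}. After the loop on \cref{line:fold_T_remove_condition} finishes and $e$ has been added, the set is again independent (this is the content of the Property~\eqref{item:fold_I_independent} argument).

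The plan is to show that this loop removes at most $k$ elements. After the removals, $(\Iset'_{i-1}\times[1])\cup((S_i+e)\times[k])$ is independent, so its size is at most
\[
\rank^*\bigl((\Iset_{i-1}\times[1])\cup((S_i+e)\times[k])\bigr)\le \rank^*\bigl((\Iset_{i-1}\times[1])\cup(S_i\times[k])\bigr)+k=|\Iset_{i-1}|+k|S_i|+k .
\]
Comparing sizes gives only $|\Iset'_{i-1}|\ge|\Iset_{i-1}|-k+(\text{rank slack})$. This is not yet an upper bound on the number of removals, because the greedy loop could in principle remove more elements than necessary.

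To close this gap we would argue as follows. An element $e'$ is removed only when it is spanned by the remaining elements together with $(S_i+e)\times[k]$, so removing it does not decrease the rank of the current set. The rank is therefore preserved throughout the loop. The loop ends with an independent set, so the final size equals the initial rank of $(\Iset_{i-1}\times[1])\cup((S_i+e)\times[k])$, which is at least $|\Iset_{i-1}|+k|S_i|$ by monotonicity. Hence at most $k$ elements are removed. Summing over additions and invoking \cref{cor:fold_S_i_additions} gives
\[
C^{(a)}\le k\Bigl[\tfrac{\Inc(T)}{\tau-rbk}+\tfrac{\ell\Dec(T)}{(\tau-bk)(1-1/r)}\Bigr].
\]

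\textbf{Type (c).} Each \Reset empties $\Iset_i$ together with $S_i$. By Parts~\eqref{item:fold_I_independent} and~\eqref{item:fold_I_inclusion} of \cref{lem:fold_properties} together with \cref{obs:wrong_rang_penalty},
\[
|\Iset_i|=\rank^*(\Iset_i\times[1])\le\rank_\cM(\Span_\cQ(S_i))\le k|S_i|.
\]
Consequently $C^{(c)}$ is at most $k$ times the number of reset removals from the $S_i$'s, and \cref{lem:fold_removed} bounds this by $\frac{k\ell\Dec(T)}{(\tau-bk)(1-1/r)}$. Adding this to the $\Dec$ term from type (a) produces the stated coefficient $2k\ell$.

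\textbf{Type (b).} As in \cref{lem:T_removals}, every element removed on \cref{line:fold_non-R_remove} was in $\Iset_i\subseteq R_d$ at the end of the previous step, so it must have left $R_d$ since then. Thus $C^{(b)}\le\sum_t\abs{R_d^{t-1}\setminus R_d^t}$. Since $R_d^t$ is obtained from $R^t$ by keeping each element with probability $d$ (e.g.\ $R_d^t=R^t\cap D$ with $D$ independent), we get $\expect{\abs{R_d^{t-1}\setminus R_d^t}}\le d\cdot\expect{\abs{R^{t-1}\setminus R^t}}$. Taking expectations and summing the three bounds yields the lemma.
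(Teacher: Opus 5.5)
Your proposal is correct and follows the paper's proof essentially step for step. It uses the same three-way classification of removals, the same rank-preservation argument showing that the loop on \cref{line:fold_T_remove_condition} deletes at most $k$ elements per addition to some $S_i$ (charged via \cref{cor:fold_S_i_additions}), and the same charging of reset removals through $|\Iset_i|\le k|S_i|$ and \cref{lem:fold_removed}. The differences are cosmetic: you derive $|\Iset_i|\le k|S_i|$ through \cref{obs:wrong_rang_penalty} rather than from $\rank^*(\Span^*(S_i\times[k]))=k|S_i|$, and you state the type (b) bound in expectation over the removals $\Rset_d^{t-1}\setminus\Rset_d^t$, which is if anything more careful than the paper's wording; the exploratory size-comparison paragraph in type (a) is muddled and can simply be dropped.
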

\begin{proof}
\cref{alg:lfold-DCRS} removes elements from the sets $\Iset_0, \Iset_2, \dotsc, \Iset_{\ell}$ only in the following three places.
\begin{itemize}
	\item The loop starting on \cref{line:fold_T_remove_condition} of \cref{alg:lfold-DCRS} removes elements from $\Iset_i$. The condition of this loop guarantees that the removal of these elements does not decrease $\rank^*((\Iset_i \times [1]) \cup ((S_{i+1} + e) \times [k]))$, where $e$ is the element is added to $S_i$ immediately after the loop by \cref{line:fold_add}. Let us denote by $D$ the set of elements removed by a particular execution of this loop. Since Property~\eqref{item:fold_I_independent} of \cref{lem:fold_properties} guarantees that $(\Iset_i \times [1]) \cup (S_{i+1} \times [k])$ is independent in $\cQ_*^k$ at all times during the execution of \cref{alg:lfold-DCRS},
	\begin{align*}
		|(\Iset_i \times [1]) \cup (S_{i+1} \times [k])|
		={} &
		\rank^*((\Iset_i \times [1]) \cup (S_{i+1} \times [k]))\\
		\leq{} &
		\rank^*((\Iset_i \times [1]) \cup ((S_{i+1} + e) \times [k]))\\
		={} &
		\rank^*(((\Iset_i \setminus D) \times [1]) \cup ((S_{i+1} + e) \times [k]))\\
		={} &
		|((\Iset_i \setminus D) \times [1]) \cup ((S_{i+1} + e) \times [k])|
		.
	\end{align*}
	Notice that the last inequality can hold only when $|D| \leq k$, and thus, the number of elements removed by the loop on \cref{line:fold_T_remove_condition} of \cref{alg:lfold-DCRS} is upper bounded by $k$ times the number of elements added to the sets $S_1, S_2, \dotsc, S_\ell$ by \cref{line:fold_add}, which is at most $\frac{\Inc(T)}{\tau - rbk} + \frac{\ell \cdot \Dec(T)}{(\tau - bk)(1 - 1/r)}$ by \cref{cor:fold_S_i_additions}.
	\item \cref{line:fold_non-R_remove} removes elements of $\Iset_i$ that do not belong to $R_d$. Since \cref{alg:lfold-DCRS} adds elements to $\Iset_i$ only on \cref{line:fold_add_T}, which appears after \cref{line:fold_non-R_remove}, every element removed from $\Iset_i$ by \cref{line:fold_non-R_remove} must be in $\Iset_i \setminus R_d$ already at the beginning of the iteration of \cref{alg:lfold-DCRS}. However, \cref{cor:fold_feasible} guarantees that $\Iset_i \setminus R_d$ was empty when the previous iteration of \cref{alg:lfold-DCRS} terminated. Thus, every element removed by \cref{line:fold_non-R_remove} must have been removed from $R_d$ between the iterations (i.e., it is an element of $R^{t - 1}_d \setminus R^t_d$), and thus, the number of such removals is bounded by $\sum_{t} \abs{\Rset_d^t \setminus \Rset_d^{t-1}} \leq d \cdot \sum_{t} \abs{\Rset^t \setminus \Rset^{t-1}}  $. 
	\item Whenever the procedure $\Reset$ removes all the elements of a set $\Iset_i$, it also removes all the elements of $S_{i}$. By Parts~\eqref{item:fold_I_independent} and~\eqref{item:fold_I_inclusion} of \cref{lem:fold_properties}, $S_{i - 1} \times [k]$ and $\Iset_i \times [1]$ are both independent in $\cQ_*^k$ and $\Iset_i \times [1] \subseteq \Span^*(S_{i - 1} \times [k])$. Together, these facts imply
	\[
		|\Iset_i|
		=
		\rank^*(\Iset_i \times [1])
		\leq
		\rank^*(\Span^*(S_{i - 1} \times [k]))
		=
		\rank^*(S_{i - 1} \times [k])
		=
		k|S_{i - 1}|
		.
	\]
	Therefore, the total number of elements removed by $\Reset$ from the sets $\Iset_1, \dotsc, \Iset_{\ell}$ is upper bounded by $k$ times the number of elements it removes from the sets $S_1, S_2, \dotsc, S_{\ell}$ during this processing, which is at most $\frac{\ell \cdot \Dec(T)}{(\tau - bk)(1 - 1/r)}$ by \cref{lem:fold_removed}.
\end{itemize}

The lemma now follows by adding up the above bounds on the (expected) number of elements removed from the sets $\Iset_1, \Iset_2, \dotsc, \Iset_{\ell + 1}$ during the processing of the first $m$ update steps by the loop on \cref{line:fold_T_remove_condition} of \cref{alg:lfold-DCRS}, the procedure {\Reset}, and \cref{line:fold_non-R_remove} of \cref{alg:lfold-DCRS}.
\end{proof}

We are now ready to bound the recourse of our algorithm.
\begin{proposition} \label{prop:recours_bound_fold}
The total expected recourse of \cref{alg:lfold-DCRS} is at most 
\[
    O\left( \frac{k\ell r}{(\tau - rbk)(r-1)} \right) \sum_{t\in[T]} \expect{\abs{\Rset^t \setminus \Rset^{t-1}}}.
\]
In the increase-only case, this improves to 
\[ 
    O\left( \frac{k}{\tau - rbk} \right) \sum_{t} \expect{\abs{\Rset^t \setminus \Rset^{t-1}}}. 
\]
\end{proposition}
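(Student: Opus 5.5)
The plan is to mirror the proof of \cref{prop:recourse_bound}, replacing \cref{lem:T_removals} with \cref{lem:fold_T_removals}. First, the total number of additions to the output set $\Iset$ is at most the total number of removals plus the number of elements in $\Iset$ at termination. Since \cref{cor:fold_feasible} gives $\Iset \subseteq R_d^T \subseteq R^T$, the expected final size is at most $\norm{\vx^T}_1 \leq \Inc(T)$. Hence
\[
	\sum_t \expect*{\abs{\Iset^t \symdif \Iset^{t-1}}} \leq \Inc(T) + 2\,\expect*{\sum_t \abs{\Iset^{t-1} \setminus \Iset^t}} .
\]

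Second, I plug in \cref{lem:fold_T_removals} (taking expectations on its $\Rset$-term). This yields
\[
	\sum_t \expect*{\abs{\Iset^t \symdif \Iset^{t-1}}}
	\leq 2d \sum_t \expect*{\abs{\Rset^t \setminus \Rset^{t-1}}}
	+ \Big(\frac{2k}{\tau - rbk} + 1\Big)\Inc(T)
	+ \frac{4k\ell}{(\tau - bk)(1-1/r)}\,\Dec(T) .
\]
Both coefficients are dominated by $O\big(\frac{k\ell r}{(\tau - rbk)(r-1)}\big)$. To see this, note that $\tau - rbk \leq \tau - bk$ and $\frac{1}{1-1/r} = \frac{r}{r-1}$. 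Also $\frac{1}{\tau - rbk} \geq \frac{1}{k}$ because $\tau < k$, so the additive $1$ is absorbed into the $\Inc(T)$ coefficient. Finally $\ell \geq 1$ and $\frac{r}{r-1} \geq 1$.

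Third, I convert $\Inc(T) + \Dec(T) = \sum_t \norm{\vx^t - \vx^{t-1}}_1$ into the TCOS recourse using \cref{obs:TCOS-1-impossibility}, which gives $\norm{\vx^t - \vx^{t-1}}_1 \leq \expect*{\abs{\Rset^t \symdif \Rset^{t-1}}}$. The $d \leq 1$ term is trivially absorbed. The statement is phrased with $\abs{\Rset^t \setminus \Rset^{t-1}}$ on the right-hand side. Because $\Rset^0 = \emptyset$ and every removal of an element from $\Rset$ is preceded by its insertion, the number of removals is at most the number of insertions. Therefore $\sum_t \expect*{\abs{\Rset^t \symdif \Rset^{t-1}}} \leq 2\sum_t \expect*{\abs{\Rset^t \setminus \Rset^{t-1}}}$, which costs only a constant factor.

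For the increase-only case I set $\Dec(T) = 0$. The same chain then gives coefficient $O\big(\frac{k}{\tau - rbk}\big)$ on $\Inc(T)$, again absorbing the additive $1$ via $\tau < k$. The $\Rset$-term contributes only $O(1)$.

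I expect no real obstacle here. The only delicate point is the bookkeeping: checking that each lower-order term is absorbed by the claimed $O(\cdot)$ under the parameter ranges $\tau \in (bk, k)$ and $r \in (1, \tau/(bk))$, and justifying the switch from symmetric differences to $\Rset^t \setminus \Rset^{t-1}$.
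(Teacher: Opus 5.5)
Your proposal is correct and follows the paper's proof essentially step for step: the total number of changes is at most twice the removals plus the final size (at most $\Inc(T)$), \cref{lem:fold_T_removals} is plugged in, the coefficients are absorbed, and $\Inc(T)+\Dec(T)$ is converted to TCOS recourse via \cref{obs:TCOS-1-impossibility}. Your explicit justifications for absorbing the additive $1$ (via $\tau<k$) and for passing from $\abs{\Rset^t \symdif \Rset^{t-1}}$ to $\abs{\Rset^t \setminus \Rset^{t-1}}$ (via $\Rset^0=\emptyset$) fill in details that the paper leaves implicit.
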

\begin{proof} 
Just as in the proof of \cref{prop:recourse_bound}, the number of total changes in $I = \cup_{i = 0}^\ell \Iset_i$ is at most twice the number of removals from this set plus whatever remains at the end. Furthermore, the number for elements that remain in $I$ when \cref{alg:lfold-DCRS} terminates is upper bounded by $|R_d| \leq |R|$, and thus, is at most $\Inc(T)$. Combining these observations with \cref{lem:fold_T_removals}, we get
\begin{align*}
    \sum_{t} \expect{\abs{\Iset^t \setminus \Iset^{t-1}}} 
    &\leq 2 \cdot \sum_{t} d \cdot \expect{\abs{\Rset^t \setminus \Rset^{t-1}}} + \left(\frac{2k}{\tau - rbk} + 1\right) \cdot \Inc(T) + \frac{4k\ell \cdot \Dec(T)}{(\tau - bk)(1 - 1/r)} \\
    &\leq 2 \cdot \sum_{t} \expect{\abs{\Rset^t \setminus \Rset^{t-1}}} + O\left( \frac{k\ell r}{(\tau - rbk)(r-1)} \right) \cdot(\Inc(T) + \Dec(T)) \\
    &= O\left( \frac{k\ell r}{(\tau - rbk)(r-1)} \right) \cdot \sum_{t} \expect{\abs{\Rset^t \setminus \Rset^{t-1}}}
    .
\end{align*}
Reevaluating this bound when $\Dec(T) = 0$, we get 
\begin{align*}
    \sum_{t} \expect{\abs{\Iset^t \setminus \Iset^{t-1}}} &\leq \left(\frac{2k}{\tau - rbk} + 1\right) \Inc(T) + 2d \sum_{t} \expect{\abs{\Rset^t \setminus \Rset^{t-1}}} \\
    &\leq O\left( \frac{k}{\tau - rbk} \right) \sum_{t} \expect{\abs{\Rset^t \setminus \Rset^{t-1}}} 
    . \qedhere
\end{align*}
\end{proof}

\subsection{Setting the Parameters}
To conclude, we obtain a guarantee analogous to \Cref{thm:mat_dcrs_formal}:
\begin{theorem}
\label{thm:kfold-mat_dcrs_formal}
    Let $b \in(0, 1 - 1/k)$, and let $\eps > 0$ be small enough so that $(1 + \eps)b + \eps < 1 - 1/k$. There is an assignment of values to the parameters of \Cref{alg:lfold-DCRS}, which makes this algorithm a monotone $(b, c)$-CRS for any TCOS and for every $t \in [T]$, where 
    \[
        c = 1 - \frac{1 + \ln(k(1 - b))}{k(1 - b - 2 \eps)}
        . 
    \]
    Furthermore, its expected recourse is bounded by 
    \[
        \sum_{t = 1}^T \expect*{\abs{\Iset^t \triangle \Iset^{t-1}}} \leq z \cdot \sum_{t = 1}^T \expect*{\abs{\Rset^t \triangle \Rset^{t-1}}}
        ,
    \]
    for $z = O( \eps^{-3} \cdot \log \rank(\cN))$.
    For monotonously increasing $\{\vx^t\}$, the recourse improves to $z = O(\eps^{-1})$.
\end{theorem}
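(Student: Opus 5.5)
The plan is to instantiate \cref{alg:lfold-DCRS} with explicit parameters and read off balance from \cref{prop:balance_fold} and recourse from \cref{prop:recours_bound_fold}. Monotonicity follows as in \cref{prop:matroid-dcrs-monotone}, since the sets $S_i$ again evolve independently of $R_d$. I would set
\[
	\tau \defeq bk + 2\eps k, \qquad r \defeq 1 + \eps, \qquad m \defeq k - \tau = k(1 - b - 2\eps), \qquad d \defeq 1 - \frac{\ln m}{m}.
\]
The first step is to check that these values are in the admissible ranges of \cref{alg:lfold-DCRS}.
\begin{itemize}
	\item $\tau > bk$ holds trivially.
	\item $\tau < k$ follows from the hypothesis $(1+\eps)b + \eps < 1 - 1/k$, which gives $b + 2\eps < 1$ up to the slack it provides. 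The same hypothesis should give $m > 1$, so that $d \in (0,1)$.
	\item $r < \tau/(bk)$ reduces to $1 + \eps < 1 + 2\eps/b$, which holds since $b < 1$.
\end{itemize}
I expect the bookkeeping of $m > 1$ (and hence $d > 0$) from the stated hypothesis to be the only fiddly point. If it is tight, I would perturb $\tau$ slightly, for instance using the $(1+\eps)b+\eps$ form directly.

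For balance, \cref{prop:balance_fold} gives $d(1 - d^{m})$. Since $d^m \le e^{-\ln m} = 1/m$, we get
\[
	d(1 - d^m) \ge \Big(1 - \frac{\ln m}{m}\Big)\Big(1 - \frac{1}{m}\Big) \ge 1 - \frac{1 + \ln m}{m}.
\]
Using $\ln m \le \ln(k(1-b))$, this is at least $1 - \frac{1 + \ln(k(1-b))}{k(1-b-2\eps)}$, which is the claimed $c$.

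For recourse in the fully dynamic case, \cref{prop:recours_bound_fold} gives $O\big(\frac{k\ell r}{(\tau - rbk)(r-1)}\big)$. I would compute each factor:
\begin{itemize}
	\item $\tau - rbk = k(2\eps - \eps b) \ge \eps k$;
	\item $r - 1 = \eps$;
	\item $\tau/(rbk) = \frac{b + 2\eps}{(1+\eps)b} = 1 + \frac{\eps(2-b)}{(1+\eps)b} \ge 1 + \Omega(\eps)$, so $\ell = \lceil \log_{\tau/(rbk)} \rank(\cN) \rceil = O(\eps^{-1}\log \rank(\cN))$.
\end{itemize}
Combining these, the factor $k$ cancels and the recourse is $O\big(\frac{k \cdot \eps^{-1}\log\rank(\cN)}{\eps k \cdot \eps}\big) = O(\eps^{-3}\log \rank(\cN))$. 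In the increase-only case, the second bound of \cref{prop:recours_bound_fold} is $O\big(\frac{k}{\tau - rbk}\big) = O(\eps^{-1})$.

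Finally, the recourse bounds are stated relative to $\sum_t \expect{\abs{\Rset^t \setminus \Rset^{t-1}}}$. This quantity is at most $\sum_t \expect{\abs{\Rset^t \symdif \Rset^{t-1}}}$, so the bounds hold for any TCOS, as required.
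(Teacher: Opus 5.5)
Your route is the paper's: instantiate \cref{alg:lfold-DCRS} and read balance off \cref{prop:balance_fold} and recourse off \cref{prop:recours_bound_fold}. You use the same $r=1+\eps$ and the same $d=1-\ln(k-\tau)/(k-\tau)$. Only $\tau$ differs. The paper takes $\tau=k((1+\eps)b+\eps)$, chosen so that $k-\tau>1$ is exactly the hypothesis and $\tau-rbk=\eps k$. It then pays with a slightly loose final step, using $k(1-b-2\eps)<k-\tau<k(1-b)$. You take $\tau=k(b+2\eps)$, chosen so that $k-\tau$ equals the denominator of $c$ exactly.

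The step that fails is your admissibility check. The hypothesis reads $b+\eps(1+b)<1-1/k$. Because $\eps b<\eps$, this is strictly weaker than $b+2\eps<1-1/k$, which is exactly what $m>1$ means, so $m>1$ does not follow. For example, $k=10$, $b=1/2$, $\eps=0.22$ satisfy the hypothesis ($0.83<0.9$), yet $m=0.6$ and your $d=1-\ln(0.6)/0.6\approx 1.85$ is not a probability. Also, $m>1$ is what gives $d<1$; $d>0$ holds for every $m>0$, since $\ln m/m\le 1/e$.

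As you anticipated, the repair is easy either way.
\begin{itemize}
\item The claimed $c$ is positive only if $m>1+\ln(k(1-b))>1$, using $k(1-b)>1$. So when $m\le 1$ the balance claim is vacuous and you may take any $d\in(0,1)$. The recourse bound of \cref{prop:recours_bound_fold} does not depend on $d$, so it goes through unchanged.
\item Alternatively, adopt the paper's $\tau$. Then bound the numerator and denominator of $(1+\ln(k-\tau))/(k-\tau)$ separately via $k(1-b-2\eps)<k-\tau<k(1-b)$.
\end{itemize}

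Everything else you wrote is correct: the chain $d(1-d^m)\ge(1-\ln m/m)(1-1/m)\ge 1-(1+\ln m)/m$, the bound $\tau-rbk\ge\eps k$, the estimate $\ell=O(\eps^{-1}\log\rank(\cN))$, and the resulting $O(\eps^{-3}\log\rank(\cN))$ and $O(\eps^{-1})$ recourse bounds.

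Finally, both your argument and the paper's tacitly need $b+2\eps<1$. Otherwise the stated $c$ exceeds $1$ or is undefined, and in your version $\tau\ge k$. The stated hypothesis does not guarantee this, so it should be read as part of ``$\eps$ small enough''.
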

\begin{proof}
We choose $r = 1 + \eps$, $\tau = k\left((1 + \eps)b + \eps\right)$, and $d = 1 - \frac{\ln(k - \tau)}{(k - \tau)}$. Notice that $k - \tau > 1$, and therefore, $d \in (0, 1)$. This choice of parameters yields
\[
	\ell
	=
	\lceil \log_{\tau / (rbk)} \rank(\cN)\rceil
	=
	\lceil \log_{1 + \eps/(rb)} \rank(\cN)\rceil
	\leq
	\lceil \log_{1 + \eps/(2b)} \rank(\cN)\rceil
	=
	O(b\eps^{-1} \log \rank(\cN))
	,
\]
and thus, by \cref{prop:recours_bound_fold}, leads to a recourse of
\begin{align*}
    O\left( \frac{rk\ell}{(\tau - rbk)(r-1)} \right) 
    &= O\left( \frac{(1 + \eps)k\ell}{k \cdot \left(\left((1 + \eps)b + \eps\right) - (1 + \eps)b\right) \cdot \left((1 + \eps) - 1\right)}\right) \\
    &= O\left(\frac{(1 + \eps)\ell}{\eps^2}\right) = O(\ell/\eps^2) \\
    &= O( (b/\eps^3) \cdot \log \rank(\cN))
\end{align*}
In the incremental case, the recourse improves to
\begin{align*}
    O\left( \frac{k}{\tau - rbk} \right) &= O\left( \frac{1}{(1 + \eps)b + \eps - (1 + \eps)b} \right) 
    = O\left(\eps^{-1}\right)
    . 
\end{align*}
By \cref{prop:balance_fold}, the balance implied by the above choice of values for the parameters is
\begin{align*}
	d(1 - d^{k - \tau})
	={} &
	\bigg(1 - \frac{\ln(k - \tau)}{k - \tau}\bigg) \cdot \bigg[1 - \bigg(1 - \frac{\ln(k - \tau)}{k - \tau}\bigg)^{k - \tau}\bigg]
	\geq
	\bigg(1 - \frac{\ln(k - \tau)}{k - \tau}\bigg) \cdot (1 - e^{-\ln(k - \tau)})\\
	={} &
	\bigg(1 - \frac{\ln(k - \tau)}{k - \tau}\bigg) \cdot \bigg(1 - \frac{1}{k - \tau}\bigg)
	\geq
	1 - \frac{1 + \ln(k - \tau)}{k - \tau}\\
	\geq{} &
	1 - \frac{1 + \ln(k(1 - b))}{k(1 - b - 2\eps)}. \qedhere
\end{align*}
\end{proof}

For $k \geq 2$, an interesting choice is $b = 1 - \sqrt{\frac{\ln k}{k}}$ (which is indeed smaller than $1 - 1/k$). For this value of $b$, we may choose for \cref{thm:kfold-mat_dcrs_formal} any positive $\eps$ smaller than
\[
	\frac{1 - b - 1/k}{1 + b}
	=
	\frac{\sqrt{k\ln k} - 1}{2k - \sqrt{\ln k}}
	,
\]
and thus, we can choose $\eps = \frac{\sqrt{\ln k / k}}{16}$. The balance implied by \cref{thm:kfold-mat_dcrs_formal} for this choice of $\eps$ is
\begin{align*}
	1 - \frac{1 + \ln(k(1 - b))}{k(1 - b - 2\eps)}
	={} &
	1 - \frac{1 + \ln(\sqrt{k \ln k})}{\sqrt{k \ln k}/8}
	=
	1 - O\bigg(\sqrt{\frac{\log k}{k}}\bigg)
	,
\end{align*}
and the recourse is $O(\eps^{-3} \cdot \log \rank(\cN)) = O((k/\log k)^{3/2} \cdot \log \rank(\cN))$ in general and $O(\eps^{-1}) = O(\sqrt{k/\log k})$ for monotonically increasing $\vx^t$. Using \cref{rmk:remove_b}, we now get the DCRS whose parameters are reported by \cref{thm:lfold-union_dcrs_informal}, and applying \Cref{thm:low_rec_submod} to this DCRS gives the approximation and recourse stated in \Cref{tab:app}.

%% file: adversary.tex
\section{Impossibility Against Adaptive Adversaries}
\label{sec:adversary-lb}

The results we present for solving fully dynamic problems with low competitive recourse are against a \emph{non-adaptive} adversary, meaning that the sequence of element additions/deletions and changes to the objective function are adversarially chosen but fixed at the outset.
The same is true for the competitive recourse results of \cite{BBLS23} and Buchbinder, Levin, and Yang \cite{buchbindercompetitively}.

We show this non-adaptivity assumption is necessary by presenting an instance of the online problem from \cref{thm:low_rec_submod} for which the competitive recourse must be at least $\Omega(n)$.
We also adapt this instance to show that no $o(n)$-TCOS exists for adaptive online adversaries, which shows that the idea of DCRSs is probably not useful against this kind of adversaries even for applications other than dynamic packing problems.

\begin{example}[$1$-uniform matroid with element deletions]
\label{ex:adaptive-lower-bound-1-uniform}
    Consider the $1$-uniform matroid on the $n$-element ground set $\cN$ (i.e., a set is independent in the matroid if it contains up to a single element) and an objective function that simply returns the number of elements in the solution. In the first time step, all the elements are in the active set, but at each subsequent time step one of the elements is removed from the active set, until this set contains only a single element (notice that $T = n$ in this instance). It will be useful to denote by $\sigma$ a permutation over $\cN$ so that $\sigma(t - 1)$ is the element removed at time step $t$, and $\sigma(n)$ is the element that remain in the active set until the very end.
\end{example}

Assuming we want to output at every time step a solution that provides $\beta$-approximation for some constant $\beta > 0$, then the algorithm must output at each time step a non-empty set with probability at least $\beta$. Thus, the hindsight-optimal solution is to use the solution $\{\sigma(n)\}$ at all time steps, which leads to a recourse of $1$ (this is non-zero due to the cost of getting to this solution from the empty solution). The adversary controls the permutation $\sigma$, and its goal is to make the recourse of any algorithm as high as possible compared to this $1$. When the adversary is non-adaptive, it can use a uniformly random permutation $\sigma$, and one can verify that this would lead to a logarithmic recourse for any algorithm guaranteeing $\beta$-competitiveness, which shows that at least one of the logarithmic factors in each one of the competitive recourse bounds of \Cref{tab:app} is necessary. 

The following shows that when the adversary can adaptively choose the element $\sigma(t)$ based on the solution $\Iset^{t-1}$ of the algorithm at time step $t - 1$, any $\Omega(1)$-competitive algorithm incurs much more recourse.
\begin{theorem}
\label{thm:adaptive-adversary-lb}
    Any dynamic $\beta$-competitive algorithm $\cA$ for maximization subject to a $1$-uniform matroid constraint against an adaptive adversary incurs worst-case competitive recourse $\Omega(\beta \cdot n)$.
\end{theorem}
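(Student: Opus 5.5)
We prove \cref{thm:adaptive-adversary-lb} with the instance of \cref{ex:adaptive-lower-bound-1-uniform}, where the adversary chooses the deleted element adaptively. The comparison point is the hindsight optimum, which keeps $\{\sigma(n)\}$ throughout and has recourse $1$. It therefore suffices to show that the adversary can force any $\beta$-competitive algorithm $\cA$ to incur expected total recourse $\Omega(\beta n)$. Since the objective is cardinality, $\beta$-competitiveness at time $t$ means $\prob{\Iset^t \neq \emptyset} \geq \beta$. Equivalently, $\Iset^t$ consists of a single active element with probability at least $\beta$.

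The adversary's rule is as follows. At step $t$ it looks at the realized $\Iset^{t-1}$. If this set is some $\{e\}$ with $e$ still active and $e$ not the last active element, it deletes $e$; otherwise it deletes an arbitrary active element. This is legal for an adaptive adversary, which sees outcomes up to time $t-1$. Consider any step $t \le n-1$ at which at least two elements remain active after the deletion. With probability at least $\beta$ (over the algorithm's randomness, conditioned on the history), $\Iset^{t-1}$ is a single active element, and the adversary deletes exactly that element. Then $\Iset^t \neq \Iset^{t-1}$: $\Iset^t$ cannot contain the deleted element, so at least one removal occurs. Hence $\prob{\Iset^t \symdif \Iset^{t-1} \neq \emptyset} \geq \beta$ for each of the $n-2$ relevant steps. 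By linearity of expectation, $\sum_t \expect{\abs{\Iset^t \symdif \Iset^{t-1}}} \geq \beta(n-2) = \Omega(\beta n)$, against $\optr^\beta = 1$.

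The main subtlety, which is where I expect any real care to be needed, is the conditioning. The guarantee $\prob{\Iset^{t-1} \neq \emptyset} \geq \beta$ is unconditional, whereas the adversary acts on realizations. Linearity still handles this: the event ``$\Iset^{t-1}$ is a nonempty singleton'' has probability at least $\beta$ unconditionally, and on this event the adversary's deterministic response forces a change. So we do not need a per-history bound.

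We must also make sure the recourse comparison is against $\optr^\beta$ for the \emph{realized} sequence $\sigma$. This holds because every realized sequence has optimal recourse exactly $1$, so the ratio bound holds in expectation. Randomizing $\beta$ across steps is not an issue either. Finally, for the TCOS variant mentioned in the section, the same adaptive deletion applies to $\Rset^{t-1}$ with a vector $\vx$ of total mass $1$ spread on active elements. I would note this only as a remark.
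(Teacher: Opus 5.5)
Your proposal is correct and follows the paper's own proof: the adaptive adversary deletes the singleton $\Iset^{t-1}$ whenever it is nonempty, which forces a change at every such step, and $\beta$-competitiveness makes this happen, in expectation, in at least a $\beta$ fraction of the roughly $n$ deletion steps, giving $\Omega(\beta n)$ against $\optr^\beta = 1$. One small wording fix: drop the parenthetical ``conditioned on the history'' in your second paragraph. As you note yourself afterwards, only the unconditional bound $\prob{\Iset^{t-1} \neq \emptyset} \geq \beta$ is available, and it suffices by linearity of expectation.
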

\begin{proof}
    We demonstrate this via \Cref{ex:adaptive-lower-bound-1-uniform}.
    As we have established, the hindsight-optimal algorithm chooses $\Iset^1 = \ldots = \Iset^n = \{\sigma(n)\}$, which leads to a recourse of $1$.

		After each round $t - 1$, the adversary needs to set $\sigma(t - 1)$. When $\Iset^{t-1}$ is non-empty, an adaptive adversary can set $\sigma(t - 1)$ to be the single element of $\Iset^{t-1}$, which forces $\cA$ to choose $\Iset^t$ that is either empty or contains an element other than $\sigma(t - 1)$. In either case, the recourse of the algorithm increases by at least $1$ in such a round.
		
		Note now that since $\cA$ is a $\beta$-competitive algorithm, the fraction of the rounds in which $\Iset^t$ is non-empty must be at least $\beta$, and thus, by the above logic, the recourse of the algorithm must be at least $\beta n - 1 = \Omega(\beta \cdot n)$.
\end{proof}

The above instance, together with the relaxation $\cP = \{\vx \in [0, 1]^\cN : \|\vx\|_1 \leq 1\}$, demonstrates also the impossibility of non-trivial TCOS for adaptive adversaries.
\begin{theorem}
    If $\cR$ is an $\alpha$-TCOS against an adaptive adversary for $1$-uniform matroid and the polytope $\cP$ relaxing it, then $\alpha = \Omega(n)$.
\end{theorem}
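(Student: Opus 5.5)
The plan is to mimic \Cref{ex:adaptive-lower-bound-1-uniform} at the fractional level. The adaptive adversary keeps every surviving coordinate at the small value $1/n$, so the $\ell_1$ movement is tiny. Each time, it deletes an element that the sampler has currently placed in $\Rset^{t-1}$. Property~1 of \Cref{def:tcos} then forces that element out of $\Rset^t$, costing one unit of recourse per step in which $\Rset^{t-1}\neq\emptyset$.

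\textbf{Construction.} Set $\vx^1 = \frac{1}{n}\characteristic_\cN \in \cP$ and let $A^1 = \cN$ be the set of surviving elements. For $t = 2,\dots,n$, the adversary looks at $\Rset^{t-1}$ and chooses $e_{t-1}\in A^{t-1}$ as follows.
\begin{itemize}
\item If $\Rset^{t-1}\cap A^{t-1}\neq\emptyset$, it picks $e_{t-1}$ from this intersection.
\item Otherwise it picks $e_{t-1}$ from $A^{t-1}$ arbitrarily.
\end{itemize}
It then sets $A^t = A^{t-1}-e_{t-1}$ and $\vx^t = \frac1n\characteristic_{A^t}$. Every $\vx^t$ lies in $\cP$, and $\norm{\vx^t-\vx^{t-1}}_1 = 1/n$ deterministically, whatever the sampler does. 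Hence the total movement, including the first step from $\vx^0=0$, is exactly $1 + (n-1)/n < 2$. Since this quantity is deterministic, the adaptivity creates no subtlety on the right-hand side of Property~2 summed over $t$.

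\textbf{Lower bound on the recourse.} Fix $t\ge 2$ and condition on the history up to time $t-1$. Let $m = n-t+2$ be the number of elements of $A^{t-1}$. Property~1 says $\Rset^{t-1}\sim\Rdist(\vx^{t-1})$, interpreted conditionally on the history that determined $\vx^{t-1}$, which is the natural meaning of the definition against an adaptive adversary. Therefore
\[
\Pr\big[\Rset^{t-1}\neq\emptyset\big] = 1-(1-1/n)^{m} \ge 1-e^{-m/n} \ge \frac{m}{2n}.
\]
The last inequality holds because $m\le n$. On this event, $e_{t-1}\in\Rset^{t-1}$ while $x^t_{e_{t-1}}=0$, so $e_{t-1}\notin\Rset^t$ with probability $1$. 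Thus $\abs{\Rset^t\symdif\Rset^{t-1}}\ge 1$ on this event. Summing over $t=2,\dots,n$ gives
\[
\sum_t \expect{\abs{\Rset^t\symdif\Rset^{t-1}}} \ge \sum_{m=2}^{n}\frac{m}{2n} = \Omega(n).
\]
Comparing with the movement bound of less than $2$ (summing Property~2 over $t$) yields $\alpha = \Omega(n)$.

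\textbf{Main obstacle.} The delicate point is not the calculation but making precise what an $\alpha$-TCOS guarantees against an adaptive adversary. I would state explicitly that Property~1 holds conditionally on the past, since the adversary's choice of $\vx^t$ depends on $\Rset^{<t}$. I would also note that we only use two consequences of it: zero coordinates are never sampled, and the nonemptiness probability computed above. Both are robust to any coupling the sampler uses across time. This also explains why keeping the coordinates at $1/n$ is the right design choice, rather than renormalising the surviving mass to keep $\norm{\vx^t}_1=1$. Renormalising would already make the movement $\Theta(\log n)$ and weaken the bound.
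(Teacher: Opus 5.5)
Your proof is correct, and it rests on the same mechanism as the paper's. The adaptive adversary deletes an element of the current sample, Property~1 forces that element out of the next sample, and so each early step costs $\Omega(1)$ expected recourse while the fractional movement is only $O(1/n)$.

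The two routes differ in two details:
\begin{itemize}
\item \textbf{Normalization.} The paper renormalizes: $\vx^t$ is uniform with value $1/(n-t+1)$ on the surviving elements. You keep every surviving coordinate at $1/n$.
\item \textbf{Per-step versus summed.} The paper uses Property~2 \emph{per step}, as it is stated in the definition. The first deletion alone has movement $2/n$ and recourse at least $1-e^{-1}$, which gives $\alpha \ge (1-e^{-1})n/2$. You compare sums over all steps.
\end{itemize}

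Because the definition is per step, your closing remark that renormalizing ``weakens the bound'' holds only for a summed comparison; under the per-step form it costs nothing. The same is true of your own sequence. The single step $t=2$ already has movement $1/n$ and recourse at least $1-e^{-1}$, a ratio of at least $(1-e^{-1})n$, so the summation is not needed for the statement as defined.

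Your unnormalized construction does buy a stronger conclusion. Its total movement is below $2$ while its total recourse is $\Omega(n)$. It therefore rules out even an \emph{amortized} version of Property~2 with $\alpha = o(n)$. That amortized form is the one that actually matters for a DCRS, whose recourse guarantee is stated in summed form. By contrast, the paper's renormalized sequence has total movement $\Theta(\log n)$, so a summed comparison on it would only certify $\Omega(n/\log n)$.
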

\begin{proof}
    Given a non-empty set $S \subseteq \cN$, let us denote $\vx(S) := \frac{1}{\abs{S}} \cdot \sum_{e \in S} \mathbbm{1}_e$, where $\mathbbm{1}_e$ is the standard basis vector for $e$. For example, $x(\cN) = (\nf{1}{n}, \ldots, \nf{1}{n})$, and for every $e \in \cN$, $x(\{e\}) = \mathbbm{1}_e$. Note that $\vx(S) \in \cP$ whenever $S \neq \emptyset$.
    
    Assume now that the adversary maintains a permutation $\sigma$ as in \cref{ex:adaptive-lower-bound-1-uniform}. Specifically, in each round $t$, the adversary passes to the algorithm $\cR$ the vector $\vx^t \defeq \vx(\cN \setminus \sigma_{< t})$, where $\sigma_{< t}$ is the set of elements that appear in the permutation $\sigma$ prior to position $t$. The algorithm $\cR$ then has to return a set $R^t$ 
    that is an independent rounding of $\vx^t$.
    
    Based on $R^t$, the adversary chooses some element of $\cN \setminus \sigma_{< t}$ as $\sigma(t)$. If $R^t$ is empty, then the choice of $\sigma(t)$ by the adversary can be arbitrary, but when $|R^t| \geq 1$, the adversary chooses $\sigma(t)\in R^t$, which it can do since it is adaptive. Notice that when this happens for $1 \leq t < n - 1$, the set $R^{t + 1}$ cannot contain the chosen element of $R^t$, and thus, $R^t \triangle R^{t + 1}$ is nonempty. Hence, for every round $1 \leq t < n - 1$,
    \[
    	\bE[|R^t \triangle R^{t + 1}|]
    	\geq
    	\Pr[R^t \neq \emptyset]
    	=
    	1 - \bigg(1 - \frac{1}{n - t + 1}\bigg)^{n - t + 1}
    	\geq
    	1 - e^{-1}
    	\enspace.
    \]
    
    In contrast, for every round $1 \leq t < n - 1$,
    \[
    	\norm{\vx(\cN \setminus \sigma_{< t}) - \vx(\cN \setminus \sigma_{< t + 1})}_1
    	=
    	\frac{1}{n - t + 1} + (n - t) \cdot \bigg(\frac{1}{n - t} - \frac{1}{n - t + 1}\bigg)
    	=
    	\frac{2}{n - t + 1}
    	\enspace,
    \]
    and thus, $\expect{\abs{R^t \triangle R^{t + 1}}}$ is larger than $\norm{\vx(\cN \setminus \sigma_{< t}) - \vx(\cN \setminus \sigma_{< t + 1})}_1$ by a factor of $\Theta(n - t)$. Recall now that since $\cR$ is an $\alpha$-TCOS, $\alpha$ must be at least the above ratio for every $1 \leq t < n - 1$; thus $\alpha = \Omega(n)$.
\end{proof}

%% file: tcos-variance.tex
\section{Comparison of TCOS Recourse Variance}
\label{app:tcos}

For a given sequence $\{\vx^t\}_t$ with total $\ell_1$ movement $r$, both the threshold TCOS and the Markovian TCOS are $1$-TCOS, and therefore incur expected recourse $r$ over their respective sequences $\{\Rset^t\}_t$.
However, the Markovian TCOS produces a more predictable total recourse, with variance $O(r)$ regardless of the input sequence of fractional vectors.
By contrast, for all $T$ and $r = T\cdot (1-\Omega(1))$, there are sequences $\{\vx^t\}_t$ on which the threshold TCOS exhibits total recourse with variance $\Omega(rT)$. 

In this appendix we prove both these results. We begin with Example~\ref{ex:threshold-tcos-high-variance}, which demonstrates the high variance the threshold TCOS can have. Then, the majority of the appendix is devoted to bounding the variance of the Markovian TCOS.

\begin{example}
\label{ex:threshold-tcos-high-variance}
    For fixed $T$ and fractional recourse $r < T$, consider a ground set containing a single element $e$ and a sequence of vectors given by $x_e^t := \delta \cdot \ind{\text{$t$ is even}}$, where $\delta := \nf{r}{T}$.
    Clearly $\sum_t \norm{\vx^{t-1} - \vx^t}_1 = r$.
\end{example}
Given \Cref{ex:threshold-tcos-high-variance}, the threshold TCOS chooses a single $\lambda_e \sim U([0,1])$ at the outset. 
If $\lambda_e \in (0,\delta)$, then its total integral recourse on this sequence is $R = T$; otherwise, it is $R = 0$.
Thus, for $r = T(1-\Omega(1))$,
\[
    \var(R) = \expect*{R^2} - (\expect{R})^2 = T^2 \cdot \delta - (T \cdot \delta)^2 = T^2 \cdot \frac{r}{T} - T^2 \cdot \frac{r^2}{T^2} = rT - r^2 = \Omega(rT).
\]

We now turn to the Markovian TCOS.
We write the recourse of a sequence $\{\Rset^t\}$ as $R = \sum_t \sum_{e \in \cN} Y_e^t$, where $Y_e^t := \ind{e \in \Rset^{t-1} \symdif \Rset^t}$ is an indicator for the event that element $e$ is in the symmetric difference of $\Rset^{t-1}$ and $\Rset^t$.
Next, recall that $\delta_e^t := x_e^t - x_e^{t-1}$, and observe that $\expect{Y_e^t} = \abs{\delta_e^t}$
because the Markovian TCOS is a $1$-TCOS.

By the definition of the Markovian TCOS, the random decisions for distinct $e, e' \in \cN$ are independent, and thus, the indicators $Y_e^t$ and $Y_{e'}^{t'}$ are independent Bernoulli random variables.
If $Y_e^t$ and $Y_e^{t'}$ were also independent for fixed $e$ and $t \neq t'$, then we would be done: variances for independent random variables add, and hence, we would have $\var(R) = \sum_t \sum_{e \in \cN} \var(Y_e^t) = \sum_t \sum_e \delta_e^t (1 - \delta_e^t) \leq \sum_t \sum_e \delta_e^t = r$. 
But this does not hold: if $x_e^0 = 0$, $x_e^1 = \nf{1}{2}$, and $x_e^2 = 0$, then $Y_e^1$ and $Y_e^2$ are equal, and thus positively correlated.

To address this, we split each variable $Y^t_e$ into upward and downward parts. More formally, $Y^{t\uparrow}_e \defeq \ind{e \in \Rset^t \setminus \Rset^{t - 1}}$ and $Y^{t\downarrow}_e \defeq \ind{e \in \Rset^t \setminus \Rset^{t - 1}}$. Clearly, $Y^t_e = Y^{t\uparrow}_e + Y^{t\downarrow}_e$. Additionally, Corollary~\ref{cor:markovian-negative-time-correlation} below shows that these indicators exhibit negative correlations.
We now set about proving this. 

\begin{lemma}
\label{lem:markovian-positive-time-correlation}
    For all sequences $\{x_t\}_t$, time steps $1 \leq t \leq t' \leq T$ and elements $e$, the events $e \in \Rset^t$ and $e \in \Rset^{t'}$ are positively correlated in the sense that $\prob{e \in \Rset^{t} \wedge e \in \Rset^{t'}} \geq \prob{e \in \Rset^{t}} \cdot \prob{e \in \Rset^{t'}} = x^t_e x^{t'}_e$.
\end{lemma}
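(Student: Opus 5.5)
The plan is to write the Markovian TCOS restricted to a single element $e$ as a time-inhomogeneous two-state Markov chain on $\{0,1\}$, with $1$ meaning $e \in \Rset^t$. Its marginals are $\prob{e \in \Rset^t} = x^t_e$ for every $t$; this follows by induction from the transition rules of \cref{ex:markov-TCOS}. Writing $p_s \defeq \prob{e \in \Rset^{t'} \mid e \in \Rset^s}$ and $q_s \defeq \prob{e \in \Rset^{t'} \mid e \notin \Rset^s}$ for $t \le s \le t'$, the target inequality becomes
\[
\prob{e \in \Rset^{t'} \mid e \in \Rset^t} \ge x^{t'}_e .
\]
This is equivalent to the claim because $x^{t'}_e = x^t_e p_t + (1-x^t_e) q_t$. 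So it suffices to show $p_s \ge q_s$ for all $s$. The case $x^t_e \in \{0,1\}$ is trivial.

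\textbf{Step 1 (single step is monotone).} Each one-step transition kernel is stochastically monotone: the probability of being in state $1$ at time $s+1$ is at least as large when starting from $1$ as when starting from $0$. If $\delta = x^{s+1}_e - x^s_e \ge 0$, then from $1$ the chain stays at $1$ with probability $1$, which is at least $\delta/(1-x^s_e)$. If $\delta<0$, then from $1$ the chain stays at $1$ with probability $1 + \delta/x^s_e \ge 0$, while from $0$ it reaches $1$ with probability $0$. In both cases we can realize the step as a monotone coupling, meaning the state started from $1$ dominates the state started from $0$.

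\textbf{Step 2 (compose).} Composing monotone couplings from $s=t$ to $t'$ yields a coupling of the two chains, one started at $1$ and one at $0$ at time $t$, in which the first dominates the second at all later times. Hence $p_t \ge q_t$. Equivalently, by backward induction, $p_s - q_s = (a_s - b_s)(p_{s+1}-q_{s+1}) \ge 0$, where $a_s, b_s$ are the one-step probabilities of moving to state $1$ from state $1$ and from state $0$ respectively, and $a_s \ge b_s$ by Step 1. Then
\[
\prob{e\in\Rset^t \wedge e\in\Rset^{t'}} = x^t_e p_t \ge x^t_e\bigl(x^t_e p_t + (1-x^t_e)q_t\bigr) = x^t_e x^{t'}_e .
\]

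The main obstacle is only the bookkeeping of the degenerate cases $x^s_e \in \{0,1\}$, where a transition probability has a zero denominator. When $x^s_e=1$ we have $\prob{e\notin\Rset^s}=0$, and when $x^s_e=0$ we have $\prob{e\in\Rset^s}=0$, so the undefined row of the kernel is never used. Any convention for that row that keeps $a_s \ge b_s$ (say $a_s=1$, $b_s=0$) makes the argument go through unchanged.
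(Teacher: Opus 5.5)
Your proof is correct and is essentially the paper's argument. The paper inducts forward on $t'$ directly on the joint probability $\prob{e \in \Rset^t \wedge e \in \Rset^{t'}}$, and the nonnegative coefficients $1 - \delta^{t'}_e/(1-x^{t'-1}_e)$ (when $\delta^{t'}_e > 0$) and $1 + \delta^{t'}_e/x^{t'-1}_e$ (when $\delta^{t'}_e < 0$) that it multiplies the inductive bound by are exactly your $a_s - b_s$, so both proofs rest on the same one-step monotonicity. The only practical difference is that the paper treats $\delta^{t'}_e = 0$ as a separate case, which sidesteps the zero-denominator bookkeeping that your backward, conditional-probability formulation needs.
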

\begin{proof}
We prove the lemma by induction on $t' - t$. If $t' - t = 0$, then the claim holds since
\[
	\prob{e \in \Rset^{t} \wedge e \in \Rset^{t'}} = \prob{e \in \Rset^{t}} = x^t_e
	\geq
	(x^t_e)^2 = \prob{e \in \Rset^{t}} \cdot \prob{e \in \Rset^{t}}
	\enspace.
\]
Assume now that $t' - t > 0$. We need to consider a few cases. If $\delta^{t'}_e = 0$, then by the induction hypothesis,
\[
	\prob{e \in \Rset^t \wedge e \in \Rset^{t'}}
	=
	\prob{e \in \Rset^t \wedge e \in \Rset^{t' - 1}}
	\geq
	x^t_e x^{t' - 1}_e
	=
	x^t_e x^{t'}_e
	.
\]
Consider now the case that $\delta^{t'}_e > 0$. Then,
\begin{align*}
	\prob{e \in \Rset^t \wedge e \in \Rset^{t'}}
	={} &
	\prob{e \in \Rset^t \wedge e \in \Rset^{t' - 1}} + \prob{e \in \Rset^t \wedge e \not \in \Rset^{t' - 1}} \cdot \frac{\delta^{t'}_e}{1 - x^{t' - 1}_e}\\
	={} &
	\prob{e \in \Rset^t \wedge e \in \Rset^{t' - 1}} \cdot \bigg(1 - \frac{\delta^{t'}_e}{1 - x^{t' - 1}_e}\bigg) + \prob{e \in \Rset^t} \cdot \frac{\delta^{t'}_e}{1 - x^{t' - 1}_e}\\
	\geq{} &
	x^t_e x^{t' - 1}_e \cdot \bigg(1 - \frac{\delta^{t'}_e}{1 - x^{t' - 1}_e}\bigg) + x^t_e \cdot \frac{\delta^{t'}_e}{1 - x^{t' - 1}_e}
	=
	x^t_e (x^{t'-1}_e + \delta^{t'}_e)
	=
	x^t_e x^{t'}_e
	.
\end{align*}
Similarly, when $\delta^{t'}_e < 0$,
\begin{align*}
	\prob{e \in \Rset^t \wedge e \in \Rset^{t'}}
	={} &
	\prob{e \in \Rset^t \wedge e \in \Rset^{t' - 1}} \cdot \bigg(1 + \frac{\delta^{t'}_e}{x_e^{t' - 1}}\bigg)\\
	\geq{} &
	x^t_e x^{t' - 1}_e \cdot \bigg(1 + \frac{\delta^{t'}_e}{x_e^{t' - 1}}\bigg)
	=
	x^t_e (x^{t'-1}_e + \delta^{t'}_e)
	=
	x^t_e x^{t'}_e
	.
	\qedhere
\end{align*}
\end{proof}

\begin{corollary}
\label{cor:markovian-negative-time-correlation}
    For all sequences $\{x_t\}_t$, time steps $1 \leq t < t' \leq T$, and elements $e$, the events $Y^{t\uparrow}_e$ and $Y^{t'\uparrow}_e$ are negatively correlated. 
\end{corollary}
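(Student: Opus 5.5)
We need to show that for $t<t'$, $\prob{Y^{t\uparrow}_e = 1 \wedge Y^{t'\uparrow}_e = 1} \leq \prob{Y^{t\uparrow}_e=1}\cdot\prob{Y^{t'\uparrow}_e=1}$. If $\delta^t_e \leq 0$ or $\delta^{t'}_e \leq 0$, one of the indicators is identically zero and there is nothing to prove. So we assume $\delta^t_e,\delta^{t'}_e>0$. In that case the Markovian TCOS gives $\prob{Y^{t\uparrow}_e=1} = (1-x^{t-1}_e)\cdot \frac{\delta^t_e}{1-x^{t-1}_e} = \delta^t_e$, and similarly $\prob{Y^{t'\uparrow}_e = 1}=\delta^{t'}_e$.

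The key step is to condition on the state at time $t'-1$. Given the whole history up to time $t'-1$, the event $Y^{t'\uparrow}_e=1$ occurs with probability $\ind{e\notin\Rset^{t'-1}}\cdot \frac{\delta^{t'}_e}{1-x^{t'-1}_e}$, and this depends on the history only through the membership of $e$ in $\Rset^{t'-1}$ (the Markov property). Hence
\[
\prob{Y^{t\uparrow}_e=1 \wedge Y^{t'\uparrow}_e=1} = \frac{\delta^{t'}_e}{1-x^{t'-1}_e}\cdot \prob{Y^{t\uparrow}_e = 1 \wedge e\notin \Rset^{t'-1}}.
\]
Since $Y^{t\uparrow}_e=1$ implies $e\in\Rset^t$, the last probability is at most $\prob{e\in\Rset^t \wedge e\notin\Rset^{t'-1}}$. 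That bound is too lossy, however, so instead we use the structure more carefully. Conditioned on $Y^{t\uparrow}_e=1$, we know $e\in\Rset^t$, and the chain from time $t$ onward depends only on this. We therefore get
\[
\prob{Y^{t\uparrow}_e=1 \wedge e\notin\Rset^{t'-1}} = \delta^t_e\cdot \prob{e\notin\Rset^{t'-1}\mid e\in\Rset^t}.
\]

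It remains to show $\prob{e\notin\Rset^{t'-1}\mid e\in\Rset^t}\leq 1-x^{t'-1}_e$, i.e., $\prob{e\in\Rset^{t'-1}\mid e\in\Rset^t}\geq x^{t'-1}_e$. When $t'-1\geq t$, this is exactly \Cref{lem:markovian-positive-time-correlation} applied to the pair $(t,t'-1)$, divided by $x^t_e>0$ (positive because $\delta^t_e>0$). When $t'-1<t$, i.e.\ $t'=t$, the case is excluded by $t<t'$. Substituting back, the factors $1-x^{t'-1}_e$ cancel and we obtain $\prob{Y^{t\uparrow}_e = 1\wedge Y^{t'\uparrow}_e=1}\leq \delta^t_e\delta^{t'}_e$, as desired.

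The main subtlety is justifying the two Markov-property factorizations: that future transitions depend only on the current membership of $e$, and that conditioning on the upward move at time $t$ reduces to conditioning on $e\in\Rset^t$. Both follow directly from the memoryless definition in \Cref{ex:markov-TCOS}, where each step's randomness is fresh and independent of the past given the current state. The analogous statement for downward indicators follows symmetrically, by applying the same argument to the complementary chain on $1-x^t_e$.
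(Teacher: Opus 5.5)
Your proof is correct and follows essentially the paper's argument. Both use the Markov property to factor the joint probability as $\frac{\delta^{t'}_e}{1-x^{t'-1}_e}\cdot \prob{e\notin\Rset^{t'-1}\mid e\in\Rset^t}\cdot \delta^t_e$, and then bound the middle factor by $1-x^{t'-1}_e$ via \Cref{lem:markovian-positive-time-correlation} applied to the pair $(t,t'-1)$. Your closing remark about downward indicators is not needed for this statement.
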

\begin{proof}
If either $\delta^t_e$ or $\delta^{t'}_e$ is non-positive, then both sides of the inequality $\prob{Y^{t\uparrow}_e \wedge Y^{t'\uparrow}_e} \leq \prob{Y^{t\uparrow}_e} \cdot \prob{Y^{t'\uparrow}_e}$ are equal to $0$. Otherwise,
\begin{align*}
	\prob{Y^{t\uparrow}_e \wedge Y^{t'\uparrow}_e}
	={} &
	\prob{Y^{t'\uparrow}_e \mid Y^{t\uparrow}_e} \cdot \prob{Y^{t\uparrow}_e}
	=
	\prob{Y^{t'\uparrow}_e \mid e \in \Rset^t} \cdot \prob{Y^{t\uparrow}_e}\\
	={} &
	\prob{Y^{t'\uparrow}_e \mid e \not \in \Rset^{t' - 1}} \cdot \prob{e \not \in \Rset^{t' - 1} \mid e \in \Rset^t} \cdot \prob{Y^{t\uparrow}_e}\\
	={} &
	\frac{\delta^{t'}_e}{1 - x^{t' - 1}_e} \cdot \frac{\prob{e \in \Rset^t} - \prob{e \in \Rset^{t' - 1} \wedge e \in \Rset^t}}{\prob{e \in \Rset^t}} \cdot \prob{Y^{t\uparrow}_e}\\
	\leq{} &
	\frac{\delta^{t'}_e}{1 - x^{t' - 1}_e} \cdot \Big(1 - \prob{e \in \Rset^{t' - 1}}\Big) \cdot \prob{Y^{t\uparrow}_e}
	=
	\delta^{t'}_e \cdot \prob{Y^{t\uparrow}_e}
	=
	\prob{Y^{t\uparrow}_e} \cdot \prob{Y^{t'}_e}
	,
\end{align*}
where the second and third equalities use the fact that the construction of $\Rset^{t + 1}$ by the Markovian TCOS depends on $\Rset^t$, but not on the content of previous sets, and the inequality holds by Lemma~\ref{lem:markovian-positive-time-correlation}.
\end{proof}

With this in hand, we are now ready to bound the variance of Markovian TCOS. 
\begin{theorem}
\label{thm:markovian-TCOS-low-variance}
    For all $T$ and all input sequences $\{\vx^t\}_t$ with $\sum_t \norm{x^{t-1} - x^t}_1 = r$, the recourse $R$ of the Markovian TCOS on $\{\vx^t\}_t$ satisfies $\var(R) = O(r)$.
\end{theorem}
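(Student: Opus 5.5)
We will write $R = U + D$, where $U \defeq \sum_t \sum_{e} Y^{t\uparrow}_e$ counts insertions into the sampled sets and $D \defeq \sum_t\sum_e Y^{t\downarrow}_e$ counts removals; here $Y^{t\downarrow}_e$ denotes the indicator of $e \in \Rset^{t-1}\setminus\Rset^t$. Since $\var(U+D) \leq 2\var(U) + 2\var(D)$, it suffices to show $\var(U) = O(r)$ and $\var(D) = O(r)$ separately. The Markovian TCOS makes independent random decisions for distinct elements, so the processes $\{Y^{t\uparrow}_e\}_t$ for different $e$ are mutually independent. Hence $\var(U) = \sum_e \var(U_e)$ with $U_e \defeq \sum_t Y^{t\uparrow}_e$, and similarly for $D$.

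For fixed $e$ we expand
\[
\var(U_e) = \sum_t \var(Y^{t\uparrow}_e) + 2\sum_{t<t'} \Cov(Y^{t\uparrow}_e, Y^{t'\uparrow}_e).
\]
By \Cref{cor:markovian-negative-time-correlation}, every covariance term is non-positive. Each $Y^{t\uparrow}_e$ is Bernoulli with mean $\max(0,\delta^t_e)$, so its variance is at most $\max(0,\delta^t_e)$. Summing over $t$ and $e$ gives $\var(U) \le \Inc(T) \le r$.

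For $D$ we need the analogous negative correlation between downward moves, and we would prove it by mirroring the proof of \Cref{cor:markovian-negative-time-correlation}. Take $t<t'$ with $\delta^t_e,\delta^{t'}_e<0$. Conditioning on $Y^{t\downarrow}_e$ is the same as conditioning on $e\notin\Rset^t$, by the Markov property and because at step $t$ the element can only leave. This gives
\[
\prob{Y^{t'\downarrow}_e \mid Y^{t\downarrow}_e} = \frac{-\delta^{t'}_e}{x^{t'-1}_e}\,\prob{e\in\Rset^{t'-1}\mid e\notin\Rset^t}.
\]
We then need $\prob{e\in \Rset^{t'-1}\wedge e\notin\Rset^t} \le x^{t'-1}_e(1-x^t_e)$. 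This is equivalent to the inequality $\prob{e\in\Rset^t\wedge e\in\Rset^{t'-1}}\ge x^t_e x^{t'-1}_e$ from \Cref{lem:markovian-positive-time-correlation}, after subtracting from $\prob{e\in\Rset^{t'-1}} = x^{t'-1}_e$. Therefore the conditional probability is at most $-\delta^{t'}_e = \prob{Y^{t'\downarrow}_e}$, which gives the negative correlation. If either $\delta$ is non-negative, the product of indicators vanishes and the covariance is trivially non-positive. The same variance computation then yields $\var(D)\le \Dec(T)\le r$.

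Combining the two bounds gives $\var(R) \le 2\var(U) + 2\var(D) \le 4r = O(r)$.

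The step most likely to need care is the downward analogue: we must check the Markov-property conditioning and that \Cref{lem:markovian-positive-time-correlation} is applied with the correct time indices ($t \le t'-1$). The edge case $t' = t+1$ is trivial, since $e\notin\Rset^t$ forces $e\notin\Rset^{t'}$ when $\delta^{t'}_e<0$.
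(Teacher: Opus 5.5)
Your proof is correct, but it treats the downward moves differently from the paper. The paper never proves negative correlation for the $Y^{t\downarrow}_e$. It instead uses $\Rset^0=\emptyset$ and the telescoping identity $\sum_t\sum_e (Y^{t\uparrow}_e - Y^{t\downarrow}_e) = |\Rset^T|$ to write $R = 2U - |\Rset^T|$. It then expands $\var(R) = 4\var(U) + \var(|\Rset^T|) - 4\Cov(U,|\Rset^T|)$. The covariance is bounded crudely from below by $-r$, using only nonnegativity of the indicators, which gives $9r$.

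You instead prove a mirror image of \Cref{cor:markovian-negative-time-correlation} for downward moves, and the details check out:
\begin{itemize}
\item The Markov-property reduction from conditioning on $Y^{t\downarrow}_e$ to conditioning on $e\notin\Rset^t$ is valid.
\item The complementary form $\Pr[e\in\Rset^{t'-1}\wedge e\notin\Rset^t]\le x^{t'-1}_e(1-x^t_e)$ of \Cref{lem:markovian-positive-time-correlation} is applied with $t\le t'-1$.
\item The divisions are legitimate, since $x^t_e<1$ and $x^{t'-1}_e>0$ whenever both $\delta$'s are negative.
\end{itemize}
You then use $\var(U+D)\le 2\var(U)+2\var(D)$. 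This correctly avoids adding variances, because $\Cov(U,D)$ can be positive.

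Your route costs one extra correlation argument, but it is more symmetric and gives a sharper constant: $2\Inc(T)+2\Dec(T)=2r$, which is even better than the $4r$ you state and better than the paper's $9r$. The paper's route needs only the upward lemma, at the price of the telescoping trick and a looser covariance bound.
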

\begin{proof}
Observe that $Y_e^{t\uparrow}$ and $Y_{e'}^{t'\uparrow}$ are independent whenever $e \neq e'$, and thus, $\Cov(Y_e^t, Y_{e'}^{t'}) = 0$. Additionally, by \Cref{lem:markovian-positive-time-correlation}, whenever $t \neq t'$,
\begin{align*}
	\Cov(Y_e^{t\uparrow}, Y_e^{t'\uparrow})
	={} &
	\bE[Y_e^{t\uparrow} \cdot Y_e^{t'\uparrow}] - \bE[Y_e^{t\uparrow}] \cdot \bE[Y_e^{t'\uparrow}]\\
	={} &
	\prob{Y_e^{t\uparrow} = 1 \wedge Y_e^{t'\uparrow} = 1} - \prob{Y_e^{t\uparrow} = 1} \cdot \prob{Y_e^{t'\uparrow} = 1}
	\leq
	0
	.
\end{align*}
Thus, letting $\bar \delta_e^t \defeq \min(0, \: \delta_e^t)$, 
\begin{align*}
	\var\bigg(\sum_{t \in [T]} \sum_{e \in \cN} Y_e^{t\uparrow}\bigg)
	={} &
	\sum_{t \in [T]} \sum_{t' \in [T]} \sum_{e \in \cN} \sum_{e' \in \cN} \Cov(Y_e^{t\uparrow}, Y_{e'}^{t'\uparrow})\\
	\leq{} &
	\sum_{t \in [T]} \sum_{e \in \cN} \var(Y_e^{t\uparrow})
	=
	\sum_{t \in [T]} \sum_{e \in \cN}	\bar \delta_e^t(1 - \bar \delta_e^t) \leq \sum_t \sum_e \bar \delta_e^t = r.
\end{align*}

Next, we observe that, since $\Rset^0 = \emptyset$,
\begin{align*}
	\sum_{t \in [T]} \sum_{e \in \cN} (Y_e^{t\uparrow} - Y_e^{t\downarrow})
	={} &
	\sum_{t \in [T]} \sum_{e \in \cN} (\ind{e \in \Rset^t \setminus \Rset^{t - 1}} - \ind{e \in \Rset^{t - 1} \setminus \Rset^{t}})\\
	={} &
	\sum_{t \in [T]}  (|\Rset^t \setminus \Rset^{t - 1}| - |\Rset^{t - 1} \setminus \Rset^{t}|)
	=
	\sum_{t \in [T]}  (|\Rset^t| - |\Rset^{t - 1}|)
	=
	|\Rset^T|
	.
\end{align*}
Combining the above results yields
\begin{align*}
	\var(R)
	={} &
	\var\bigg(2 \cdot \sum_{t \in [T]} \sum_{e \in \cN} Y_e^{t\uparrow} - |\Rset^T|\bigg)\\
	={} &
	4 \cdot \var\bigg(\sum_{t \in [T]} \sum_{e \in \cN} Y_e^{t\uparrow}\bigg) + \var(|\Rset^T|) - 4 \cdot \Cov\bigg(\sum_{t \in [T]} Y_e^{t\uparrow}, |\Rset^T|\bigg)\\
	\leq{} &
	4r + \sum_{e \in \cN} x^T_e(1 - x^T_e) + 4r
	\leq
	9r
	,
\end{align*}
where the penultimate inequality holds since $Y_e^{t\uparrow} \geq 0$ and $\ind{e \in \Rset^T} \geq 0$, and so
\begin{align*}
	\Cov\bigg(\sum_{t \in [T]} \sum_{e \in \cN} Y_e^{t\uparrow}, |\Rset^T|\bigg)
	&=
	\Cov\bigg(\sum_{t \in [T]} \sum_{e \in \cN} Y_e^{t\uparrow}, \sum_{e' \in \cN} \ind{e' \in \Rset^T}\bigg)
    \\
	&=
	\sum_{t \in [T]} \sum_{e \in \cN} \sum_{e' \in \cN} \Cov\bigg(Y_e^{t\uparrow}, \ind{e' \in \Rset^T}\bigg)
	=
	\sum_{t \in [T]} \sum_{e \in \cN} \Cov\bigg(Y_e^{t\uparrow}, \ind{e \in \Rset^T}\bigg) \\
    &\geq 
	- \sum_{t \in [T]} \sum_{e \in \cN} \bE[Y_e^{t\uparrow}] \cdot \bE[\ind{e \in \Rset^T}]
	\\
    &\geq
	-\sum_{t \in [T]} \sum_{e \in \cN} \bE[Y_e^{t\uparrow}]
	=
	-\sum_{t \in [T]} \sum_{e \in \cN} \max\{0, \delta^t_e\}
	\geq
	-r
	. \qedhere
\end{align*}
\end{proof}

%% file: polytime.tex
\section{Matroid DCRSs in Polynomial Time}
\label{sec:polytime}

Initial applications of (offline) contention resolution schemes provided strong motivation for development of polynomial-time CRSs \cite{DBLP:journals/siamcomp/ChekuriVZ14}, and efficient (or, at least, polynomial-time) implementations have been of continued interest.
\cite[Theorem 4.3]{DBLP:journals/siamcomp/ChekuriVZ14} obtain $\eps$-instance-optimal CRSs for matroids via a somewhat involved approach of sampling constraints to equip the ellipsoid algorithm with a weak separation oracle, but remark that simpler (greedy) matroid CRSs can be made to run in polynomial time via sampling to estimate span probabilities and appealing to concentration \cite[Theorem 4.14]{DBLP:journals/siamcomp/ChekuriVZ14}.

Our matroid DCRS is not quite so simple, but our argument will more closely resemble this second approach in that we will estimate probabilities that elements are spanned via sampling and concentration, argue that the event of failing to produce an accurate estimate is rare, and adapt the construction of the DCRS to still provide balance and bounded expected recourse---even subject to this occasional failure.

Our modified DCRS is \Cref{alg:matroid-DCRS-quantized}, and we argue that it has the desired properties provided that the sequence of $\vx^t$ it faces is quantized; that is, each $x_e^t$ takes on values that are within a carefully chosen subset of $[0,1]$. 
We complement this poly-time matroid DCRS that works for quantized $\vx^t$ with a generic reduction to the quantized setting (\Cref{alg:matroid-DCRS-polytime}).
Given an arbitrary sequence of $\vx^t$ and $\Rset^t$ from an unspecified TCOS, \Cref{alg:matroid-DCRS-polytime} maintains a dynamic rounding, implements a corresponding quantized TCOS that downsamples from $\Rset^t$, simulates the black-box DCRS on this quantized arrival sequence, and modifies its output to create a sequence $\{\Iset^t\}$, all while incurring an arbitrarily small overhead in the balance and an $O(1)$ multiplicative overhead in the expected recourse.

\input{alg_matroid_DCRS_polytime.tex}

\Cref{alg:matroid-DCRS-polytime} accomplishes this by lazily maintaining a rounding $\bar \vx^t$ of $\vx^t$ such that (i) $\bar\vx^t \leq \vx^t$ for all $t$ coordinate-wise, and (ii) $\sum_{t} \norm{\bar\vx^t - \bar\vx^{t-1}} \leq \sum_{t} \norm{\vx^t - \vx^{t-1}}$. 
The values $x^t_e$ are rounded to values $\bar x^t_e \in \cB_\delta$, which we define for $\delta \in \{1/i:\: i\in\mathbb{N}\}$ to be 
\begin{equation}
	\label{eq:def-buckets-new}
	\cB_\delta \defeq \{0,\: \delta,\: 2\delta,\: \ldots,\: 1-\delta,\: 1\}
	 .
\end{equation}
It will be useful to let $j(x) \defeq \lfloor x/\delta \rfloor$ for $x \in [0,1]$, so that $j(x)\cdot \delta = \max\{b \in \cB_\delta : b \leq x\}$.

\begin{lemma}
\label{lem:polytime-reduction-to-quantized-dcrs}
    For any $b \in (0,1)$ and $\eps > 0$, if \Cref{alg:matroid-DCRS-quantized} is a $(b,\: 1-b-\eps,\: z)$-DCRS, then for appropriate choice of values for the parameters $\gamma$ and $\delta$ of \Cref{alg:matroid-DCRS-polytime}, this algorithm is a $(b,\: 1-b-2\eps,\: O(z))$-DCRS.
\end{lemma}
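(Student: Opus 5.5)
Since Algorithm~\ref{alg:matroid-DCRS-polytime} is only described informally at this point, I will work from its description. It maintains a lazily rounded sequence $\bar\vx^t \in \cB_\delta^\cN$ with $\bar\vx^t \le \vx^t$ coordinate-wise and with total movement at most that of $\{\vx^t\}$. It derives from $\Rset^t$ a downsampled set $\bar\Rset^t$ and feeds $(\bar\vx^t,\bar\Rset^t)$ to the quantized DCRS (Algorithm~\ref{alg:matroid-DCRS-quantized}). It then post-processes the output, presumably by discarding elements in rare failure events controlled by $\gamma$, to obtain $\Iset^t$. The plan is to check, in order: validity of the simulated input, feasibility, balance with an additive $\eps$ loss, and recourse with an $O(1)$ factor.

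\emph{Step 1 (simulated input is a valid TCOS into $b\cdot P_\cM$).} Since $\bar\vx^t \le \vx^t \in b\cdot P_\cM$ and $P_\cM$ is down-closed, we have $\bar\vx^t \in b\cdot P_\cM$. Downsampling keeps each $e\in\Rset^t$ with probability $\bar x^t_e/x^t_e$, coupled across time via a per-element threshold $\mu_e\sim U[0,1]$ compared to $\bar x^t_e/x^t_e$, or equivalently via the Markovian rule. This makes $\bar\Rset^t \sim \Rdist(\bar\vx^t)$ independently across elements. I would bound $\bE|\bar\Rset^t \symdif \bar\Rset^{t-1}|$ by $O(1)\cdot\bE|\Rset^t\symdif\Rset^{t-1}| + O(\|\bar\vx^t-\bar\vx^{t-1}\|_1)$, charging a change in $\bar\Rset$ either to a change in $\Rset$ or to a change in the ratio. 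The lazy rounding only moves $\bar x_e$ when $x_e$ crosses a bucket boundary far enough to be paid for, so that part is at most $O(\|\vx^t-\vx^{t-1}\|_1)$. By Observation~\ref{obs:TCOS-1-impossibility}, this is in turn at most $O(\bE|\Rset^t\symdif\Rset^{t-1}|)$. One subtlety: the ratio $\bar x_e/x_e$ can change as $x_e$ moves within a bucket. I would handle this by using the threshold coupling on $\lambda_e$-style values, so that $e\in\bar\Rset^t$ iff $e\in\Rset^t$ and $\mu_e x^t_e\le \bar x^t_e$. A change in $\bar\Rset$ then requires either a change in $\Rset$ or $\mu_e x_e$ crossing $\bar x_e$. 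The latter has probability $O(|\Delta x_e| + |\Delta\bar x_e|)$ once it is conditioned on $e\in\Rset$.

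\emph{Step 2 (feasibility and balance).} Feasibility holds because $\Iset^t\subseteq$ the black-box output $\subseteq\bar\Rset^t\subseteq\Rset^t$, and subsets of independent sets are independent. For balance, $\Pr[e\in\Iset^t]\ge(1-b-\eps)\bar x^t_e - \Pr[\text{failure}]$. I would then use $\bar x^t_e \ge x^t_e-\delta'$, where the lazy rounding lags by at most about two buckets. I would handle small $x_e$ by a multiplicative rather than additive rounding, or by choosing $\delta$ with $\delta\ll \eps\,x_e$ on the relevant range. Concretely, I would choose $\delta$ and $\gamma$ so that $\bar x_e\ge(1-\eps/2)x_e$ and the failure probability is at most $(\eps/2)x_e$, which gives balance $1-b-2\eps$.

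\emph{Step 3 (recourse).} This follows by chaining: $\sum_t\bE|\Iset^t\symdif\Iset^{t-1}|$ is at most the black-box recourse plus the post-processing changes. The black-box recourse is at most $z\sum_t\bE|\bar\Rset^t\symdif\bar\Rset^{t-1}|$, which Step 1 bounds by $O(z)\sum_t\bE|\Rset^t\symdif\Rset^{t-1}|$. Each post-processing change is charged to an input or black-box change.

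I expect Step 1, namely coupling the downsampling so that it is simultaneously exact in distribution and $O(1)$-recourse even when $x_e$ oscillates inside a bucket, to be the main obstacle. The lazy hysteresis rule for $\bar\vx$ is designed precisely so that oscillations smaller than $\delta$ do not move $\bar x_e$. The remaining worry, that the ratio $\bar x_e/x_e$ drifts, I will address with the threshold coupling and a direct per-element probability calculation.
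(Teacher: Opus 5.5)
\textbf{The gap is in Step 2: balance fails for elements with small $x^t_e$, because you guessed the wrong role for $\gamma$.} Your Step 1 is fine. It uses the paper's coupling ($e$ is dropped from $\bar\Rset^t$ iff $\lambda_e x^t_e>\bar x^t_e$). Your per-element estimate $\Pr[e\in\bar\Rset^t\symdif\bar\Rset^{t-1}]\le\Pr[e\in\Rset^t\symdif\Rset^{t-1}]+|x^t_e-x^{t-1}_e|+|\bar x^t_e-\bar x^{t-1}_e|$ also works. The one caveat is that you may compare $\bar\vx$-movement with $\vx$-movement only after summing over $t$. A tiny move of $x_e$ across a grid point can shift $\bar x_e$ by $2\delta$; increases of $\bar x_e$ are paid for by the hysteresis, and total decrease is at most total increase. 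The real problem is as follows. Any element with $0<x^t_e<\delta$ gets $\bar x^t_e=0$ under the rounding to $\cB_\delta$. Such an element never enters $\bar\Rset^t$, so it never enters the black-box output. Since your $\Iset^t$ is a subset of that output, $\Pr[e\in\Iset^t]=0<(1-b-2\eps)x^t_e$. The inputs $x^t_e$ can be arbitrarily small, so no choice of $\delta$ gives $\bar x_e\ge(1-\eps/2)x_e$ for all elements. Multiplicative rounding leaves $\cB_\delta$, which is the only input class on which the quantized scheme is guaranteed to work; its epoch argument needs each change to move $\vx$ by at least $\delta$. Discarding elements in post-processing can only lower selection probabilities. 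So a separate mechanism for small coordinates is needed, and that is what $\gamma$ provides.

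In the paper, each element gets a fixed random threshold $\lambda^L_e\sim\mathrm{Unif}[\gamma,2\gamma]$. The small elements $\cN_L=\{e: x^t_e\le\lambda^L_e\}$ are handled by $\Iset^t_L=\textsc{Greedy}_\cM(\Rset^t\cap\cN_L,\sigma)$. The output is $\Iset^t_L$ augmented greedily, in $\sigma$ order, by the black-box output $\Iset^t_H$, so in general $\Iset^t\not\subseteq\Iset^t_H$.
\begin{itemize}
\item Small elements have total mass at most $2n\gamma$. So an active small element is selected unless another small element is active, i.e.\ with conditional probability at least $1-2n\gamma$.
\item A large element has $x^t_e>\gamma$, so rounding costs only a factor $1-O(\delta/\gamma)$. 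Moreover, $\Iset^t_H$ survives the merge intact unless $\Iset^t_L\neq\emptyset$, which costs at most another $2n\gamma$.
\item Taking $\gamma=\Theta(\eps/n)$ and $\delta=\Theta(\eps\gamma)$ gives balance $1-b-2\eps$.
\end{itemize}
This also changes the recourse accounting: you must control changes of $\Iset^t_L$, in particular flips of $e$'s membership in $\cN_L$. Each change of $\Iset^t_L$ or $\Iset^t_H$ causes $O(1)$ changes of $\Iset^t$. Because the threshold is random, a flip has probability at most $|x^t_e-x^{t-1}_e|/\gamma$. Flips occur only when $\min(x^t_e,x^{t-1}_e)\le2\gamma$, so $\Pr[e\in\Rset^t\cap\Rset^{t-1}]\le2\gamma$, and the product is $O(|x^t_e-x^{t-1}_e|)$. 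With a deterministic cutoff, an $x_e$ oscillating by $\eta\to0$ around the cutoff could incur expected recourse per step that does not vanish with $\eta$.
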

Furthermore, it is apparent that this reduction runs in polynomial time.
\begin{observation}
    For the choices of $\gamma$ and $\delta$ used in the proof of \Cref{lem:polytime-reduction-to-quantized-dcrs}, \Cref{alg:matroid-DCRS-polytime} uses time $\poly(n, \eps^{-1})$ per step $t$ on top of the time required for\Cref{alg:matroid-DCRS-quantized} .
\end{observation}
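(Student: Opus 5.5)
The plan is to bound, line by line, the work \Cref{alg:matroid-DCRS-polytime} performs in a single step $t$, treating its call to \Cref{alg:matroid-DCRS-quantized} as a black box that is charged separately. The first thing to fix is the parameter choice from the proof of \Cref{lem:polytime-reduction-to-quantized-dcrs}. I expect $\delta$ to be chosen as $1/\lceil \poly(n,\eps^{-1})\rceil$, so that the rounding loss $n\delta$ in $\ell_1$ mass is only an $\eps$-fraction of the balance. I expect $\gamma$ to be a constant-factor or inverse-polynomial slack in $\eps$ and $n$ that governs when the lazy rounding is refreshed. The consequence I need is that $|\cB_\delta| = 1/\delta + 1 = \poly(n,\eps^{-1})$. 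Every bucket index $j(x)=\lfloor x/\delta\rfloor$ is then an integer of $O(\log(n/\eps))$ bits, and all probabilities the reduction manipulates are rationals of polynomial bit-length in the input.

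Next I go through the per-step operations in the order the reduction performs them.

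First, the lazy maintenance of $\bar\vx^t$. For each $e \in \cN$, the algorithm compares $x^t_e$ with the current $\bar x^{t-1}_e$ and the threshold set by $\gamma$. If needed, it resets $\bar x^t_e$ to $j(x^t_e)\delta$ or to a neighbouring bucket value. This is $O(1)$ arithmetic operations per element, hence $O(n)$ per step.

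Second, the quantized TCOS that downsamples $\Rset^t$. For each element whose membership in $\Rset^t$ or whose value $\bar x_e$ changed, it flips one coin. That coin has a bias such as $\bar x^t_e / x^t_e$, or a Markovian-style conditional ratio built from these quantities. Such a coin can be simulated to any desired accuracy with $O(\log(n/\eps))$ random bits. So this is again $O(n)$ coins per step, each taking polynomial time.

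Third, the post-processing of the black box's output into $\Iset^t$. This consists of at most intersecting it with $\Rset^t$ and possibly discarding elements. Each such operation costs $O(n)$ set operations, plus at most $n$ independence-oracle calls if feasibility is re-checked.

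Summing these three contributions gives $O(n)\cdot\poly(\log(n/\eps))$ plus $O(n)$ oracle calls, which is within $\poly(n,\eps^{-1})$.

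The only step where I expect any subtlety is the exact simulation of the downsampling coins. Their biases are ratios involving the real-valued inputs $x^t_e$, so an exact simulation needs either the arithmetic model or a bound on the bit-length of $\vx^t$. My plan is to observe that each coin can be realized by comparing a lazily generated uniform real against a computable rational threshold. The expected number of bits revealed per coin is $O(1)$. If the reduction's analysis tolerates an additive error of $\eps/n$ per coin, which I expect the $\gamma$-slack to absorb, the bits can simply be truncated at $O(\log(n/\eps))$ precision. Either way, this yields the claimed $\poly(n,\eps^{-1})$ bound per step on top of the time of \Cref{alg:matroid-DCRS-quantized}.
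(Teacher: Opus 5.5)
Your per-step, line-by-line accounting, with the call to \Cref{alg:matroid-DCRS-quantized} treated as a black box, is the argument the paper leaves as ``apparent'', and your conclusion is correct: outside that call, \Cref{alg:matroid-DCRS-polytime} performs only $O(n)$ arithmetic operations and comparisons (the lazy rounding of $\bar\vx^t$, the trimming test $\lambda_e x^t_e > \bar x^t_e$ using thresholds $\lambda_e$ drawn once at the start rather than fresh coins, and the computation of $\cN_L$) plus $O(n)$ independence-oracle calls. Your reconstruction of the algorithm is off in places that do not change the bound but should be fixed: $\gamma$ is not a lazy-rounding slack but the scale of the thresholds $\lambda^L_e\sim\mathrm{Unif}([\gamma,2\gamma])$ defining the low-mass set $\cN_L$, which is handled by a separate $\textsc{Greedy}_\cM(\Rset^t\cap\cN_L,\sigma)$ you never count; the post-processing is a greedy merge of $I_H^t$ into $I_L^t$ (each of these two steps costs at most $n$ oracle calls); and the bit-truncation fallback should be dropped, since the paper works with real-valued uniform samples and truncating the biases would break the exact marginals $\Pr[e\in\bar\Rset^t]=\bar x^t_e$ that both the inner DCRS and the proof of \Cref{lem:polytime-reduction-to-quantized-dcrs} rely on.
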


We also prove the following result.

\begin{lemma} \label{lem:quantized_guarantee}
    Provided that all arriving $\vx^t \in \cB_\delta$, \Cref{alg:matroid-DCRS-quantized} is a $(b,\: 1-b-\eps,\: O(\eps^{-3} \cdot \log \rank(\cN)))$-DCRS.
    Moreover, it performs $\poly(n, \eps^{-1})$ operations per step $t$ in expectation.
\end{lemma}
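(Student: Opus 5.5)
We will prove \Cref{lem:quantized_guarantee} by re-running the analysis of \Cref{alg:matroid-DCRS} (\Cref{lem:terminate,lem:properties,prop:matroid-DCRS-balance,prop:recourse_bound,prop:matroid-dcrs-monotone}) with the exact span probabilities $p_e \defeq \prob{e \in \Span(\Rdist(\vx(i-1)) \cup S_i)}$ replaced by sampled estimates. The plan is to have \Cref{alg:matroid-DCRS-quantized} test the condition on \cref{line:add_condition} with an estimate $\hat p_e$ computed from $m = \poly(n,\eps^{-1})$ independent samples of $\Rdist(\vx(i-1))$, using the threshold $\tau$ together with a slack window of width $\eps/4$. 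By Hoeffding, each estimate has additive error below $\eps/4$ except with probability $\eta = \exp(-\Omega(m\eps^2))$. The properties in \Cref{lem:properties} and \Cref{cor:feasible} do not depend on the test being exact, so feasibility and termination carry over verbatim.

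The structural point is that the estimates must depend only on $(\vx, \{S_i\})$ and never on $\Rset$. Otherwise the independence used in \Cref{prop:matroid-DCRS-balance} and in the monotonicity proof is lost. Quantization supplies this. Because every $\vx^t$ lies on the grid $\cB_\delta$ and the chain is a finite object, I will draw the sampling randomness once per pair (query, state) and reuse it whenever the same state recurs. This makes the chain evolution a fixed random function of the $\vx$-sequence, independent of $\{\Rset^t\}$. It also prevents jitter: if the chain were rebuilt with fresh samples each time the same $\vx$ reappears, we would incur spurious recourse.

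\textbf{Balance.} On the good event that all estimates used are accurate, every element that is not promoted satisfies $p_e < \tau + \eps/4$, so the argument of \Cref{prop:matroid-DCRS-balance} gives balance $1-\tau-\eps/4$. Setting $\tau = b+\eps/2$ and union-bounding the failure probability over the at most $\poly(n)$ queries relevant at time $t$ keeps the loss below $\eps$.

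\textbf{Recourse.} On the good event, every promoted element satisfies $p_e \ge \tau - \eps/4$. Applying \Cref{lem:growth_bound} with threshold $\tau-\eps/4 > b$ therefore reproduces \Cref{lem:remaining,lem:removed} and \Cref{prop:recourse_bound}, with the constants changed only by a constant factor.

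\textbf{Main obstacle: the bad event.} After a single wrong estimate, \Cref{lem:growth_bound} no longer bounds $|S_i|$. However, the reset rule on \cref{line:reset_condition} still caps $|S_i| \le r\|\vx(i-1)\|_1/\tau + 1$, so the damage is at most $O(\rank)$ changes per bad query. I will charge this to $\eta \cdot \poly(n)$ per step. To compare it with $\sum_t \norm{\vx^t-\vx^{t-1}}_1$, I will use that quantization forces every nonzero movement to be at least $\delta$, and that a step with no movement triggers no new queries because the randomness is reused. Choosing $m$ so that $\eta \le \delta/\poly(n)$ then makes the bad-event recourse $O(1)$ times the $\ell_1$ movement.

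\textbf{Running time.} Each query costs $m$ rank evaluations, and by \Cref{lem:terminate} together with the size bounds the number of queries per step is polynomial in expectation. This gives $\poly(n,\eps^{-1})$ expected operations per step.
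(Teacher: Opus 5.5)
\textbf{The gap: reusing stored randomness breaks termination.} Your plan is workable in spirit: sampled span estimates, quantization so that every non-skipped step moves at least $\delta$ in $\ell_1$, and a failure probability $\eta$ small relative to $\delta$. The gap comes from the choice you call the structural point, namely storing the sampling randomness per (query, state) and reusing it.

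Termination does not carry over verbatim. \Cref{lem:terminate} rests on \Cref{lem:growth_bound}, i.e., on every promoted element truly having span probability above the threshold, and with estimates this holds only on the good event. Here is a concrete failure:
\begin{itemize}
\item Take a level with $|S_{i-1}|=1$. Then $\|\vx(i-1)\|_1\le b$, and since $rb<\tau$ the reset threshold $r\|\vx(i-1)\|_1/\tau$ is below $1$.
\item With accurate estimates nothing is promoted, because the true span probability is at most $b<\tau-\eps/4$.
\item Suppose the first (fresh) estimate at the state with $S_i=\emptyset$ is a false positive for the element of $S_{i-1}$. That element is promoted, $|S_i|=1$ fires \texttt{Reset}$(i)$, and the key $(\vx,\{S_j\})$ is restored exactly.
\item Your stored samples then return the same estimate, the same promotion and reset follow, and the loop cycles forever. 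Since you deliberately keep $\Iset$ and $\Rset$ out of the key, nothing breaks the cycle.
\end{itemize}
So with positive probability no $\Iset^t$ is ever output. The expected-running-time claim, which you also derive from \Cref{lem:terminate}, fails as well. The paper avoids exactly this by recomputing all estimates from fresh samples after every promotion or reset (\cref{line:empirical_add_prob_recompute_pt}). Each rebuild of a level then succeeds with probability $\Omega(1)$, so the expected number of resets per level is $O(1)$ (\Cref{lem:pt-dcrs-quant-bdd-good-ops}).

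\textbf{Reuse also breaks the probability accounting, and is unnecessary.} A stored estimate is fresh only on its first use. Whether a state is revisited can depend on that very estimate: a false negative parks the chain there, and a false positive followed by a reset sends it back. So neither the union bound over ``the $\poly(n)$ queries relevant at time $t$'' (for balance) nor the charge of $\eta\cdot\poly(n)$ per step is justified. One bad stored estimate can be used, and cause damage, at arbitrarily many steps. The honest bound grows with the number of distinct states visited, i.e.\ with $t$, while $T$ is unbounded.

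The motivation for reuse is also mistaken. Samples drawn freshly from $\Rdist(\vx(i-1))$ are already independent of the TCOS sets $\Rset^t$, so the chain is independent of $\{\Rset^t\}$ without any reuse. Steps with $\vx^t=\vx^{t-1}$ can simply be skipped.

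\textbf{How the paper uses quantization, and how to repair your route.} In the paper, \cref{line:hard_reset_pt} performs a full reset whenever accumulated movement reaches $b\cdot\rank(\cN)$. Quantization then bounds each epoch by at most $\rank(\cN)/\delta$ non-skipped steps, so a union bound over all estimates in an epoch involves polynomially many events. The hard reset confines any failure to its epoch, at cost $O(n)$ per step times a negligible probability.

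If you switch to fresh samples, your per-step charging can replace epochs. You must then argue that a falsely promoted element, which stays in $S_i$ across later steps until $S_i$ is emptied, inflates only that one emptying interval. This follows from the reset cap you cite, or from the extension of \Cref{lem:growth_bound} giving $|A|\le\|\vy\|_1/(\tau-\eps/4)+B$ when $B$ additions are bad.
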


Taken together, for any $b \in (0,1)$ and $\eps > 0$ this establishes a $(b,\: 1-b-\eps,\: O(\eps^{-3} \cdot \log \rank(\cN)))$-DCRS that runs in time $\poly(n, \eps^{-1})$.

\subsection{Reduction to the Quantized Case (\texorpdfstring{\Cref{alg:matroid-DCRS-polytime}}{Algorithm~\ref*{alg:matroid-DCRS-polytime}})}

In this section we prove \Cref{lem:polytime-reduction-to-quantized-dcrs}. We note that the procedure $\textsc{Greedy}_\cM(S, \sigma)$ used by \Cref{alg:matroid-DCRS-polytime} is the standard greedy algorithm, i.e. an algorithm that constructs an independent set $I$ of $\cM$ by starting with the empty set, scanning the elements of $S$ in the order specified by $\sigma$, and, for every scanned element, adding it to $I$ if doing so preserves the independence of $I$.

As a claim about the continuity of CRSs with respect to the fractional input vector $\vx$, \Cref{lem:polytime-reduction-to-quantized-dcrs} is quite intuitive: it says that for $\vx$ and $\bar\vx$ that are close in an appropriate sense, any $(b,c)$-CRS $\pi_{\bar{\vx}}$ can be made to serve as a (slightly worse) CRS for $\vx$.
This is not as trivial as one might hope because it is necessary to preserve (multiplicative) selectability for elements with very small $x_e^t$. 
For DCRSs there is the additional challenge of showing that this black-box reduction respects the expected recourse for this particular rounding of the $\vx$ vector.

\begin{proof}[Proof of \Cref{lem:polytime-reduction-to-quantized-dcrs}]
    We prove the lemma for polynomially small choice of the parameters $\gamma$ and $\delta$, which we will specify as we go. 
    In order to show the claim, we must show that \Cref{alg:matroid-DCRS-polytime} meets the definition of a DCRS (\Cref{def:dcrs}). 
    We establish these properties in order.

    First, that the $\Iset^t$ returned on \Cref{line:pt-dcrs-return} is independent in $\cM$ is immediate.
    
    Second, we address balance. 
    We case on whether $x_e^t \leq \gamma$.
    If so, then $e \in \cN_L$ and 
    \begin{align}
    	\prob{e \in \Iset^t \mid e \in \Rset^t} 
    	&\geq \prob{e \in \Iset_L^t \mid e \in \Rset^t} \notag \\
    	&\geq 1 - \sum_{e' \neq e} \prob{e' \in \Iset_L^t} \notag \\
        &\geq 1 - \sum_{e' \neq e} \prob{e \in R^t} \cdot \prob{x_e^t \leq \lambda_e^L} \notag\\
    	&\geq 1 - n \cdot 2 \gamma
    	 ,
    \end{align}
    by \cref{line:pt-dcrs-greedysmall}. 
    For any $\gamma \leq \eps \cdot n^{-1}/2$ this guarantees that the balance of \Cref{alg:matroid-DCRS-polytime} with respect to every element $e$ with $x_e^t \leq \gamma$ is at least $1 - \eps \geq 1 - b - 2\eps$.
    
    Otherwise $x_e^t > \gamma$. Our rounding guarantees $x_e^t < \bar x_e^t + 2 \delta$, and thus,
	provided that $\delta \leq \gamma/4$, 
     we get in this case that $\bar x_e^t \geq \gamma/2$.
    Since $\bar \vx^t \leq \vx^t$ coordinate-wise for all $t$, clearly if $\vx^t \in b\cdot P_\cM$ then $\bar\vx^t \in b\cdot P_\cM$ also.
    Further, we claim that the $\{\bar \Rset^t\}$ maintained by \Cref{alg:matroid-DCRS-polytime} is both a TCOS for $\{\bar\vx^t\}$ (in the sense that $\prob{e \in \bar R^t} = \bar x_e^t$ for all $e$ and $t$) and satisfies $\bar \Rset^t \subseteq \Rset^t$ for all $t$.
    
    Therefore if the algorithm being simulated (\Cref{alg:matroid-DCRS-quantized}) is a $(b, 1-b-\eps, z)$-DCRS (by assumption), then \Cref{alg:matroid-DCRS-polytime} returns a $\bar \Iset^t \subseteq \bar \Rset^t \subseteq \Rset^t$ such that
    \[
        \prob{e \in I_H^t \mid e \in \bar \Rset^t} \geq 1 - b - \eps 
         .
    \]
    
	 This lets us conclude that for elements $e$ with $\bar x_e^t \geq \gamma/2$, 
    \begin{align}
    	\prob{e \in \Iset^t \mid e \in \Rset^t} 
    	&\geq \prob{e \in \Iset^t_H \mid e \in \bar\Rset^t} \cdot \prob{e \in \bar \Rset^t \mid e \in \Rset^t} - \prob{\Iset_L^t \not \subseteq \{e\} \mid e \in \Rset^t} \notag \\
    	&\geq \prob{e \in \Iset^t_H \mid e \in \bar\Rset^t} \cdot \bigg(1 - \frac{2\delta}{x_e^t}\bigg) - \prob{R^t \cap \cN_L \not \subseteq \{e\}} \notag \\
    	&\geq (1 - b - \eps) \cdot \bigg(1-\frac{4 \delta}{\gamma}\bigg)
    	- n \cdot 2 \gamma  \label{eq:bigx-selectable-bound}
    	 ,
    \end{align}
    where the second inequality holds because the set $R^t \cap \cN_L \setminus \{e\}$ contains $\Iset_L^t \setminus \{e\}$ and is independent of $[e \in \Rset^t]$ and because the facts that $\{\bar\Rset^t\}$ is a TCOS for $\{\bar \vx^t\}$ that downsamples from $\{\Rset^t\}$ and $\bar x_e^t \leq x_e^t$ together imply that $\prob{e \in \bar \Rset^t \mid e \in \Rset^t} = \bar x_e^t/x_e^t \geq 1 - \frac{2\delta}{x_e^t}$.
    Inequality~\eqref{eq:bigx-selectable-bound} can be made to imply $\prob{e \in \Iset^t \mid e \in \Rset^t} \geq 1 - b - 2\eps$ via choices of $\gamma$ and $\delta$ that are $1/\poly(n, 1/\eps)$.

	\medskip
    We now address recourse. 
    We will use the fact that, by \Cref{line:pt-dcrs-finalgreedyloop,line:pt-dcrs-independent} of \Cref{alg:matroid-DCRS-polytime},
    \begin{align}
    	\sum_t \abs{\Iset^t \symdif \Iset^{t-1}} 
    	&\leq 2 \sum_t \abs{\Iset_L^t \symdif \Iset_L^{t-1}} + \sum_t \abs{\Iset_H^t \symdif \Iset_H^{t-1}} 
    	  .
        \label{eq:pt-I-movement-bound}
    \end{align}
    (This is because, for fixed $\Iset_H^t$, a single change to $\Iset_L^t$ affects at most one membership change among the subsequent $e \in \Iset_H^t$ in $\Iset^t$.)
    We will bound these two terms separately.
    
    For the first term, changes can happen either in $\Rset$ or in $\cN_L \cap \Rset$, for fixed $\Rset$.
    Therefore
		\[
		    	\sum_t \abs{\Iset_L^t \symdif \Iset_L^{t-1}} 
    	\leq \sum_t \abs{\Rset^t \symdif \Rset^{t-1}}
    	+ \sum_t \sum_{e \in \Rset^t \cap \Rset^{t-1}} \mspace{-18mu} \ind{e \in \cN_L^t \symdif \cN_L^{t-1}} 
 	      .
    \]
		Taking expectation, the last inequality yields 
		\begin{align}
		\sum_t \expect{\abs{\Iset_L^t \symdif \Iset_L^{t-1}} } 
		& \leq \sum_t \expect{\abs{\Rset^t \symdif \Rset^{t-1}}} 
		\notag\\&\mspace{100mu}+ \sum_t \sum_{e \mid \min\{x^t_e, x^{t - 1}_e\} \leq 2\gamma} \mspace{-36mu}  \prob{e \in \cN_L^t \symdif \cN_L^{t-1} \mid e \in \Rset^t \cap \Rset^{t-1}} \prob{e \in \Rset^t \cap \Rset^{t-1}} \notag \\
		& \leq \sum_t \expect{\abs{\Rset^t \symdif \Rset^{t-1}}} 
		+ \sum_t \sum_e \frac{\abs{x_e^t - x_e^{t-1}}}{\gamma} \cdot 2\gamma \notag \\
		& \leq 3 \sum_t \expect{\abs{\Rset^t \symdif \Rset^{t-1}}} 
    	  ,
        \label{eq:pt-IL-movement-bound}
    \end{align}
    where the last inequality holds since $\{R^t\}$ are generated by a TCOS for $\{\vx^t\}$ (\Cref{obs:TCOS-1-impossibility}).

    We now bound the second term. Since our inner DCRS (\Cref{alg:matroid-DCRS-quantized}) is a DCRS,
		\begin{align}
        \sum_{t \in [T]} \expect{\abs{\Iset_H^t \symdif \Iset_H^{t-1}}} 
        \leq z \cdot \sum_{t \in [T]} \expect{\abs{\bar\Rset^t \symdif \bar \Rset^{t-1}}}.
        \label{eq:pt-dcrs-constant-run-IH-recourse}
    \end{align}
    The challenge is to relate the $\bar \Rset$ movement to $\Rset$ movement. Let us denote by $J_e \subseteq [T]$ the time steps for which $\bar x_e^{t} \neq \bar x_e^{t-1}$, and let $j_1, j_2, \dotsc$ denote the times in $J_e$ in increasing order. Then,
    \begin{align}
    	\sum_t \expect{\abs{\bar R^t \symdif \bar R^{t-1}}}
    	& = \sum_e \left( \sum_{t \in J_e} \prob{e \in \bar R^t \symdif \bar R^{t-1}} + \sum_{t \in [T] \setminus J_e} \mspace{-9mu} \prob{e \in \bar R^t \symdif \bar R^{t-1}} \right) \label{eq:pt-IH-movement-bound}
         .
    \end{align}
    We tackle the first term first. 
    For fixed $e$ and $t_j \in J_e$, for \emph{increasing} distinct values $\bar x_e^{t_j}$, $\bar x_e^{t_{j+1}}$, 
    we have
    \begin{align}
        \prob{e \in \bar \Rset^{t_j} \setminus \bar \Rset^{t_{j-1}}} 
        &\leq \prob{e \in \Rset^{t_{j - 1}} \setminus \bar \Rset^{t_{j - 1}}} + \prob{e \in \Rset^{t_j} \setminus \Rset^{t_{j - 1}}} \notag \\
        & \leq 2\delta + \prob{e \in \Rset^{t_j} \setminus \Rset^{t_{j-1}}} \label{eq:pt-rounded-up-movement1}
         . 
    \end{align}
    Next note that our lazy quantization (\Cref{line:pt-dcrs-roundx1,line:pt-dcrs-roundx2}) ensures that if $\bar x_e^{t_j} > \bar x_e^{t_{j-1}}$ then $x_e^{t_j} > x_e^{t_{j-1}} + \delta$, and moreover $x_e^{t_j} - x_e^{t_{j-1}} + \delta > \bar x_e^{t_j} - \bar x_e^{t_{j-1}} \geq 2\delta$.
    Combining this with \eqref{eq:pt-rounded-up-movement1},  
    and using the TCOS guarantee that $\Rset \sim \Rdist(\vx)$ at every step, we obtain
    \begin{align}
        \prob{e \in \bar \Rset^{t_j} \setminus \bar \Rset^{t_{j-1}}} 
        &\leq 2(x_e^{t_j} - x_e^{t_{j-1}}) + \prob{e \in \Rset^{t_j} \setminus \Rset^{t_{j-1}}} \notag \\
        &\leq 3 \prob{e \in \Rset^{t_j} \setminus \Rset^{t_{j-1}}}
        \label{eq:pt-rounded-up-movement3}
         . 
    \end{align}
		For fixed $e$ and $t_j \in J_e$, for \emph{decreasing} distinct values $\bar x_e^{t_j}$, $\bar x_e^{t_{j+1}}$, 
    we have
    \begin{align*}
        \prob{e \in \bar \Rset^{t_j} \setminus \bar \Rset^{t_{j-1}}} 
        &\leq \prob{e \in \Rset^{t_{j - 1}} \setminus \bar \Rset^{t_{j - 1}}} + \prob{e \in \Rset^{t_j} \setminus \Rset^{t_{j - 1}}} \notag \\
        & = (x^{t_{j-1}}_e - \bar x^{t_{j-1}}_e) + \prob{e \in \Rset^{t_j} \setminus \Rset^{t_{j-1}}}
         . 
    \end{align*}
		If $\bar x_e^{t_j} < \bar x_e^{t_{j-1}}$, then $x_e^{t_j}$ must also be smaller than $\bar x_e^{t_{j-1}}$, and thus, 
		    \begin{align}
        \prob{e \in \bar \Rset^{t_j} \setminus \bar \Rset^{t_{j-1}}} 
        &\leq (x_e^{t_{j - 1}} - x_e^{t_{j}}) + \prob{e \in \Rset^{t_j} \setminus \Rset^{t_{j-1}}} \notag \\
        &\leq 2 \prob{e \in \Rset^{t_j} \symdif \Rset^{t_{j-1}}}
        \label{eq:pt-rounded-up-movement4}
         . 
    \end{align}
		
    Summing over $e \in \cN$, $t_j \in J_e$, we obtain
    \begin{align}
				\sum_e \sum_{t_j \in J_e}\prob{e \in \bar \Rset^{t_j} \setminus \bar \Rset^{t_{j-1}}}
				\leq 3 \sum_e \sum_{t_j \in J_e} \prob{e \in \Rset^{t_j} \symdif \Rset^{t_{j-1}}}
        \leq 3 \sum_t \expect{\abs{\Rset^t \symdif \Rset^{t-1}}} 
        \label{eq:pt-rounded-up-movement5}
         . 
    \end{align}
    Finally, since total increase upper bounds total decrease,
    \begin{equation}
        \expect*{\sum_e \sum_{t_j \in J_e} \ind{e \in \bar R^t \symdif \bar R^{t-1}}}
        \leq 2 \cdot \sum_e \sum_{t_j \in J_e}\prob{e \in \bar \Rset^{t_j} \setminus \bar \Rset^{t_{j-1}}}
        \leq 6 \cdot \sum_t \expect{\abs{\Rset^t \symdif \Rset^{t-1}}} 
        \label{eq:pt-rounded-up-movement6}
         . 
    \end{equation}

    It remains only to address the second half of \eqref{eq:pt-IH-movement-bound}, and bound the movement in $\bar R^t$ when $\bar x_e^t$ remains fixed. Consider an element $e$ and time $t$ such that $\bar x^t_e = \bar x^{t -1}_e$, and let $\bar x_e$ be the common value of these two coordinates. Then,
    \begin{align*}
        \prob{e \in \bar\Rset^t \symdif \bar\Rset^{t-1}} 
        \mspace{-20mu}&\mspace{20mu}\leq \prob{e \in \Rset^t \symdif \Rset^{t-1}} + \prob{e \in \bar\Rset^t \symdif \bar\Rset^{t-1} \mid e \in \Rset^t\cap \Rset^{t-1}} \prob{e \in \Rset^t\cap \Rset^{t-1}} \\
        \leq{}& \prob{e \in \Rset^t \symdif \Rset^{t-1}} + \bigg[\prob*{\frac{\bar x^t_e}{x_e^t} \leq \lambda_e < \frac{\bar x^{t - 1}_e}{x_e^{t-1}}} + \prob*{\frac{\bar x^t_e}{x_e^t} > \lambda_e \geq \frac{\bar x^{t-1}_e}{x_e^{t-1}}}\bigg]  \cdot \min(x_e^t, x_e^{t-1}) \\
				={} &
				\prob{e \in \Rset^t \symdif \Rset^{t-1}} + \left|\frac{\bar x_e}{x_e^t} - \frac{\bar x_e}{x_e^{t-1}}\right|  \cdot \min(x_e^t, x_e^{t-1}) 
         ,
    \end{align*}
    by \Cref{line:pt-dcrs-trim-R}. Since $\Rset^t \sim \Rdist(\vx^t)$ for each $t$, we also have
    \[
        \left|\frac{\bar x_e}{x_e^t} - \frac{\bar x_e}{x_e^{t-1}}\right| \cdot \min(x_e^t, x_e^{t-1}) 
        = \frac{\abs{x_e^t - x_e^{t-1}}}{x_e^t x_e^{t-1}} \cdot \bar x_e \cdot \min(x_e^t, x_e^{t-1}) \leq \abs{x_e^t - x_e^{t-1}}
				\leq
				\prob{e \in R^t \symdif R^{t - 1}}
         .
    \]
    and so summing over all $e \in \cN$ and $t \in [T] \setminus J_e$ yields
    \begin{equation}
        \sum_{e \in \cN} \sum_{t \in [T] \setminus J_e} \mspace{-9mu} \prob{e \in \bar\Rset^t \symdif \bar\Rset^{t-1}} 
				\leq
				\sum_{e \in \cN} \sum_{t \in [T] \setminus J_e} 2 \cdot \prob{e \in \Rset^t \symdif \Rset^{t-1}}
				\leq
				2 \cdot \sum_t \expect{\abs{\Rset^t \symdif \Rset^{t-1}}} 
          .
        \label{eq:pt-dcrs-constant-run-Rbar-recourse}
    \end{equation}
    Combining \eqref{eq:pt-dcrs-constant-run-Rbar-recourse} with \eqref{eq:pt-dcrs-constant-run-IH-recourse} and \eqref{eq:pt-rounded-up-movement6} and \eqref{eq:pt-IH-movement-bound}, and plugging it and \eqref{eq:pt-IL-movement-bound} into \eqref{eq:pt-I-movement-bound}, we obtain
    \[
        \sum_t \abs{\Iset^t \symdif \Iset^{t-1}} \leq (6 + 8z) \cdot \sum_{t} \expect{\abs{\Rset^t \symdif \Rset^{t-1}}} = O(z) \cdot \sum_{t} \expect{\abs{\Rset^t \symdif \Rset^{t-1}}}
         .
        \qedhere
    \]
\end{proof}

\subsection{Quantized Polynomial-Time DCRSs (\texorpdfstring{\Cref{alg:matroid-DCRS-quantized}}{Algorithm~\ref*{alg:matroid-DCRS-quantized}})}
Our approach here is to make the probability of failing to obtain a sufficiently accurate Monte Carlo estimate of $\prob{e \in \Span(\Rdist(\vx(i-1)) \cup S_{i})}$ on \Cref{line:empirical_add_prob_pt} small. 
Ideally, we would be able to show that if \Cref{alg:matroid-DCRS-quantized} enters this failure mode, then it exits the mode (e.g. by performing the appropriate reset) after boundedly many steps. 
In general, it is not clear why this should be the case. 
But if the arriving $\vx^t$ are quantized according to \eqref{eq:def-buckets-new}, then each step corresponds to (inverse) polynomial $\vx$ movement. 
Hence after polynomially many steps, we can always afford to do a full reset, (likely) exiting the failure mode.

Since \Cref{alg:matroid-DCRS-quantized} skips iterations in which $\vx^t = \vx^{t - 1}$, we can assume for notational simplicity that no such iterations exist.
Let $\cE^t$ be the event that all estimates $\bar p_e$ computed in the processing of step $t$ of \Cref{alg:matroid-DCRS-quantized} obey $\abs{p_e - \bar p_e} < \eps'$.
Conditioned on $\cE^t$, we begin by furnishing a slightly more careful version of the termination argument (\Cref{lem:terminate}) for \Cref{alg:matroid-DCRS} at a step $t$.
\begin{lemma}[Step runtime] 
\label{lem:pt-dcrs-quant-bdd-good-ops}
    For the parameter choices made in the proof of \Cref{thm:mat_dcrs_formal}, if $\cE^t$ holds then the execution of a single step $t$ of \Cref{alg:matroid-DCRS-polytime} takes $O(n^3)$ operations and $O(n^3)$ estimates of $\bar p_e$.
    
    Moreover, if the failure probability of an individual $\bar p_e$ estimate satisfies $\delta \leq C \cdot n^{-2}$ for a sufficiently small constant $C$, then the \emph{expected} runtime of step $t$ of \Cref{alg:matroid-DCRS-quantized} is $O(n^3)$ (even if $\cE^t$ is not guaranteed to hold).
\end{lemma}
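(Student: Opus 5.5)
The plan is to mirror the termination argument of \cref{lem:terminate} and \cref{cor:fold_terminate}, but to count operations explicitly. The key fact is that under $\cE^t$ every estimate $\bar p_e$ is within $\eps'$ of the true span probability $p_e = \prob{e \in \Span(\Rdist(\vx(i-1)) \cup S_i)}$. So whenever \Cref{alg:matroid-DCRS-quantized} promotes an element to $S_i$, that element truly satisfies $p_e \geq \tau - \eps'$, where I take the estimated test to be against $\tau$; if the algorithm instead uses a shifted threshold, the same argument applies with the shift. Hence, between two consecutive invocations of $\Reset(i)$ inside a single step, the growth of $S_i$ satisfies the hypotheses of \cref{lem:growth_bound} with threshold $\tau - \eps'$. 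That gives $|S_i| \leq \|\vx(i-1)\|_1/(\tau-\eps')$. With the parameter choices of \Cref{thm:mat_dcrs_formal} and $\eps'$ small enough that $r(\tau - \eps') \geq \tau$ (I expect $\eps' = \Theta(\eps^2)$ to suffice, since $r - 1 = \Theta(\eps)$), this is at most $r\|\vx(i-1)\|_1/\tau$. A second $\Reset(i)$ within the same step is therefore impossible, exactly as in the proof of \cref{cor:fold_terminate}.

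Counting then goes as follows. By \cref{cor:non-empty_count}, there are at most $\ell \leq n$ nonempty levels. Each level $i$ is reset at most once per step, so between resets it receives at most $\rank(\cM) \leq n$ additions; this gives $O(n)$ additions per level and $O(n^2)$ in total. Each test of the while condition on the analogue of \cref{line:add_condition} scans at most $n$ candidates, each requiring one estimate $\bar p_e$. Every test either produces an addition or ends the level's inner loop, so the number of tests is $O(n^2)$ additions plus $O(n)$ loop exits. This yields $O(n^3)$ estimates and, with $O(1)$ bookkeeping per estimate and per addition (the circuit removal on \cref{line:T_remove} is charged to its addition), $O(n^3)$ operations. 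The output loop starting on \cref{line:main_T_loop} adds each element at most once per level, with $O(n)$ candidate checks per addition. It therefore contributes $O(n^2)$ checks per level; I expect this to be within the bound, or at most $O(n^3)$ overall.

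For the expected bound without assuming $\cE^t$, I would charge extra work to failed estimates. A single bad estimate can promote at most one element that did not truly qualify. Such a spurious promotion can break the ``at most one reset per level'' property, causing at most one additional reset at each of the at most $n$ levels. After that, each affected level again undergoes $O(n)$ additions and $O(n^2)$ estimates, so a single failure costs at most $O(n^2)$ additions and $O(n^3)$ estimates. If $N$ is the number of estimates and $F$ the number of failed ones, this gives $N \leq c\,n^3 (1 + F)$.

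The main obstacle is that failures and the number of estimates are correlated. I would handle this by revealing the estimates sequentially: each fails with probability at most $\delta$ independently of the past, so $\expect{F} \leq \delta \expect{N}$ by optional stopping (Wald). Combining gives $\expect{N} \leq c n^3 + c n^3 \delta \expect{N}$. This is only useful if the per-failure cost is $O(n^2)$ rather than $O(n^3)$, so I would refine the charging: a spurious promotion adds work only at levels at or above the level where it occurred. Alternatively I would charge per addition rather than per estimate, so the self-referential inequality reads $\expect{A} \leq c n^2 + c n^2 \delta\, \expect{A}$ in terms of the number of additions $A$. With $\delta \leq C n^{-2}$ and $C < 1/(2c)$, this solves to $\expect{A} = O(n^2)$, and hence $O(n^3)$ expected estimates and operations. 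To justify finiteness of $\expect{A}$ before rearranging, I would truncate at a large cap and let the cap tend to infinity.
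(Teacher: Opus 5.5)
Your argument for the case where $\cE^t$ holds is essentially the paper's. You apply the growth lemma with a threshold shifted by $O(\eps')$ and take $\eps' = \Theta(\eps^2)$ so that $r(\tau - O(\eps')) \ge \tau$. This gives at most one $\textsc{Reset}$ per level, $O(n)$ additions per level, and $n$ fresh estimates per addition.

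\textbf{The gap is in the expected-runtime part.} Your per-failure bound says one bad estimate causes up to one extra reset at each of up to $n$ levels, i.e.\ $O(n^2)$ additions and $O(n^3)$ estimates. Combined with $\mathbb{E}[F] \le \delta\,\mathbb{E}[N]$, this gives $\mathbb{E}[N] \le cn^3 + cn^3\delta\,\mathbb{E}[N]$. That only closes for $\delta = O(n^{-3})$, not for the $\delta \le Cn^{-2}$ in the statement. Neither repair you propose removes the extra factor of $n$:
\begin{itemize}
\item Restricting to ``levels at or above'' the failure still allows $\Theta(n)$ levels.
\item Counting additions instead of estimates does not help. Every addition is followed by $n$ fresh estimates, and any one of them can be the overestimate that gets promoted next. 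So a bad promotion has probability up to $n\delta$ per addition, not $\delta$.
\end{itemize}
With your stated cost of $O(n^2)$ additions per failure, the inequality you can actually derive is $\mathbb{E}[A] \le cn^2 + cn^3\delta\,\mathbb{E}[A]$. The inequality $\mathbb{E}[A] \le cn^2 + cn^2\delta\,\mathbb{E}[A]$ that you write down is not supported by your accounting.

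\textbf{The missing observation is that a failure at level $i$ costs only one extra rebuild of level $i$ itself.} Two facts give this:
\begin{itemize}
\item Within step $t$ the levels are processed in increasing order, and $\textsc{Reset}(i)$ called during iteration $i$ only empties levels $\ge i$. None of those has been processed yet in this step, so no completed work is destroyed.
\item A wrongly promoted element sitting in $S_i$ cannot cause a reset at level $i+1$. \cref{lem:growth_bound}, applied to an accurate build of $S_{i+1}$ from empty, involves only the current vector $\vx(i)$, however $S_i$ was obtained.
\end{itemize}
Hence level $i$ undergoes at most $2 + B_i$ build phases, where $B_i$ is the number of bad promotions at level $i$. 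Each phase costs at most $n$ additions and $O(n^2)$ estimates. Your framework then gives $\mathbb{E}[A] \le O(n\ell) + n^2\delta\,\mathbb{E}[A]$, which does close for $\delta \le Cn^{-2}$.

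The paper reaches the same conclusion more directly, without Wald's identity or truncation. Each post-reset rebuild of level $i$ uses $O(n^2)$ estimates, all accurate with probability $1 - O(n^2\delta) = \Omega(1)$, and an accurate rebuild from empty ends without a reset. So the number of rebuilds per level is dominated by a geometric variable with constant mean. This gives $O(n^2)$ expected work per level and $O(n^3)$ in total.
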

\begin{proof}
    The loop at \Cref{line:T_add_condition_pt} iterates at most $n$ times. 
    This is also true of the loop at \Cref{line:add_condition_pt}: provided each \Reset($i$) is called at most once, it can add at most $2n$ elements in total to $S_i$ and therefore completes at most $2n$ iterates.
    (Clearly the hard reset on \Cref{line:hard_reset_pt} is called at most once, unconditionally.)
    Assuming $\cE^t$ holds, the same argument as for \Cref{lem:terminate} suffices to show no level is reset twice, so long as $\frac{1}{\tau - \eps'} < \frac{r}{\tau}$ and the process of constructing $S_i$ meets the conditions of \Cref{lem:growth_bound} with span probability bound $\tau - \eps'$.
    This holds for our choices of $r$ and $\tau$ (\Cref{line:r_tau_settings_pt}) so long as $\eps' \leq \eps^2/2$.

    All together, at most $O(n^2)$ lines are executed per step $t$, and \Cref{alg:matroid-DCRS-quantized} computes at most $n \cdot \ell n \leq n^3$ individual $\bar p_e$ estimates between \Cref{line:empirical_add_prob_pt} and \Cref{line:empirical_add_prob_recompute_pt} per step.

    To argue about the expected runtime even if $\cE^t$ is not guaranteed to hold, note that the only way in which the above argument can fail is if \Reset() is called more than once per $i$. 
    Still, if $\delta < 1/4n^2$, then after each reset, iteration $i$ has probability of $\Omega(1)$ of making it up to satisfying \Cref{lem:growth_bound} without a failed estimate, and therefore terminating and moving past $i$. Thus, the expected number of resets done for a single $i$ value is only a constant, which implies that the expected number of operations before success of iteration $i$ is $O(n^2)$.
    Summing these over $i \in [\ell]$ establishes the claim.
\end{proof}

Our hard resets (\Cref{line:hard_reset_pt}) allow us to define \emph{epochs}: each epoch $u$ is a sequence of time steps $T_u \subseteq [T]$ ending with a hard reset. 
An \emph{epoch failure} is the event $\Lambda_u$ that a failure $\neg \cE^t$ occurs at some step $t \in T_u$.

\begin{proof}[Proof of \Cref{lem:quantized_guarantee}]
    \Cref{lem:pt-dcrs-quant-bdd-good-ops} shows that \Cref{alg:matroid-DCRS-quantized} runs in (expected) polynomial time when $M(n,\eps')$ is polynomial.
    It remains to show that \Cref{alg:matroid-DCRS-quantized} is a DCRS for appropriate choices of $\delta$, $M(n,\eps')$, and $\eps'$.
    
    First, the assumption that $\vx \in \cB_\delta$ lets us bound the number of steps per epoch:
    \[
        \abs{T_u} \leq \frac{\rank(\cN)}{\delta}.
    \]
    (Recall from the proof of \Cref{lem:polytime-reduction-to-quantized-dcrs} that $\delta = 1/\poly(n)$.)
    This in turn enables a bound on epoch failure: 
    \begin{equation}
        \prob{\Lambda_u} \leq \prob{\bar p_e \text{ failure}} \cdot n^3 \cdot \abs{T_u} \leq \poly(n, \eps^{-1}) \cdot \exp({-M(n, \eps')\cdot\eps'^2})
         ,
    \label{eq:epoch-fail-bound}
    \end{equation}
    where we use Hoeffding's inequality to relate the probability that $\abs{\bar p_e - p_e} \geq \eps'$ to the (as-of-yet unspecified) number of samples $M(n, \eps')$ used to estimate $\bar p_e$.

    If $u$ is an epoch without failure, the arguments from \Cref{sec:matroid} show that \Cref{alg:matroid-DCRS-quantized} exhibits balance $1-\tau - \eps'$, and asymptotically equivalent expected recourse (even with the hard resets).
    The failure upper bound above holds regardless of $\vx^t$ and $\Rset^t$, and so if $u(t)$ is the epoch containing $t$, then the balance of any $e$ in round $t$ is at least
    \[
        (1-\tau - \eps')(1-\prob{\Lambda_u})
         ,
    \]
    and the expected recourse (as a function of the $z$ derived in \Cref{thm:mat_dcrs_formal}) satisfies
    \begin{align*}
        \sum_t \expect{\abs{\Rset^t \symdif \Rset^{t-1}}} \leq 2 z \cdot (1-\prob{\Lambda_u}) \cdot \expect{\abs{\vx^t - \vx^{t-1}}} + \prob{\Lambda_u} \cdot n \cdot \abs{T_u}
         ,
    \end{align*}
    because the recourse is at most $n$ per step. 
    Choosing $M(n, \eps')$ to be a sufficiently large polynomial suffices to make these balance and recourse guarantees the same as the ones guaranteed by \Cref{thm:mat_dcrs_formal}.
\end{proof}

%% file: alg_matroid_DCRS_polytime.tex
\begin{algorithm}[th]
	\caption{\textsc{MatroidDCRSPolytime} }
	\label{alg:matroid-DCRS-polytime}
	\begin{algorithmic}[1]
		\Require matroid $\cM$, $b \in (0,1)$, $\eps \in (0, 1-b)$, online sequences $\{\vx^t\} \subseteq b \cdot P_{\cM}$ and TCOS output $\{\Rset^t\}_t$, discretization $\cB_\delta$, $\gamma$.
		\State{Sample $\lambda_e \sim \mathrm{Unif}([0,1])$ for each $e\in \cN$.}
		\State{Sample $\lambda_e^L \sim \mathrm{Unif}([\gamma, 2\gamma])$ for each $e\in \cN$.}
		\State{$\sigma \gets$ fixed ordering of $\cN$}
		\State{$\bar \vx^0 \gets 0$ and $\bar R^0 \gets \emptyset$}
		\For{$t = 1, 2, \ldots$}
		      \State{$\bar \Rset^t \gets \Rset^t$}
    		\For{$e \in \cN$}:
        		\State{$i \gets \bar x_e^{t-1}/\delta$ and $i' \gets j(x_e^t)$}
        		\If{$i' - i = 1$ \label{line:pt-dcrs-roundx1-condition} } 
        		      \State{$\bar x_e^t \gets \bar x_e^{t-1}$ \label{line:pt-dcrs-roundx1}}
        		\Else{}
        		      \State{$\bar x_e^t \gets i' \cdot \delta$\label{line:pt-dcrs-roundx2}}
        		\EndIf
        		\If{$e \in \Rset^t$ and $\lambda_e  x_e^t > \bar x_e^t $ \label{line:pt-dcrs-trim-R}}
        		      \State{$\bar\Rset^t \gets \bar\Rset^t \setminus \{e\}$}
        		\EndIf
    		\EndFor
    		\State{$\cN_L \gets \{e \in \cN: x_e^t \leq \lambda_e^L\}$ \label{line:pt-dcrs-smalleles} }
    		\State{$I_L^t \gets \textsc{Greedy}_\cM(\Rset^t \cap \cN_L, \sigma)$ \label{line:pt-dcrs-greedysmall} }
    		\State{$I_H^t \gets \textsc{MatroidDCRSQuantized}(\cM, \eps, b, \bar\vx^{\leq t}, \bar R^{\leq t})$\hspace{5mm}} \algorithmiccomment{\textit{\Cref{alg:matroid-DCRS-quantized}}}
    		\State{$I^t \gets I_L^t$}
    		\For{$e \in I_H^t$ in $\sigma$ order \label{line:pt-dcrs-finalgreedyloop}}
    		\State{If $I^t\cup \{e\} \in \cI$, then $I^t \gets I^t\cup \{e\}$ \label{line:pt-dcrs-independent} }
		\EndFor
		\State{\textbf{return} $I^t$ \label{line:pt-dcrs-return}}
		\EndFor
	\end{algorithmic}
\end{algorithm}

\begin{algorithm}[th]
	\caption{\textsc{MatroidDCRSQuantized} }
	\label{alg:matroid-DCRS-quantized}
	\begin{algorithmic}[1]
        \Require matroid $\cM$, $b \in (0,1)$, $\eps \in (0, 1-b)$, online sequences $\{\vx^t\} \subseteq b \cdot P_{\cM}$ and TCOS output $\{\Rset^t\}_t$.
        \State{$r \gets 1+\eps/(1 + b)$ and $\tau \gets b + \eps$. \label{line:r_tau_settings_pt} }
        \State{$\sigma \gets$ arbitrary ordering of $\cN$.}
        \State{Initialize $S^0_i \gets \emptyset$, $\Iset^0_i \gets \emptyset$ for all $i \in [\ell]$.}
        \State{$X \gets 0$.}
        \For{$t = 1, 2, \ldots$}
            \State{$\vx \gets \vx^t$, $\Rset \gets \Rset^t$, $S_i \gets S_i^{t-1}$ and $\Iset_i\leftarrow \Iset_i^{t-1}$ for all $i \in [\ell]$.} 
						\State{If $\vx^t = \vx^{t-1}$, skip the rest of this iteration.}
            \State{$X \gets X + \norm{\vx^t - \vx^{t-1}}_1$.}
            \If{$X \geq b\cdot\rank(\cN)$ \label{line:hard_reset_pt} }
                \State{\Reset($0$).}
                \State{$X \gets 0$.}
            \EndIf
    		\For{$i = 1, 2, \ldots, \ell$ \label{line:main_S_loop_pt}}
                \algorithmiccomment{\textit{Move likely-spanned elements to $S_i$, update $\Iset_{i-1}$, \Reset if necessary.}}
                \If{$|S_i| > r \cdot \|\vx(i-1)\|_1 / \tau$\label{line:early_reset_condition_pt}}
                   \State{\Reset($i$).}
                \EndIf
                \State{$\bar p_e \gets$ average of $\ind{e \in \Span(\Rset \cup S_{i})}$ over $M(t, \eps')$ samples $\Rset \sim \Rdist(\vx(i-1))$, for every $e \in \cN$. \label{line:empirical_add_prob_pt}}
                \While{there is $e \in \Span(S_{i - 1}) \setminus \Span(S_i)$ obeying $\bar p_e \geq \tau - \eps'$ \label{line:add_condition_pt}}
                    \State Add $e$ to $S_{i}$ and remove $e$ from $\Iset_{i-1}$ if $e \in \Iset_{i-1}$.\label{line:add_pt}
                    \If{$S_i \cup \Iset_{i-1}$ is no longer independent\label{line:T_remove_condition_pt}}
                        \State{There exists $e' \in \Iset_{i-1}$ in the unique cycle in $S_i \cup \Iset_{i-1}$; remove $e'$ from $\Iset_{i-1}$.\label{line:T_remove_pt}}
                    \EndIf
		              \If{$|S_i| > r \cdot \|\vx(i-1)\|_1 / \tau$\label{line:reset_condition_pt}}
                      \State \Reset($i$).
                    \EndIf
                    \State{Recompute $\bar p_e$ from $M(t, \eps')$ fresh samples for all $e \in \cN$. \label{line:empirical_add_prob_recompute_pt}}
                \EndWhile
            \EndFor
            \For{$i = 0,1,  \ldots, \ell$ \label{line:main_T_loop_pt}}
                \algorithmiccomment{\textit{Add new elements from $\Rset$ to $\Iset_i$}.}
            	\State Update $\Iset_i \gets \Iset_i \cap \Rset$. \label{line:non-R_remove_pt}
    	         \While{there is an $e \in \Rset \cap (\Span(S_{i}) \setminus (\Span(S_{i+1} \cup \Iset_{i})))$ \label{line:T_add_condition_pt}}
                    \State Add the $\sigma$-first such $e$ to $\Iset_i$.\label{line:add_T_pt}
                \EndWhile
            \EndFor
            \State{Update $\Iset^t_i \leftarrow \Iset_i$ and $S_i^t \leftarrow S_i$ for all $i \in [\ell]$.} 
            \State{Set $I^t \leftarrow \cup_{i = 0}^{\ell} \Iset_i$ and \Return $\Iset^t$.}  \label{line:return_pt}
        \EndFor
        \Statex
        \Procedure{Reset}{$i$}
            \For{level $j=i,\ldots, \ell$}
                \State Set $S_j \gets \emptyset$. \label{line:reset_removal_S_pt}
                \State Set $\Iset_{j} \gets \emptyset$. \label{line:reset_removal_I_pt}
            \EndFor
        \EndProcedure
        \end{algorithmic}
\end{algorithm}

%% file: round-or-sep.tex
\section{Submodular Chasing in Polynomial Time}

\label{sec:round-or-sep}

To efficiently find a vector in the polytope $K_t$ (Equation~\eqref{line:polytope}), it suffices to give a separation oracle (see \cite[Appendix A]{BBLS23} for a polynomial-time reduction from general body chasing to halfspace chasing). Separating over the constraints defining the polytope $P$ depends on the properties of this polytope. For example, if $P$ is a matroid polytope, then it is well known that one can use submodular minimization to identify the most violated constraint. How should one handle the remaining constraints coming from the covering extension $f^*_t$? If there were a way to design a separation oracle for these constraints, then we could also evaluate the covering extension $f^*_t$, but doing so is known to be \textsf{APX}-hard. Thus, there is no hope for a traditional separation oracle. Nevertheless, Gupta and Levin \cite{DBLP:conf/soda/GuptaL20} show that if the multilinear extension $F(\vx)$ is sufficiently bounded away from the global maximum, then there is a randomized separation oracle that outputs a constraint violated by $\vx$ with high probability. The contrapositive of this statement is that if the procedure for finding violated constraints fails, then $\vx$ is already an approximate optimizer of the multilinear extension $F$, and finding such a vector was the goal all along! (They refer to this as the ``round-or-separate'' method.)

The details of how to use round-or-separate for low-recourse \emph{fractional} submodular optimization are formalized in concurrent work by Buchbinder, Naor and Wajc \cite[Theorem 4.2]{buchbinder2025chasingsubmodularobjectivessubmodular}. (We note that Buchbinder, Naor, and Wajc give a tighter analysis of the round-or-separate algorithm of \cite{DBLP:conf/soda/GuptaL20} with improved constant over the original version.)

\begin{theorem}[{\cite[Theorem 4.2]{buchbinder2025chasingsubmodularobjectivessubmodular}}]
\label{thm:fancy_low_rec_submod}
    For any $\eps>0$ and down-closed constraint $\cF$ with separable packing polytope $P \subseteq [0,1]^\cN$, there exists a polynomial-time deterministic $(1-e^{-1} -\eps)$-approximate $O(\nicefrac{1}{\epsilon} \log (\nicefrac{n}{\epsilon}))$-competitive recourse algorithm for the fractional submodular objective chasing problem. 
\end{theorem}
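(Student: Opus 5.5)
The plan is to run the positive body chasing algorithm of \cite{BBLS23} (\cref{thm:postive_bodies}) on the bodies $K_t$ from \eqref{line:polytope}, never touching $K_t$ explicitly, and to access the covering part only through a round-or-separate oracle in the spirit of \cite{DBLP:conf/soda/GuptaL20}. \cite[Appendix A]{BBLS23} reduces chasing a positive body to chasing the halfspaces returned by a separation oracle. The key point is that this reduction never needs a \emph{complete} oracle. It is enough that (i) every returned constraint is valid for $K_t$, and (ii) whenever the oracle refuses to separate, the current point is good enough for our purposes. Under (i), the offline optimal sequence, which lies in $K_t$ at every time, satisfies every halfspace the online algorithm ever sees. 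So the $O(\nicefrac1\eps\log\nicefrac n\eps)$ competitive analysis of \cite{BBLS23} still charges to $\optr^\beta$. Under (ii), the point at which the algorithm stops has large multilinear value, which is exactly the approximation we want.

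The oracle, given the current $\vx$, works in three steps. First it separates over the packing constraints $P\vx\le 1$ and $x_e\le 0$ for $e\notin A^t$; this is possible since $P$ is separable by assumption. Next, for the covering extension, it generates a polynomial family of candidate sets $S$: the sets obtained along a discretized continuous-greedy/Vondr\'ak-style trajectory from $0$ to $\vx$, that is, sets $\Rdist(\theta\vx)$ or deterministic surrogates of them for $\theta$ on an $\eps$-grid of $[0,1]$. For each candidate it evaluates $f^t(S)+\sum_e f^t_S(\{e\})x_e$ and returns the first $S$ whose value falls below the (slightly relaxed) covering threshold. Such a constraint is a genuine constraint of $K_t^{1-\eps}$, so requirement (i) holds. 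The structural lemma to prove is the contrapositive of \cite[Lemma 3.8]{vondrak2007submodularity} in discretized form. Write $\frac{d}{d\theta}F(\theta\vx)=\sum_e x_e\,\partial_eF(\theta\vx)$ and compare each $\partial_e F(\theta\vx)$ with $f_S(\{e\})$ for $S$ drawn at parameter $\theta$. This gives $\frac{d}{d\theta}F(\theta\vx)\ge \min_S\{f(S)+\sum_e f_S(\{e\})x_e\}-F(\theta\vx)$, up to discretization error. Hence, if no candidate set violates the threshold $c$, integration yields $F(\vx)\ge(1-e^{-1}-\eps)c$. This is property (ii).

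To make everything deterministic and polynomial, I would replace the random sets by estimates obtained through (conditional-expectation style) derandomization, or through the tighter analysis of the same trajectory. Absorbing the discretization and estimation errors into the $\eps$ in the approximation factor, while keeping $\beta\,\opt^t$ inside the threshold, is bookkeeping. $\opt^t$ itself is not known, but it can be handled by guessing it on a geometric grid, or, more cleanly, by running the chaser toward the largest threshold for which the oracle never certifies infeasibility of the packing part.

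The main obstacle I expect is the interface with \cite[Appendix A]{BBLS23}. Their reduction assumes a separation oracle for a \emph{fixed} convex body, whereas ours may answer ``accept'' at points lying outside $K_t$. I would argue as follows. The halfspace-chasing potential analysis only uses that each halfspace served contains the offline point, which is (i). The algorithm terminates at time $t$ as soon as the oracle accepts. By (ii), the accepted point meets the approximation guarantee even though it need not lie in $K_t^{1-\eps}$. A polynomial bound on the number of oracle calls per step follows from the same potential argument, since each served violated halfspace forces $\ell_1$ movement bounded below by an inverse polynomial.
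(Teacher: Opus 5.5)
The paper does not prove this statement itself. It sketches exactly your route in its round-or-separate section and then cites \cite{buchbinder2025chasingsubmodularobjectivessubmodular}. Your architecture is sound in spirit:
\begin{itemize}
\item the \cite{BBLS23} halfspace reduction needs only that every served cut contains the benchmark point;
\item an $\eps$-margin violation forces inverse-polynomial movement, which bounds the number of oracle calls;
\item acceptance gives the approximation via the contrapositive of \cite[Lemma 3.8]{vondrak2007submodularity}.
\end{itemize}

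\textbf{The right-hand side $\beta\opt^t$ is a real gap, not bookkeeping.} Property (i) needs the threshold $c_t$ used by your oracle to satisfy $c_t\le\beta\opt^t$. The benchmark point $\vy^t$ is only guaranteed $f^*_t(\vy^t)\ge\beta\opt^t$, so a cut $f^t(S)+\sum_e f^t_S(\{e\})y_e\ge c_t$ with $c_t>\beta\opt^t$ can exclude it. Serving such a cut can then increase the potential of \cite{BBLS23}, and the online movement is no longer charged to $\optr^\beta$. Property (ii) needs $c_t\ge(1-O(\eps))\beta\opt^t$ to give the stated ratio. So you need a $(1\pm\eps)$-estimate of $\opt^t$, which is NP-hard in general; already for maximum coverage under a cardinality constraint, beating $1-e^{-1}$ is hard.

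Neither of your fixes works:
\begin{itemize}
\item The packing part $\{P\vx\le 1,\ x_e=0\ \forall e\notin A^t\}$ always contains $0$, so it is never infeasible.
\item Infeasibility of $P$ together with the cuts accumulated so far certifies at most $c_t>\max_{\vy\in P}f^*_t(\vy)\ge\opt^t$. It says nothing about thresholds in $(\beta\opt^t,\max_{\vy\in P}f^*_t(\vy)]$. By the time infeasibility is detected, invalid cuts have already been served and their movement is unaccounted for.
\item Geometric guessing has the same validity problem. In addition, switching the output between chasers run at different guesses costs $\ell_1$ movement that nothing charges to $\optr^\beta$.
\item Using a polynomial-time lower bound on $\opt^t$ keeps the cuts valid, but it loses that bound's factor (about $(1-e^{-1})^2$ at best).
\end{itemize}
As written, your argument proves the statement only if the targets $\beta\opt^t$, or $(1\pm\eps)$-approximations of them, are supplied with the input. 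Otherwise an additional idea is needed.

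\textbf{The word \emph{deterministic} is not established.} Your oracle tests samples $S\sim\Rdist(\theta\vx)$. What your integration argument actually gives is this: if $F^t(\vx)<(1-e^{-1}-\eps)c_t$, then for some grid value $\theta$ one has $\mathbb{E}[f^t(S)+\sum_e f^t_S(\{e\})x_e]<(1-\Omega(\eps))c_t$, so a single sample violates the cut with probability $\Omega(\eps)$. That yields a randomized oracle whose certificate (ii) holds only with high probability.

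The method of conditional expectations does not fix this. It requires exact conditional expectations of $f^t$ under product distributions, which are exponential sums that a value oracle cannot supply. ``Tighter analysis of the same trajectory'' does not remove the sampling either. You need an explicitly deterministic polynomial-size candidate family with its own analysis. Alternatively, weaken the claim to a randomized algorithm with the failure probability folded into the expectations; that is all \cref{thm:low_rec_submod} actually uses, since its guarantees are in expectation.
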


We can use this theorem in place of \cref{thm:postive_bodies} within the proof of \cref{thm:low_rec_submod}.

%% file: acknowledgements.tex
\section{Acknowledgments and AI Disclosure}

The authors would like to thank Yue Yang for early discussions. 

We used language models---in particular, a combination of GPT and Gemini---to generate \Cref{fig:framework_horizontal} and to scrutinize our proofs for errors.

%% file: dblp.bib
@inproceedings{DBLP:conf/soda/FeldmanSZ26,
  author       = {Moran Feldman and
                  Ola Svensson and
                  Rico Zenklusen},
  editor       = {Kasper Green Larsen and
                  Barna Saha},
  title        = {Nearly Tight Sample Complexity for Matroid Online Contention Resolution},
  booktitle    = {Proceedings of the 2026 Annual {ACM-SIAM} Symposium on Discrete Algorithms,
                  {SODA} 2026, Vancouver, BC, Canada, January 11-14, 2026},
  pages        = {4692--4711},
  publisher    = {{SIAM}},
  year         = {2026},
  url          = {https://doi.org/10.1137/1.9781611978971.171},
  doi          = {10.1137/1.9781611978971.171},
  bibsource    = {dblp computer science bibliography, https://dblp.org}
}

@inproceedings{DBLP:conf/icalp/ChekuriGQ15,
  author    = {Chandra Chekuri and
               Shalmoli Gupta and
               Kent Quanrud},
  editor    = {Magn{\'{u}}s M. Halld{\'{o}}rsson and
               Kazuo Iwama and
               Naoki Kobayashi and
               Bettina Speckmann},
  title     = {Streaming Algorithms for Submodular Function Maximization},
  booktitle = {Automata, Languages, and Programming - 42nd International Colloquium,
               {ICALP} 2015, Kyoto, Japan, July 6-10, 2015, Proceedings, Part {I}},
  series    = {Lecture Notes in Computer Science},
  volume    = {9134},
  pages     = {318--330},
  publisher = {Springer},
  year      = {2015},
  url       = {https://doi.org/10.1007/978-3-662-47672-7\_26},
  doi       = {10.1007/978-3-662-47672-7\_26},
  bibsource = {dblp computer science bibliography, https://dblp.org}
}

@article{DBLP:journals/jcss/AwerbuchAPW01,
  author    = {Baruch Awerbuch and
               Yossi Azar and
               Serge A. Plotkin and
               Orli Waarts},
  title     = {Competitive Routing of Virtual Circuits with Unknown Duration},
  journal   = {J. Comput. Syst. Sci.},
  volume    = {62},
  number    = {3},
  pages     = {385--397},
  year      = {2001},
  url       = {https://doi.org/10.1006/jcss.1999.1662},
  doi       = {10.1006/jcss.1999.1662},
  bibsource = {dblp computer science bibliography, https://dblp.org}
}

@inproceedings{DBLP:conf/soda/GuptaK14,
  author    = {Anupam Gupta and
               Amit Kumar},
  editor    = {Chandra Chekuri},
  title     = {Online {Steiner} Tree with Deletions},
  booktitle = {Proceedings of the Twenty-Fifth Annual {ACM-SIAM} Symposium on Discrete
               Algorithms, {SODA} 2014, Portland, Oregon, USA, January 5-7, 2014},
  pages     = {455--467},
  publisher = {{SIAM}},
  year      = {2014},
  url       = {https://doi.org/10.1137/1.9781611973402.34},
  doi       = {10.1137/1.9781611973402.34},
  bibsource = {dblp computer science bibliography, https://dblp.org}
}

@inproceedings{DBLP:conf/waoa/DahlmeierH25,
  author       = {J. Niklas Dahlmeier and
                  D. Ellis Hershkowitz},
  editor       = {Jannik Matuschke and
                  Jos{\'{e}} Verschae},
  title        = {Low Recourse Arborescence Forests Under Uniformly Random Arcs},
  booktitle    = {Approximation and Online Algorithms - 23rd International Workshop,
                  {WAOA} 2025, Warsaw, Poland, September 18-19, 2025, Proceedings},
  series       = {Lecture Notes in Computer Science},
  volume       = {16077},
  pages        = {142--156},
  publisher    = {Springer},
  year         = {2025},
  url          = {https://doi.org/10.1007/978-3-032-06706-7\_10},
  doi          = {10.1007/978-3-032-06706-7\_10},
  bibsource    = {dblp computer science bibliography, https://dblp.org}
}

@inproceedings{DBLP:conf/icml/DuettingFLNZ24,
  author       = {Paul D\"utting and
                  Federico Fusco and
                  Silvio Lattanzi and
                  Ashkan Norouzi{-}Fard and
                  Morteza Zadimoghaddam},
  title        = {Consistent Submodular Maximization},
  booktitle    = {Forty-first International Conference on Machine Learning, {ICML} 2024,
                  Vienna, Austria, July 21-27, 2024},
  publisher    = {OpenReview.net},
  year         = {2024},
  url          = {https://openreview.net/forum?id=AlJkqMnyjL},
  bibsource    = {dblp computer science bibliography, https://dblp.org}
}

@inproceedings{DBLP:conf/sosa/ChekuriSZ24,
  author       = {Chandra Chekuri and
                  Junkai Song and
                  Weizhong Zhang},
  editor       = {Merav Parter and
                  Seth Pettie},
  title        = {Contention Resolution for the \emph{{\(\ell\)}}-fold union
                  of a matroid via the correlation gap},
  booktitle    = {2024 Symposium on Simplicity in Algorithms, {SOSA} 2024, Alexandria,
                  VA, USA, January 8-10, 2024},
  pages        = {396--405},
  publisher    = {{SIAM}},
  year         = {2024},
  url          = {https://doi.org/10.1137/1.9781611977936.36},
  doi          = {10.1137/1.9781611977936.36},
  bibsource    = {dblp computer science bibliography, https://dblp.org}
}

@inproceedings{DBLP:conf/innovations/AlonGPRWW025,
  author       = {Noga Alon and
                  Nick Gravin and
                  Tristan Pollner and
                  Aviad Rubinstein and
                  Hongao Wang and
                  S. Matthew Weinberg and
                  Qianfan Zhang},
  editor       = {Raghu Meka},
  title        = {A Bicriterion Concentration Inequality and Prophet Inequalities for
                  k-Fold Matroid Unions},
  booktitle    = {16th Innovations in Theoretical Computer Science Conference, {ITCS}
                  2025, January 7-10, 2025, Columbia University, New York, NY, {USA}},
  series       = {LIPIcs},
  volume       = {325},
  pages        = {4:1--4:22},
  publisher    = {Schloss Dagstuhl - Leibniz-Zentrum f{\"{u}}r Informatik},
  year         = {2025},
  url          = {https://doi.org/10.4230/LIPIcs.ITCS.2025.4},
  doi          = {10.4230/LIPICS.ITCS.2025.4},
  bibsource    = {dblp computer science bibliography, https://dblp.org}
}

@inproceedings{DBLP:conf/sosa/FuLTTWW022,
  author       = {Hu Fu and
                  Pinyan Lu and
                  Zhihao Gavin Tang and
                  Abner Turkieltaub and
                  Hongxun Wu and
                  Jinzhao Wu and
                  Qianfan Zhang},
  editor       = {Karl Bringmann and
                  Timothy M. Chan},
  title        = {Oblivious Online Contention Resolution Schemes},
  booktitle    = {5th Symposium on Simplicity in Algorithms, SOSA@SODA 2022, Virtual
                  Conference, January 10-11, 2022},
  pages        = {268--278},
  publisher    = {{SIAM}},
  year         = {2022},
  url          = {https://doi.org/10.1137/1.9781611977066.20},
  doi          = {10.1137/1.9781611977066.20},
  bibsource    = {dblp computer science bibliography, https://dblp.org}
}

@inproceedings{DBLP:conf/stoc/DuttingFLNSZ25,
  author       = {Paul D{\"{u}}tting and
                  Federico Fusco and
                  Silvio Lattanzi and
                  Ashkan Norouzi{-}Fard and
                  Ola Svensson and
                  Morteza Zadimoghaddam},
  editor       = {Michal Kouck{\'{y}} and
                  Nikhil Bansal},
  title        = {The Cost of Consistency: Submodular Maximization with Constant Recourse},
  booktitle    = {Proceedings of the 57th Annual {ACM} Symposium on Theory of Computing,
                  {STOC} 2025, Prague, Czechia, June 23-27, 2025},
  pages        = {1406--1417},
  publisher    = {{ACM}},
  year         = {2025},
  url          = {https://doi.org/10.1145/3717823.3718131},
  doi          = {10.1145/3717823.3718131},
  bibsource    = {dblp computer science bibliography, https://dblp.org}
}

@article{DBLP:journals/mp/ChakrabartiK15,
  author    = {Amit Chakrabarti and
               Sagar Kale},
  title     = {Submodular maximization meets streaming: matchings, matroids, and
               more},
  journal   = {Math. Program.},
  volume    = {154},
  number    = {1-2},
  pages     = {225--247},
  year      = {2015},
  url       = {https://doi.org/10.1007/s10107-015-0900-7},
  doi       = {10.1007/s10107-015-0900-7},
  bibsource = {dblp computer science bibliography, https://dblp.org}
}

@inproceedings{DBLP:conf/stoc/AbboudA0PS19,
  author    = {Amir Abboud and
               Raghavendra Addanki and
               Fabrizio Grandoni and
               Debmalya Panigrahi and
               Barna Saha},
  editor    = {Moses Charikar and
               Edith Cohen},
  title     = {Dynamic set cover: improved algorithms and lower bounds},
  booktitle = {Proceedings of the 51st Annual {ACM} {SIGACT} Symposium on Theory
               of Computing, {STOC} 2019, Phoenix, AZ, USA, June 23-26, 2019},
  pages     = {114--125},
  publisher = {{ACM}},
  year      = {2019},
  url       = {https://doi.org/10.1145/3313276.3316376},
  doi       = {10.1145/3313276.3316376},
  bibsource = {dblp computer science bibliography, https://dblp.org}
}

@inproceedings{DBLP:conf/focs/BhattacharyaHN19,
  author    = {Sayan Bhattacharya and
               Monika Henzinger and
               Danupon Nanongkai},
  editor    = {David Zuckerman},
  title     = {A New Deterministic Algorithm for Dynamic Set Cover},
  booktitle = {60th {IEEE} Annual Symposium on Foundations of Computer Science, {FOCS}
               2019, Baltimore, Maryland, USA, November 9-12, 2019},
  pages     = {406--423},
  publisher = {{IEEE} Computer Society},
  year      = {2019},
  url       = {https://doi.org/10.1109/FOCS.2019.00033},
  doi       = {10.1109/FOCS.2019.00033},
  bibsource = {dblp computer science bibliography, https://dblp.org}
}

@inproceedings{DBLP:conf/soda/BhattacharyaHNW21,
  author    = {Sayan Bhattacharya and
               Monika Henzinger and
               Danupon Nanongkai and
               Xiaowei Wu},
  editor    = {D{\'{a}}niel Marx},
  title     = {Dynamic Set Cover: Improved Amortized and Worst-Case Update Time},
  booktitle = {Proceedings of the 2021 {ACM-SIAM} Symposium on Discrete Algorithms,
               {SODA} 2021, Virtual Conference, January 10 - 13, 2021},
  pages     = {2537--2549},
  publisher = {{SIAM}},
  year      = {2021},
  url       = {https://doi.org/10.1137/1.9781611976465.150},
  doi       = {10.1137/1.9781611976465.150},
  bibsource = {dblp computer science bibliography, https://dblp.org}
}

@article{DBLP:journals/siamcomp/ChekuriVZ14,
  author    = {Chandra Chekuri and
               Jan Vondr{\'{a}}k and
               Rico Zenklusen},
  title     = {Submodular Function Maximization via the Multilinear Relaxation and
               Contention Resolution Schemes},
  journal   = {{SIAM} J. Comput.},
  volume    = {43},
  number    = {6},
  pages     = {1831--1879},
  year      = {2014},
  url       = {https://doi.org/10.1137/110839655},
  doi       = {10.1137/110839655},
  bibsource = {dblp computer science bibliography, https://dblp.org}
}

@article{DBLP:journals/siamcomp/GuG016,
  author    = {Albert Gu and
               Anupam Gupta and
               Amit Kumar},
  title     = {The Power of Deferral: Maintaining a Constant-Competitive {Steiner}
               Tree Online},
  journal   = {{SIAM} J. Comput.},
  volume    = {45},
  number    = {1},
  pages     = {1--28},
  year      = {2016},
  url       = {https://doi.org/10.1137/140955276},
  doi       = {10.1137/140955276},
  bibsource = {dblp computer science bibliography, https://dblp.org}
}

@inproceedings{DBLP:conf/esa/AssadiS21,
  author    = {Sepehr Assadi and
               Shay Solomon},
  editor    = {Petra Mutzel and
               Rasmus Pagh and
               Grzegorz Herman},
  title     = {Fully Dynamic Set Cover via Hypergraph Maximal Matching: An Optimal
               Approximation Through a Local Approach},
  booktitle = {29th Annual European Symposium on Algorithms, {ESA} 2021, September
               6-8, 2021, Lisbon, Portugal (Virtual Conference)},
  series    = {LIPIcs},
  volume    = {204},
  pages     = {8:1--8:18},
  publisher = {Schloss Dagstuhl - Leibniz-Zentrum f{\"{u}}r Informatik},
  year      = {2021},
  url       = {https://doi.org/10.4230/LIPIcs.ESA.2021.8},
  doi       = {10.4230/LIPIcs.ESA.2021.8},
  bibsource = {dblp computer science bibliography, https://dblp.org}
}

@article{DBLP:journals/siamdm/ImaseW91,
  author    = {Makoto Imase and
               Bernard M. Waxman},
  title     = {Dynamic {Steiner} Tree Problem},
  journal   = {{SIAM} J. Discret. Math.},
  volume    = {4},
  number    = {3},
  pages     = {369--384},
  year      = {1991},
  url       = {https://doi.org/10.1137/0404033},
  doi       = {10.1137/0404033},
  bibsource = {dblp computer science bibliography, https://dblp.org}
}

@inproceedings{DBLP:conf/focs/GuptaL20,
	author    = {Anupam Gupta and
	Roie Levin},
	editor    = {Sandy Irani},
	title     = {Fully-Dynamic Submodular Cover with Bounded Recourse},
	booktitle = {61st {IEEE} Annual Symposium on Foundations of Computer Science, {FOCS}
	2020, Durham, NC, USA, November 16-19, 2020},
	pages     = {1147--1157},
	publisher = {{IEEE}},
	year      = {2020},
	url       = {https://doi.org/10.1109/FOCS46700.2020.00110},
	doi       = {10.1109/FOCS46700.2020.00110},
	bibsource = {dblp computer science bibliography, https://dblp.org}
}

@inproceedings{DBLP:conf/soda/GuptaL20,
	author    = {Anupam Gupta and
	Roie Levin},
	editor    = {Shuchi Chawla},
	title     = {The Online Submodular Cover Problem},
	booktitle = {Proceedings of the 2020 {ACM-SIAM} Symposium on Discrete Algorithms,
	{SODA} 2020, Salt Lake City, UT, USA, January 5-8, 2020},
	pages     = {1525--1537},
	publisher = {{SIAM}},
	year      = {2020},
	url       = {https://doi.org/10.1137/1.9781611975994.94},
	doi       = {10.1137/1.9781611975994.94},
	bibsource = {dblp computer science bibliography, https://dblp.org}
}

@inproceedings{DBLP:conf/icalp/FeldmanLNSZ22,
	author    = {Moran Feldman and
	Paul Liu and
	Ashkan Norouzi{-}Fard and
	Ola Svensson and
	Rico Zenklusen},
	editor    = {Mikolaj Bojanczyk and
	Emanuela Merelli and
	David P. Woodruff},
	title     = {Streaming Submodular Maximization Under Matroid Constraints},
	booktitle = {49th International Colloquium on Automata, Languages, and Programming,
	{ICALP} 2022, July 4-8, 2022, Paris, France},
	series    = {LIPIcs},
	volume    = {229},
	pages     = {59:1--59:20},
	publisher = {Schloss Dagstuhl - Leibniz-Zentrum f{\"{u}}r Informatik},
	year      = {2022},
	url       = {https://doi.org/10.4230/LIPIcs.ICALP.2022.59},
	doi       = {10.4230/LIPIcs.ICALP.2022.59},
	bibsource = {dblp computer science bibliography, https://dblp.org}
}

@inproceedings{avadhanula23fully,
  author       = {Vashist Avadhanula and
                  Andrea Celli and
                  Riccardo Colini{-}Baldeschi and
                  Stefano Leonardi and
                  Matteo Russo},
  editor       = {Brian Williams and
                  Yiling Chen and
                  Jennifer Neville},
  title        = {Fully Dynamic Online Selection through Online Contention Resolution Schemes},
  booktitle    = {Thirty-Seventh {AAAI} Conference on Artificial Intelligence, {AAAI}
                  2023, Thirty-Fifth Conference on Innovative Applications of Artificial
                  Intelligence, {IAAI} 2023, Thirteenth Symposium on Educational Advances
                  in Artificial Intelligence, {EAAI} 2023, Washington, DC, USA, February
                  7-14, 2023},
  pages        = {6693--6700},
  publisher    = {{AAAI} Press},
  year         = {2023},
  url          = {https://doi.org/10.1609/aaai.v37i6.25821},
  doi          = {10.1609/AAAI.V37I6.25821},
  bibsource    = {dblp computer science bibliography, https://dblp.org}
}

@inproceedings{feldman16online,
  author       = {Moran Feldman and
                  Ola Svensson and
                  Rico Zenklusen},
  editor       = {Robert Krauthgamer},
  title        = {Online Contention Resolution Schemes},
  booktitle    = {Proceedings of the Twenty-Seventh Annual {ACM-SIAM} Symposium on Discrete
                  Algorithms, {SODA} 2016, Arlington, VA, USA, January 10-12, 2016},
  pages        = {1014--1033},
  publisher    = {{SIAM}},
  year         = {2016},
  url          = {https://doi.org/10.1137/1.9781611974331.ch72},
  doi          = {10.1137/1.9781611974331.CH72},
  bibsource    = {dblp computer science bibliography, https://dblp.org}
}

@inproceedings{DBLP:conf/icalp/BanihashemBGHJM25,
  author       = {Kiarash Banihashem and
                  Leyla Biabani and
                  Samira Goudarzi and
                  MohammadTaghi Hajiaghayi and
                  Peyman Jabbarzade and
                  Morteza Monemizadeh},
  editor       = {Keren Censor{-}Hillel and
                  Fabrizio Grandoni and
                  Jo{\"{e}}l Ouaknine and
                  Gabriele Puppis},
  title        = {Dynamic Algorithms for Submodular Matching},
  booktitle    = {52nd International Colloquium on Automata, Languages, and Programming,
                  {ICALP} 2025, Aarhus, Denmark, July 8-11, 2025},
  series       = {LIPIcs},
  pages        = {19:1--19:21},
  publisher    = {Schloss Dagstuhl - Leibniz-Zentrum f{\"{u}}r Informatik},
  year         = {2025},
  url          = {https://doi.org/10.4230/LIPIcs.ICALP.2025.19},
  doi          = {10.4230/LIPICS.ICALP.2025.19}
}

@inproceedings{DBLP:conf/stoc/ChenP22,
  author       = {Xi Chen and
                  Binghui Peng},
  editor       = {Stefano Leonardi and
                  Anupam Gupta},
  title        = {On the complexity of dynamic submodular maximization},
  booktitle    = {{STOC} '22: 54th Annual {ACM} {SIGACT} Symposium on Theory of Computing,
                  Rome, Italy, June 20 - 24, 2022},
  pages        = {1685--1698},
  publisher    = {{ACM}},
  year         = {2022},
  url          = {https://doi.org/10.1145/3519935.3519951},
  doi          = {10.1145/3519935.3519951},
  bibsource    = {dblp computer science bibliography, https://dblp.org}
}

@inproceedings{DBLP:conf/nips/LattanziMNTZ20,
  author       = {Silvio Lattanzi and
                  Slobodan Mitrovic and
                  Ashkan Norouzi{-}Fard and
                  Jakub Tarnawski and
                  Morteza Zadimoghaddam},
  editor       = {Hugo Larochelle and
                  Marc'Aurelio Ranzato and
                  Raia Hadsell and
                  Maria{-}Florina Balcan and
                  Hsuan{-}Tien Lin},
  title        = {Fully Dynamic Algorithm for Constrained Submodular Optimization},
  booktitle    = {Advances in Neural Information Processing Systems 33: Annual Conference
                  on Neural Information Processing Systems 2020, NeurIPS 2020, December
                  6-12, 2020, virtual},
  year         = {2020},
  url          = {https://proceedings.neurips.cc/paper/2020/hash/9715d04413f296eaf3c30c47cec3daa6-Abstract.html},
  bibsource    = {dblp computer science bibliography, https://dblp.org}
}

@article{DBLP:journals/mor/HarshawKFK22,
  author       = {Christopher Harshaw and
                  Ehsan Kazemi and
                  Moran Feldman and
                  Amin Karbasi},
  title        = {The Power of Subsampling in Submodular Maximization},
  journal      = {Math. Oper. Res.},
  volume       = {47},
  number       = {2},
  pages        = {1365--1393},
  year         = {2022},
  url          = {https://doi.org/10.1287/moor.2021.1172},
  doi          = {10.1287/MOOR.2021.1172},
  bibsource    = {dblp computer science bibliography, https://dblp.org}
}


%% file: refs.bib
@misc{buchbinder2025chasingsubmodularobjectivessubmodular,
      title={Chasing Submodular Objectives, and Submodular Maximization via Cutting Planes}, 
      author={Niv Buchbinder and Joseph (Seffi) Naor and David Wajc},
      year={2025},
      eprint={2511.13605},
      archivePrefix={arXiv},
      primaryClass={cs.DS},
      url={https://arxiv.org/abs/2511.13605}, 
}

@BOOK{Schrijver-book,
  title = {Combinatorial optimization. {P}olyhedra and efficiency},
  publisher = {Springer-Verlag},
  year = {2003},
  author = {Schrijver, Alexander},
  volume = {24},
  pages = {xxxviii+1881},
  series = {Algorithms and Combinatorics},
  address = {Berlin},
  isbn = {3-540-44389-4},
  mrclass = {90-02 (05-02 52B55 68Q25 68R10 90C27 90C35 90C57)},
  mrnumber = {MR1956924 (2004b:90004a)},
  mrreviewer = {Alexander I. Barvinok}
}

@inproceedings{buchbindercompetitively,
	title={Competitively Consistent Clustering},
	author={Buchbinder, Niv and Levin, Roie and Yang, Yue},
	booktitle={Forty-second International Conference on Machine Learning},
    year = {2025}
}

@inproceedings{BBLS23,
  author       = {Sayan Bhattacharya and
                  Niv Buchbinder and
                  Roie Levin and
                  Thatchaphol Saranurak},
  title        = {Chasing Positive Bodies},
  booktitle    = {64th {IEEE} Annual Symposium on Foundations of Computer Science, {FOCS} 2023},
  pages        = {1694--1714},
  publisher    = {{IEEE}},
  year         = {2023}
}

@inproceedings{BLP22,
  author    = {Sayan Bhattacharya and
               Silvio Lattanzi and
               Nikos Parotsidis},
  title     = {Efficient and Stable Fully Dynamic Facility Location},
  booktitle   = {Thirty-sixth Conference on Neural Information Processing Systems
(NeurIPS)},
  year      = {2022},
  }

@inproceedings{KLS23,
  author       = {Ravishankar Krishnaswamy and
                  Shi Li and
                  Varun Suriyanarayana},
  editor       = {Barna Saha and
                  Rocco A. Servedio},
  title        = {Online Unrelated-Machine Load Balancing and Generalized Flow with
                  Recourse},
  booktitle    = {Proceedings of the 55th Annual {ACM} Symposium on Theory of Computing,
                  {STOC} 2023, Orlando, FL, USA, June 20-23, 2023},
  pages        = {775--788},
  publisher    = {{ACM}},
  year         = {2023},
  url          = {https://doi.org/10.1145/3564246.3585222},
  doi          = {10.1145/3564246.3585222},
  bibsource    = {dblp computer science bibliography, https://dblp.org}
}

@article{vondrak2007submodularity,
	title={Submodularity in combinatorial optimization},
	author={Vondr{\'a}k, Jan},
	journal={PhD thesis, Charles University, Prague, Czech Republic},
	year={2007}
}

@inproceedings{GKS14,
  author    = {Anupam Gupta and
               Amit Kumar and
               Cliff Stein},
  title     = {Maintaining Assignments Online: Matching, Scheduling, and Flows},
  booktitle = {Proceedings of the Twenty-Fifth Annual {ACM-SIAM} Symposium on Discrete
               Algorithms, {SODA} 2014},
  pages     = {468--479},
  publisher = {{SIAM}},
  year      = {2014}
}

@inproceedings{GKKP17,
  author    = {Anupam Gupta and
               Ravishankar Krishnaswamy and
               Amit Kumar and
               Debmalya Panigrahi},
  title     = {Online and dynamic algorithms for set cover},
  booktitle = {Proceedings of the 49th Annual {ACM} {SIGACT} Symposium on Theory
               of Computing, {STOC} 2017, Montreal, QC, Canada, June 19-23, 2017},
  pages     = {537--550},
  publisher = {{ACM}},
  year      = {2017}
}

@article{B,
author = {C. Burch},
title = {Machine learning in metrical task systems and other on-line problems},
journal = {Ph.D. Thesis, published as CMU Tech Report CMU-CS-00-135.}
}

@inproceedings{BBM,
 author = {Y. Barta and B. Bollob\'{a}s and M. Mendel},
 title = {A Ramsey-Type Theorem for Metric Spaces and its Applications for Metrical Task Systems and Related
 Problems},
 booktitle = {Proceedings of the 42nd IEEE symposium on Foundations of Computer Science},
 year = {2001},
 pages = {396}
 }

@article{avin2020dynamic,
	title={Dynamic balanced graph partitioning},
	author={Avin, Chen and Bienkowski, Marcin and Loukas, Andreas and Pacut, Maciej and Schmid, Stefan},
	journal={SIAM Journal on Discrete Mathematics},
	volume={34},
	number={3},
	pages={1791--1812},
	year={2020},
	publisher={SIAM}
}

@inproceedings{avin2016online,
	title={Online balanced repartitioning},
	author={Avin, Chen and Loukas, Andreas and Pacut, Maciej and Schmid, Stefan},
	booktitle={Distributed Computing: 30th International Symposium, DISC 2016, Paris, France, September 27-29, 2016. Proceedings},
	pages={243--256},
	year={2016},
	organization={Springer}
}

@article{azar2023competitive,
	title={Competitive Vertex Recoloring: (Online Disengagement)},
	author={Azar, Yossi and Machluf, Chay and Patt-Shamir, Boaz and Touitou, Noam},
	journal={Algorithmica},
	pages={1--27},
	year={2023},
	publisher={Springer}
}

@inproceedings{brodal1999dynamic,
	title={Dynamic representations of sparse graphs},
	author={Brodal, Gerth St{\o}lting and Fagerberg, Rolf},
	booktitle={Algorithms and Data Structures: 6th International Workshop, WADS’99 Vancouver, Canada, August 11--14, 1999 Proceedings 6},
	pages={342--351},
	year={1999},
	organization={Springer}
}

@inproceedings{bhattacharya2022simple,
  author    = {Sayan Bhattacharya and
               Thatchaphol Saranurak and
               Pattara Sukprasert},
  title     = {Simple Dynamic Spanners with Near-Optimal Recourse Against an Adaptive
               Adversary},
  booktitle = {30th Annual European Symposium on Algorithms (ESA)},
  series    = {LIPIcs},
  volume    = {244},
  pages     = {17:1--17:19},
  year      = {2022},
  }

@article{baswana2012fully,
	title={Fully dynamic randomized algorithms for graph spanners},
	author={Baswana, Surender and Khurana, Sumeet and Sarkar, Soumojit},
	journal={ACM Transactions on Algorithms (TALG)},
	volume={8},
	number={4},
	pages={1--51},
	year={2012},
	publisher={ACM New York, NY, USA}
}

@inproceedings{assadi2018fully,
	title={Fully dynamic maximal independent set with sublinear update time},
	author={Assadi, Sepehr and Onak, Krzysztof and Schieber, Baruch and Solomon, Shay},
	booktitle={Proceedings of the 50th Annual ACM SIGACT Symposium on theory of computing},
	pages={815--826},
	year={2018}
}

@article{solomon2020improved,
	title={Improved dynamic graph coloring},
	author={Solomon, Shay and Wein, Nicole},
	journal={ACM Transactions on Algorithms (TALG)},
	volume={16},
	number={3},
	pages={1--24},
	year={2020},
	publisher={ACM New York, NY, USA}
}

@inproceedings{sawlani2020near,
	title={Near-optimal fully dynamic densest subgraph},
	author={Sawlani, Saurabh and Wang, Junxing},
	booktitle={Proceedings of the 52nd Annual ACM SIGACT Symposium on Theory of Computing},
	pages={181--193},
	year={2020}
}

@inproceedings{guo2020facility,
	title={On the facility location problem in online and dynamic models},
	author={Guo, Xiangyu and Kulkarni, Janardhan and Li, Shi and Xian, Jiayi},
	booktitle={Approximation, Randomization, and Combinatorial Optimization. Algorithms and Techniques (APPROX/RANDOM 2020)},
	year={2020},
	organization={Schloss Dagstuhl-Leibniz-Zentrum f{\"u}r Informatik}
}

@inproceedings{lkacki2015power,
	title={The power of dynamic distance oracles: Efficient dynamic algorithms for the {Steiner} tree},
	author={{\L}{\k{a}}cki, Jakub and O{\'c}wieja, Jakub and Pilipczuk, Marcin and Sankowski, Piotr and Zych, Anna},
	booktitle={Proceedings of the forty-seventh annual ACM symposium on Theory of computing},
	pages={11--20},
	year={2015}
}

@inproceedings{bera2022new,
	title={A new dynamic algorithm for densest subhypergraphs},
	author={Bera, Suman K and Bhattacharya, Sayan and Choudhari, Jayesh and Ghosh, Prantar},
	booktitle={Proceedings of the ACM Web Conference 2022},
	pages={1093--1103},
	year={2022}
}
